\def\showauthornotes{0}
\def\showtableofcontents{1}
\def\showkeys{0}
\def\showdraftbox{0}
\def\showcolorlinks{1}
\def\usemicrotype{1}
\def\showfixme{0}
\newcommand{\nc}{\newcommand}
\nc{\rnc}{\renewcommand}
\newtheorem{theorem}{Theorem}[section]
\newtheorem*{theorem*}{Theorem}
\newtheorem{claim}[theorem]{Claim}
\newtheorem*{claim*}{Claim}
\newtheorem*{proposition*}{Proposition}
\newtheorem{lemma}[theorem]{Lemma}
\newtheorem*{lemma*}{Lemma}
\newtheorem{corollary}[theorem]{Corollary}
\newtheorem*{conjecture*}{Conjecture}
\newtheorem{fact}[theorem]{Fact}
\newtheorem*{fact*}{Fact}
\newtheorem*{hypothesis*}{Hypothesis}
\theoremstyle{definition}
\newtheorem{definition}[theorem]{Definition}
\newtheorem{remark}[theorem]{Remark}
\renewcommand{\mathbb}{\varmathbb}
\newcommand{\savehyperref}[2]{\texorpdfstring{\hyperref[#1]{#2}}{#2}}
\newcommand{\Sref}[1]{\hyperref[#1]{\S\ref*{#1}}}
\newcommand{\half}{\nicefrac12}
\newcommand{\Authornote}[2]{{\sffamily\small\color{red}{[#1: #2]}}}
\newcommand{\Authorcomment}[2]{{\sffamily\small\color{gray}{[#1: #2]}}}
\newcommand{\Authorstartcomment}[1]{\sffamily\small\color{gray}[#1: }
\newcommand{\Authorfnote}[2]{\footnote{\color{red}{#1: #2}}}
\newcommand{\Authorfixme}[1]{\Authornote{#1}{\textbf{??}}}
\newcommand{\Authormarginmark}[1]{\marginpar{\textcolor{red}{\fbox{\Large #1:!}}}}
\newcommand{\Authornote}[2]{}
\newcommand{\Authorcomment}[2]{}
\newcommand{\Authorstartcomment}[1]{}
\newcommand{\Authorfnote}[2]{}
\newcommand{\Authorfixme}[1]{}
\newcommand{\Authormarginmark}[1]{}
\newcommand{\Dnote}{\Authornote{D}}
\newcommand{\Anote}{\Authornote{A}}
\newcommand{\Bnote}{\Authornote{B}}
\newcommand{\paren}[1]{(#1)}
\newcommand{\Paren}[1]{\left(#1\right)}
\newcommand{\Abs}[1]{\left\lvert#1\right\rvert}
\newcommand{\card}[1]{\lvert#1\rvert}
\newcommand{\set}[1]{\{#1\}}
\newcommand{\norm}[1]{\lVert#1\rVert}
\newcommand{\cnorm}[1]{{\pmb{\lVert}} #1 {\pmb{\rVert}}}
\newcommand{\cNorm}[1]{\cnorm{#1}} %Aram: big versions of the norm don't work
\newcommand{\cBignorm}[1]{\cnorm{#1}}
\newcommand{\normt}[1]{\norm{#1}_2}
\newcommand{\snorm}[1]{\norm{#1}^2}
\newcommand{\iprod}[1]{\langle#1\rangle}
\newcommand{\ciprod}[1]{\pmb{\langle} #1\pmb{\rangle}}
\newcommand{\cIprod}[1]{\ciprod{#1}}
\newcommand{\cbigiprod}[1]{\ciprod{#1}}
\newcommand{\Esymb}{\mathbb{E}}
\newcommand{\Psymb}{\mathbb{P}}
\newcommand{\Vsymb}{\mathbb{V}}
\DeclareMathOperator*{\E}{\Esymb}
\DeclareMathOperator*{\Var}{\Vsymb}
\DeclareMathOperator*{\ProbOp}{\Psymb}
\renewcommand{\Pr}{\ProbOp}
\newcommand{\ot}{\otimes}
\newcommand{\textparen}[1]{\text{(#1)}}
\newcommand{\using}[1]{\textparen{using #1}}
\newcommand{\because}[1]{\textparen{because #1}}
\renewcommand{\because}[1]{\textparen{because #1}}
\newcommand{\sge}{\succeq}
\newcommand{\sle}{\preceq}
\newcommand{\maximize}{\mathop{\textrm{maximize}}}
\newcommand{\bits}{\{0,1\}}
\newcommand{\sbits}{\{\pm1\}}
\newcommand{\super}[2]{#1^{\paren{#2}}}
\newcommand{\inv}[1]{{#1^{-1}}}
\newcommand{\seteq}{\mathrel{\mathop:}=}
\newcommand{\from}{\colon}
\newcommand{\mper}{\,.}
\newcommand{\mcom}{\,,}
\newcommand\bdot\bullet
\newcommand{\Ind}{\mathbb I}
\newcommand{\Ind}{\mathds 1}
\renewcommand{\th}{\textsuperscript{th}\xspace}
\DeclareMathOperator{\Inf}{Inf}
\DeclareMathOperator{\inj}{inj}
\DeclareMathOperator{\tr}{Tr}
\DeclareMathOperator{\poly}{poly}
\DeclareMathOperator{\supp}{supp}
\DeclareMathOperator{\sign}{sign}
\DeclareMathOperator{\conv}{conv}
\DeclareMathOperator{\rank}{rank}
\DeclareMathOperator{\tsum}{{\textstyle \sum}}
\newcommand{\Z}{\mathbb Z}
\newcommand{\N}{\mathbb N}
\newcommand{\R}{\mathbb R}
\newcommand{\C}{\mathbb C}
\newcommand{\F}{\mathbb F}
\newcommand{\GF}[1]{\mathbb F_{#1}}
\newcommand{\problemmacro}[1]{\texorpdfstring{\textsc{#1}}{#1}\xspace}
\newcommand{\uniquegames}{\problemmacro{Unique Games}}
\newcommand{\smallsetexpansion}{\problemmacro{Small-Set Expansion}}
\newcommand{\cD}{\mathcal D}
\newcommand{\cI}{\mathcal I}
\newcommand{\cP}{\mathcal P}
\newcommand{\cR}{\mathcal R}
\newcommand{\cS}{\mathcal S}
\newcommand{\cU}{\mathcal U}
\newcommand{\cV}{\mathcal V}
\newcommand{\cW}{\mathcal W}
\newcommand{\cX}{\mathcal X}
\newcommand{\cY}{\mathcal Y}
\newcommand{\cZ}{\mathcal Z}
\newcommand{\bbR}{\mathbb R}
\renewcommand{\leq}{\leqslant}
\renewcommand{\le}{\leqslant}
\renewcommand{\geq}{\geqslant}
\renewcommand{\ge}{\geqslant}
\newcommand{\draftbox}{\begin{center}
  \fbox{%
    \begin{minipage}{2in}%
      \begin{center}%
%        \begin{Large}%
          \Large\textsc{Working Draft}\\%
%        \end{Large}\\
        Please do not distribute%
      \end{center}%
    \end{minipage}%
  }%
\end{center}
\vspace{0.2cm}}
\newcommand{\draftbox}{}
\let\epsilon=\varepsilon
\numberwithin{equation}{section}
\newcommand{\MYstore}[2]{%
  \global\expandafter \def \csname MYMEMORY #1 \endcsname{#2}%
}
\newcommand{\MYload}[1]{%
  \csname MYMEMORY #1 \endcsname%
}
\newcommand{\MYnewlabel}[1]{%
  \newcommand\MYcurrentlabel{#1}%
  \MYoldlabel{#1}%
}
\newcommand{\MYdummylabel}[1]{}
\newcommand{\torestate}[1]{%
  % overwrite label command
  \let\MYoldlabel\label%
  \let\label\MYnewlabel%
  #1%
  \MYstore{\MYcurrentlabel}{#1}%
  % restore old label command
  \let\label\MYoldlabel%
}
\newcommand{\restatetheorem}[1]{%
  % overwrite label command with dummy
  \let\MYoldlabel\label
  \let\label\MYdummylabel
  \begin{theorem*}[Restatement of \prettyref{#1}]
    \MYload{#1}
  \end{theorem*}
  \let\label\MYoldlabel
}
\newcommand{\restatelemma}[1]{%
  % overwrite label command with dummy
  \let\MYoldlabel\label
  \let\label\MYdummylabel
  \begin{lemma*}[Restatement of \prettyref{#1}]
    \MYload{#1}
  \end{lemma*}
  \let\label\MYoldlabel
}
\newcommand{\restateprop}[1]{%
  % overwrite label command with dummy
  \let\MYoldlabel\label
  \let\label\MYdummylabel
  \begin{proposition*}[Restatement of \prettyref{#1}]
    \MYload{#1}
  \end{proposition*}
  \let\label\MYoldlabel
}
\newcommand{\restatefact}[1]{%
  % overwrite label command with dummy
  \let\MYoldlabel\label
  \let\label\MYdummylabel
  \begin{fact*}[Restatement of \prettyref{#1}]
    \MYload{#1}
  \end{fact*}
  \let\label\MYoldlabel
}
\newcommand{\restate}[1]{%
  % overwrite label command with dummy
  \let\MYoldlabel\label
  \let\label\MYdummylabel
  \MYload{#1}
  \let\label\MYoldlabel
}
\newcommand{\addreferencesection}{
  \phantomsection
  \addcontentsline{toc}{section}{References}
}
\newcommand{\sse}{\subseteq}
\newcommand{\ra}{\rightarrow}
\newcommand{\e}{\epsilon}
\newcommand{\eps}{\epsilon}
\newcommand{\eset}{\emptyset}
\let\origparagraph\paragraph
\renewcommand{\paragraph}[1]{\origparagraph{#1.}}
\newcommand{\cclassmacro}[1]{\texorpdfstring{\textbf{#1}}{#1}\xspace}
\newcommand{\p}{\cclassmacro{P}}
\newcommand{\np}{\cclassmacro{NP}}
\DeclareMathOperator{\Sep}{Sep}
\DeclareMathOperator{\PPT}{PPT}
\DeclareMathOperator{\DPS}{DPS}
\DeclareMathOperator{\Ext}{Ext}
\newcommand{\eq}[1]{(\ref{eq:#1})}
\def\be{\begin{equation}}
\def\ee{\end{equation}}
\def\ba#1\ea{\begin{align}#1\end{align}}
\def\ban#1\ean{\begin{align*}#1\end{align*}}
\def\benum{\begin{enumerate}}
\def\eenum{\end{enumerate}}
\def\bit{\begin{itemize}}
\def\eit{\end{itemize}}
\newcommand{\secref}[1]{Section~\ref{sec:#1}}
\newcommand{\lemref}[1]{Lemma~\ref{lem:#1}}
\newcommand{\thmref}[1]{Theorem~\ref{thm:#1}}
\newcommand{\corref}[1]{Corollary~\ref{cor:#1}}
\newcommand{\ifexp}[1]{}  % if using expectation norms
\newcommand{\ifcount}[1]{#1} % if using counting norms
\newcommand{\bh}{\mathbf{h}}
\newcommand{\bS}{\mathbf{S}}
\def\begsub#1#2\endsub{\begin{subequations}\label{eq:#1}\begin{align}#2\end{align}\end{subequations}}
\nc\qand{\qquad\text{and}\qquad}
\let\pref=\prettyref
\renewcommand{\L}[1]{L_{#1}}
\newcommand{\ptoqnorm}[3]{\|#1\|_{#2\rightarrow #3}}
\newcommand{\tfnorm}[1]{\ptoqnorm{#1}{2}{4}}
\newcommand{\ptoqcnorm}[3]{\cnorm{#1}_{#2\rightarrow #3}}
\newcommand{\tfcnorm}[1]{\ptoqcnorm{#1}{2}{4}}
\newcommand{\cp}{\textup{\textsf{cp}}}
\newcommand{\tensorsdp}{\textup{\textsf{Tensor-SDP}}\xspace}
\newcommand{\frv}{\mbox{f.r.v.}\xspace}
\newcommand{\pef}{\mbox{p.e.f.}\xspace}
\newcommand{\lrvar}[1]{$#1$-\frv}
\DeclareMathOperator*{\lift}{\textsf{Lift}}
\DeclareMathOperator{\coeff}{\mathit{coeff}}
\DeclareMathOperator*{\pE}{\Tilde{\Esymb}}
\newcommand{\bd}{\Phi}
\newcommand{\Holder}{H\"{o}lder}
\title{Hypercontractivity, Sum-of-Squares Proofs, and their Applications}
\author{Boaz Barak\thanks{Microsoft Research New England.}  \and
  Fernando G.S.L. Brand\~ao\thanks{Department of Computer Science,
    University College London} \and Aram W. Harrow\thanks{Department
    of Physics, Massachusetts Institute of Technology. Much of the
    work was done while visiting Microsoft Research New England and
    while working at the University of Washington.} \and Jonathan
  Kelner\thanks{Department of Mathematics, Massachusetts Institute of
    Technology.} \and David Steurer\thanks{Microsoft Research New
    England.} \and Yuan Zhou\thanks{Department of Computer Science,
    Carnegie Mellon University. Much of the work done while the author
    was an intern at Microsoft Research New England.} \vspace{-2ex}}
\date{\today\vspace{-4ex}}
\begin{document}

\maketitle

\draftbox \thispagestyle{empty}

\begin{abstract}

We study the computational complexity of approximating the \emph{$2$-to-$q$} norm of linear operators (defined as $\norm{A}_{2\to q} = \max_{v\neq
0}\norm{Av}_q/\norm{v}_2$) for $q > 2$, as well as connections between this question and issues arising in quantum information theory and the study of Khot's
Unique Games Conjecture (UGC). We show the following:

\begin{enumerate}

\vspace{-0.5ex}\item For any constant even integer $q \geq 4$, a graph $G$ is a \emph{small-set expander} if and only if the projector into the span of the
  top eigenvectors of $G$'s adjacency matrix has bounded $2\to q$ norm.  As a corollary, a good approximation to the $2\to q$ norm will refute the
  \emph{Small-Set Expansion Conjecture} --- a close variant of the UGC.  We also show that such a good approximation can be computed in $\exp(n^{2/q})$
  time, thus obtaining a different proof of the known subexponential algorithm for \smallsetexpansion.

\vspace{-0.5ex} \item Constant rounds of the ``Sum of Squares'' semidefinite programing  hierarchy certify an upper bound on the $2\to 4$ norm of the
projector to low-degree polynomials over the Boolean cube, as well certify the unsatisfiability of the ``noisy cube'' and ``short code'' based instances of
\uniquegames considered by prior works. This improves on the previous upper bound of  $\exp(\log^{O(1)} n)$ rounds (for the ``short code''), as well as
separates the ``Sum of Squares''/``Lasserre'' hierarchy from weaker hierarchies that were known to require $\omega(1)$ rounds.

\vspace{-0.5ex}\item We show reductions between computing the $2\to 4$ norm and computing the \emph{injective tensor norm} of a tensor, a problem with
connections to quantum information theory. Three corollaries are: \textbf{(i)} the $2\to 4$ norm is NP-hard to approximate to precision inverse-polynomial
in the dimension, \textbf{(ii)} the $2\ra 4$ norm does not have a good approximation (in the sense above) unless 3-SAT can be solved in time
$\exp(\sqrt{n}\poly\log(n))$, and \textbf{(iii)} known algorithms for the quantum separability problem imply a non-trivial additive approximation for the
$2\to 4$ norm.

%, integrality gaps for this SDP give rise to quantum states that pass the PPT and realignment tests but are far from separable.

\end{enumerate}

%Our techniques involve a general principle of ``lifting'' proofs that use sum of squares arguments into proofs about semidefinite programming relaxations. In
%particular we show that many tools used in the unique games literature, including hypercontractivity of low-degree polynomials and (variants of) the invariance
%principle  have constructive sum-of-squares based proofs.

\end{abstract}
\thispagestyle{empty}
\clearpage
\thispagestyle{empty}

% tableofcontents added for better navigability of the document
\ifnum\showtableofcontents=1
{
\small
\tableofcontents
\thispagestyle{empty}
 }
\fi

\setcounter{page}{0}

\clearpage
\setcounter{page}{1}

\section{Introduction} \label{sec:intro}

For a function $f\from \Omega\to \R$ on a (finite) probability
space $\Omega$, the \emph{$p$-norm} is defined as $\norm{f}_p=(\E_\Omega
f^p)^{1/p}$.\footnote{
We follow the convention to use \emph{expectation} norms for \emph{functions} (on probability
spaces) and \emph{counting} norms, denoted as $\cnorm{v}_p=(\sum_{i=1}^n |v_i|^p)^{1/p}$, for
\emph{vectors} $v\in\R^m$.
All normed spaces here will be finite dimensional.
We distinguish between expectation and counting norms to avoid recurrent
normalization factors.}
The \emph{$p\ra q$ norm} $\ptoqnorm{A}{p}{q}$ of a linear operator $A$
between vector spaces of such functions is the smallest number $c\ge 0$
such that $\norm{Af}_q \le c \norm{f}_p$ for all functions $f$ in the
domain of $A$.  We also define the \emph{$p\to q$ norm} of a \emph{subspace} $V$ to be
the maximum of $\norm{f}_q/\norm{f}_p$ for $f\in V$; note that for $p=2$ this is the same
as the norm of the projector operator into  $V$.

% the norm $\ptoqnorm{A}{p}{q}$ defined as $\max_{\norm{x}_p=1}
% \norm{Ax}_q$.\footnote{We use the \emph{expectation norms}. That is,
%   $\norm{x}_p = \left(\E_i |x_i|^p \right)^{1/p}$.}
%
In this work, we are interested in the case $p<q$ and we will call such
$p\to q$ norms \emph{hypercontractive}.\footnote{We use this name because a
  bound of the form $\ptoqnorm{A}{p}{q} \leq 1$ for $p<q$ is often called a
  \emph{hypercontractive inequality}.}
Roughly speaking, for $p<q$, a function $f$ with large $\norm{f}_q$ compared
to $\norm{f}_p$ can be thought of as ``spiky'' or somewhat sparse (i.e.,
much of the mass concentrated in small portion of the entries).
Hence finding a function $f$ in a linear subspace $V$ maximizing
$\norm{f}_q/\norm{f}_2$ for some $q>2$ can be thought of as a geometric
analogue of the problem finding the shortest word in a linear code.
This problem is equivalent to computing the $2\to q$ norm of the projector
$P$ into $V$ (since $\norm{P f}_2\le \norm{f}_2$).
 Also when $A$ is a
normalized adjacency matrix of a graph (or more generally a Markov
operator), upper bounds on the $p\to q$ norm are known as
\emph{mixed-norm}, \emph{Nash} or \emph{hypercontractive inequalities} and
can be used to show rapid mixing of the corresponding random walk (e.g.,
see the surveys \cite{Gross75,SaloffCoste97}). Such bounds also have
many applications to theoretical computer science, which are described
in the survey~\cite{PunyashlokaBiswalSurvey}.

However, very little is known about the complexity of computing these norms. This is in contrast to the case of  $p\to q$ norms for $p\geq q$, where much more is known both in terms of algorithms and lower bounds, see~\cite{Steinberg05,KindlerNS08,BhaskaraV11}.

%In this work we study this question, motivated both by the effort to understand Khot's \emph{Unique Games Conjecture} (UGC), and questions in quantum information theory. We now describe how hypercontractive norms are related to both areas. Throughout this work we focus our attention on the case that $q=4$ and $p=2$ (i.e., the $2\to 4$ norm), though our results extend to at least some other hypercontractive norms, such as the $2 \to q$ norms where $q > 4$ is an even number.

\section{Our Results}

We initiate a study of the computational complexity of approximating the $2\to 4$ (and more generally $2\to q$ for $q>2$) norm. While there are still many more questions than answers on this topic, we are able to show some new algorithmic and hardness results, as well as connections to both Khot's unique games
conjecture~\cite{Khot02a} (UGC) and questions from quantum information theory. In particular our paper gives some conflicting evidence regarding the validity
of the UGC and its close variant--- the small set expansion hypothesis (SSEH) of \cite{RaghavendraS10}. (See also our conclusions section.)

First, we show in \thmref{hardness} that approximating the $2\to 4$ problem to within any constant factor cannot be done in polynomial time (unless SAT can be
solved in $\exp(o(n))$ time) but yet this problem is seemingly related to the \uniquegames and \smallsetexpansion problems. In
particular, we show that approximating the $2\to 4$ norm is \smallsetexpansion- hard but yet has a subexponential algorithm closely related to the
\cite{AroraBS10} algorithm for \uniquegames and \smallsetexpansion  . Thus the computational difficulty of this problem can be considered as some indirect
evidence \emph{supporting} the validity of the UGC (or perhaps some weaker variants of it). To our knowledge, this is the first evidence of this kind for the
UGC.

On the other hand, we show that a natural polynomial-time algorithm (based on an SDP hierarchy) that solves the previously proposed hard instances for
\uniquegames . The previous best algorithms for some of these instances took almost exponential ( $\exp(\exp(\log^{\Omega(1)}n))$ ) time, and in fact they were
shown to require super-polynomial time for some hierarchies. Thus this result suggests that this algorithm could potentially refute the UGC, and hence can be
construed as evidence \emph{opposing} the UGC's validity.

\subsection{Algorithms}

We show several algorithmic results for the $2\to 4$ (and more generally $2\to q$) norm.

\subsubsection{Subexponential algorithm for ``good'' approximation}

For $q\geq 2$, we say that an algorithm provides a \emph{$(c,C)$-approximation} for the $2\to q$ norm if on input an operator $A$, the algorithm can distinguish between the case that $\norm{A}_{2\to q} \leq c \sigma$ and the case that $\norm{A}_{2\to q} \geq C \sigma$, where $\sigma = \sigma_{\min}(A)$ is the minimum nonzero singular value of $A$. (Note that since we use the expectation norm, $\norm{Af}_q \geq \norm{Af}_2 \geq \sigma\norm{f}_2$ for every function $f$ orthogonal to the Kernel of $A$.) We say that an algorithm provides a \emph{good approximation} for the $2\to q$ norm if it provides a $(c,C)$-approximation for some (dimension independent) constants $c<C$.  The motivation behind this definition is to capture the notion of a \emph{dimension independent} approximation factor, and is also motivated by Theorem~\ref{thm:sse-hyper} below, that relates a good approximation for the $2\to q$ norm to solving the \smallsetexpansion problem.

We show the following:

\begin{theorem} \torestate{\label{thm:subexp} For every $1<c<C$, there is a $\poly(n)\exp(n^{2/q})$-time algorithm that computes a $(c,C)$-approximation for the $2\to q$ norm of any linear operator whose range is $\R^n$.}
\end{theorem}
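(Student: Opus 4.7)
My plan is to give a subexponential-time algorithm based on enumeration over an $\epsilon$-net on a low-dimensional subspace, in the spirit of the spectral subspace enumeration of Arora--Barak--Steurer for small-set expansion. First, rescale so that $\sigma_{\min}(A) = 1$. Using the change of variable $w = (AA^T)^{-1/2}Av$, which is a Euclidean isometry from $(\ker A)^{\perp}$ onto $\R^n$, reduces the problem to computing $\|B\|_{2\to q}$ for the PSD operator $B = (AA^T)^{1/2}$ acting on $\R^n$, whose eigenvalues are all at least $1$. This reduces the problem to the case of an $n \times n$ symmetric PSD matrix, so that the spectral structure lives on $\R^n$.

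The heart of the argument is a structural lemma: there exists a threshold $\tau = \tau(c, C, q)$ such that if $\|B\|_{2\to q} \ge C$ then there is a unit $v$ with $\|Bv\|_q \ge c$ lying in the subspace $W$ spanned by the eigenvectors of $B$ with eigenvalue above $\tau$, and moreover $\dim W = O(n^{2/q})$. For the first part, I would start from an optimizer $v^* = v_W + v_{W^{\perp}}$ and bound the tail via the elementary interpolation inequality
\begin{equation*}
\|B v_{W^{\perp}}\|_q \;\le\; \|B v_{W^{\perp}}\|_2^{2/q}\,\|B v_{W^{\perp}}\|_{\infty}^{1-2/q},
\end{equation*}
valid for even $q$: the $2$-norm factor is at most $\tau$ by the spectral truncation, and the $\infty$-norm factor is controlled by $\|B\|_{2\to\infty} = \|A\|_{2\to\infty}$, so choosing $\tau$ small enough drives the tail below $C-c$ and yields $\|B v_W\|_q \ge c$ by the triangle inequality. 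The dimension bound would come from a Markov/trace spectral estimate on $B^2$: the number of eigenvalues above $\tau$ is at most $\operatorname{tr}(B^2)/\tau^2$, and calibrating $\tau$ appropriately gives $\dim W = O(n^{2/q})$.

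With the structural lemma in hand, the algorithm enumerates an $\epsilon$-net on the unit sphere of $W$; the net has size $\exp(O(\dim W \cdot \log(1/\epsilon))) = \exp(O(n^{2/q}))$, and for each point one evaluates $\|Bv\|_q$ in polynomial time. Outputting the maximum yields the required $(c,C)$-approximation in total time $\poly(n)\exp(n^{2/q})$.

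The main obstacle I anticipate is the joint calibration of $\tau$: one needs a single threshold such that \emph{simultaneously} the dimension bound $\dim W = O(n^{2/q})$ and the tail bound $\|B v_{W^{\perp}}\|_q < C-c$ both hold, while the auxiliary quantities $\|B\|_{2\to\infty}$ and $\operatorname{tr}(B^2)$ remain controlled. The crude Markov estimate $\dim W \le \operatorname{tr}(B^2)/\tau^2$ may be too lossy for arbitrary $A$, so the argument likely requires a careful pre-normalization step (e.g., a diagonal rescaling absorbing the row norms of $A$) together with a finer case analysis separating ``spiky'' optima, handled by a direct coordinate sparsification, from ``flat'' optima, where spectral truncation is tight, with the exponent $n^{2/q}$ emerging as the balance point between the two regimes.
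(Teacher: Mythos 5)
Your structural lemma — that there is a dimension-independent threshold $\tau$ such that the spectral truncation $W$ of $B$ above $\tau$ simultaneously has $\dim W = O(n^{2/q})$ and contains a near-optimal witness — is the crux of the argument, and it is not established. Worse, as stated it cannot hold: after normalizing $\sigma_{\min}(A)=1$, every nonzero eigenvalue of $B=(AA^T)^{1/2}$ is at least $1$. If $A$ is (a multiple of) a projector onto a subspace $V$, which is precisely the case most relevant to the small-set-expansion application, all nonzero eigenvalues of $B$ equal $1$; so for every $\tau>1$ the truncated subspace $W$ is trivial, while for $\tau\leq 1$ it is the entire range, whose dimension can be $\Omega(n)$. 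The Markov estimate $\dim W \leq \tr(B^2)/\tau^2$ is similarly useless, since $\tr(B^2)$ grows with $\rank(B)$ and is not controlled by the $(c,C)$-promise. No choice of $\tau$ repairs this: the dichotomy you need is not between large and small eigenvalues of $B$ but between large and small \emph{rank}.

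The missing ingredient is that a large-dimensional subspace of $\R^n$ automatically has a large $2\to\infty$ (hence $2\to q$) norm, regardless of the spectrum of $B$. Concretely, \pref{lem:two-to-infty} gives $\norm{V}_{2\to\infty}\geq\sqrt{\dim V}$ for every subspace $V\subseteq\R^n$, whence $\norm{V}_{2\to q}\geq\sqrt{\dim V}/n^{1/q}$ (\pref{cor:dim-bound}). Applied to $V=\mathrm{image}(A)$, this yields $\norm{A}_{2\to q}/\sigma_{\min}(A)\geq\sqrt{\dim V}/n^{1/q}$, so whenever $\dim V > C^2 n^{2/q}$ one can certify the ``large norm'' case with no enumeration at all; only when $\dim V\leq C^2 n^{2/q}$ does one enumerate a net on $(\ker A)^{\perp}$, which has that same dimension, giving the $\exp(O(n^{2/q}))$ bound. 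Your net-enumeration step and your observation that it suffices to work in a low-dimensional regime are both correct; what is missing is the certificate for the complementary high-rank regime. Your final paragraph gestures at the right intuition (``$n^{2/q}$ emerging as the balance point''), but the balance is between rank and ambient dimension, not between a spectral tail and $\norm{B}_{2\to\infty}$, and the ``spiky versus flat'' case analysis you sketch does not resolve the projector obstruction.
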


Combining this with our results below, we get as a corollary a subexponential algorithm for the \smallsetexpansion problem matching the parameters of \cite{AroraBS10}'s algorithm. We note that this algorithm can be achieved by the ``Sum of Squares'' SDP hierarchy described below (and probably weaker hierarchies as well, although we did not verify this).

\subsubsection{Polynomial algorithm for specific instances}

We study a natural semidefinite programming (SDP) relaxation for computing the $2\to 4$ norm of a given linear operator which we call \tensorsdp .\footnote{We
use the name \tensorsdp for this program since it will be a canonical relaxation of the polynomial program  $\max_{\norm{x}_2=1}{\iprod{T,x^{\ot 4}}}$ where
$T$ is the $4$-tensor such that $\iprod{T,x^{\ot 4}}=\norm{Ax}_4^4$. See Section~\ref{sec:prelim:short} for more details.}  While \tensorsdp\ is very unlikely
to provide a poly-time constant-factor approximation for the $2\to 4$ norm in general (see Theorem~\ref{thm:hardness} below), we do show that it provides such approximation on two
very different types of instances:

\begin{itemize}

\item We show that \tensorsdp certifies  a constant upper bound on the ratio $\tfnorm{A}/\norm{A}_{2\to 2}$ where  $A:\R^n\to\R^m$  is a \emph{random
    linear operator} (e.g., obtained by a matrix with entries chosen as i.i.d Bernoulli variables) and  $m \geq \Omega(n^2\log n)$.  In contrast, if $m =
    o(n^2)$ then this ratio is $\omega(1)$, and hence this result is almost tight in the sense of obtaining ``good approximation'' in the sense mentioned
    above. We find this interesting, since random matrices seem like natural instances; indeed for superficially similar problems such shortest codeword,
    shortest lattice vector (or even the $1\to 2$ norm), it seems hard to efficiently certify bounds on random operators. \Bnote{Add here a correct
    footnote how about how bounding the $1$ to $2$ norm of a random \emph{generating} (not projector) matrix, is related to certifying RIP of a random
    matrix. Perhaps this also answers the reviewer's complaint below.}
%\footnote{Another related problem is certifying that a random matrix has the \emph{Restricted Isometry Property} (RIP) related to compressed sensing. One can certify that a matrix $A:\R^m\to\R^n$ (with $n\ll m$) has the RIP property for $k$-sparse vectors by showing that for some small $\e$ that $\norm{P}_{1\to 2} \leq \e\sqrt{n/k}$ where $P$ is the projector to the eigenspace of $A$ with eigenvalues smaller than $1-\e$. However, note that this is a case where $\norm{P}_{1\to 2} \gg \norm{P}_{2\to 2}$ and so is rather different setting of parameters than our notion of ``good approximation''.}

\Dnote{we need to be careful by what we mean with $1\to 2$ norm. I think
  even the $1\to 2$ norm of generators is not the right thing. (the 1-norm
  is basis dependent so the $1\to 2$ norm would depend on which generator
  matrix you use.) i think we should state these things in term of $1$ vs
  $2$ subspace distortion and point out that unlike $2$ vs $4$ distortion
  it does not (obviously) correspond to an operator norm. }

\Dnote{(reviewer complained about ``superficially''.) I guess
  ``superficially similar'' is justified in the introduction when we say
  that a large gap between norms corresponds to spiky functions (akin to
  low-weight codewords). the similarity is really only superficial because
  for example no good codes exist in the $2$ vs $4$ norm sense. comparing
  $1$ and $2$ norms in a subspace seems more similar to shortest codeword problems.}

\item We show that \tensorsdp gives a good approximation of the $2\to 4$ norm of the operator projecting a function $f:\{\pm 1\}^n\to\R$ into its low-degree component:

     \begin{theorem} \label{thm:hyper-poly}  Let $\cP_d$ be the linear operator that maps a function $f:\{\pm 1\}^n\to\R$ of the form $f = \sum_{\alpha\subseteq[n]}\Hat{f}_\alpha \chi_{\alpha}$ to its low-degree part $f' = \sum_{|\alpha|\leq d} \Hat{f}_\alpha \chi_{\alpha}$ (where $\chi_{\alpha}(x)=\prod_{i\in\alpha} x_i$). Then  $\tensorsdp(\cP_d) \leq 9^{d}$.
     \end{theorem}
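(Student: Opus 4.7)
The plan is to establish the following two-function strengthening of hypercontractivity as a degree-$4$ sum-of-squares identity in the Fourier coefficients: for every pair of multilinear polynomials $g, h \from \set{\pm 1}^n \to \R$ of respective degrees at most $d_g$ and $d_h$,
$$3^{d_g+d_h} \Norm{g}_2^2 \Norm{h}_2^2 - \E_x\bigbrac{g(x)^2 h(x)^2} \;\succeq\; 0$$
admits an SoS decomposition into squares of polynomials of degree at most $2$ in the joint variables $\hat g_\alpha, \hat h_\beta$. Specializing $g = h = \cP_d f$ with $d_g = d_h = d$ yields a degree-$4$ SoS certificate for $9^d \Norm{f}_2^4 - \E_x[(\cP_d f)^4]$; combined with the normalization $\Norm{f}_2^2 = 1$ respected by the \tensorsdp\ pseudo-expectation, this certificate forces $\tensorsdp(\cP_d) \leq 9^d$.

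The proof would proceed by induction on $n$, with the trivial base case $n=0$. For the inductive step, decompose $g = g_0 + x_n g_1$ and $h = h_0 + x_n h_1$, where the subscript-$0$ parts have degrees $\le d_g, d_h$ and the subscript-$1$ parts have degrees $\le d_g - 1, d_h - 1$, all polynomials on $x_1, \ldots, x_{n-1}$. Using $x_n^2 = 1$ on the cube,
$$\E_x\brac{g^2 h^2} \;=\; \sum_{i, j \in \{0,1\}} \E\brac{g_i^2 h_j^2} \;+\; 4\,\E\brac{g_0 g_1 h_0 h_1}.$$
The crux is that the mixed term $4\E[g_0 g_1 h_0 h_1]$ admits a direct degree-$4$ SoS upper bound independent of the inductive hypothesis, via the elementary identity
$$\E\brac{g_0^2 h_1^2} + \E\brac{g_1^2 h_0^2} - 2\,\E\brac{g_0 g_1 h_0 h_1} \;=\; \E_x\bigbrac{(g_0 h_1 - g_1 h_0)^2} \;\succeq\; 0,$$
which is a sum of squares of polynomials $g_0(x) h_1(x) - g_1(x) h_0(x)$ (one per $x$), each of degree $2$ in the joint coefficients $\hat g_0, \hat g_1, \hat h_0, \hat h_1$. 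Applying this bound twice absorbs the mixed term and yields the degree-$4$ SoS inequality
$$\E_x\brac{g^2 h^2} \;\leq\; \E\brac{g_0^2 h_0^2} + 3\,\E\brac{g_0^2 h_1^2} + 3\,\E\brac{g_1^2 h_0^2} + \E\brac{g_1^2 h_1^2}.$$

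The induction closes by invoking the inductive hypothesis on each of the four expectations $\E[g_i^2 h_j^2]$ (on $(n-1)$-variable polynomials of degrees $d_g - i, d_h - j$), giving SoS upper bounds $3^{(d_g-i)+(d_h-j)}\Norm{g_i}_2^2\Norm{h_j}_2^2$. With the coefficients $(1,3,3,1)$ from the previous step, the three ``mostly-top-degree'' contributions each inherit an overall factor of exactly $3^{d_g+d_h}$, while the ``doubly-reduced'' $\E[g_1^2 h_1^2]$ term carries only the smaller factor $3^{d_g+d_h-2}$; summing and using $\Norm{g}_2^2 = \Norm{g_0}_2^2 + \Norm{g_1}_2^2$ and $\Norm{h}_2^2 = \Norm{h_0}_2^2 + \Norm{h_1}_2^2$ yields
$$3^{d_g+d_h}\Norm{g}_2^2 \Norm{h}_2^2 - \E_x\brac{g^2 h^2} \;\succeq\; 8 \cdot 3^{d_g+d_h-2}\,\Norm{g_1}_2^2 \Norm{h_1}_2^2 \;+\; \text{(four inductive SoS slacks)} \;+\; 2\,\E_x\bigbrac{(g_0 h_1 - g_1 h_0)^2},$$
which is manifestly a sum of squares of degree-$2$ polynomials, since $\Norm{g_1}_2^2\Norm{h_1}_2^2 = \sum_{\alpha,\beta}(\hat g_{1,\alpha}\hat h_{1,\beta})^2$ is SoS. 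The main obstacle---turning the usual Cauchy--Schwarz proof of hypercontractivity into a degree-$4$ SoS certificate---is overcome by the cross-term identity above, which replaces square-root-based reasoning with the purely algebraic $(g_0 h_1 - g_1 h_0)^2 \succeq 0$; the only remaining work is constant-bookkeeping, which the relation $3 \cdot 3^{-1}=1$ makes exact for the mixed terms and the slack of $8\cdot 3^{d_g+d_h-2}\Norm{g_1}_2^2\Norm{h_1}_2^2$ covers for the doubly-reduced term.
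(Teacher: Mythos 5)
Your argument is correct and coincides with the paper's proof of \pref{lem:hypercontractivity1}: the same two-function strengthening $\E f^2 g^2 \sle 9^{(d+e)/2}(\E f^2)(\E g^2)$, the same induction on $n$ via the split $f = f_0 + x_n f_1$, the same cross-term bound $4\E f_0 f_1 g_0 g_1 \sle 2\E f_0^2 g_1^2 + 2\E f_1^2 g_0^2$ coming from the square $(f_0 g_1 - f_1 g_0)^2$ (the paper has an apparent typo writing this square as $(f_0 f_1 - g_0 g_1)^2$, but your version is the intended one), and the same application of the inductive hypothesis to the four remaining terms, with the doubly-reduced term absorbing the resulting slack. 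The only step you leave implicit is the passage from $\norm{\cP_d f}_2^4$ to $\norm{f}_2^4$ in the final specialization---needed because $\tensorsdp$ normalizes $\norm{f}_2^2=1$ rather than $\norm{\cP_d f}_2^2=1$---but this is immediate, since $\norm{f}_2^4 - \norm{\cP_d f}_2^4$ is a product of two sums of squares and hence itself a sum of squares.
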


The fact that $\cP_d$ has bounded  $2\to 4$ norm is widely used in the
literature relating to the UGC. Previously, no general-purpose algorithm
was known to efficiently certify this fact.

\end{itemize}

\subsubsection{Quasipolynomial algorithm for additive approximation}

We also consider the generalization of \tensorsdp to a natural SDP \emph{hierarchy}.  This is a convex relaxation that starts from an initial SDP and tightens it by adding additional constraints. Such hierarchies are generally paramaterized by a
number $r$ (often called the \emph{number of rounds}), where the $1^{\text{st}}$
round corresponds to the initial SDP, and the $n^{\text{th}}$ round (for discrete problems where $n$ is
the instance size) corresponds to the exponential brute force algorithm
that outputs an optimal answer. Generally, the $r^{\text{th}}$-round of each such
hierarchy can be evaluated in $n^{O(r)}$ time (though in some cases
$n^{O(1)}2^{O(r)}$ time suffices~\cite{BarakRS11}). See Section~\ref{sec:SoS}, as well as the surveys~\cite{ChlamtacT10,Laurent03} and the
papers~\cite{SheraliA90,LovaszS91,RaghavendraS09c,KhotPS10} for more  information about these hierarchies.

We call the hierarchy we consider here the \emph{Sum of Squares} (SoS) hierarchy. It is not novel but rather a variant of the hierarchies studied by several
authors including Shor~\cite{Shor87}, Parrilo~\cite{Parrilo00,Parrilo03}, Nesterov~\cite{Nesterov00} and Lasserre~\cite{Lasserre01}. (Generally in our context
these hierarchies can be made equivalent in power, though there are
some subtleties involved; see \cite{Laurent09} and Appendix~\ref{app:hierarchies} for more
details.) We describe the SoS hierarchy formally in Section~\ref{sec:SoS}. We show that \tensorsdp 's extension to several rounds of the SoS hierarchy gives a
non-trivial \emph{additive} approximation:

%We call the hierarchy we consider here the \emph{Sum of Squares} (SoS) hierarchy. It is not novel but rather a variant of the hierarchies studied by
%Shor~\cite{NZShor87}, Parrilo~\cite{Parrilo00,Parrilo03}, and Lasserre~\cite{Lasserre01}. (Generally in our context all these hierarchies can be made
%equivalent in power, though there are some subtleties involved;See Appendix~\ref{app:hierarchies} for more details.) We describe the SoS hierarchy formally in
%Section~\ref{sec:SoS}. We show that \tensorsdp 's extension to several rounds of the SoS hierarchy gives a non-trivial \emph{additive} approximation:

    \begin{theorem}\torestate{\label{thm:BCY}  Let $\tensorsdp^{(d)}$ denote the $n^{O(d)}$-time algorithm by extending \tensorsdp to $d$ rounds of the Sum-of-Squares hierarchy. Then for all $\e$, there is $d=O(\log (n)/\e^2)$ such that
\[
\tfnorm{A}^4 \leq \tensorsdp^{(d)}(A) \leq
\tfnorm{A}^4 + \eps\|A\|_{2\ra 2}^2 \|A\|_{2\ra \infty}^2 \mper
\]}
    \end{theorem}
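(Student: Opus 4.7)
The plan is to interpret the Sum-of-Squares relaxation through the lens of quantum information theory (explaining the name \textsf{BCY}) and then apply a Brand\~ao-Christandl-Yard-style de Finetti bound with operator norms. First, I would express $\tfnorm{A}^4$ as a polynomial optimization problem: writing $T_{ijkl} = \E_a[A_{ai}A_{aj}A_{ak}A_{al}]$, we have
\[
\tfnorm{A}^4 \;=\; \max_{\snormt{x}=1}\; \iprod{T,\, x^{\otimes 4}} .
\]
The lower bound $\tfnorm{A}^4 \leq \tensorsdp^{(d)}(A)$ is then immediate by feasibility of any true optimizer. The content of the theorem is the upper bound.

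Next, I would recall the standard dictionary between the SoS hierarchy and Bose-symmetric extensions. A degree-$2d$ pseudo-expectation $\pE$ satisfying $\pE[\snormt{x}] = 1$ and the relations implied by $\snormt{x}=1$ furnishes a symmetric state $\rho_{(d)}$ on $(\R^n)^{\otimes d}$ whose $4$-body reduced density matrix $\rho_4$ satisfies $\iprod{T, \rho_4} = \pE\iprod{T, x^{\otimes 4}}$. The object $\rho_{(d)}$ is precisely what the quantum literature calls a $d$-extendible (PPT-ignored) state, so the $d$-round SoS value is an upper bound on
\[
\max_{\rho \in \mathrm{Ext}_d}\; \iprod{T, \rho_4}.
\]
The bulk of the proof is to show that for $d = O(\log n/\eps^2)$, this extendible maximum is close to the separable maximum $\tfnorm{A}^4$.

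The main step, and the hardest one, is to invoke (or re-derive) the BCY-type bound. The idea is to apply a de Finetti theorem not in the usual dimension-dependent form, but in a form whose error is controlled by a norm on $T$ that corresponds to measuring one copy of $x$ and "teleporting'' the outcome. Concretely, if one applies an informationally complete POVM to $d-4$ of the copies of $\rho_{(d)}$, the resulting post-measurement states form a convex combination of nearly product states with error measured in the so-called one-way LOCC norm. The technical task is to verify that for our particular tensor $T$ arising from $A$, this LOCC (or equivalently ``$H$-norm'') bound reduces to
\[
\bignorm{T}_{H} \;\lesssim\; \snormt{A}_{2\to 2}\, \snormt{A}_{2\to \infty},
\]
which is exactly the scale appearing in the theorem: $\norm{A}_{2\to 2}^2$ controls the mean square output and $\norm{A}_{2\to\infty}^2$ controls the worst single coordinate of $Ax$, matching the quantity naturally paired with an informationally complete measurement on one Boolean-cube coordinate.

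Putting this together, $d = O(\log n/\eps^2)$ rounds suffice to obtain a $(1+\eps)$-approximation in the $H$-norm, which translates via the previous paragraph into the additive $\eps \snormt{A}_{2\to 2}\snormt{A}_{2\to\infty}$ error on $\tfnorm{A}^4$. The main obstacle I expect is the bookkeeping in the last step: one needs to be careful that the SoS hierarchy in the ``real'' setting (optimizing over $x\in\R^n$) really does produce a PSD operator on the \emph{symmetric} subspace of $(\R^n)^{\otimes d}$ with the right polynomial identities, and that the rounding/measurement argument respects the symmetric (as opposed to general PPT-extendible) structure so that the dependence on $n$ inside the logarithm is not replaced by something worse. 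These details are the place where one either cites the BCY result as a black box or reproves it; I would cite it and then check the norm translation explicitly for $T$.
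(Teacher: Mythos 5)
Your high-level architecture is the same as the paper's: recast the $2\to 4$ norm as bipartite separability, identify \tensorsdp with a symmetric-extension relaxation, and invoke Brand\~ao--Christandl--Yard to get a $\log(n)/\eps^2$-round additive guarantee in a norm matched to one-way LOCC measurements. But the two steps you label as ``technical tasks'' to be checked are precisely the content of the proof, and one framing choice will bite you if you try to carry it out.

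First, a direction slip: you write that ``the $d$-round SoS value is an upper bound on $\max_{\rho\in\Ext_d}\iprod{T,\rho_4}$,'' which asserts SoS $\ge$ Ext. Your preceding sentence --- every pseudo-expectation produces a symmetric extension --- gives the reverse inequality, which is the one you actually need. The paper sharpens this to a two-sided equivalence (\pref{lem:equiv-DPS}) between $\tensorsdp^{(2r+2)}$ and the level-$r$ DPS relaxation of $A_{2,2}$, and the proof is not just bookkeeping: one must check that the moment matrix is supported on the full symmetric subspace $\vee^{r+1}\R^n$, that the real/complex distinction is harmless, and that the PPT constraint $\sigma=\sigma^{T_A}$ holds.

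Second, BCY is a \emph{bipartite} de Finetti theorem: it bounds the 1-LOCC distance from a state $\rho^{AB}$ that is $k$-extendible on $B$ to $\Sep(A\!:\!B)$, with the logarithm depending on $\dim A$. Your writeup keeps the 4-tensor $T$ and a ``4-body reduced state $\rho_4$'' in play, and there is no 4-body version of the theorem being cited. You have to first pass to the bipartite operator $A_{2,2}=\sum_i a_ia_i^T\ot a_ia_i^T$ on $\R^n\ot\R^n$ (this is \pref{lem:equiv}) so that the extension copies all sit on one side.

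Third, the step you wave at --- that the relevant LOCC-scale of $T$ is $\norm{A}_{2\to 2}^2\norm{A}_{2\to\infty}^2$ --- is the crux, and your heuristic about mean-square output versus worst coordinate is not a derivation. The paper exhibits an explicit one-way LOCC protocol: Alice measures $\{\alpha\, a_ia_i^T\}_{i}$ and sends the outcome $i$; Bob then performs the two-outcome measurement $\{\beta\, a_ia_i^T,\; I-\beta\, a_ia_i^T\}$. Validity forces $\alpha\le 1/\norm{A^TA}_{2\to 2}=1/\norm{A}_{2\to 2}^2$ and $\beta\le 1/\max_i\cnorm{a_i}_2^2 = 1/\norm{A}_{2\to\infty}^2$, and the accept operator is $\alpha\beta A_{2,2}$. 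That identification is what converts BCY's $\eps$-additive guarantee for normalized 1-LOCC measurements into the additive $\eps\norm{A}_{2\to 2}^2\norm{A}_{2\to\infty}^2$ error in the theorem; without it you only know that \emph{some} $Z$ with $A_{2,2}\preceq ZM$ exists, with no control on its size, and the theorem statement evaporates.
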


The term $\|A\|_{2\ra 2}^2 \|A\|_{2\ra \infty}^2$ is a natural upper bound on $\tfnorm{A}^4$ obtained using \Holder 's inequality.   Since $\|A\|_{2 \ra 2}$ is the largest singular value of $A$, and $\|A\|_{2\ra \infty}$ is the largest 2-norm of any row of $A$, they can be computed quickly. Theorem~\ref{thm:BCY} shows that one can improve this upper bound by a factor of $\e$ using run time  $\exp(\log^2(n) / \eps^2)$). Note however that in the special case (relevant to the UGC)  that $A$ is a projector to a subspace $V$, $\norm{A}_{2\to 2}=1$ and $\norm{A}_{2\to \infty} \geq \sqrt{\dim(V)}$ (see Lemma~\ref{lem:two-to-infty}), which unfortunately means that Theorem~\ref{thm:BCY} does not give any new algorithms in that setting.

Despite Theorem~\ref{thm:BCY} being a non-quantum algorithm for for an
ostensibly non-quantum problem, we actually achieve it using the
results of  Brand\~{a}o, Christandl and Yard~\cite{BrandaoCY11} about
the quantum separability problem.  In fact, it turns out that the SoS
hierarchy extension of \tensorsdp is equivalent to techniques that
have been used to approximate separable states~\cite{DohertyPS04}.
We find this interesting both because there are few positive general
results about the convergence rate of SDP hierarchies, and because the
techniques of \cite{BrandaoCY11}, based on entanglement measures of
quantum states, are different from typical ways of proving correctness
of semidefinite programs, and in particular different techniques from
the ones we use to analyze \tensorsdp in other settings. This
connection also means that integrality gaps for \tensorsdp would imply
new types of entangled states that pass most of the known tests for
separability.

%
%
%\Dnote{should say that for projectors the $2\to \infty$ norm is at least
%  the dimension of the subspace. hence, for achieving a good approximation
%  for projectors the bound on the running time is exponential in the
%  dimension of the subspace}
%
%\Dnote{at this point of the paper, we haven't talked about hierarchies at
%  all. so for example it is not clear that the algorithm in the theorem
%  runs in time $n^d$. should maybe state that explicitly.}

\subsection{Reductions}

We relate the question of computing the hypercontractive norm with two other problems considered in the literature: the \emph{small set expansion} problem~\cite{RaghavendraS10,RaghavendraST10},  and the \emph{injective tensor norm} question studied in the context of quantum information theory~\cite{HarrowM10,BrandaoCY11}.

\subsubsection{Hypercontractivity and small set expansion}

Khot's \emph{Unique Games Conjecture}~\cite{Khot02a} (UGC) has been
the focus of intense research effort in the last few years. The
conjecture posits the hardness of approximation for a certain
constraint-satisfaction problem, and shows promise to settle many open
questions in the theory of approximation algorithms. Many works have
been devoted to studying the plausibility of the UGC, as well as
exploring its implications and obtaining unconditional results
inspired or motivated by this effort. Tantalizingly, at the moment we
have very little insight on whether this conjecture is actually true,
and thus producing evidence on the UGC's truth or falsity is a central
effort in computational complexity. Raghavendra and
 Steurer~\cite{RaghavendraS10} proposed  a hypothesis closely related to the UGC
called the \emph{Small-Set Expansion} hypothesis (SSEH). Loosely
speaking, the SSEH states that it is NP-hard to certify that a given
graph $G=(V,E)$ is a \emph{small-set expander} in the sense that
subsets with size $o(|V|)$ vertices have almost all their neighbors
outside the set. \cite{RaghavendraS10}  showed that SSEH implies UGC. While a reduction in the other direction is not known, all currently known algorithmic and integrality gap results apply to both problems equally well (e.g., \cite{AroraBS10,RaghavendraST12}), and thus the two conjectures are likely to be equivalent.

We show, loosely speaking, that a graph is a small-set expander if and
only if the projection operator to the span of its top eigenvectors
has bounded $2\to 4$ norm.   To make this precise, if $G=(V,E)$ is a
regular graph, then let $P_{\ge \lambda}(G)$ be the projection operator into the span of the eigenvectors of $G$'s normalized adjacency
matrix with eigenvalue at least $\lambda$, and $\bd_G(\delta)$ be
$\min_{S\subseteq V, |S|\leq \delta |V|} \Pr_{(u,v)\in E}[ v\not\in S
| u\in S]$.

Then we relate small-set expansion to the $2\ra 4$ norm (indeed the
$2\ra q$ norm for even $q\geq 4$) as follows:
\begin{theorem} \torestate{\label{thm:sse-hyper} For every regular graph $G$,  $\lambda >0$ and even $q$,
\begin{enumerate}
\item \emph{(Norm bound implies expansion)} For all $\delta>0 ,\e>0$,
  $\Vert P_{\ge\lambda}(G) \Vert_{2 \to q} \leq \e/\delta^{(q-2)/2q}$ implies that
 $\bd_G(\delta) \geq 1-\lambda - \e^2$.

\item \emph{(Expansion implies norm bound)} There are constants $c_1, c_2 > 0$ such that for all $\delta>0$, $\bd_G(\delta) > 1 - c_1\lambda^{2q} 2^{- c_2q}$
  implies $\Vert P_{\ge\lambda}(G) \Vert_{2 \rightarrow q} \leq 2/\sqrt{\delta}$.
\end{enumerate}}
\end{theorem}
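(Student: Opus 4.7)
My plan is to prove each direction by constructing an appropriate ``test function'' bridging $\|P_{\ge\lambda}(G)\|_{2\to q}$ and $\bd_G(\delta)$.

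\emph{Part 1 (norm bound implies expansion).} Argue by contrapositive. Fix $S\subseteq V$ with $|S|\le\delta|V|$ and $\bd_S := \Pr_{(u,v)\in E}[v\notin S\mid u\in S] < 1-\lambda-\e^2$, so that the normalized adjacency operator $A$ satisfies $\langle A\,1_S,\,1_S\rangle = \delta(1-\bd_S) > \delta(\lambda+\e^2)$. Split $1_S = f_{\mathrm{hi}} + f_{\mathrm{lo}}$ with $f_{\mathrm{hi}} := P_{\ge\lambda}(G)\,1_S$. Because $A$ has eigenvalues at most $1$ on the top eigenspace and at most $\lambda$ on its orthogonal complement,
\[
\delta(\lambda+\e^2) < \langle A\,1_S,\,1_S\rangle \le \|f_{\mathrm{hi}}\|_2^2 + \lambda\,\|f_{\mathrm{lo}}\|_2^2 = \lambda\delta + (1-\lambda)\,\|f_{\mathrm{hi}}\|_2^2,
\]
yielding $\|f_{\mathrm{hi}}\|_2 > \e\sqrt\delta$. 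By H\"older, $\|f_{\mathrm{hi}}\|_2^2 = \langle f_{\mathrm{hi}},\,1_S\rangle \le \|f_{\mathrm{hi}}\|_q \cdot \|1_S\|_{q/(q-1)} = \|f_{\mathrm{hi}}\|_q\,\delta^{(q-1)/q}$. Using that $f_{\mathrm{hi}}$ lies in $\operatorname{range}(P_{\ge\lambda}(G))$, so $\|f_{\mathrm{hi}}\|_q \le \|P_{\ge\lambda}(G)\|_{2\to q}\,\|f_{\mathrm{hi}}\|_2$, I get $\|f_{\mathrm{hi}}\|_2 \le \|P_{\ge\lambda}(G)\|_{2\to q}\,\delta^{(q-1)/q}$. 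Combining with $\|f_{\mathrm{hi}}\|_2 > \e\sqrt\delta$ gives $\|P_{\ge\lambda}(G)\|_{2\to q} > \e/\delta^{(q-2)/(2q)}$, contradicting the hypothesis.

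\emph{Part 2 (expansion implies norm bound).} Again by contrapositive. Let $g\in\operatorname{range}(P_{\ge\lambda}(G))$ with $\|g\|_2=1$ and $M := \|g\|_q > 2/\sqrt\delta$. The range condition gives $\langle Ag,\,g\rangle \ge \lambda\|g\|_2^2 = \lambda$, and since $A$ has non-negative entries $\langle A|g|,\,|g|\rangle \ge \langle Ag,\,g\rangle$, so I may assume $g\ge 0$. My plan is to extract a level set $T_{i^\ast}:=\{x:g(x)\ge 2^{i^\ast}\}$ of $g$ that is both small ($|T_{i^\ast}|/|V|\le\delta$) and has good inner-product ratio $\langle A\,1_{T_{i^\ast}},\,1_{T_{i^\ast}}\rangle/\|1_{T_{i^\ast}}\|_2^2 \ge \lambda\,2^{-cq}$. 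Writing $g \le 2h$ with $h:=\sum_i 2^{i-1}\,1_{T_i\setminus T_{i+1}}$, expand
\[
\lambda \le \langle Ag,g\rangle \le 4\langle Ah,h\rangle = \sum_{i,j} 2^{i+j}\,\langle A\,1_{T_i\setminus T_{i+1}},\,1_{T_j\setminus T_{j+1}}\rangle.
\]
Two basic facts cut this sum down: Markov's inequality gives $|T_i|/|V|\le M^q/2^{iq}$, while $\|g\|_2^2=1$ bounds low-scale contributions. Together they confine the appreciably contributing indices to $O(q)$ consecutive scales. The nesting $T_j\subseteq T_i$ for $j\ge i$ combined with non-negativity of $A$ then lets me dominate off-diagonal terms by diagonal ones, and pigeonholing over the $O(q)^2$ relevant pairs yields some $i^\ast$ with $2^{2i^\ast}\,\langle A\,1_{T_{i^\ast}},\,1_{T_{i^\ast}}\rangle \ge \lambda\cdot 2^{-cq}$ for some absolute constant $c$. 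The hypothesis $M > 2/\sqrt\delta$ combined with Markov forces $|T_{i^\ast}|/|V|\le \delta$, so the identity $\|1_{T_{i^\ast}}\|_2^2 = |T_{i^\ast}|/|V|$ and the preceding bound exhibit $T_{i^\ast}$ as a witness to $\bd_G(\delta) \le 1-\lambda\,2^{-cq}$.

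The main obstacle is the dyadic bookkeeping in Part~2. The three delicate points are: (i) trimming the double sum to $O(q)$ relevant scales, which is precisely what produces the exponential factor $2^{-cq}$ in the theorem; (ii) controlling off-diagonal cross-terms $\langle A\,1_{T_i},\,1_{T_j}\rangle$ for $i\neq j$ without assuming PSD-ness of $A$, where the nesting of level sets and entrywise non-negativity of $A$ are essential; and (iii) concluding from the comparatively weak hypothesis $M > 2/\sqrt\delta$ rather than from the sharper $M \ge 1/\delta^{(q-2)/(2q)}$ that would match Part~1 exactly---this looseness of a $\delta^{1/q}$ factor is what allows the clean bound $2/\sqrt\delta$ in the statement.
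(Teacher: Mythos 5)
Your Part~1 is correct and essentially coincides with the paper's argument in \pref{app:hyper-imp-sse} (which phrases the same inequality via the dual $\tfrac{q}{q-1}\to 2$ norm; the content is the same).

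Your Part~2 has a genuine gap. After the legitimate reduction to $g\geq 0$, the only facts about $g$ your argument actually uses are $\|g\|_2=1$, $\|g\|_q>2/\sqrt{\delta}$, and $\iprod{Ag,g}\geq\lambda$; you then try to derive the existence of a small, low-expansion level set from this triple alone. That implication is false. Take a regular graph whose second eigenvalue is $1-\gamma$ with $\gamma$ close to $1$, let $S$ be any set of tiny measure $\mu$, and set $g=a\cdot 1_S+b\cdot 1_{V\setminus S}$ with $a=\mu^{-1/2+\eps}$ for small $\eps>0$ and $b\approx 1$ chosen so $\|g\|_2=1$. Then $\|g\|_q$ is as large as you like as $\mu\to 0$, while
$\iprod{Ag,g}=a^2\iprod{A1_S,1_S}+2ab\iprod{A1_S,1_{V\setminus S}}+b^2\iprod{A1_{V\setminus S},1_{V\setminus S}}\approx 1\geq\lambda$,
with essentially all of the quadratic form carried by the ``floor'' term $b^2\iprod{A1_{V\setminus S},1_{V\setminus S}}$ and the cross term---\emph{neither of which is attached to a small level set}. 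The only nontrivial level set is $S$ itself, and by expander mixing $\bd(S)\approx\gamma>1-\lambda 2^{-cq}$ exactly in the regime where the theorem's hypothesis $\bd_G(\delta)>1-\lambda 2^{-cq}$ is nonvacuous, so $S$ is not a witness. In particular your claimed step ``nesting plus entrywise non-negativity of $A$ lets me dominate off-diagonal terms by diagonal ones'' does not survive: the large cross term is between the smallest level set and the largest one (all of $V$), and there is no small diagonal term to compare it against.

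The missing ingredient is any use of the fact that $g$ lies in the \emph{full} top eigenspace, beyond the single scalar inequality $\iprod{Ag,g}\geq\lambda$. The paper's proof (\pref{lem:ssetonorm}) works quite differently: it fixes $f\in V_{\geq\lambda}(G)$ with $\|f\|_2=1$ that \emph{maximizes} $\|f\|_q$, writes $f=Gg$ with $g=\sum_i(\alpha_i/\lambda_i)\chi_i$ also in $V_{\geq\lambda}(G)$ and $\|g\|_2\leq 1/\lambda$, and exploits the expansion hypothesis---reformulated as a collision-probability bound on the neighbor distribution $G(S)$ for small $S$---to show that the high level sets of $f$ ``spread'' into much larger sets on which $g$ remains large. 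A careful bucketing of the level sets of $f$ (not of $g$) then forces $\|g\|_q\geq 10\|f\|_q/\lambda$, contradicting the extremal choice of $f$. Both the extremality argument and the collision-probability reformulation of expansion are essential and absent from your sketch; a dyadic decomposition of $|g|$ alone cannot replace them.
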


% OLD version:
%\begin{theorem} \torestate{\label{thm:sse-hyper} For every regular graph $G$ and  $\lambda >0$,
%\begin{enumerate}
%\item \emph{(Norm bound implies expansion)} For all $\delta>0 ,\e>0$,
%  $\tfnorm{P_{\ge\lambda}(G)} \leq \e/\delta^{1/4}$ implies that
%  $\bd_G(\delta) \geq 1-\lambda - \e^2$.
%
%\item \emph{(Expansion implies norm bound)} There is some $\eta =
%  \eta(\lambda)$ such that for all $\delta>0$, $\bd_G(\delta) > 1 - \eta$
%  implies $\tfnorm{P_{\ge\lambda}(G)} \leq 2/\sqrt{\delta}$.
%\end{enumerate}}
%\end{theorem}
%

While one direction (bounded hypercontractive norm implies small-set
expansion) was already known,\footnote{While we do not know who was
  the first to point out this fact explicitly, within theoretical CS it was implicitly used in several results relating the Bonami-Beckner-Gross hypercontractivity of the Boolean noise operator to isoperimetric properties, with one example being O'Donnell's proof of the soundness of \cite{KhotV05}'s integrality gap (see \cite[Sec~9.1]{KhotV05}).} to our knowledge the other direction is novel. As a corollary we show that the SSEH implies that there is no good approximation for the $2\to 4$ norm.

\subsubsection{Hypercontractivity and the injective tensor norm}

We are able to make progress in understanding both the complexity of the $2\ra 4$ norm and the quality of our SDP relaxation by relating the $2\ra 4$ norm to
several natural questions about tensors.  An $r$-tensor can be thought of as an $r$-linear form on $\R^n$, and the {\em injective tensor norm}
$\|\cdot\|_{\inj}$ of a tensor is given by maximizing this form over all unit vector inputs.   See \secref{ITN} for a precise definition. When $r=1$, this norm
is the 2-norm of a vector and when $r=2$, it is the operator norm (or $2\!\ra\! 2$-norm) of a matrix, but for $r=3$ it becomes NP-hard to calculate. One
motivation to study this norm comes from quantum mechanics, where computing it is equivalent to a number of long-studied problems concerning entanglement and
many-body physics~\cite{HarrowM10}.  More generally, tensors arise in a vast range of practical problems involving multidimensional data~\cite{vanLoan09} for
which the injective tensor norm is both of direct interest and can be used as a subroutine for other tasks, such as tensor decomposition~\cite{delaVegaKKV05}.

It is not hard to show that $\tfnorm{A}^4$ is actually equal to
$\norm{T}_{{\rm inj}}$ for some $4$-tensor $T=T_A$. Not all
$4$-tensors can arise this way, but we show that the injective tensor
norm problem for general tensors can be reduced to those of the form
$T_A$.  Combined with known results about the hardness of tensor
computations, this reduction implies the following hardness
result.  To formulate the theorem, recall that the Exponential Time
Hypothesis (ETH)~\cite{ImpagliazzoPZ98} states that 3-SAT instances of length $n$ require time
$\exp(\Omega(n))$ to solve.

\begin{theorem}[informal version]\label{thm:hardness}
  Assuming ETH, then for any $\eps,\delta$ satisfying $2\eps+\delta<1$,
  the $2\ra 4$ norm of an $m\times m$ matrix $A$ cannot be
  approximated to within a $\exp(\log^\eps(m))$ multiplicative factor in time less
  than $m^{\log^{\delta}(m)}$ time.  This hardness result holds even
  with $A$ is a projector.
\end{theorem}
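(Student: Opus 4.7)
The plan is to build a chain of reductions from an ETH-hard problem to approximating the $2\to 4$ norm, passing through the injective norm of a symmetric $4$-tensor. The key identity, already observed in the excerpt, is
\[
\tfnorm{A}^{4}=\max_{\norm{x}_{2}=1}\iprod{T_{A},x^{\otimes 4}}=\norm{T_{A}}_{\inj},
\]
where $T_{A}=\sum_{i}A_{i}^{\otimes 4}$ is the symmetric $4$-tensor built from the rows $A_{i}$ of $A$. Thus $\tfnorm{A}^{4}$ equals $\norm{T}_{\inj}$ for the special class of $4$-tensors expressible as sums of $4$-th powers of linear forms; as operators on the symmetric subspace of $\R^{n}\otimes\R^{n}$, these are exactly the elements of the cone generated by $(ww^{\top})\otimes(ww^{\top})$. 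So the reduction must produce, from a hard instance, a matrix $A$ whose $T_{A}$ encodes that instance.

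\textbf{The main reduction.} I would start from ETH-hardness for approximating the injective norm of a symmetric $3$-tensor, with the quantitative parameters stated in the theorem. Such a hardness result can be obtained by composing a sub-exponential PCP (in the spirit of Moshkovitz--Raz) on top of $3$-SAT, reducing to Label Cover, and encoding the Label-Cover value as a $3$-linear form on the unit sphere, along the lines of the quantum-separability reductions of Harrow--Montanaro and Aaronson--Beigi--Drucker--Fefferman--Shor. Given a hard symmetric $3$-tensor $S$ on $\R^{m}$, I would homogenize it to a $4$-tensor on $\R^{m+1}$ by appending a coordinate $y_{0}$ and working with the quartic $f(y)=c_{1}\,y_{0}\cdot S(x,x,x)+c_{2}\,q(y)$ on the unit sphere, where $q$ is an auxiliary form chosen so that $f$ admits a decomposition $f(y)=\sum_{i}\iprod{a_{i},y}^{4}$. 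Reading off the linear forms yields a matrix $A$ with $\tfnorm{A}^{4}=\max_{\norm{y}_{2}=1}f(y)$, and with a suitable choice of $c_{1},c_{2}$ and $q$ this maximum is monotone in $\norm{S}_{\inj}$ up to absolute constants, preserving the approximation gap with only polynomial blowup in dimension (the constraint $2\eps+\delta<1$ should drop out naturally from tracking how the cubic-to-quartic squaring moves the gap exponent). For the strengthened statement about projectors, I would then pad $A$ into the orthogonal projector $P$ onto the column span of the block matrix with blocks $I$ and $\alpha A$ in $\R^{n+m}$, with the scalar $\alpha$ tuned to balance $\tfnorm{A}$ against the $2\to 2$ norm; a direct block calculation shows $\tfnorm{P}$ is a controlled function of $\tfnorm{A}$ and $\alpha$, so the gap survives.

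\textbf{Main obstacle.} The delicate step is the sum-of-$4$-th-powers representation of $f(y)$: generic symmetric $4$-tensors do \emph{not} lie in the completely-positive cone on the symmetric subspace, so the auxiliary $q$ must push the non-CP part of $y_{0}\cdot S(x,x,x)$ into this cone without drowning out the signal from $S$. Producing such a decomposition with a quantitative bound on how deep into the CP cone one must shift, and with only polynomial blowup in dimension, is where I expect the bulk of the technical work to lie; this is also the step most closely linked to the quantum-separability viewpoint underlying \prettyref{thm:BCY}, which could plausibly be exploited either to simplify the reduction or, conversely, to run it in the reverse direction to obtain hardness for separability from hardness for $\tfnorm{\cdot}$.
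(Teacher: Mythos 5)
Your proposal follows the route that the authors explicitly considered and abandoned, and the obstacle you flag at the end is not a technical detail to be filled in --- it is the reason the route fails and a fundamentally different construction is needed.

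Concretely, the ``shift into the SOP$_4$ cone'' step cannot preserve a multiplicative gap. Writing $f(y)=c_1\,y_0\,S(x,x,x)+c_2\,q(y)$ with $q$ chosen so that $f$ is a sum of fourth powers forces $c_2/c_1 = \Omega(\poly(n))$: the cheapest available shift is by a multiple of $\norm{y}_2^4$, and for a generic symmetric $4$-tensor the required multiple is $\Theta(n^4)$. On the unit sphere the shift contributes a \emph{constant} $\Theta(c_2)$ to the maximum, so distinguishing $\norm{S}_{\inj}\le c$ from $\norm{S}_{\inj}\ge C$ becomes distinguishing $c_1 c+c_2$ from $c_1 C+c_2$, a multiplicative gap of only $1+O\bigl((C-c)/\poly(n)\bigr)$. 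This gives NP-hardness of $(1+1/\poly(n))$-approximation --- exactly the statement in part (1) of \pref{thm:hardness-formal} --- but it cannot yield the $\exp(\log^\eps m)$-factor hardness that is the point of the theorem. The paper's Section~\ref{sec:overview} calls this out verbatim: ``Straightforward approaches will magnify errors by $\poly(n)$ factors, which would make it impossible to rule out a PTAS for the $2\ra 4$ norm.''

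The actual proof (\pref{lem:approx-preserving}) sidesteps the SOP cone entirely. Starting from a hard instance $M_0$ with $0\le M_0\le I$ of $\bh_{\Sep(n,n)}$, it defines $M_1=(\sqrt{M_0}\,\hat\ot\,\sqrt{M_0})\,P\,(\sqrt{M_0}\,\hat\ot\,\sqrt{M_0})$ where $P$ is the symmetrizing projector; by Wick's theorem $P=\E_{a,b}\bigl[(aa^*\ot bb^*)^{\hat\ot 2}\bigr]$, so $M_1=\E_{a,b}\,V_{a,b}^{\hat\ot 2}$ is \emph{automatically} of the form $A_{2,2}$ --- no shift, hence no additive loss. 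The gap is then amplified by tensoring, $M_2=M_1^{\ot k}$, which converts a constant gap $1$ vs.\ $1-\delta/2$ into $1$ vs.\ $(1-\delta/2)^k$ at a polynomial cost in dimension; this amplification, combined with the $m=\exp(\sqrt{n}\,\poly\log n)$ blowup of the \cite{HarrowM10} reduction, is what produces the $2\eps+\delta<1$ constraint. Your proposal has no amplification step and would not reproduce those parameters even if the decomposition issue were somehow finessed.

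Your projector construction also fails. Stacking $A$ on top of an identity block and projecting onto the column span does not control the $2\ra 4$ norm: for $g=A'f$ with $A'=\binom{I}{\alpha A}$ and $f$ a standard basis vector, the contribution from the $I$ block makes $\norm{g}_4/\norm{g}_2$ scale like $k^{1/4}$ regardless of $\tfnorm{A}$, destroying the gap. The paper instead stacks with a \emph{random Gaussian} block $B$ (\pref{lem:almost-proj}); by Dvoretzky's theorem $\norm{Bf}_4\approx 3^{1/4}\norm{f}_2$ uniformly, so the padding shifts the whole spectrum up to $\approx 3^{1/4}$ while keeping the YES/NO separation, and \pref{lem:exact-proj} then passes to the projector onto the image. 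The identity block has no such flatness property.
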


While we are primarily concerned with the case of $\Omega(1)$ approximation factor, we note that poly-time approximations to within multiplicative factor
$1+1/n^{1.01}$ are not possible unless $\p=\np$.  This, along with
\thmref{hardness}, is restated more formally as \thmref{hardness-formal} in
\secref{hardness}. Theorem~\ref{thm:hardness} yields as a corollary that, assuming ETH, there is no polynomial-time algorithm
obtaining a good approximation for the $2\ra 4$ norm. We note that these results hold under weaker assumptions than the ETH; see \secref{hardness} as well.

Previously no hardness results were known for the $2\ra 4$ norm, or
any $p\ra q$ norm with $p<q$, even for calculating the norms exactly.
However, hardness of approximation results for $1+1/\poly(n)$
multiplicative error have been proved for other polynomial
optimization problems~\cite{Ben-TalN98}.

%Indeed, as noted below, we don't know that the natural polynomial-time convex relaxation for the $2\to 4$ norm doesn't yield a good enough approximation to refute these conjectures. This is in contrast to many other \uniquegames-hard or \smallsetexpansion-hard problems, where limitations on approximation ratio have been shown for at least the basic relaxations, and sometimes even for some extensions of these relaxations.

\subsection{Relation to the Unique Games Conjecture}

Our results and techniques have some relevance to  the unique games conjecture.  Theorem~\ref{thm:sse-hyper} shows that obtaining a good approximation for the
$2\to q$ norm is \smallsetexpansion hard, but Theorem~\ref{thm:subexp} shows that this problem is not ``that much harder'' than \uniquegames and
\smallsetexpansion since it too has a subexponential algorithm. Thus, the $2\to q$ problem is in some informal sense ``of similar flavor'' to the \uniquegames
/ \smallsetexpansion . On the other hand, we actually are able to show in Theorem~\ref{thm:hardness} \emph{hardness} (even if only quasipolynomial) of this
problem, whereas a similar result for \uniquegames or \smallsetexpansion would be a major breakthrough. So there is a sense in which these results can be
thought of as some ``positive evidence'' in favor of at least weak variants of the UGC. (We emphasize however that there are inherent difficulties in extending
these results for \uniquegames, and it may very well be that obtaining a multiplicative approximation to the $2\to 4$ of an operator is significantly harder
problem than \uniquegames or \smallsetexpansion .)  In contrast, our positive algorithmic results show that perhaps the $2\to q$ norm can be thought of as a
path to refuting the UGC. In particular we are able to extend our techniques to show a polynomial time algorithm can approximate the canonical hard instances
for \uniquegames considered in prior works.

\begin{theorem}\label{thm:i-gaps}
(Informal) Eight rounds of the SoS relaxation certifies that it is possible to satisfy at most $1/100$ fraction of the constraints of \uniquegames instances of
the ``quotient noisy cube'' and ``short code'' types considered in~\cite{RaghavendraS09c,KhotS09,KhotPS10,BarakGHMRS11}
\end{theorem}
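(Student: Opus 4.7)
The plan is to SoS-ify the standard hypercontractivity-based soundness analysis of the cited instances, using the constant-round SoS proof of Theorem~\ref{thm:hyper-poly} as the central ingredient. All of the instances from \cite{RaghavendraS09c,KhotS09,KhotPS10,BarakGHMRS11} have the following common structure: the alphabet is (a quotient of) $\{\pm 1\}^n$ (or its short-code analog), and the constraint graph is built from the noise operator $T_{1-\eta}$. The classical soundness analyses extract from any good labeling the $\{0,1\}$-indicators $f_\ell$ of each label class, expand them in the Fourier (or short-code character) basis, split into the low-degree part $\cP_{\le d} f_\ell$ and high-degree part $(\mathbf{1}-\cP_{\le d})f_\ell$, and bound the UG value by (i) the spectral contribution of $\cP_{\le d}$, which is at most $(1-\eta)^d$ times the total mass, plus (ii) a Cauchy--Schwarz / hypercontractivity bound on the high-degree piece that uses $\|\cP_{\le d}\|_{2\to 4}\le 9^{d/2}$. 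Choosing $d$ a suitable constant as a function of $\eta$ gives total value below $1/100$.

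The first step is to formulate the UG value as a degree-$8$ polynomial program over variables $x_{v,\ell}\in\{0,1\}$ with $\sum_\ell x_{v,\ell}=1$, and then to translate the classical argument above into polynomial inequalities that are valid for every degree-$8$ pseudo-expectation $\pE$ satisfying the UG constraints. Given such $\pE$, one defines pseudo-functions $f_\ell$ on the cube by averaging $x_{v,\ell}$ over vertices $v$ with a given cube element, notes that their Fourier coefficients are linear forms in the $x_{v,\ell}$'s, and observes that quantities like $\pE\|\cP_{\le d} f_\ell\|_4^4$ and $\pE\|f_\ell\|_2^2\cdot\pE\|f_\ell\|_2^2$ are polynomials of degree at most $8$ in the $x$ variables. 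At this point Theorem~\ref{thm:hyper-poly} can be plugged in as a degree-$4$ SoS identity, certifying
\[
\pE\bigl\|\cP_{\le d} f_\ell\bigr\|_4^4 \;\le\; 9^d\,\Paren{\pE\|f_\ell\|_2^2}^2\mper
\]
Combining this with a degree-$O(1)$ Cauchy--Schwarz that moves between $2$- and $4$-norms, and summing over $\ell$, reproduces the classical value bound at the pseudo-expectation level.

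The main obstacle is verifying that the proof of Theorem~\ref{thm:hyper-poly} carries over intact to the quotient and short-code settings. For the ``quotient noisy cube'' this is essentially automatic, since a quotient by a group action preserves the low-degree projector and the relevant $4$-norm identity is symmetric under the action. For the ``short code'' one must show that the projector onto low-degree characters restricted to a small-bias code $C$ admits the same SoS proof of $\|\cdot\|_{2\to 4}\le 9^{d/2}$; this follows because the proof of Theorem~\ref{thm:hyper-poly} only invokes low-order moments of the uniform distribution on $\{\pm 1\}^n$, and these moments on $C$ agree with those on the cube up to error $O(\mathrm{bias}(C))$, which is negligible for the codes used in the cited papers. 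A secondary issue is that classical analyses use a randomized rounding (e.g., assign $v$ the label $\ell$ with probability $x_{v,\ell}$), and one must replace ``existence of a good assignment'' by ``nonnegativity of an SoS polynomial''; this is the standard conversion and costs only a few additional degrees. Counting up, $4$ rounds are consumed by Theorem~\ref{thm:hyper-poly}, and the Cauchy--Schwarz plus label-averaging steps fit comfortably within $4$ more, giving the claimed $8$ rounds.
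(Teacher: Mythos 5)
Your proposal misidentifies which instances the theorem is about, and as a result omits the two hardest technical steps. The instances in \cite{RaghavendraS09c,KhotS09,KhotPS10,BarakGHMRS11} are \emph{gadget-composed} unique games $\cW$: a basic instance $\cU$ (whose label-extended graph is $T_{1-\eta}$ on $\sbits^R$, with alphabet of size $R=\omega(1)$) is composed with a KKMO-style long-code (or short-code) alphabet-reduction gadget to get a constant-alphabet instance. What you describe --- extracting label indicators, splitting into low/high degree by the spectrum of $T_{1-\eta}$, and invoking the $2\to 4$ hypercontractivity of $\cP_{\le d}$ --- is essentially the soundness analysis for the \emph{basic} instance $\cU$ alone (and you also have the spectral bound on the wrong piece: it is the \emph{high}-degree part that picks up the $(1-\eta)^d$ factor, while hypercontractivity controls the low-degree part). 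For the composed instance $\cW$, the soundness argument you need to SoS-ify is the dictatorship-test analysis: show that a local function passing the gadget test must have an influential coordinate (``Majority is Stablest'', which rests on the invariance principle of \cite{MosselOO05}), and then use \emph{influence decoding} to transfer influences back into a good labeling of $\cU$. Your proposal contains neither ingredient, so it cannot certify the value of $\cW$; plugging in Theorem~\ref{thm:hyper-poly} plus Cauchy--Schwarz only gets you a bound on $\cU$.

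The deeper obstruction, which you do not confront, is that the invariance principle is the step that does \emph{not} lift to SoS by a routine syntactic translation: the standard proof uses smooth ``bump'' test functions that are nowhere low-degree polynomials, so there is no degree-$O(1)$ sum-of-squares certificate to extract from it. The paper's actual contribution here is a replacement --- an invariance principle for the fourth moment only, proved by a hybrid argument whose intermediate inequalities are SoS (hypercontractivity plus pseudo-Cauchy--Schwarz) --- together with an SoS-compatible version of independent rounding (needed because your $f$'s only satisfy $f^2\sle f$ rather than $f^2=f$), and an SoS version of influence decoding that produces a solution of the basic instance $\cU$, whose value is then bounded by the hypercontractivity argument you describe. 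Your ``secondary issue'' about randomized rounding and your claim that Theorem~\ref{thm:hyper-poly} ``carries over intact'' paper over exactly these points; the former is not a trivial bookkeeping step in the pseudo-expectation world, and the latter is only one of several lemmas needed. Without the fourth-moment invariance principle and influence decoding, there is no route from a degree-$8$ pseudo-expectation for $\cW$ to the hypercontractivity inequality you want to apply.
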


These instances are the same ones for which previous works showed that weaker hierarchies such as ``SDP+Sherali Adams'' and ``Approximate Lasserre'' require
$\omega(1)$ rounds to certify that one cannot satisfy almost all the constraints~\cite{KhotV05,RaghavendraS10,KhotS09,BarakGHMRS11}. In fact, for the ``short
code'' based instances of \cite{BarakGHMRS11} there was no upper bound known better than $\exp(\log^{\Omega(1)} n)$ on the number of rounds required to certify
that they are not almost satisfiable, regardless of the power of the hierarchy used.

This is significant since the current best known algorithms for \uniquegames utilize SDP hierarchies~\cite{BarakRS11,GuruswamiS11b},\footnote{Both these works
showed SDP-hierarchy-based algorithms matching the performance of the subexponential algorithm of \cite{AroraBS10}. \cite{GuruswamiS11b} used the Lasserre
hierarchy, while \cite{BarakRS11} used the weaker ``SDP+Sherali-Adams'' hierarchy.} and the instances above were the only known evidence that polynomial time
versions of these algorithms do not refute the unique games conjecture. Our work also show that strong ``basis independent'' hierarchies such as Sum of
Squares~\cite{Parrilo00,Parrilo03} and Lasserre~\cite{Lasserre01} can in fact do better than the seemingly only slightly weaker variants.\footnote{The only
other result of this kind we are aware of is~\cite{KarlinMN11}, that show that Lasserre gives a better approximation ratio than the linear programming
Sherali-Adams hierarchy for the knapsack problem. We do not know if weaker semidefinite hierarchies match this ratio, although knapsack of course has a simple
dynamic programming based PTAS.}

\section{The SoS hierarchy} \label{sec:SoS}

%For our algorithmic results in this paper we consider a semidefinite programming (SDP) hierarchy that we call the \emph{Sum of Squares} (SoS) hierarchy. We
%call the hierarchy we consider here the \emph{Sum of Squares} (SoS) hierarchy. This is not a novel algorithm and essentially the same hierarchies were
%considered by many other researchers including Shor~\cite{NZShor87}, Parrilo~\cite{Parrilo00,Parrilo03}, and Lasserre~\cite{Lasserre01}.  Because different
%works sometimes used slightly different definitions, in this section we formally define the hierarchy we use as well as explain the intuition behind it. While
%there are some subtleties involved, one can think of this hierarchy as equivalent in power to the programs considered by
%\cite{NZShor87,Parrilo00,Parrilo03,Lasserre01}, while stronger than hierarchies such ``SDP+Sherali-Adams'' and ``Approximate Lasserre'' considered
%in~\cite{RaghavendraS09c,KhotPS10,BarakRS11}.

For our algorithmic results in this paper we consider a semidefinite programming (SDP) hierarchy that we call the \emph{Sum of Squares} (SoS) hierarchy. This is not a novel algorithm and essentially the same hierarchies were
considered by many other researchers (see the survey \cite{Laurent09}).  Because different works sometimes used slightly different definitions, in this section
we formally define the hierarchy we use as well as explain the intuition behind it. While there are some subtleties involved, one can think of this hierarchy
as equivalent in power to the programs considered by Parrilo, Lasserre and others, while stronger than hierarchies such ``SDP+Sherali-Adams'' and ``Approximate
Lasserre'' considered in~\cite{RaghavendraS09c,KhotPS10,BarakRS11}.

The SoS SDP is a relaxation for polynomial equations. That is, we consider a system of the following form: maximize $P_0(x)$ over $x \in \R^n$ subject to
$P_i^2(x)=0$ for $i=1\ldots m$ and $P_0,\ldots,P_m$ polynomials of degree at most $d$.\footnote{This form is without loss of generality, as one can translate
an inequality constraint of the form $P_i(x) \geq 0$ into the equality constraint $(P_i(x) - y^2)^2=0$ where $y$ is some new auxiliary variable. It is useful to show equivalences between various hierarchy formulations; see also Appendix~\ref{app:hierarchies}.}  For $r \geq 2d$, the \emph{$r$-round SoS SDP} optimizes over $x_1,\ldots,x_n$ that can be thought of as formal
variables rather than actual numbers. For these formal variables, expressions of the form $P(x)$ are well defined and correspond to a real number (which can be
computed from the SDP solution) as long as $P$ is a polynomial of degree at most $r$. These numbers obey the \emph{linearity} property which is that $(P+Q)(x)
= P(x) + Q(x)$, and, most importantly, the \emph{positivity} property that $P^2(x)\geq 0$ for every polynomial $P$ of degree at most $r/2$.  These expressions
satisfy all initial constraints (i.e., $P_i^2(x)=0$ for $i=1\ldots m$) and the \emph{value} of the SDP is set to be the expression $P_0(x)$. The above means
that to show that the SoS relaxation has value at most $v$ it is sufficient to give any proof that derives from the constraints $\{ P_i^2(x) =0  \}_{i=1..m}$
the conclusion that $P_0(x) \leq v$ using only the linearity and positivity properties, without using any polynomials of degree larger than $r$ in the
intermediate steps. In fact, such a proof always has the form \be v - P_0(x) = \sum_{i=1}^k R_i(x)^2 + \sum_{i=1}^m P_i(x) Q_i(x), \label{eq:SOS-proof}\ee
where $R_1,\ldots,R_k,Q_1,\ldots,Q_m$ are arbitrary polynomials satisfying $\deg R_i \leq r/2, \deg P_iQ_i \leq r$. The polynomial $\sum_i R_i(x)^2$ is a SoS
(sum of squares) and optimizing over such polynomials (along with the $Q_1,\ldots,Q_m$) can be achieved with a semi-definite program.

\Bnote{removed: This shows why it is crucial that $k$ be unbounded (apart from the trivial $k\leq \binom{n+r-1}{r}$) so that we don't have to impose a rank
constraint on the matrices we optimize over.

I think the reader would find this confusing at this point}

\paragraph{Pseudo-expectation view}  For more intuition about the SoS hierarchy, one can imagine that instead of being formal variables, $x_1,\ldots,x_n$ actually correspond to correlated random variables $X_1,\ldots,X_n$ over $\R^n$, and the expression  $P(x)$ is set to equal the expectation $\E [ P(X) ]$. In this case, the linearity and positivity properties are obviously satisfied by these expressions, although other properties that would be obtained if $x_1,\ldots,x_n$ were simply numbers might not hold. For example, the property that $R(x) = P(x)Q(x)$ if $R=P\cdot Q$ does not necessarily hold, since its not always the case that $E[XY] = E[X]E[Y]$ for every three random variables $X,Y,Z$. So, another way to describe the $r$-round SoS hierarchy is that the expressions $P(x)$ (for $P$ of degree at most $r$) satisfy some of the constraints that would have been satisfied if these expressions corresponded to expectations over some correlated random variables $X_1,\ldots,X_N$. For this reason, we will use the notation $\pE_x P(x)$ instead of  $P(x)$ where we refer to the functional $\pE$ as a level-$r$ \emph{pseudo-expectation functional} (or $r$-\pef for short). Also, rather than describing $x_1,\ldots,x_n$ as formal variables, we will refer to them as level-$r$ \emph{fictitious random variables} (or $r$-\frv for short) since in some sense they look like true correlated random variables up to their $r^{\text{th}}$ moment.

We can now present our formal definition of pseudo-expectation and the SoS hierarchy:\footnote{We use the name ``Sum of Squares'' since the positivity
condition below is the most important constraint of this program. However, some prior works used this name for the \emph{dual} of the program we define here.
As we show in Appendix~\ref{app:hierarchies}, in many cases of interest to us there is no duality gap.}

%We can now present our formal definition of pseudo-expectation and the SoS hierarchy:\footnote{We use the name ``Sum of Squares'' since the positivity
%condition below is the most important constraint of this program. However, some prior works used this name for the \emph{dual} of the program we define here.
%We do not know if there can be a duality gap in the cases of interest to us. However, we note that in many cases our arguments to upper bound the value of our
%program yield upper bounds on the dual program as well. }

\begin{definition}  Let $\pE$ be a functional that maps a polynomial $P$ over $\R^n$ of degree at most $r$ into a real number which we denote by $\pE_x P(x)$ or $\pE P$ for short.  We say that $\pE$ is a \emph{level-$r$ pseudo-expectation functional} ($r$-\pef for short)  if it satisfies:
\begin{description}

\item[Linearity] For every polynomials $P,Q$ of degree at most $r$ and $\alpha,\beta \in \R$, $\pE( \alpha P + \beta Q) = \alpha \pE P + \beta \pE Q$.

\item[Positivity] For every polynomial $P$ of degree at most $r/2$, $\pE P^2 \geq 0$.

\item[Normalization] $\pE 1 = 1$ where on the RHS, $1$ denotes the degree-$0$ polynomial that is the constant $1$.
\end{description}
\end{definition}

\begin{definition} \label{def:pseudo-SDP}
Let $P_0,\ldots, P_m$ be polynomials over $\R^n$ of degree at most $d$, and let $r \geq 2d$. The value of the  $r$-round SoS SDP for the program ``$\max P_0$ subject to $P_i^2=0$ for $i=1\ldots m$'', is equal to the maximum of $\pE P_0$ where $\pE$ ranges over all level $r$ pseudo-expectation functionals satisfying  $\pE P_i^2 = 0$ for $i=1\ldots m$.
\end{definition}

The functional $\pE$  can be represented by a table of size $n^{O(r)}$ containing the
pseudo-expectations of every monomial of degree at most $r$ (or some other
linear basis for polynomials of degree at most $r$). For a linear functional $\pE$, the map $P\mapsto \pE P^2$ is a quadratic
form. Hence, $\pE$ satisfies the positivity condition if and only if the corresponding quadratic form is positive semidefinite.
It follows that the convex set of level-$r$ pseudo-expectation functionals over $\R^n$ admits an $n^{O(r)}$-time
separation oracle, and hence the $r$-round SoS relaxation can be solved up to accuracy $\e$ in time
$(mn\cdot \log(1/\e))^{O(r)}$.

As noted above, for every random variable $X$ over $\R^n$, the functional $\pE P \seteq \E P(X)$ is
a level-$r$ pseudo-expectation functional for every $r$.  As $r\ra\infty$, this hierarchy of pseudo-expectations will converge to the
expectations of a true random variable~\cite{Lasserre01}, but the
convergence is in general not guaranteed to happen in a finite number of
steps~\cite{KlerkL11}.

%Whenever there can be ambiguity about what are the variables of the polynomial $P$ inside an $r$-\pef $\pE$, we will use the notation $\pE_x P(x)$ as opposed
%to $\pE$. As mentioned above, we call the inputs $x$ to the polynomial \emph{level-$r$ fictitious random variables} or $r$-\frv for short.

Whenever there can be ambiguity about what are the variables of the polynomial $P$ inside an $r$-\pef $\pE$, we will use the notation $\pE_x P(x)$ (e.g.,
$\pE_x x_3^2$ is the same as $\pE P$ where $P$ is the polynomial $x \mapsto x_3^2$). As mentioned above, we call the inputs $x$ to the polynomial
\emph{level-$r$ fictitious random variables} or $r$-\frv for short.

\begin{remark} The main difference between the SoS hierarchy and weaker SDP hierarchies considered in the literature such as SDP+Sherali Adams and the Approximate Lasserre hierarchies~\cite{RaghavendraS09c,KhotPS10} is that the SoS hierarchy treats all polynomials equally and hence is agnostic to the choice of basis. For example, the approximate Lasserre hierarchy can also be described in terms of pseudo-expectations, but these pseudo-expectations are only defined for monomials, and are allowed some small error. While they can be extended linearly to other polynomials, for non-sparse polynomials that error can greatly accumulate.
\end{remark}

%
%For a polynomial optimization problem (without inequality constraints),
%maximize $P(f)$ subject to $Q_1(f)=\ldots=Q_m(f)=0$ and $f\in\R^\cU$, the
%\emph{level-$d$ SoS relaxation} is to maximize $\pE_f P(f)$ subject to
%$\pE_f Q^2_1(f)=\ldots=\pE_f Q^2_m(f)=0$ and $f$ being a $d$-\frv over
%$\R^\cU$.
%%
%(The level-$d$ relaxation makes sense only if $P$ has degree at most $d$
%and $Q_1,\ldots,Q_m$ have degree at most $d/2$.)
%%
%We convert general polynomial optimization problems, maximize $P(f)$
%subject to $Q_1(f)\ge 0$, $\ldots$, $Q_m(f)\ge 0$, to the form above by
%introducing auxiliary variables $y_1,\ldots,y_m$ and changing the
%constraint $Q_j(f)\ge 0$ to $Q_j(f)-y_j^2=0$.
%%
%The level-$d$ SoS relaxation can be solved up to accuracy $\e$ in time
%$(m\cdot \card{\cU}\cdot \log(1/\e))^{O(d)}$.
%
%% The Lasserre hierarchy~\cite{Lasserre01}, optimizes over $d$-f.r.v.: Given
%% $m$ polynomial equalities or inequalities of degree at most $ d$ over $f\in
%% \R^{\cU}$, and some polynomial $P$ of degree at most $d$, the $d$-level
%% Lasserre hierarchy will find in time $(m\cdot |\cU|\cdot
%% \log(1/\e))^{O(d)}$ the $d$-f.r.v. $f$ that in expectation satisfies all
%% constraints and maximizes $P(f)$ up to $\e$ accuracy.
%%

\subsection{Basic properties of pseudo-expectation}

For two polynomials $P$ and $Q$, we write $P \sle Q$ if $Q=P + \sum_{i=1}^m
R_i^2$ for some polynomials $R_1,\ldots,R_m$.

If $P$ and $Q$ have degree at most $r$, then $P \sle Q$ implies that $\pE P \leq \pE Q$ for every $r$-\pef $\pE$.  This follows using linearity and positivity, as
well as the (not too hard to verify) observation that if $Q-P = \sum_i R_i^2$ then it must hold that $\deg(R_i) \leq \max\{\deg(P),\deg(Q)\}/2$ for every $i$.

We would like to understand how polynomials behave on linear subspaces of
$\R^n$.
A map $P\from \R^n\to \R$ is \emph{polynomial} over a linear subspace
$V\sse\R^n$ if $P$ restricted to $V$ agrees with a polynomial in the
coefficients for some basis of $V$.
Concretely, if $g_1,\ldots,g_m$ is an (orthonormal) basis of $V$, then $P$
is \emph{polynomial} over $V$ if $P(f)$ agrees with a polynomial in
$\iprod{f,g_1},\ldots,\iprod{f,g_m}$.
We say that $P\sle Q$ holds over a subspace $V$ if $P-Q$, as a polynomial
over $V$, is a sum of squares.
\Anote{More generally, we could say that $P\sle_Z Q$ for some variety Z if $Q-P$ is a SoS plus an element from $I(Z)$.  This is what Parrilo et al do (following the various Positivstellensatzes). But perhaps we don't need this generalization.}
\begin{lemma}
  Let $P$ and $Q$ be two polynomials over $\R^n$ of degree at most $r$,
  and let $B\from \R^n\to \R^k$ be a linear operator.
  Suppose that $P\sle Q$ holds over the kernel of $B$.
  Then, $\pE P \le \pE Q$ holds for any $r$-\pef $\pE$ over $\R^n$
  that satisfies $\pE_f \snorm{Bf}=0$.
\end{lemma}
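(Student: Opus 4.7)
The strategy is to lift the sum-of-squares witness from $V := \ker B$ to the ambient space $\R^n$ and argue that the resulting lifting error has zero pseudo-expectation, using only the hypothesis $\pE_f \snorm{Bf} = 0$ together with the Cauchy--Schwarz inequality for pseudo-expectations. I would begin by fixing an orthonormal basis $g_1,\ldots,g_m$ of $V$ and extending it to an orthonormal basis of $\R^n$. In the resulting coordinates I can write $f = (u,v)$ with $V = \{v=0\}$, and after a linear change of coordinates on $\R^k$ making $\snorm{Bf} = \sum_j v_j^2$, the hypothesis together with positivity ($\pE v_j^2 \ge 0$) forces $\pE v_j^2 = 0$ for every $j$.

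By the hypothesis $P \sle Q$ over $V$, there exist polynomials $R_1,\ldots,R_\ell$ in $u$ alone with $(Q-P)|_V = \sum_i R_i(u)^2$ and $\deg R_i \le r/2$ (by the degree observation recorded earlier in the paper). Lifting trivially via $\tilde R_i(u,v) := R_i(u)$ gives polynomials of degree $\le r/2$ on $\R^n$, and I set $W := Q - P - \sum_i \tilde R_i^2$. Then $W$ has degree $\le r$ and $W(u,0)=0$, so expanding in $v$ yields $W(u,v) = \sum_{\gamma \ne 0} W_\gamma(u)\, v^\gamma$ with the sum over multi-indices $\gamma$ satisfying $|\gamma| \ge 1$ and $|\gamma| + \deg W_\gamma \le r$. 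By positivity, $\pE \sum_i \tilde R_i^2 \ge 0$, so it suffices to establish $\pE W = 0$.

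The principal step, and the main obstacle, is showing $\pE W = 0$. The key tool is Cauchy--Schwarz for pseudo-expectations: for polynomials $X,Y$ of degree at most $r/2$, $(\pE XY)^2 \le (\pE X^2)(\pE Y^2)$. Since $\pE v_j^2 = 0$, this immediately yields $\pE(v_j Y) = 0$ whenever $\deg Y \le r/2$. For each monomial $W_\gamma(u) v^\gamma$ with $|\gamma|\ge 1$, I would factor out some $v_j$ with $\gamma_j \ge 1$ and apply Cauchy--Schwarz. The difficulty is degree bookkeeping: when $\deg W_\gamma + |\gamma| - 1 > r/2$ a single application is insufficient. Here I would iterate, first bootstrapping $\pE v_j^{2k} = 0$ for higher $k$ from $\pE v_j^2 = 0$ (via $(\pE v_j \cdot v_j^{2k-1})^2 \le \pE v_j^2 \cdot \pE v_j^{4k-2}$, whenever the degrees fit), and then using these higher-order vanishing identities to peel off additional factors of $v$ from $W_\gamma v^\gamma$ until the residual factor has degree at most $r/2$ and Cauchy--Schwarz closes the argument. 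Combining $\pE W = 0$ with $\pE \sum_i \tilde R_i^2 \ge 0$ gives $\pE(Q-P) = \pE \sum_i \tilde R_i^2 + \pE W \ge 0$, which is the claim.
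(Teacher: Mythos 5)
Your proof is essentially a coordinate-explicit version of the paper's own argument: decompose $Q - P$ as a sum of squares plus a remainder that vanishes on $\ker B$, dispatch the SoS part by positivity, and use $\pE (Bf)_j^2=0$ together with pseudo-Cauchy--Schwarz for the remainder. The paper packages the decomposition as $Q=P+\sum_i R_i^2+\sum_j(Bf)_jS_j$ and asserts $\pE(Bf)_jS_j=0$ via \pref{lem:pseudo-expectation-cauchy-schwarz}; your coordinate change and monomial-by-monomial expansion of $W$ is that argument unrolled, so the two are not genuinely different routes. You were right to worry about the degree bookkeeping, and here the paper itself is careless: \pref{lem:pseudo-expectation-cauchy-schwarz} needs both factors to have degree at most $r/2$, whereas the cofactors $S_j$ (equivalently your $W_\gamma(u)\,v^{\gamma-e_j}$) can have degree as large as $r-1$.

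Your bootstrap does not close this gap, and no argument can, because the lemma is false as stated. Iterating the moment-matrix rank argument from $\pE v_j^2=0$ propagates vanishing only up to $\pE v_j^k=0$ for $k\le r-1$; it cannot reach $k=r$, and the mixed monomials $W_\gamma(u)v^\gamma$ of total degree $r$ are no easier. Concretely, take $n=k=1$, $B=\mathrm{id}$, $r=4$, and define $\pE\bigl(a_0+a_1x+a_2x^2+a_3x^3+a_4x^4\bigr):=a_0+a_4$. Its moment matrix on the basis $\{1,x,x^2\}$ is $\mathrm{diag}(1,0,1)\succeq 0$, so $\pE$ is a valid level-$4$ pseudo-expectation functional, and $\pE\snorm{Bf}=\pE x^2=0$. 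Taking $P=x^4$ and $Q=0$, both of degree at most $r$, the restriction of $Q-P$ to $\ker B=\{0\}$ is the zero constant, so $P\sle Q$ over $\ker B$ trivially---yet $\pE P=1>0=\pE Q$. What both your argument and the paper's actually establish is the lemma under the stronger hypothesis $\deg P,\deg Q\le r/2+1$ (forcing $\deg S_j\le r/2$ so that Cauchy--Schwarz applies), or equivalently with $\pE$ a level-$2(r-1)$ pseudo-expectation functional.
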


\begin{proof}
  Since $P\sle Q$ over the kernel of $B$, we can write
  $Q(f)=P(f)+\sum_{i=1}^m R_i^2(f)+\sum_{j=1}^k(B f)_j S_j(f)$ for
  polynomials $R_1,\ldots,R_m$ and $S_1,\ldots,S_k$ over $\R^n$.
  By positivity, $\pE_f R_i^2(f)\ge 0$ for all $i\in[m]$.
  We claim that $\pE_f (Bf)_j S_j(f)=0$ for all $j\in[k]$ (which would
  finish the proof).
  This claim follows from the fact that $\pE_f (B f)_j^2 =0$ for all
  $j\in[k]$ and \pref{lem:pseudo-expectation-cauchy-schwarz} below.
\end{proof}

\begin{lemma}[Pseudo Cauchy-Schwarz]
  \label{lem:pseudo-expectation-cauchy-schwarz}
  Let $P$ and $Q$ be two polynomials of degree at most $r$.
  Then, $\pE PQ\le \sqrt{\pE P^2}\cdot \sqrt{\pE Q^2}$ for any degree-$2r$
  pseudo-expectation functional $\pE$.
\end{lemma}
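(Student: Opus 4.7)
The plan is to mimic the standard one-line proof of Cauchy--Schwarz, using positivity of $\pE$ on $(P-\lambda Q)^2$ and then optimizing over the scalar $\lambda$. The key point to verify is that all polynomials appearing in the argument have degree at most $r$, so that their squares have degree at most $2r$ and positivity of $\pE$ applies.

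First I would observe that for any real $\lambda$, the polynomial $P - \lambda Q$ has degree at most $r$, so by the positivity axiom we have $\pE (P-\lambda Q)^2 \ge 0$. Expanding using linearity, this yields
\[
\pE P^2 - 2\lambda\, \pE PQ + \lambda^2\, \pE Q^2 \ \ge\ 0 \quad \text{for every } \lambda \in \R.
\]
Next I would split into two cases depending on whether $\pE Q^2$ vanishes. If $\pE Q^2 > 0$, I would choose $\lambda = (\pE PQ)/(\pE Q^2)$, substitute, and rearrange to obtain $(\pE PQ)^2 \le (\pE P^2)(\pE Q^2)$, from which the desired inequality follows by taking square roots (noting $\pE P^2 \ge 0$ again by positivity).

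The only place where one has to be slightly careful is the degenerate case $\pE Q^2 = 0$; this is the sole obstacle, but it is easily handled. In that case the displayed inequality reduces to $\pE P^2 - 2\lambda\, \pE PQ \ge 0$ for every $\lambda \in \R$, which forces $\pE PQ = 0$; then the inequality $\pE PQ \le \sqrt{\pE P^2}\cdot \sqrt{\pE Q^2} = 0$ holds trivially. By symmetry the same argument covers $\pE P^2=0$. This exhausts all cases and completes the proof.
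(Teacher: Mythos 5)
Your proof is correct and takes essentially the same approach as the paper's: both apply positivity of $\pE$ to the square of a linear combination $P-\lambda Q$ (the paper normalizes $\pE P^2=\pE Q^2=1$ and uses $(P-Q)^2$, then separately handles the degenerate case with $(P-\alpha Q)^2$ for small $\alpha$; you optimize over $\lambda$ directly, which is the same idea in a slightly more streamlined form). You also correctly note the degree check that $(P-\lambda Q)^2$ has degree at most $2r$, which is the only point requiring care.
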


\begin{proof}
  We first consider the case $\pE P^2,\pE Q^2 >0$.
  Then, by linearity of $\pE$, we may assume that $\pE P^2=\pE Q^2=1$.
  Since $2 PQ\sle P^2 + Q^2$ (by expanding the square $(P-Q)^2$), it
  follows that $\pE PQ \le \tfrac12 \pE P^2 +\tfrac12\pE Q^2 =1$ as
  desired.
  It remains to consider the case $\pE P^2=0$.
  In this case, $2\alpha PQ \sle P^2+\alpha^2Q^2$ implies that $\pE PQ\le
  \alpha\cdot  \tfrac12\pE Q^2$ for all $\alpha>0$.
  Thus $\pE PQ=0$, as desired.
\end{proof}

\lemref{pseudo-expectation-cauchy-schwarz} also explains why our SDP in Definition~\ref{def:pseudo-SDP} is dual to the one in \eq{SOS-proof}.  If $\pE$ is a level-$r$ pseudo-expectation functional satisfying $\pE [P_i^2]=0$, then \lemref{pseudo-expectation-cauchy-schwarz} implies that $\pE [P_iQ_i]=0$ for all $Q_i$ with $\deg P_iQ_i\leq r$.

\Anote{Sorry, I wrote this thinking of the non-pseudo Cauchy-Schwarz:
  Another interpretation of \lemref{pseudo-expectation-cauchy-schwarz} is that any proof based solely on Cauchy-Schwarz of low-degree polynomials can be found by the SoS hierarchy.  We will give a concrete example of this principle in \secref{why-SOS}.}

Appendix~\ref{app:pseudo-expectation} contains some additional useful
facts about pseudo-expectation functionals.  In particular, we will make
repeated use of the fact that they satisfy another Cauchy-Schwarz
analogue: namely, for any level-2 f.r.v.'s $f,g$, we have $\pE_{f,g}
\iprod{f,g} \leq \sqrt{\pE_f \snorm{f}}\sqrt{\pE_g \snorm{g}}$.  This
is proven in \lemref{cauchy-Schwarz}.

\subsection{Why is this SoS hierarchy useful?}\label{sec:why-SOS}

Consider the following example. It is known that if $f:\{ \pm 1\}^{\ell}\to\R$ is a degree-$d$ polynomial then

\begin{equation}
9^d\left( \E_{w\in \{\pm 1\}^{\ell}} f(w)^2 \right)^2  \geq \E_{w\in \{\pm 1\}^n} f(w)^4 \mcom \label{eq:hypercontractive-poly}
 \end{equation} (see e.g.~\cite{ODonnell07}). Equivalently, the linear operator $\cP_d$ on $\R^{\{\pm 1\}^{\ell}}$ that projects a function into the degree $d$ polynomials satisfies $\tfnorm{\cP_d} \leq 9^{d/4}$. This fact is known as the hypercontractivity of low-degree polynomials, and was used in several integrality gaps results such as~\cite{KhotV05}. By following the proof of (\ref{eq:hypercontractive-poly}) we show in Lemma~\ref{lem:hypercontractivity1} that a stronger statement is true:
 \begin{equation}
 9^d\left( \E_{w\in \{\pm 1\}^{\ell}} f(w)^2 \right)^2    = \E_{w\in \{\pm 1\}^{\ell}} f(w)^4  + \sum_{i=1}^m Q_i(f)^2 \mcom \label{eq:hypercontractive-poly2}
 \end{equation}
 where the $Q_i$'s are polynomials of degree $\leq 2$ in the $\binom{\ell}{d}$ variables $\{ \Hat{f}(\alpha) \}_{\alpha \in \binom{[\ell]}{d}}$ specifying the coefficients of the polynomial $f$. By using the positivity constraints, (\ref{eq:hypercontractive-poly2}) implies that  (\ref{eq:hypercontractive-poly}) holds even in the $4$-round SoS relaxation where we consider the coefficients of $f$ to be given by $4$-f.r.v. This proves Theorem~\ref{thm:hyper-poly}, showing that the SoS relaxation certifies that $\tfnorm{\cP_d} \leq 9^{d/4}$.

\begin{remark}
Unfortunately to describe the result above, we needed to use the term ``degree'' in two different contexts. The SDP relaxation considers polynomial expressions
of degree at most $4$ \emph{in the coefficients of $f$}. This is a different notion of degree than the degree $d$ of $f$ itself as a polynomial over
$\R^{\ell}$. In particular the variables of this SoS program are the $\binom{\ell}{d}$ coefficients  $\{ \Hat{f}(\alpha) \}_{\alpha \in \binom{[\ell]}{d}}$.
Note that for every fixed $w$, the expression $f(w)$ is a linear polynomial over these variables, and hence the expressions $\left( \E_{w\in \{\pm 1\}^{\ell}}
f(w)^2 \right)^2$  and $\E_{w\in \{\pm 1\}^{\ell}} f(w)^4$ are degree $4$ polynomials over the variables.
\end{remark}

\Bnote{maybe add a remark here how when we use this as a corollary to argue about the SoS relaxation, that relaxation is given in another basis, and so we really use here the basis-independence of the algorithm.}

\Dnote{not clear what \pref{eq:hypercontractive-poly2} actually means.
  it makes sense only if we think of it over the subspace of low-degree
  Fourier polynomials.
  Again the notation $f \in \cS^4_{2^n}$ is problematic.
  }

While the proof of (\ref{eq:hypercontractive-poly2}) is fairly simple, we find the result--- that hypercontractivity of polynomials is efficiently certifiable---somewhat surprising. The reason is that hypercontractivity serves as the basis of the integrality gaps results which are exactly instances of maximization problems where the objective value is low but this is supposedly hard to certify.  In particular, we consider integrality gaps for \uniquegames considered before in the literature. All of these instances follow the framework initiated by Khot and Vishnoi~\cite{KhotV05}. Their idea was inspired by \uniquegames hardness proofs, with the integrality gap obtained by composing an initial instance with a gadget. The proof that these instances have ``cheating'' SDP solutions is obtained by ``lifting'' the completeness proof of the gadget. On the other hand, the soundness property of the gadget, combined with some isoperimetric results, showed that the instances do not have real solutions. This approach of lifting completeness proofs of reductions was used to get other integrality gap results as well~\cite{Tulsiani09}. We show that the SoS hierarchy allows us to lift a  certain \emph{soundness} proof for these instances, which includes a (variant of) the invariance principle of~\cite{MosselOO05}, influence-decoding a la~\cite{KhotKMO04}, and hypercontractivity of low-degree polynomials. It turns out all these results can be proven via sum-of-squares type arguments and hence lifted to the SoS hierarchy.

\section{Overview of proofs} \label{sec:overview}

We now give a very high level overview of the tools we use to obtain our results, leaving details to the later sections and appendices.

\subsection{Subexponential algorithm for the 2-to-q norm}

Our subexponential algorithm for obtaining a good approximation for the $2\to q$ norm is extremely simple. It is based on the observation that a subspace $V
\subseteq \R^n$ of too large a dimension must contain a function $f$ such that $\norm{f}_q \gg \norm{f}_2$. For example, if $\dim(V) \gg \sqrt{n}$, then there
must be $f$ such that $\norm{f}_4 \gg \norm{f}_2$. This means that if we want to distinguish between, say, the case that $\tfnorm{V} \leq 2$ and $\tfnorm{V}
\geq 3$,  then we can assume without loss of generality that $\dim(V) = O(\sqrt{n})$ in which case we can solve the problem in $\exp(O(\sqrt{n}))$ time. To get
intuition, consider the case that $V$ is spanned by an orthonormal basis $f^1,\ldots,f^d$ of functions whose entries are all in $\pm 1$. Then clearly we can
find coefficients $a_1,\ldots,a_d \in \{ \pm 1\}$ such that the first coordinate of $g=\sum a_j f^j$ is equal to $d$, which means that its $4$-norm is at least
$(d^4/n)^{1/4} = d/n^{1/4}$. On the other hand, since the basis is orthonormal, the $2$-norm of $g$ equals $\sqrt{d}$ which is $\ll d/n^{1/4}$ for $d \gg
\sqrt{n}$.

Note the similarity between this algorithm and \cite{AroraBS10}'s algorithm for \smallsetexpansion , that also worked by showing that if the dimension of the top eigenspace of a graph is too large then it cannot be a small-set expander. Indeed, using our reduction of \smallsetexpansion to the $2\to q$ norm, we can reproduce a similar result to~\cite{AroraBS10}.

\subsection{Bounding the value of SoS relaxations}

We show that in several cases, the SoS SDP hierarchy gives strong
bounds on various instances. At the heart of these results is a
general approach of ``lifting'' proofs about one-dimensional objects
into the SoS relaxation domain.  Thus we transform the prior proofs
that these instances have small objective value, into a proof that the
SoS relaxation also has a small objective  The crucial observation is
that many proofs boil down to the simple fact that a sum of squares of
numbers is always non-negative. It turns out that this ``sum of
squares'' axiom is surprisingly powerful (e.g. implying a version of
the Cauchy--Schwarz inequality given by
\lemref{cauchy-Schwarz})\Anote{I changed this from the
  pseudo-expectation C-S, to the normal C-S, since I think that's the
  one that used in proofs to say nontrivial things, whereas the
  pseudo-expectation C-S is used to say things like $P^2=0$ implies
  $PQ=0$ for all $Q$.}, and many proofs boil down to essentially this principle.

\subsection{The 2-to-4 norm and small-set expansion}

Bounds on the $p\to q$ norm of operators for $p<q$ have been used to show fast convergence of Markov chains. In particular, it is known that if the projector to the top eigenspace of a graph $G$ has bounded $2\to 4$ norm, then that graph is a \emph{small-set expander} in the sense that sets of $o(1)$ fraction of the vertices have most of their edges exit the set. In this work we show a converse to this statement, proving that if $G$ is a small-set expander, then the corresponding projector has bounded $2\to 4$ norm. As mentioned above, one corollary of this result is that a good (i.e., dimension-independent) approximation to the $2\to 4$ norm will refute the Small-Set Expansion hypothesis of~\cite{RaghavendraS10}.

We give a rough sketch of the proof. Suppose that $G$ is a sufficiently strong small-set expander, in the sense that every set $S$ with $|S| \leq \delta |V(G)|$ has all but a tiny fraction of the edges $(u,v)$ with $u\in S$ satisfying $v\not\in S$. Let $f$ be a function in the eigenspace of $G$ corresponding to eigenvalues larger than, say $0.99$. Since  $f$ is in the top eigenspace, for the purposes of this sketch let's imagine that it satisfies %\textbf{(*)}
\be \forall x\in V, \E_{y\sim x} f(y) \geq 0.9 f(x)
\label{eq:SSE-cond},\ee
 where the expectation is over a random neighbor $y$ of $x$.  Now, suppose that  $\E f(x)^2 = 1$ but $\E f(x)^4 = C$ for some $C \gg \poly(1/\delta)$. That means that most of the contribution to the $4$-norm comes from the set $S$ of vertices $x$ such that $f(x) \geq (1/2)C^{1/4}$, but $|S| \ll \delta |V(G)|$. Moreover, suppose for simplicity that $f(x) \in ((1/2)C^{1/4},2C^{1/4})$, in which case the condition \textbf{(*)}  together with the small-set expansion condition that for most vertices $y$ in $\Gamma(S)$ (the neighborhood of $S$) satisfy $f(y) \geq C^{1/4}/3$, but the small-set expansion condition, together with the regularity of the graph imply that $|\Gamma(S)| \geq 200|S|$ (say), which implies that $\E f(x)^4 \geq 2C$---a contradiction.

The actual proof is more complicated, since we can't assume the condition \eq{SSE-cond}.  Instead we will approximate it  it by assuming that $f$ is the function in the top eigenspace that \emph{maximizes} the ratio $\norm{f}_4/\norm{f}_2$.   See Section~\ref{sec:sse} for the details.

%
%On the face of it, this result has a similar flavor to Cheeger's Inequality~\cite{Cheeger??,Alon84,Alon86,Milman86}, that states that if $G$'s second largest eigenvalue is close to $1$, then $G$ contains a set of at most half the vertices that does not expand. Cheeger's inequality is proven by showing a way to transform a function $f$ orthogonal to the constant function that (identifying $G$ with its adjacency matrix) satisfies
%\begin{equation}
%\iprod{f,Gf} \geq \lambda \norm{f}^2 \label{eq:cheeger}
%\end{equation}
% for some some $\lambda$, into a set $S$ with expansion at most $1-\eta$ for $\eta=\eta(\lambda)$ (i.e. at least $\eta$ fraction of the edges $(u,v)$ with $u\in S$ satisfy $v\in S$). The set $S$ is a level set of $f$  (i.e., a set of the form $\{ x : f(x) \geq \theta \}$ for some $\theta$). In our result, we transform a function $f$ in the span of the eigenvectors larger than $\lambda$ (and hence satisfying~\ref{eq:cheeger}) that satisfies $\norm{f}_4 \gg \norm{f}_2$ into a set $S$ satisfying $|S| \ll |V(G)|$ with expansion at most $1-\eta$.  One difference with Cheeger's inequality is that with the $2\to 4$ norm we can't relax the condition on $f$ to (\ref{eq:cheeger}). Indeed, even if $G$ is a small-set expander, it's always easy to  obtain a function with $\iprod{f,Gf}>1-o(1)$ and $\norm{f}_4 \gg \norm{f}_2$ by just taking $f$ in the top eigenspace and perturbing it by a function $g$ with $\norm{g}_2 \ll \norm{f}_2 \ll \norm{g}_4$. But, under

\subsection{The 2-to-4 norm and the injective tensor norm}

To relate the $2\to 4$ norm to the injective tensor norm, we start by  establishing equivalences between the $2\ra 4$ norm and a variety of different tensor
problems. Some of these are straightforward exercises in linear algebra, analogous to proving that the largest eigenvalue of $M^TM$ equals the square of the
operator norm of $M$.

One technically challenging reduction is between the problem of
optimizing a general degree-4 polynomial $f(x)$ for $x\in\R^n$ and a
polynomial that can be written as the sum of fourth powers of linear
functions of $x$.
%Our strategy (described in \lemref{SOP}) is based on using spherical 4-designs to construct functions of the desired form that depend only on $\|x\|_2$.  Unfortunately, this blows up the error in the reductions by the size of the 4-design, which is $O(n^4)$.  If it were not for this term, we would be able to prove
Straightforward approaches will magnify errors by $\poly(n)$ factors,
which would make it impossible to rule out a PTAS for the $2\ra 4$
norm.  This would still be enough to prove that a $1/\poly(n)$
additive approximation is \np-hard.
However, to handle constant-factor approximations, we will instead use a variant of a reduction in \cite{HarrowM10}. This will allow us to map a general tensor optimization problem (corresponding to a general degree-4 polynomial) to a $2\ra 4$ norm calculation without losing very much precision.

To understand this reduction, we first introduce the $n^2\times n^2$ matrix $A_{2,2}$ (defined in \secref{ITN}) with
the property that $\tfnorm{A}^4=\max z^T A_{2,2}z$, where the maximum
is taken over unit vectors $z$ that can be written in the form
$x \otimes y$.  Without this last restriction, the maximum would
simply be the operator norm of $A_{2,2}$.  Operationally, we can think
of $A_{2,2}$ as a quantum measurement operator, and vectors of the
form $x\otimes y$ as unentangled states (equivalently we say that
vectors in this form are tensor product states, or simply ``product
states'').  Thus the difference
between $\tfnorm{A}^4$ and $\|A_{2,2}\|_{2\ra 2}^2$ can be thought of as
the extent to which the measurement $A_{2,2}$ can notice the difference between
product states and (some) entangled state.

Next, we define a matrix $A'$ whose rows are of the form $(x'\ot
y')^*\sqrt{A_{2,2}}$, where $x',y'\in\bbR^n$ range over a distribution
that approximates the uniform distribution.  If $A'$ acts on a vector
of the form $x \ot y$, then the maximum output 4-norm (over $L_2$-unit
vectors $x,y$) is precisely $\|A\|_{2\ra 4}$.  Intuitively, if $A'$
acts on a highly ``entangled" vector $z$, meaning that $\iprod{z,x\ot
  y}$ is small for all unit vectors $x,y$, then $\|A'z\|_4$ should be
small.  This is because $z$ will have small overlap with $x' \ot y'$,
and $A_{2,2}$ is positive semi-definite, so its off-diagonal entries
can be upper-bounded in terms of its operator norm.  These arguments
(detailed in \secref{hardness}) lead to only modest bounds on $A'$,
but then we can use an amplification argument to make the $2\ra 4$
norm of $A'$ depend more sensitively on that of $A$, at the cost of
blowing up the dimension by a polynomial factor.

The reductions we achieve also permit us, in \secref{BCY-app}, to relate our \tensorsdp
algorithm with the sum-of-squares relaxation used by
Doherty, Parrilo, and Spedalieri~\cite{DohertyPS04} (henceforth DPS).
We show the two relaxations are essentially equivalent, allowing us to import results proved, in some cases, with techniques from quantum information theory.
One such result, from \cite{BrandaoCY11}, requires relating $A_{2,2}$ to a quantum measurement of the 1-LOCC form.  This means that there are two $n$-dimensional subsystems, combined via tensor products, and $A_{2,2}$ can be implemented as a measurement on the first subsystem followed by a measurement on the second subsystem that is chosen conditioned on the results of the first measurement.  The main result of \cite{BrandaoCY11} proved that such 1-LOCC measurements exhibit much better behavior under DPS, and they obtain nontrivial approximation guarantees with only $O(\log(n)/\eps^2)$ rounds.   Since this is achieved by DPS, it also implies an upper bound on the error
of \tensorsdp.  This upper bound is $\eps Z$, where $Z$ is the smallest number for which $A_{2,2}\leq Z M$ for some 1-LOCC measurement $M$.  While $Z$ is not believed to be efficiently computable, it is at least $\|A_{2,2}\|_{2\ra 2}$, since any measurement $M$ has $\|M\|_{2\ra 2}\leq 1$.
To upper bound $Z$, we can explicitly construct $A_{2,2}$ as a quantum
measurement.  This is done by the following protocol.  Let
$a_1,\ldots,a_m$ be the rows of $A$.  One party performs the quantum
measurement with outcomes $\{\alpha a_ia_i^T\}_{i=1}^m$ (where
$\alpha$ is a normalization factor) and transmits the outcome $i$ to
the other party.  Upon receiving message $i$, the second party does
the two outcome measurement $\{\beta a_ia_i^T, I -\beta a_i a_i^T\}$
and outputs 0 or 1 accordingly, where $\beta$ is another normalization
factor.  The measurement $A_{2,2}$ corresponds to the ``0'' outcomes.
For this to be a physically realizable 1-LOCC measurement, we need
$\alpha\leq \|A^TA\|_{2\ra 2}$ and $\beta\leq \|A\|_{2\ra\infty}^2$.
Combining these ingredients, we obtain the approximation guarantee in
\thmref{BCY}.  More details on this argument are in \secref{BCY-proof}.

%Finally, despite its more restricted class of symmetries, the DPS algorithm can achieve nontrivial bounds with a slowly-growing number of rounds. (Like SoS and other sum-of-squares relaxations, DPS is parametrized by the number of rounds $d$, and its run-time is $n^{O(d)}$.)  To understand these
% General quantum measurements are required only to be positive semidefinite matrices with norm at most 1, and such matrices are known to fool even $n-1$ rounds of DPS; e.g. take the projector onto the antisymmetric subspace of $\C^n\ot\C^n$.  Moreover, analogues of DPS for real vector spaces are known to fail~\cite{CavesFS02}.

\Dnote{section about organization of the paper is iffalsed.}

\iffalse
\Bnote{this is still written for the short version}

\section*{Organization of the paper}

In rest of this extended abstract, we summarize our notation for the SoS SDP, and demonstrate how it can certify the
hypercontractivity of low-degree polynomials. We then give a rough outline of the technical steps involved in obtaining
the result that this program succeeds on the known integrality gaps for weaker hierarchies. All this content is repeated with much greater detail in the appendices, which also contain the proofs that \tensorsdp gives good bounds on random operators (Section~\ref{sec:random}), as well the results relating the $2\to 4$ norm to the injective tensor norms, and the~\cite{BrandaoCY11} algorithm (Section~\ref{sec:ITN}).
\fi

\Anote{This should be a section or subsection, not a section*.}

\subsection{Definitions and Notation} \label{sec:prelim:short}

Let $\cU$ be some finite set. For concreteness, and without loss of generality, we can let $\cU$ be the set $\{1,\ldots,n\}$, where $n$ is some positive
integer. We write $\E_\cU f$ to denote the average value of a function $f\from \cU\to \R$ over a random point in $\cU$ (omitting the subscript $\cU$ when it is
clear from the context).  We let $\L2(\cU)$ denote the space of functions $f\from \cU \to \R$ endowed with the inner product $\iprod{f,g}=\E_{\cU} fg$ and its
induced norm $\norm{f}=\iprod{f,f}^{1/2}$. For $p \geq 1$, the \emph{$p$-norm} of a function $f\in \L2(\cU)$ is defined as $\norm{f}_p \seteq \Paren{\E |f|^p
}^{1/p}$. A convexity argument shows $\norm{f}_p \leq \norm{f}_q$ for $p\leq q$. If $A$ is a linear operator mapping functions from $\L2(\cU)$ to $\L2(\cV)$,
and $p,q \geq 1$, then the $p$-to-$q$ norm of $A$ is defined as $\ptoqnorm{A}{p}{q} = \max_{0\neq f\in \L2(\cU)} \norm{Af}_q/\norm{f}_p$. If $V \subseteq
\L2(\cU)$ is a linear subspace, then we denote $\ptoqnorm{V}{p}{q} = \max_{f\in V} \norm{f}_q/\norm{f}_p$.

\paragraph{Counting norms} In most of this paper we use
\emph{expectation norms} defined as above, but in some contexts the
\emph{counting norms} will be more convenient. We will usually stick to the convention that \emph{functions} use expectation norms while \emph{vectors} use the counting norms. For a vector $v\in \C^{\cU}$ and $p \geq 1$, the \emph{$p$ counting norm} of $v$, denoted $\cnorm{v}_p$, is defined to be $\left( \sum_{i\in\cU}  |v_i|^p \right)^{1/p}$. The \emph{counting inner product} of two vectors $u,v \in \R^{\cU}$, denoted as $\ciprod{u,v}$, is defined to be $\sum_{i\in\cU} u_iv_i^*$.

\section{The \tensorsdp algorithm}

There is a very natural SoS program for the $2\to 4$ norm for a given linear operator $A\from \L2(\cU)\to \L2(\cV)$:

\begin{center}
\fbox{\begin{minipage}{3in}
\textsc{Algorithm $\tensorsdp^{(d)}(A)$:}
\begin{description}
\item[\quad] Maximize $\pE_f  \norm{Af}_4^4 $ subject to
  \begin{itemize}
  \item $f$ is a $d$-\frv over $\L2(\cU)$,
  \item $\pE_f (\snorm{f} - 1)^2=0$.
  \end{itemize}
\end{description}
\end{minipage}}
\end{center}

Note that $\norm{Af}_4^4$ is indeed a degree $4$ polynomial in the
variables $\{ f(u) \}_{u\in\cU}$. The $\tensorsdp^{(d)}$ algorithm makes
sense for $d\geq 4$, and we denote by $\tensorsdp$ its most basic version
where $d=4$. The \tensorsdp algorithm applies not just to the $2\to 4$
norm, but to optimizing general polynomials over the unit ball of $\L2(\cU)$ by replacing $\norm{Af}_4^4$ with an arbitrary polynomial $P$.

While we do not know the worst-case performance of the \tensorsdp algorithm, we do know that it performs well on random instances (see Section~\ref{sec:random}), and (perhaps more relevant to the UGC) on the projector to low-degree polynomials (see Theorem~\ref{thm:hyper-poly}). The latter is a corollary of the following result:

\Dnote{we should explicitly state and prove that our sdp relaxation works for the
  projector to low-degree polynomials. maybe best to add this as a
  corollary of the lemma below.}

\begin{lemma}\label{lem:hypercontractivity1}
  Over the space of $n$-variate Fourier polynomials\footnote{%
    An $n$-variate Fourier polynomial
    with degree at most $d$ is a function $f\from\sbits^n\to\R$ of the form
    \begin{math}
      f=\sum_{\alpha\subseteq [n], |\alpha|\leq d}
      \Hat{f}_\alpha\chi_{\alpha}
    \end{math}
    where $\chi_{\alpha}(x)=\prod_{i\in\alpha}x_i$.  } %
  $f$ with degree at most $d$,
  \begin{displaymath}
    \E f^4 \sle 9^{d} \Paren{ \E f^2 }^2\mcom
\end{displaymath}
 where the expectations are over
  $\sbits^n$.
\end{lemma}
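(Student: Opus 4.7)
The plan is to prove the inequality by induction on the number of variables $n$, lifting the standard real-variable proof of hypercontractivity for low-degree Boolean functions into the sum-of-squares setting. The base case $n=0$ is trivial since $f$ is then a constant and $\E f^4 = (\E f^2)^2 \sle 9^d (\E f^2)^2$.

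For the inductive step, I would split $f(x_1,\ldots,x_n) = g(x_2,\ldots,x_n) + x_1\, h(x_2,\ldots,x_n)$, where $g,h$ are $(n{-}1)$-variate Fourier polynomials with $\deg g \le d$ and $\deg h \le d-1$. Using $x_1^2=1$, a direct computation gives $\E f^2 = P + Q$ and $\E f^4 = A + 6C + B$ with $P=\E g^2$, $Q=\E h^2$, $A=\E g^4$, $B=\E h^4$, and $C=\E g^2 h^2$. The inductive hypothesis, applied separately to $g$ and $h$, yields SoS identities $9^d P^2 - A \sge 0$ and $9^{d-1} Q^2 - B \sge 0$ in the respective Fourier coefficients, and the target decomposes as
\[
9^d(P+Q)^2 - A - 6C - B \;=\; [9^d P^2 - A] \;+\; [9^{d-1} Q^2 - B] \;+\; \bigl(8\cdot 9^{d-1} Q^2 + 2\cdot 9^d P Q - 6C\bigr),
\]
reducing the claim to an SoS certificate for the last bracket $U := 8\cdot 9^{d-1} Q^2 + 2\cdot 9^d P Q - 6C$.

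The hard part is the SoS control of the cross term $6C$. In the real-variable proof one uses Cauchy-Schwarz $C \le \sqrt{AB}$ together with the inductive bounds to conclude $C \le 3^{2d-1} PQ$; but the square-root step is not directly SoS-friendly, since the Lagrange-identity form of Cauchy-Schwarz only certifies $(\E g^2 h^2)^2 \sle \E g^4 \cdot \E h^4$, and extracting a linear SoS bound on $C$ from this quadratic one is not automatic. My plan is to work at the level of Fourier coefficients and write $C = PQ + R$, where $R = \sum_{\gamma\neq\emptyset} \widehat{g^2}_\gamma\,\widehat{h^2}_\gamma$ is the off-diagonal part of the convolution $\widehat{g^2h^2}$; each individual cross term $\hat g_{\alpha_1}\hat g_{\alpha_2}\hat h_{\beta_1}\hat h_{\beta_2}$ contributing to $R$ can be absorbed into a positive combination of $\hat g_\alpha^2\, \hat h_\beta^2$ terms (entries of $PQ$) via pairwise AM-GM identities such as $(\hat g_{\alpha_1}\hat h_{\beta_2} - \hat g_{\alpha_2}\hat h_{\beta_1})^2 \ge 0$. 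Combining this absorption with the slack $(2\cdot 9^d - 6) PQ + 8\cdot 9^{d-1} Q^2$ available in $U$ would yield the required SoS certificate, and I expect the main technical hurdle to be the combinatorial bookkeeping: checking that the multiplicity with which each $\hat g_\alpha^2 \hat h_\beta^2$ monomial is consumed by the $R$-absorption is bounded uniformly enough in $d$ and $n$ for the $O(9^d)$ slack in $U$ to suffice, which is the SoS analogue of the factor of $9^d$ in the real-variable proof.
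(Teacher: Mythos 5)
Your setup---inducting on the number of variables, splitting $f = g + x_1 h$, expanding to $\E f^4 = A + 6C + B$ and $\E f^2 = P + Q$, and applying the inductive hypothesis to $A = \E g^4$ and $B = \E h^4$---matches the skeleton of the paper's proof, and you have correctly identified where the naive single-function induction breaks: the cross term $6C = 6\,\E g^2 h^2$ needs an SoS bound of the form $C \sle 3^{2d-1} P Q$, and the real-variable route via $C \le \sqrt{AB}$ has no SoS analogue (from $\pE C^2 \le K^2$ one cannot conclude $\pE C \le K$).

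The patch you propose, however, does not close the gap. Writing $C = PQ + R$ with $R = \sum_{\gamma\neq\eset}\widehat{g^2}_\gamma\widehat{h^2}_\gamma$ and absorbing each monomial $\hat g_{\alpha_1}\hat g_{\alpha_2}\hat h_{\beta_1}\hat h_{\beta_2}$ via $\bigl(\hat g_{\alpha_1}\hat h_{\beta_2} - \hat g_{\alpha_2}\hat h_{\beta_1}\bigr)^2 \ge 0$ charges each diagonal monomial $\hat g_\alpha^2\hat h_\beta^2$ once for every quadruple in which it participates as a ``swapped'' pair; for fixed $(\alpha,\beta)$ this multiplicity grows like the size of the low-degree Fourier support, roughly $\binom{n}{d}$, and is not bounded independently of $n$. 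Since the slack you have, $8\cdot 9^{d-1}Q^2 + (2\cdot 9^d - 6)PQ$, is dimension-free, no choice of AM-GM weights can pay an $n$-dependent bill. In other words, coefficient-level term-by-term bounds cannot see the cancellations that make the inequality $\E g^2 h^2 \le O(9^d)(\E g^2)(\E h^2)$ dimension-free, so this route genuinely fails rather than merely requiring more bookkeeping.

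The fix in the paper is to strengthen the inductive hypothesis to a bilinear form: for Fourier polynomials $f,g$ of degrees $\le d, e$, prove $\E f^2 g^2 \sle 9^{(d+e)/2}\,(\E f^2)(\E g^2)$ as an SoS relation, by induction on variables. This is \emph{exactly} the bound your cross term needs (apply it to your $g$ and $h$ with $e = d-1$ to get $6C \sle 6\cdot 9^{d-1/2}PQ = 2\cdot 9^d PQ$), so the problematic step disappears by fiat. And the new cross term that arises inside the bilinear induction, $4\,\E f_0 f_1 g_0 g_1$, is a product of four \emph{distinct} low-degree factors, which is killed by a single explicit square: expanding $2(f_0 g_1 - f_1 g_0)^2 \ge 0$ gives $4\,\E f_0 f_1 g_0 g_1 \sle 2\,\E f_0^2 g_1^2 + 2\,\E f_1^2 g_0^2$, and every term on the right is again of the form $\E(\cdot)^2(\cdot)^2$ to which the bilinear IH applies directly. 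So the essential idea you are missing is that the induction hypothesis must be widened from a single-function inequality to a two-function one, which is precisely what makes the cross term tractable by pure SoS reasoning rather than by square roots or coefficient-level absorption.
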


% \begin{lemma}\label{lem:hypercontractivity1}
% Let $f$ be an $n$-variate Fourier polynomial with degree $d$.\footnote{As in Theorem~\ref{thm:hyper-poly}, this a function $f:\sbits^n\to\R$ of the form $f=\sum_{\alpha\subseteq [n], |\alpha|\leq d} \Hat{f}(\alpha)\chi_{\alpha}$ where $\chi_{\alpha}(x)=\prod_{i\in\alpha}x_i$.}
% Then, the following sum-of-squares relation holds
% $  \E f^4 \sle  9^{d}  \left(\E f^2\right)^2$,
% where the expectations are over $\sbits^n$.
% %
% Here, we interpret the expressions $ \E f^4$ and $\left(\E
%   f^2\right)^2$ as polynomials in the Fourier
% coefficients $\set{\hat f_\alpha}_{\card{\alpha}\le d}$.
% %and $\set{\hat  g_\beta}_{\card{\beta}\le d'}$.
% \end{lemma}

\begin{proof}
  The result is proven by a careful variant of the standard inductive proof
  of the hypercontractivity for low-degree polynomials (see
  e.g.~\cite{ODonnell07}).
  We include it in this part of the paper since it is the simplest example
  of how to ``lift'' known proofs about functions over the reals into
  proofs about the fictitious random variables that arise in semidefinite
  programming hierarchies.
  To strengthen the inductive hypothesis, we will prove the more general
  %
% \footnote{As inequalities of the usual form, $\E f^4\le 9^{d}(E f^2)^2$ and $\E
%   f^2g^2 \le 9^{d/2+e/2}\E f^2\E g^2$, the two statements are equivalent
%   by Cauchy--Schwarz.
%   %
%   However, We don't know how to prove this version of Cauchy--Schwarz by a
%   low-degree sum-of-squares argument.)
%   %
%   \Dnote{check if we really don't know how to prove that}
%   }
  %
  statement that for $f$ and $g$ being $n$-variate Fourier polynomials with
  degrees at most $d$ and $e$, it holds that $\E f^2g^2 \sle 9^{\frac{d+e}2}
  \left(\E f^2\right)\left( \E g^2 \right)$.
  (Formally, this polynomial relation is over the linear space of pairs of
  $n$-variate Fourier polynomials $(f,g)$, where $f$ has degree at most $d$
  and $g$ has degree at most $e$.)
  The proof is by induction on the number of variables.

  If one of the functions is constant (so that $d=0$ or $e=0$), then $\E
  f^2g^2 = (\E f^2)(\E g^2)$, as desired.
  Otherwise, let $f_0,f_1,g_0,g_1$ be Fourier polynomials depending only on
  $x_1,\ldots,x_{n-1}$ such that $f(x)=f_0(x) + x_nf_1(x)$ and $g(x)=g_0(x) +
  x_ng_1(x)$.
  The Fourier polynomials $f_0,f_1,g_0,g_1$ depend linearly on $f$ and $g$
  (because $f_0(x)=\tfrac12 f(x_1,\ldots,x_{n-1},1)+\tfrac12
  f(x_1,\ldots,x_{n-1},-1)$ and $f_1(x)=\tfrac12
  f(x_1,\ldots,x_{n-1},1)-\tfrac12 f(x_1,\ldots,x_{n-1},-1)$).
  Furthermore, the degrees of $f_0$, $f_1$, $g_0$, and $g_1$ are at most
  $d$, $d-1$, $e$, and $e-1$, respectively.

  Since $\E x_n  = \E x_n^3 = 0$, if we expand $\E f^2g^2 = \E (f_0 +
  x_n f_1)^2(g_0 + x_n g_1)^2$ then the terms where $x_n$ appears in an odd
  power vanish, and we obtain
\[
\E f^2g^2 = \E f_0^2g_0^2 + f_1^2g_1^2 + f_0^2g_1^2 + f_1^2g_0^2 + 4f_0f_1g_0g_1
\]
By expanding the square expression $2\E (f_0f_1-g_0g_1)^2$, we get $4\E f_0f_1g_0g_1  \sle 2\E f_0^2g_1^2 + f_1^2g_0^2$ and thus
\begin{equation}
\E f^2g^2 \sle \E f_0^2g_0^2 + \E f_1^2g_1^2 + 3 \E f_0^2g_1^2 + 3 \E f_1^2g_0^2    \mper \label{eq:hyper-cont-poly1a}
\end{equation}
Applying the induction hypothesis to all four terms on the right-hand side
of \pref{eq:hyper-cont-poly1a} (using for the last two terms that the
degree of $f_1$ and $g_1$ is at most $d-1$ and $e-1$),
\begin{align*}
  \E f^2g^2 %
  & \sle 9^{\frac{d+e}2} \left(\E f_0^2 \right)\left( \E g_0^2 \right) %
  +  9^{\frac{d+e}2}\left(\E f_1^2 \right)\left(\E g_1^2\right)\\
  &\qquad + 3\cdot 9^{\frac{d+e}2 -1/2} \Paren{\E f_0^2}\Paren{ \E g_1^2 }
  +
  3\cdot  9^{\frac{d+e}2-1/2} \left(\E f_1^2\right)\left( \E g_0^2 \right)\\
  &= 9^{\frac{d+e}2}\left( \E f_0^2 + \E f_1^2 \right) \left( \E g_0^2 + \E
    g_1^2 \right) \mper
\end{align*}
Since  $\E f_0^2 + \E f_1^2=\E (f_0+x_n f_1)^2=\E f^2$ (using $\E x_n=0$)
and similarly $\E g_0^2+\E g_1^2=\E g^2$, we derive the desired relation
$\E f^2g^2 \sle 9^{\frac{d+e}2}\Paren{\E f^2}\Paren{\E g^2}$.
\end{proof}

\section{SoS succeeds on Unique Games integrality gaps}

%\Dnote{I'd like the theorem statement to be more concrete.
%%
%  I think we could say the following: Let $\cV$ be an instance of unique
%  games of the type proposed in \cite{KhotV05} and let $\cW$ be the
%  instance obtained by composing $\cV$ with the long code alphabet
%  reduction (see \cite{KhotKMO04}).
%  %
%  (The instances $\cV$ and $\cW$ really correspond to classes of instance
%  with different parameter choices.
%  %
%  For the high-level discussion here, we will ignore most of these parameters.)
%  %
%  For appropriate parameters, the optimal value of the instances $\cV$ and
%  $\cW$ is close $0$.
%  %
%  On the other hand, the basic SDP relaxation of $\cV$ has value close to
%  $1$.
%  %
%  Furthermore, stronger SDP relaxations of $\cW$ (in particular approximate
%  versions of SoS) have optimal value close $1$
%  \cite{RaghavendraS09c,KhotS09, KhotPS10}.
%
%  \emph{Theorem:} The level-$8$ SoS relaxations of $\cV$ and $\cW$ have
%  value close to $0$ (for the range of parameters where the optimal
%  integral value is known to be close to $0$.)
%
%  Then we can remark that the theorem also holds for the more efficient
%  integrality gaps due to \cite{BarakGHMRS11}.
%}

In this section we prove Theorem~\ref{thm:i-gaps}, showing that 8 rounds of the SoS hierarchy can beat the Basic SDP program on the canonical integrality gaps considered in the literature.

\begin{theorem}[Theorem~\ref{thm:i-gaps}, restated] \label{thm:gaps}   For sufficiently small $\e$ and large $k$, and every $n\in \N$, let $\cW$ be an $n$-variable $k$-alphabet \uniquegames instance of the type considered in~\cite{RaghavendraS09c,KhotS09, KhotPS10} obtained by composing the ``quotient noisy cube'' instance of \cite{KhotV05} with the long-code alphabet reduction of~\cite{KhotKMO04} so that the best assignment to $\cW$'s variable satisfies at most an $\e$ fraction of the constraints.  Then, on input $\cW$, eight rounds of the SoS relaxation outputs at most $1/100$.
\end{theorem}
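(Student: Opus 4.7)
The strategy is to lift the standard soundness analysis of long-code based \uniquegames instances to the SoS framework. Recall that this soundness analysis combines four ingredients: (i) expressing the fraction of satisfied constraints as a quadratic form in Fourier coefficients of the long-code assignments, weighted by a noise operator $T_\rho$; (ii) truncation to low Fourier degree, justified by the exponential decay of $T_\rho$ along the Fourier level; (iii) the Mossel--O'Donnell--Oleszkiewicz invariance principle to pass from the Boolean domain to Gaussian space; and (iv) Gaussian isoperimetry together with KKMO-style influence decoding to extract a labeling of the underlying \labelcover instance. We will show that each of these steps admits a polynomial-degree SoS certificate, and that the whole chain fits within 8 rounds.

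\paragraph{SoS lifting of the individual steps.} Let $\pE$ be a level-8 pseudo-expectation satisfying the constraints of $\cW$, with SoS variables the Fourier coefficients $\{\widehat F_v(\alpha)\}$ of the long-code assignment $F_v$ at each vertex $v$ (subject to the standard folding constraints). Step (i) is immediate: the satisfaction polynomial $P(\widehat F)$ has degree $2$ in these variables. Step (ii) is a direct SoS calculation at degree $4$, since $\pE[(T_\rho F)^2] = \sum_\alpha \rho^{2|\alpha|} \pE[\widehat F(\alpha)^2]$ and $\pE[\widehat F(\alpha)^2] \sge 0$. Step (iii) is the heart of the argument: the MOO invariance principle is proved by a hybrid argument that swaps one Boolean variable for a Gaussian at a time, bounding each swap's error by the fourth moment of a degree-$d$ multilinear polynomial in the remaining variables. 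This fourth-moment error bound is exactly our hypercontractivity inequality (\pref{lem:hypercontractivity1}), which is a SoS inequality of degree $\leq 4$ in the Fourier coefficients. Summing over the hybrid chain yields a SoS certificate of degree at most $8$ that $\pE[P(\widehat F)]$ is close to its value if the Boolean inputs were replaced by Gaussians. Step (iv), the Gaussian-space soundness, reduces to two polynomial inequalities: either the Gaussian polynomial has a coordinate with large low-degree influence $\mathrm{Inf}_i^{\leq d}(F_v) = \sum_{\alpha \ni i, |\alpha|\leq d}\widehat F_v(\alpha)^2$, which allows probabilistic label extraction from the SoS solution in the sense of \cite{KhotKMO04}, or the Gaussian noise-stability is itself small, which contradicts the assumption of high SoS value. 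Both dichotomies are SoS-provable at constant degree.

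\paragraph{Main obstacle.} The critical constraint is the tight degree budget. Since SAT is a degree-2 form in the Fourier coefficients, and the invariance-principle error bound requires a degree-4 inequality (hypercontractivity applied to the swap remainder), and the influence-extraction step squares Fourier coefficients once more, we need exactly degree $8 = 2 + 4 + 2$ for the overall proof to go through. Making this budget work requires that the hybrid steps in the invariance principle be combined not sequentially but via a single ``master'' SoS inequality whose error term is a single degree-4 expression in the Fourier coefficients, rather than a telescoping sum; this is possible because each swap's error is of the same form and the coefficients add linearly. Once this is done, composing with \pref{lem:hypercontractivity1} and the constant-degree KKMO decoding inequality yields $\pE[\mathrm{val}(\cW)] \leq \varepsilon + o_k(1) \leq 1/100$ for sufficiently small $\varepsilon$ and large $k$, as claimed.
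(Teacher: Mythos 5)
Your proposal identifies the right high-level components (hypercontractivity certificate, invariance, influence decoding) but the plan as stated would fail at two concrete places, and the paper's route differs precisely at these points.

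First, your step (iii) invokes the Mossel--O'Donnell--Oleszkiewicz invariance principle ``to pass from the Boolean domain to Gaussian space.'' This is exactly the step the paper explicitly rules out as un-liftable: the MOO proof hinges on test functions built from smooth ``bump'' functions, which are not low-degree polynomials in the Fourier coefficients and therefore admit no SoS certificate of bounded degree. The paper's workaround is essential, not cosmetic: it proves a strictly weaker invariance principle (\pref{thm:invariance-fourth-moment}) that compares only \emph{fourth moments}, and it interpolates between two finite probability spaces on the discrete cube (two ensembles agreeing on the first two moments), never touching the Gaussian. The hybrid argument there is carried out at the level of pseudo-expectations directly, using linearity of $\pE$ to sum the per-coordinate errors and \pref{lem:hypercontractivity1} to bound each increment; your ``single master degree-4 inequality'' reformulation isn't what is done and isn't needed -- linearity of the pseudo-expectation functional already handles the telescoping without any degree penalty.

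Second, and consequently, your step (iv) invokes ``Gaussian isoperimetry'' as the endgame. But once there is no passage to Gaussian space, Gaussian isoperimetry is unavailable (and, like bump functions, not obviously SoS-certifiable). The paper instead closes the argument through \pref{lem:expansion-bound}, which bounds $\iprod{f,T_{1-\e}f}$ by the fourth moment of $P_{>\lambda}f$ via a pseudo-\Holder\ inequality, and then controls that fourth moment by hypercontractivity of low-degree Fourier polynomials (\pref{thm:certify-sse}) -- no Gaussian isoperimetry at all. You also gloss over a nontrivial obstruction in the decoding step: the \cite{KhotKMO04} analysis uses independent randomized rounding, which is \emph{not} a trivial operation on fictitious random variables. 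The paper devotes an entire subsection (\pref{cor:independent-rounding}) to constructing, from a pseudo-expectation satisfying $h^2\sle h$, an extended pseudo-expectation whose new variables satisfy the exact relation $\bar h^2 \equiv \bar h$ while preserving multilinear moments; without this interlude your ``probabilistic label extraction'' is not a well-defined operation on SoS solutions. Finally, your degree budget $8 = 2+4+2$ is a heuristic tally that doesn't reflect the paper's actual bookkeeping, which derives a level-4 f.r.v.\ $h$ over $L_2(V\times[R])$ from the level-8 solution for $\cW$ and then applies \pref{thm:certify-sse} at level 4 to $h$.
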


\subsection{Proof sketch of Theorem~\ref{thm:gaps}}

The proof is very technical, as it is obtained by taking the already rather technical proofs of soundness for these instances, and ``lifting'' each step into the SoS hierarchy, a procedure that causes additional difficulties. \Bnote{removed: Thus in this extended abstract we only provide a proof sketch, leaving the details to Appendix~\ref{sec:refut-inst-uniq}.}  The high level structure of all integrality gap instances constructed in the literature was the following: Start with a basic integrality gap instance of \uniquegames where the Basic SDP outputs $1-o(1)$ but the true optimum is $o(1)$, the \emph{alphabet size} of $\cU$ is (necessarily) $R=\omega(1)$. Then, apply an \emph{alphabet-reduction gadget} (such as the long code, or in the recent work~\cite{BarakGHMRS11} the so called ``short code'') to transform $\cU$ into an instance $\cW$ with some constant alphabet size $k$. The soundness proof of the gadget guarantees that the  true optimum of $\cU$ is small, while the analysis of previous works managed to ``lift'' the completeness proofs, and argue that the instance $\cU$ survives a number of rounds that tends to infinity as  $\e$  tends to zero, where $(1-\e)$ is the completeness value in the gap constructions, and exact tradeoff between number of rounds and $\e$ depends on the paper and hierarchy.

The fact that the basic instance $\cU$ has small integral value can be shown by appealing to hypercontractivity of low-degree polynomials, and hence can be ``lifted'' to the SoS world via Lemma~\ref{lem:hypercontractivity1}. The bulk of the technical work is in lifting the soundness proof of the gadget. On a high level this proof involves the following components:
\textbf{(1)} The \emph{invariance principle} of~\cite{MosselOO05}, saying that low influence functions cannot distinguish between the cube and the sphere; this allows us to argue that functions that perform well on the gadget must have an influential variable, and \textbf{(2)} the \emph{influence decoding} procedure of~\cite{KhotKMO04} that maps these influential functions on each local gadget into a good global assignment for the original instance $\cU$.

The invariance principle poses a special challenge, since the proof of~\cite{MosselOO05} uses so called ``bump'' functions which are not at all low-degree polynomials.\footnote{A similar, though not identical, challenge arises in~\cite{BarakGHMRS11} where they need to extend the invariance principle to the ``short code'' setting. However, their solution does not seem to apply in our case, and we use a different approach.} We use a weaker invariance principle, only showing that the $4$ norm of a low influence function remains the same between two probability spaces that agree on the first $2$ moments.  Unlike the usual invariance principle, we do not move between Bernoulli variables and Gaussian space, but rather between two different distributions on the discrete cube. It turns out that for the purposes of these \uniquegames integrality gaps, the above suffices. The lifted invariance principle is proven via a ``hybrid'' argument similar to the argument of~\cite{MosselOO05}, where hypercontractivity of low-degree polynomials again plays an important role.

The soundness analysis of~\cite{KhotKMO04} is obtained by replacing each local function with an average over its neighbors, and then choosing a random influential coordinate from the new local function as an assignment for the original uniquegames instance. We follow the same approach, though even simple tasks such as independent randomized rounding turn out to be much subtler in the lifted setting. However, it turns out that by making appropriate modification to the analysis, it can be lifted to complete the proof of Theorem~\ref{thm:i-gaps}.

In the following, we give a more technical description of the proof.
%with references to \pref{sec:refut-inst-uniq}.
%
Let $T_{1-\eta}$ be the $\eta$-noise graph on $\sbits^R$.
Khot and Vishnoi \cite{KhotV05} constructed a unique game $\cU$ with
label-extended graph $T_{1-\eta}$.
A solution to the level-$4$ SoS relaxation of $\cU$ is $4$-f.r.v. $h$ over $\L2(\sbits^R)$.
This variable satisfies $h(x)^2\equiv_h h(x)$ for all $x\in \sbits^R$ and
also $\pE_h (\E h)^2\le 1/R^2$.
(The variable $h$ encodes a $0/1$ assignment to the vertices of the
label-extended graph.
A proper assignment assigns $1$ only to a $1/R$ fraction of these
vertices.)
\pref{lem:expansion-bound} allows us to bound the objective value of the
solution $h$ in terms of the fourth moment $\pE_h \E (P_{>\lambda}h)^4$,
where $P_{>\lambda}$ is the projector into the span of the eigenfunctions
of $T_{1-\eta}$ with eigenvalue larger than $\lambda\approx 1/R^{\eta}$.
(Note that $\E(P_{>\lambda} h)^4$ is a degree-$4$ polynomial in $h$.)
For the graph $T_{1-\eta}$, we can bound the degree of $P_{>\lambda} h$ as
a Fourier polynomial (by about $\log(R)$).
Hence, the hypercontractivity bound (\pref{lem:hypercontractivity1}) allows
us to bound the fourth moment $\pE_h\E(P_{>\lambda} h)^4\le \pE_h(\E
h^2)^2$.
\Dnote{is some factor like $R^{0.1}$ missing in this bound?}
By our assumptions on $h$, we have $\pE_h(\E h^2)^2=\pE_h(\E h)^2\le
1/R^2$.
Plugging these bounds into the bound of \pref{lem:expansion-bound}
demonstrates that the objective value of $h$ is bounded by
$1/R^{\Omega(\eta)}$ (see \pref{thm:certify-sse}).

Next, we consider a unique game $\cW$ obtained by composing the unique game
$\cU$ with the alphabet reduction of \cite{KhotKMO04}.
Suppose that $\cW$ has alphabet $\Omega=\set{0,\ldots,k-1}$.
The vertex set of $\cW$ is $V\times \Omega^R$ (with $V$ being the
vertex set of $\cU$).
Let $f=\set{f_u}_{u\in V}$ be a solution to the level-$8$ SoS
relaxation of $\cW$.
To bound the objective value of $f$, we derive from it a level-$4$ random
variable $h$ over $L_2(V\times [R])$.
(Encoding a function on the label-extended graph of the unique game $\cU$.)
We define $h(u,r)=\super \Inf{\le \ell}_r \bar f_u$, where $\ell\approx
\log k$ and $\bar f_u$ is a variable of $L_2(\Omega^R)$ obtained by
averaging certain values of $f_u$ (``folding'').
It is easy to show that $h^2\sle h$ (using \pref{lem:boundedness-relation})
and $\pE_h (\E h)^2\le \ell/R$ (bound on the total influence of low-degree
Fourier polynomials).
\pref{thm:ug-value-bound} (influence decoding) allows us to bound the
objective value of $f$ in terms of the correlation of $h$ with the
label-extended graph of $\cU$ (in our case, $T_{1-\eta}$).
Here, we can use again \pref{thm:certify-sse} to show that the correlation
of $h$ with the graph $T_{1-\eta}$ is very small.
(An additional challenge arises because $h$ does not satisfy $h^2\equiv_h h$, but only the weaker condition $h^2\sle h$.
\pref{cor:independent-rounding} fixes this issue by simulating independent
rounding for fictitious random variables.)
To prove \pref{thm:ug-value-bound} (influence decoding), we analyze the
behavior of fictitious random variables on the alphabet-reduction gadget of
\cite{KhotKMO04}.
This alphabet-reduction gadget essentially corresponds to the $\e$-noise
graph $T_{1-\e}$ on $\Omega^R$.
Suppose $g$ is a fictitious random variables over $L_2(\Omega^R)$ satisfying $g^2\sle g$.
By \pref{lem:expansion-bound}, we can bound the correlation of $g$ with the graph $T_{1-\e}$ in terms of the fourth moment of  $P_{>\lambda} g$.
At this point, the hypercontractivity bound
(\pref{lem:hypercontractivity1}) is too weak to be helpful.
Instead we show an ``invariance principle'' result (\pref{thm:invariance-fourth-moment}),
which allows us to relate the fourth moment of $P_{>\lambda} g$ to the
fourth moment of a nicer random variable and the influences of $g$.

%\section{Refuting Instances of Unique Games via Sums of Squares}
%\label{sec:refut-inst-uniq}

\medskip

\paragraph{Organization of the proof}
We now turn to the actual proof of Theorem~\ref{thm:gaps}. The proof consists of lifting to the SoS hierarchy all the steps used in the analysis of previous
integrality gaps, which themselves arise from hardness reductions.  We start in Section~\ref{sec:invariance-principle-variant} by showing a sum-of-squares
proof for a weaker version of  \cite{MosselOO05}'s invariance principle.  Then in Section~\ref{sec:interl-indep-round} we show how one can perform independent
rounding in the SoS world (this is a trivial step in proofs involving true random variables, but becomes much more subtle when dealing with SoS solutions). In
Sections~\ref{sec:dict-test-sse} and~\ref{sec:ug-dictatorship-test} we lift variants of the~\cite{KhotKMO04} dictatorship test. The proof uses a SoS variant of
influence decoding, which is covered in Section~\ref{sec:influence-decoding}.  Together all these sections establish SoS analogs of the soundness properties of
the hardness reduction used in the previous results. Then, in Section~\ref{sec:small-set-expansion} we show that analysis of the basic instance has a sum of
squares proof (since it is based on hypercontractivity of low-degree polynomials).  Finally in Section~\ref{sec:putt-things-togeth} we combine all these tools
to conclude the proof. In Section~\ref{sec:short-code} we discuss why this proof applies (with some modifications) also to the  ``short-code'' based instances
of \cite{BarakGHMRS11}.

\subsection{Invariance Principle for Fourth Moment}
\label{sec:invariance-principle-variant}

In this section, we will give a sum-of-squares proof for a variant of the
invariance principle of \cite{MosselOO05}.
Instead of for general smooth functionals (usually constructed from ``bump
functions''), we show invariance only for the fourth moment.
It turns out that invariance of the fourth moment is enough for our
applications.

Let $k=2^t$ for $t\in\N$ and let $\cX=(\cX_1,\ldots,\cX_R)$ be an
independent sequence\footnote{An \emph{orthonormal ensemble} is a
  collection of orthonormal real-valued random variables, one being the
  constant $1$.
  A sequence of such ensembles is \emph{independent} if each ensemble is
  defined over an independent probability space. (See \cite{MosselOO05} for
  details.)} of orthonormal ensembles $\cX_r=(X_{r,0},\ldots,X_{r,k-1})$.
Concretely, we choose $X_{r,i}=\chi_i(x_r)$, where
${\chi_0,\ldots,\chi_{k-1}}$ is the set of characters of $\GF 2^t$ and $x$
is sampled uniformly from $(\GF 2 ^t)^R$.
\Dnote{maybe define $\chi_i$ explicity as the character corresponding to
  $i$ as a bitstring}
Every random variable over $(\GF 2^t)^R$ can be expressed as a multilinear
polynomial over the sequence $\cX$.
In this sense, $\cX$ is maximally dependent.
On the other hand, let $\cY=(\cY_1,\ldots,\cY_R)$ be a sequence of
ensembles $\cY_r=(Y_{r,0},\ldots,Y_{r,k-1})$, where $Y_{r,0}\equiv 1$ and
$Y_{r,j}$ are independent, unbiased $\sbits$ Bernoulli variables.
The sequence $\cY$ is maximally independent since it consists of
completely independent random variables.

Let $f$ be a $4$-\frv over the space of multilinear polynomials with degree
at most $\ell$ and monomials indexed by $[k]^R$.
Suppose $\pE_f \norm{f}^4\le 1$.
(In the following, we mildly overload notation and use $[k]$ to denote the set
$\set{0,\ldots,k-1}$.)
Concretely, we can specify $f$ by the set of monomial coefficients
$\set{\hat f_\alpha}_{\alpha \in [k]^R,~\card{\alpha}\le \ell}$, where
$\card{\alpha}$ is the number of non-zero entries in $\alpha$.
As usual, we define $\Inf_r f=\sum_{\alpha\in[k]^R,~\alpha_r\neq 0} \hat
f_\alpha^2$.
Note that $\Inf_r f$ is a degree-2 polynomial in $f$.
(Hence, the pseudo-expectation of $(\Inf_r f)^2$ is defined.)

\Dnote{add some sentence introducing the theorem}

\begin{theorem}[Invariance Principle for Fourth Moment]
\label{thm:invariance-fourth-moment}
  For $\tau=\pE_f \sum_r (\Inf_r f)^2$,
  \begin{displaymath}
    \pE_f \E_\cX f^4 = \pE_f \E_\cY f^4 \pm k^{O(\ell)}\sqrt\tau \mper
  \end{displaymath}
\end{theorem}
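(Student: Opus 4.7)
The plan is a hybrid argument, interpolating one ensemble coordinate at a time from $\cX$ to $\cY$. Setting $\cZ^{(s)} = (\cY_1,\ldots,\cY_s,\cX_{s+1},\ldots,\cX_R)$ for $0\le s\le R$, the telescoping identity
\[
\pE_f \E_\cX f^4 - \pE_f \E_\cY f^4 \;=\; \sum_{r=1}^{R} \pE_f\, \delta_r\mcom \qquad \delta_r := \E_{\cZ^{(r-1)}} f^4 - \E_{\cZ^{(r)}} f^4,
\]
reduces the task to controlling a single ensemble swap.  Crucially, each $\delta_r$ is a degree-$4$ polynomial in the coefficients of $f$, so all pseudo-expectations that appear stay within reach of a $4$-\frv.

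For fixed $r$, I would decompose $f = g_0 + h$ with $h := \sum_{i\neq 0} g_i Z_{r,i}$, where each $g_i$ is a multilinear polynomial of degree $\le \ell$ in the other ensembles and is linear in the coefficients of $f$.  Since $\cX_r$ and $\cY_r$ are both $\sbits$-valued orthonormal ensembles, their moments agree through order two, so expanding $(g_0+h)^4$ kills the constant, linear, and quadratic pieces in $h$ and leaves
\[
\delta_r \;=\; \E_{\mathrm{other}}\bigl[\,4 g_0(\E_{\cX} h^3 - \E_{\cY} h^3) + (\E_{\cX} h^4 - \E_{\cY} h^4)\,\bigr].
\]
Two pointwise SoS bounds are the workhorses: first, $h^2 \sle k G_r$ with $G_r:=\sum_{i\neq 0} g_i^2$ (from the identity $kG_r - h^2 = \sum_{i<j}(g_iZ_i - g_jZ_j)^2$ together with $Z_{r,i}^2\equiv 1$), giving $\E h^4 \sle k^2 G_r^2$; and second, the SoS identity $\E h^2\cdot\E h^4 - (\E h^3)^2 = \tfrac12\E_{Z,Z'}(h(Z)h(Z')^2-h(Z')h(Z)^2)^2$ gives $(\E h^3)^2 \sle k^2 G_r^3$.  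Combined with \pref{lem:hypercontractivity1} in the form $\E g_i^2g_j^2 \sle 9^\ell(\E g_i^2)(\E g_j^2)$, these yield $\E_{\mathrm{other}} G_r^2 \sle k\cdot 9^\ell(\Inf_r f)^2$, so the quartic contribution to $\sum_r \pE\delta_r$ is at most $k^{O(\ell)}\,\pE\sum_r(\Inf_r f)^2 = k^{O(\ell)}\tau$.

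The hard part will be the cubic term $\pE_f[g_0\cdot Q_r]$ with $Q_r:=\E_{\cX} h^3 - \E_{\cY} h^3$: since $Q_r$ is cubic in $f$'s coefficients, a naive pseudo-Cauchy--Schwarz $\pE(g_0 Q_r)\le\sqrt{\pE g_0^2}\sqrt{\pE Q_r^2}$ would require the degree-$6$ pseudo-expectation $\pE Q_r^2$, unavailable for a $4$-\frv.  The workaround is to expand $Q_r = \sum_{ijk}\gamma_{ijk}\, g_ig_jg_k$ (with $|\gamma_{ijk}|\le 2$) and, for each monomial, pair the four factors as $g_0g_ig_jg_k = (g_0g_i)(g_jg_k)$, then apply \pref{lem:pseudo-expectation-cauchy-schwarz} in the form $|\pE(PQ)|\le\sqrt{\pE P^2}\sqrt{\pE Q^2}$, which now keeps both $P^2$ and $Q^2$ at degree $4$.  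Averaging over the three symmetric pairings of $\{0,i,j,k\}$ and invoking hypercontractivity to bound each $\pE\,\E_{\mathrm{other}}(g_ag_b)^2$ in terms of $k^{O(\ell)}\pE(\Inf_r f)^2$ (together with the normalization $\pE\|f\|^4\le 1$), I expect a per-step cubic bound also of order $k^{O(\ell)}\pE(\Inf_r f)^2$.

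Summing both contributions gives $|\sum_r \pE\delta_r|\le k^{O(\ell)}\tau$.  To upgrade $\tau$ to $\sqrt\tau$, I would use the crude bound $\tau \le \ell^2$, which follows from $\sum_r \Inf_r f \le \ell\|f\|^2$ and $\pE\|f\|^4\le 1$: writing $\tau = \sqrt\tau\cdot\sqrt\tau\le \ell\sqrt\tau$ and absorbing the $\ell$ into the $k^{O(\ell)}$ factor yields the claimed $\pm k^{O(\ell)}\sqrt\tau$ bound.
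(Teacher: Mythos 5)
Your high-level structure is the same as the paper's: a hybrid/telescoping argument over the ensembles, the decomposition $f = g_0 + h$ (the paper writes $E_r f + D_r f$), the observation that matching first and second moments kill the lower-order terms, hypercontractivity to control the quartic piece, and some form of Cauchy--Schwarz for the cubic piece. You correctly flag the degree-6 obstruction to a naive pseudo-Cauchy--Schwarz. However, your proposed workaround for the cubic term does not go through.

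The gap is in the bound ``each $\pE\,\E_{\mathrm{other}}(g_ag_b)^2 \leq k^{O(\ell)}\pE(\Inf_r f)^2$.'' This is false whenever one of $a,b$ equals $0$. For $a=0$, hypercontractivity gives $\E_{\mathrm{other}}(g_0 g_b)^2 \sle k^{O(\ell)}\snorm{g_0}\cdot\snorm{g_b}$ and $\snorm{g_0}\le\snorm{f}$, so after pseudo-Cauchy--Schwarz one only gets $\pE\,\E_{\mathrm{other}}(g_0 g_b)^2 \le k^{O(\ell)}\sqrt{\pE\norm{f}^4}\sqrt{\pE(\Inf_r f)^2}\le k^{O(\ell)}\sqrt{\pE(\Inf_r f)^2}$ --- an exponent of $1/2$, not $1$. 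Feeding this back into the $(g_0g_i)(g_jg_k)$ pairing yields a per-step cubic bound of order $k^{O(\ell)}(\pE(\Inf_r f)^2)^{3/4}$ (or, using only $\pE\norm{f}^4\le 1$ on the $g_0$-factor, $k^{O(\ell)}\sqrt{\pE(\Inf_r f)^2}$), and neither sums to something controlled by $\tau$ alone: e.g. if $\pE(\Inf_r f)^2 = \tau/R$ for all $r$, then $\sum_r(\pE(\Inf_r f)^2)^{3/4}= R^{1/4}\tau^{3/4}$ and $\sum_r\sqrt{\pE(\Inf_r f)^2}=\sqrt{R\tau}$, both of which blow up with $R$. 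So the monomial-level, per-$r$ Cauchy--Schwarz is irreparably lossy.

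The fix --- and what the paper actually does --- is to apply Cauchy--Schwarz across the whole $r$-sum at the polynomial level, pairing $P_r := (E_r f)(D_r f)$ with $Q_r := (D_r f)^2$ (both degree $2$, so $P_r^2, Q_r^2$ stay within a $4$-\frv):
\begin{displaymath}
  \pE_f\sum_r\E_{\super\cZ r} (E_r f)(D_r f)^3
  \;\le\;
  \Bigparen{\pE_f\sum_r\E_{\super\cZ r}(E_r f)^2(D_r f)^2}^{1/2}
  \Bigparen{\pE_f\sum_r\E_{\super\cZ r}(D_r f)^4}^{1/2}\mper
\end{displaymath}
The second factor is $k^{O(\ell)}\tau$ (as you argued for the quartic piece). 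For the first factor, hypercontractivity gives $\E(E_r f)^2(D_r f)^2 \sle k^{O(\ell)}\snorm{f}\snorm{D_r f}$, and summing with the total-influence bound $\sum_r\snorm{D_r f}\sle\ell\snorm{f}$ yields $\pE_f\sum_r\E(E_r f)^2(D_r f)^2\le k^{O(\ell)}$. The product is then $k^{O(\ell)}\sqrt{\tau}$ directly --- no need for your final $\tau\le\ell\sqrt{\tau}$ step on the cubic term (though that step is correct and is still needed to absorb the quartic term's $k^{O(\ell)}\tau$). The point is that summing over $r$ before taking the square roots is what lets the influence mass and the $\ell$-dependent total-influence budget be traded off globally, rather than coordinate by coordinate.
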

\noindent (Since the expressions $\E_\cX f^4$ and $\E_\cY f^4$ are degree-$4$ polynomials
in $f$, their pseudo-expectations are defined.)

Using the SoS proof for hypercontractivity of low-degree polynomials
(over the ensemble $\cY$), the fourth moment $\pE_f \E_\cY f^4$ is
bounded in terms of the second moment $\pE_f\E_\cY f^2$. 
Since the first two moments of the ensembles $\cX$ and $\cY$ match, we
have $\pE_f\E_\cY f^2=\pE_f\E_\cX f^2$.
Hence, we can bound the fourth moment of $f$ over $\cX$ in terms of
the its second moment and $\tau$.

\begin{corollary}
\label{cor:invariance}
  \begin{displaymath}
    \pE_f \E_\cX f^4 = 2^{O(\ell)}\pE_f (\E_\cX f^2)^2 
    \pm k^{O(\ell)}\sqrt\tau \mper    
  \end{displaymath}
\end{corollary}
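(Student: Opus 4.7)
The plan is to chain the invariance principle of \pref{thm:invariance-fourth-moment} with the SoS hypercontractivity bound of \pref{lem:hypercontractivity1}, using orthonormality of both ensembles to translate the bound back to the $\cX$ side. Concretely, I would first observe that each $Y_{r,j}$ for $j\ge 1$ is an independent unbiased $\sbits$ variable and $Y_{r,0}\equiv 1$, so any multilinear polynomial $f=\sum_{\alpha\in[k]^R,~|\alpha|\le\ell}\hat f_\alpha Y_\alpha$ of ensemble-degree at most $\ell$ \emph{is} a Fourier polynomial of degree at most $\ell$ in the $R(k-1)$ independent $\sbits$ variables $\{Y_{r,j}\}_{r,j\ge 1}$. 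Hence \pref{lem:hypercontractivity1} applies and yields the SoS polynomial relation $\E_\cY f^4 \sle 9^{\ell}\,(\E_\cY f^2)^2$ in the coefficients $\{\hat f_\alpha\}$.

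Next, I would use orthonormality of the two ensembles to identify second moments as polynomials in these coefficients. Expanding $f^2$ and using $\E_\cX X_\alpha X_\beta=\delta_{\alpha\beta}$ (and similarly for $\cY$) gives the Parseval identity $\E_\cX f^2 = \sum_\alpha \hat f_\alpha^2 = \E_\cY f^2$ as a polynomial identity, hence $(\E_\cY f^2)^2=(\E_\cX f^2)^2$. Taking pseudo-expectations of the hypercontractive relation then yields
\[
\pE_f \E_\cY f^4 \;\leq\; 9^{\ell}\, \pE_f (\E_\cY f^2)^2 \;=\; 2^{O(\ell)}\, \pE_f (\E_\cX f^2)^2.
\]

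Combining this with \pref{thm:invariance-fourth-moment} gives the upper-bound half of the claim,
\[
\pE_f \E_\cX f^4 \;\leq\; \pE_f \E_\cY f^4 + k^{O(\ell)}\sqrt{\tau} \;\leq\; 2^{O(\ell)}\,\pE_f (\E_\cX f^2)^2 + k^{O(\ell)}\sqrt{\tau},
\]
and the reverse direction (covered by the $\pm$) comes essentially for free from Cauchy--Schwarz: the identity $(\E_\cX f^2)^2 \sle \E_\cX f^4$ is a sum of squares in the coefficients of $f$, so it survives passage to pseudo-expectation. I do not expect any genuine obstacle; the only subtlety is confirming that the Parseval step produces a polynomial identity rather than a merely numerical one, which is immediate from the orthonormality of the ensembles combined with the multilinear expansion of $f^2$.
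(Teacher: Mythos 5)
Your proof is correct and follows the same route as the paper: apply the SoS hypercontractivity bound (\pref{lem:hypercontractivity1}) over the independent ensemble $\cY$, use matching of the first two moments (Parseval over the orthonormal ensembles) to identify $\E_\cY f^2$ with $\E_\cX f^2$ as polynomial identities in the coefficients, and then combine with \pref{thm:invariance-fourth-moment}. You add a couple of helpful details the paper elides --- the observation that $f$ over $\cY$ is a genuine degree-$\ell$ Fourier polynomial in $R(k-1)$ independent $\sbits$ variables, and the reverse-direction remark via the SoS relation $(\E_\cX f^2)^2\sle \E_\cX f^4$ --- but the argument is the same.
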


(The corollary shows that for small enough $\tau$, the $4$-norm and
$2$-norm of $f$ are within a factor of $2^{O(\ell)}$. 
This bound is useful because  the worst-case ratio of these norms is
$k^{O(\ell)}\gg 2^{O(\ell)}$.)

\begin{proof}[Proof of \pref{thm:invariance-fourth-moment}]
  We consider the following intermediate sequences of ensembles $\super \cZ
  r = (\cX_1,\ldots,\cX_r,\cY_{r+1},\ldots,\cY_R)$.
  Note that $\super \cZ 0 = \cY$ and $\super \cZ R = \cX$.
  For $r\in\N$, we write $f=E_r f + D_r f$, where $E_r f$ is the part of
  $f$ that does not dependent on coordinate $r$ and $D_r f=f-E_r f$.
  For all $r\in\N$, the following identities (between polynomials in $f$)
  hold
  \begin{align*}    
    \E_{\super \cZ r} f^4 - \E_{\super \cZ {r-1}} f^4 
    &=     \E_{\super \cZ r} (E_r f+D_r f)^4 - \E_{\super \cZ {r-1}} (E_r
    f+D_r f)^4 \\
    & = \E_{\super \cZ r} 4(E_r f)(D_r f)^3 + (D_r f)^4
    - \E_{\super \cZ {r-1}} 4(E_r f)(D_r f)^3 + (D_r f)^4
   \mper
  \end{align*}
  The last step uses that the first two moments of the ensembles $\cX_r$
  and $\cY_r$ match and that $E_r f$ does not dependent on coordinate $r$.
  
  Hence, 
  \begin{displaymath}
    \E_\cX f^4 - \E_\cY f^4 
     = \sum_r \E_{\super \cZ r} 4(E_r f)(D_r f)^3 + (D_r f)^4
    - \E_{\super \cZ {r-1}} 4(E_r f)(D_r f)^3 + (D_r f)^4
  \end{displaymath}
  
  It remains to bound the pseudo-expectation of the right-hand side.
  First, we consider the term $\sum_r \E_{\super \cZ r}(D_r f)^4$.
  The expression $\E_{\super \cZ r}(D_r f)^4$ is the fourth moment of a
  Fourier-polynomial with degree at most $t\cdot \ell$.
  (Here, we use that the ensembles in the sequence $\cY$ consist of
  characters of $\GF 2^t$, which are Fourier polynomials of degree at most $t$.)
  Furthermore, $\Inf_r f=\E_{\super \cZ r}(D_r f)^2$ is the second moment of the
  this Fourier-polynomial.
  \Dnote{explain this better. which Fourier polynomial?}
  Hence, by hypercontractivity of low-degree Fourier-polynomials,
  \Dnote{add reference to the right theorem} $\sum_r \E_{\super \cZ r}(D_r
  f)^4\sle \sum_r 2^{O(t\cdot \ell)}(\Inf_r f)^2$.
  Thus, the pseudo-expectation is at most $\pE_f \sum_r\E_{\super \cZ
    r}(D_r f)^4 \le 2^{O(t\cdot \ell)}\tau = k^{O(\ell)}\tau$.

  Next, we consider the term $\sum_r \E_{\super \cZ r}(E_r f)(D_r f)^3$.
  (The remaining two terms are analogous.)
  To bound its pseudo-expectation, we apply Cauchy-Schwarz, \Dnote{add
    reference to the right version}
  \begin{equation}
    \label{eq:third-order-error}
    {\pE_f \sum_r \E_{\super \cZ r}(E_r f)(D_r f)^3}
    \le \Paren{\pE_f \sum_r \E_{\super \cZ r}(E_r f)^2(D_r f)^2}^{1/2}\cdot 
    \Paren{\pE_f \sum_r \E_{\super \cZ r}(D_r f)^4}^{1/2}
  \end{equation}
  Using hypercontractivity of low-degree Fourier-polynomials, we can bound
  the second factor of \pref{eq:third-order-error} by $\pE_f \sum_r
  \E_{\super \cZ r}(D_r f)^4=k^{O(\ell)} \tau$.
  It remains to bound the first factor of \pref{eq:third-order-error}.  
  Again by hypercontractivity, $\E_{\super \cZ r}(E_r f)^2(D_r f)^2\sle
  k^{O(\ell)}\cdot \snorm{E_r f}\cdot \snorm{D_r f}\sle k^{O(\ell)}
  \snorm{f}\cdot \snorm{D_r f}$.
  By the total influence bound for low-degree polynomials, we have $\sum_r
  \snorm{D_r f}\sle \ell \snorm{f}$.
  Thus $\sum _r \E_{\super \cZ r}(E_r f)^2(D_r f)^2\sle k^{O(\ell)} \norm{f}^4$.
  Using the assumption $\pE_f \norm{f}^4\le 1$, we can bound the first
  factor of \pref{eq:third-order-error} by $k^{O(\ell)}$.

  We conclude as desired that 
  \begin{displaymath}
    \Abs{\pE_f \E_\cX f^4 - \E_\cY f^4}\le k^{O(\ell)}\sqrt\tau\mper
  \end{displaymath}
\end{proof}

\subsection{Interlude: Independent Rounding}
\label{sec:interl-indep-round}

In this section, we will show how to convert variables that satisfy
$f^2\sle f$ to variables $\bar f$ satisfying $\bar f^2=\bar f$.
The derived variables $\bar f$ will inherit several properties of the
original variables $f$ (in particular, multilinear expectations).
This construction corresponds to the standard independent rounding for
variables with values between $0$ and $1$.
The main challenge is that our random variables are fictitious.

\Dnote{here we show that independent rounding works within SoS}

Let $f$ be a \lrvar4 over $\R^n$.
%
%We identify $\cU=\set{1,\ldots,n}$.
%
Suppose $f_i^2\sle f_i$ (in terms of an unspecified jointly-distributed
\lrvar4).
\Dnote{explain what is meant here somewhere}
Note that for real numbers $x$, the condition $x^2\le x$ is equivalent to
$x\in[0,1]$.

\begin{lemma}
\label{lem:independent-rounding}
  Let $f$ be a \lrvar4 over $\R^n$ and let $i\in [n]$ such that $f_i^2\sle f_i$.
  Then, there exists an \lrvar4 $(f,\bar f_i)$ over $\R^{n+1}$ such that
  $\pE_{f,\bar f_i} (\bar f_i^2-\bar f_i)^2 =0$ and for every polynomial
  $P$ which is linear in $\bar f_i$ and has degree at most $4$,
  \begin{displaymath}
    \pE_{f,\bar f_i} P(f,\bar f_i) = \pE_{f} P(f,f_i)\mper
  \end{displaymath}
\end{lemma}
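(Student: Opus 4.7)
The plan is to define the extended pseudo-expectation explicitly by the ``independent rounding'' formula that treats $\bar f_i$ as a Bernoulli variable with parameter $f_i$ conditional on $f$. Every polynomial $Q$ of degree at most $4$ in $(f,\bar f_i)$ has a unique decomposition
\[
Q(f,\bar f_i) = \sum_{k=0}^{4} \bar f_i^k\, Q_k(f), \qquad \deg Q_k \leq 4-k,
\]
and I set
\[
\tilde{\pE}\, Q \;\seteq\; \pE\Bigl[Q_0(f) + f_i \sum_{k=1}^{4} Q_k(f)\Bigr].
\]
The right-hand side is well-defined as a value of $\pE$, because for each $k\ge 1$ the polynomial $f_i Q_k(f)$ has degree at most $1 + (4-k) \leq 4$. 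Linearity in $Q$ and the normalization $\tilde\pE\,1 = \pE\,1 = 1$ are immediate.

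Next I would verify the two asserted properties. For a polynomial $P(f,\bar f_i) = A(f) + \bar f_i B(f)$ of degree at most $4$ (so $\deg A\leq 4$, $\deg B\leq 3$), the formula gives $\tilde\pE\, P = \pE[A(f)+ f_i B(f)] = \pE\, P(f,f_i)$, which is exactly the required agreement on $\bar f_i$-linear polynomials. For the boolean condition, expand $(\bar f_i^2 - \bar f_i)^2 = \bar f_i^4 - 2\bar f_i^3 + \bar f_i^2$; the coefficients $Q_2, Q_3, Q_4$ equal $1, -2, 1$ and all lower-order coefficients vanish, so $\tilde\pE(\bar f_i^2 - \bar f_i)^2 = \pE[f_i (1 - 2 + 1)] = 0$.

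The main obstacle, and the only nontrivial step, is positivity: I must show $\tilde\pE\, R^2 \geq 0$ for every $R$ of degree at most $2$ in $(f,\bar f_i)$. Write $R = R_0(f) + \bar f_i R_1(f) + c\,\bar f_i^2$ with $\deg R_0 \leq 2$, $\deg R_1 \leq 1$, $c\in\R$. A direct expansion of $R^2$, grouped by powers of $\bar f_i$ and fed into the definition of $\tilde\pE$, yields the algebraic identity
\[
\tilde\pE\, R^2 \;=\; \pE\Bigl[(1-f_i) R_0^2 \;+\; f_i (R_0 + R_1 + c)^2\Bigr].
\]
Completing the square in $f_i$ rewrites the integrand as
\[
\bigl(R_0 + f_i(R_1+c)\bigr)^2 \;+\; (f_i - f_i^2)(R_1+c)^2 .
\]
The hypothesis $f_i^2 \sle f_i$ lets us write $f_i - f_i^2 = \sum_k S_k(f)^2$ with $\deg S_k \leq 1$ (as noted in the paper, the squares in any SoS decomposition have degree at most half the maximum degree of the expression). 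Substituting, the integrand becomes a sum of squares
\[
\bigl(R_0 + f_i(R_1+c)\bigr)^2 + \sum_k \bigl(S_k (R_1+c)\bigr)^2,
\]
where each squared polynomial has degree at most $2$ in $f$. Positivity of the original level-$4$ \pef $\pE$ then gives $\tilde\pE\, R^2 \geq 0$, completing the verification that $\tilde\pE$ is a level-$4$ \pef over $\R^{n+1}$. The intuition throughout is that our formula is exactly the pseudo-expectation one would obtain by ``independent rounding'' $\bar f_i$ to $\{0,1\}$ with bias $f_i$; the hypothesis $f_i^2 \sle f_i$ supplies the SoS certificate needed to realize this rounding within the SoS hierarchy.
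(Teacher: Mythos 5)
Your construction and the paper's are the same functional: reducing modulo $\bar f_i^2-\bar f_i$ and then substituting $\bar f_i\mapsto f_i$ is exactly your formula $Q\mapsto Q_0 + f_i\sum_{k\ge 1}Q_k$, and both proofs establish positivity from the identity $(P^2)'(f,f_i)-P^2(f,f_i)=(f_i-f_i^2)\,R_1^2$ together with the SoS certificate for $f_i-f_i^2$ supplied by the hypothesis. Your write-up is just a more explicit rendering (completing the square rather than declaring ``WLOG $P$ is $\bar f_i$-linear'') of the same argument.
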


\begin{proof}
  We define the pseudo-expectation functional $\pE_{f,\bar f_i}$ as follows:
  For every polynomial $P$ in $(f,\bar f_i)$ of degree at most $4$, let
  $P'$ be the polynomial obtained by replacing $\bar f_i^2$ by $\bar f_i$
  until $P'$ is (at most) linear in $\bar f_i$.
  (In other words, we reduce $P$ modulo the relation $\bar f_i^2=\bar
  f_i$.)
  We define $\pE_{f,\bar f_i} P(f,\bar f_i) = \pE_{f} P'(f,f_i)$.
  With this definition, $\pE_{f,\bar f_i} (\bar f_i^2-\bar f_i)^2 =0$.
  The operator $\pE_{f,\bar f_i}$ is clearly linear (since $(P+Q)'=P'+Q'$).
  It remains to verify positivity.
  Let $P$ be a polynomial of degree at most $4$.
  We will show $\E_{f,\bar f_i} P^2(f,\bar f_i)\ge 0$.
  Without loss of generality $P$ is linear in $\bar f_i$.
  We express $P = Q + \bar f_i R$, where $Q$ and $R$ are polynomials in
  $f$.
  Then, $(P^2)' = Q^2 + 2 \bar f_i Q R + \bar f_i R^2 $.
  Using our assumption $f_i^2\sle f_i$, we get $(P^2)'(f,f_i) = Q^2 + 2 f_i
  Q R + f_i R^2 \sge Q^2 + 2 f_i Q R + f_i^2 R^2= P^2(f,f_i)$.
  It follows as desired that
  \begin{displaymath}
    \pE_{f,\bar f_i} P^2 = \pE_{f} (P^2)'(f,f_i) \ge \pE_f P^2(f,f_i)\ge 0\mper
  \end{displaymath}
\end{proof}

\begin{corollary}
\label{cor:independent-rounding}
  Let $f$ be a \lrvar4 over $\R^n$ and let $I\sse [n]$ such that $f_i^2\sle
  f_i$ for all $i\in I$.
  Then, there exists an \lrvar4 $(f,\bar f_I)$ over $\R^{n+\card I}$ such
  that $\pE_{f,\bar f_I} (\bar f_i^2-\bar f_i)^2 =0$ for all $i\in I$ and
  for every polynomial $P$ which is multilinear in the variables $\set{\bar
    f_i}_{i\in I}$ and has degree at most $4$,
  \begin{displaymath}
    \pE_{f,\bar f_I} P(f,\bar f_I) = \pE_{f} P(f,f_I)\mper
  \end{displaymath}
\end{corollary}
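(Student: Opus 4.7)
The natural plan is to prove the corollary by straightforward induction on $\card I$, applying \pref{lem:independent-rounding} repeatedly to introduce the variables $\bar f_i$ one at a time. The base case $\card I = 0$ is trivial, and $\card I = 1$ is precisely the single-variable lemma already established.

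For the inductive step, write $I = I' \cup \set{i_0}$ with $i_0 \notin I'$. By induction, there is a \lrvar4 $(f,\bar f_{I'})$ over $\R^{n+\card{I'}}$ with $\pE_{f,\bar f_{I'}}(\bar f_i^2-\bar f_i)^2 = 0$ for every $i \in I'$, and such that $\pE_{f,\bar f_{I'}} P(f,\bar f_{I'}) = \pE_f P(f,f_{I'})$ for every degree-at-most-$4$ polynomial $P$ multilinear in $\set{\bar f_i}_{i \in I'}$. I would then invoke \pref{lem:independent-rounding} with the \lrvar4 $(f,\bar f_{I'})$ and the coordinate $i_0$. The hypothesis $f_{i_0}^2 \sle f_{i_0}$ is a purely polynomial (sum-of-squares) identity in the original variables; any SoS witness for it remains a valid SoS witness in the larger polynomial ring $\R[f,\bar f_{I'}]$, so the hypothesis of the lemma transfers unchanged. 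This produces the desired \lrvar4 $(f,\bar f_I) = (f,\bar f_{I'},\bar f_{i_0})$ with $\pE(\bar f_{i_0}^2 - \bar f_{i_0})^2 = 0$ in addition to the previously established relations for $i \in I'$.

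It remains to verify the multilinear-expectation identity for the enlarged set $I$. Given any polynomial $P(f,\bar f_I)$ of degree at most $4$ that is multilinear in $\set{\bar f_i}_{i\in I}$, I would chain two substitutions. First, apply \pref{lem:independent-rounding} to replace $\bar f_{i_0}$ by $f_{i_0}$; this is legitimate because $P$ is linear in $\bar f_{i_0}$ and has total degree at most $4$. The resulting polynomial $P(f,\bar f_{I'},f_{i_0})$ still has degree at most $4$ and is multilinear in $\set{\bar f_i}_{i \in I'}$, so the inductive hypothesis then replaces $\bar f_{I'}$ with $f_{I'}$, yielding the desired equality $\pE_{f,\bar f_I} P(f,\bar f_I) = \pE_f P(f,f_I)$.

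There is no real obstacle here: since substituting one indeterminate by another cannot increase degree, and since multilinearity in a set of variables is preserved when any subset is dropped, every invocation of \pref{lem:independent-rounding} and of the inductive hypothesis stays within its stated degree bound. The only subtlety worth flagging is that the SoS relation $f_{i_0}^2 \sle f_{i_0}$ must be interpreted as a statement about polynomials, so that it lifts freely to the enlarged polynomial ring at each inductive step; once that is noted, the argument is essentially bookkeeping.
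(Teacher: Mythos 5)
Your proof is correct and takes the intended approach: the paper states this corollary immediately after \pref{lem:independent-rounding} without proof, and the evident intent is exactly the induction on $\card{I}$ you describe, applying the single-variable lemma one coordinate at a time and chaining the pseudo-expectation identities. Your bookkeeping is right — substituting $\bar f_{i_0}\mapsto f_{i_0}$ preserves total degree and multilinearity in the remaining $\bar f_i$, and the SoS witness for $f_{i_0}^2 \sle f_{i_0}$ persists in the enlarged polynomial ring — so each invocation of the lemma and of the inductive hypothesis stays within its hypotheses.
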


\subsection{Dictatorship Test for Small-Set Expansion}
\label{sec:dict-test-sse}

\Dnote{cite KKMO here somewhere. point out the difference to the KKMO analysis}

Let $\Omega=\set{0,\ldots,k-1}$ and let $T_{1-\e}$ be the noise graph on
$\Omega^R$ with second largest eigenvalue $1-\e$.
Let $f$ be a \lrvar4 over $L_2(\Omega^R)$.
Suppose $f^2\sle f$ (in terms of an unspecified jointly-distributed
\lrvar4).
\Dnote{explain what is meant here somewhere}
Note that for real numbers $x$, the condition $x^2\le x$ is equivalent to
$x\in[0,1]$.

The following theorem is an analog of the ``Majority is Stablest'' result
\cite{MosselOO05}.
\begin{theorem}
  \label{thm:sse-dictatorship-test}
  Suppose %$\pE_f \E f \le \delta$ and
  $\pE_f (\E f)^2\le \delta^2$.
  Let $\tau=\pE_f \sum_r (\super \Inf{\le \ell}_rf)^2$ for
  $\ell=\Omega(\log(1/\delta))$.
  Then,
  \begin{displaymath}
    \pE_f \iprod{f,T_{1-\e} f} \le \delta^{1+\Omega(\e)} +
    k^{O(\log(1/\delta))} \cdot \tau^{1/8}\mper
  \end{displaymath}
  (Here, we assume that $\e$, $\delta$ and $\tau$ are sufficiently small.)
\end{theorem}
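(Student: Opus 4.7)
The plan is to lift to the SoS hierarchy the real-valued argument showing that bounded hypercontractive norm of $P_{>\lambda}$ forces $T_{1-\e}$ to be a small-set expander (the forward direction of Theorem~\ref{thm:sse-hyper} in the case $q=4$). In the real world one combines spectral decomposition with the H\"older inequality $\iprod{P_{>\lambda}f,f}\le\|f\|_{4/3}\,\|P_{>\lambda}f\|_4=(\E f)^{3/4}\|P_{>\lambda}f\|_4$ and with Bonami--Beckner hypercontractivity over the independent Bernoulli ensemble, which gives $\|P_{>\lambda}f\|_4\le 3^{\ell/2}\|f\|_2$ for the projection onto the Fourier modes of block-degree at most $\ell=\Omega(\log(1/\delta))$; taking $\lambda=\delta^{\Omega(\e)}$ then yields the main-term bound $\delta^{1+\Omega(\e)}$ with no $k$-dependence. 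The key observation for lifting this argument to SoS is that naive hypercontractivity for block-degree-$\ell$ polynomials over $\Omega^R$ carries a factor $k^{O(\ell)}$, so I will instead use the Invariance Principle for the Fourth Moment (Corollary~\ref{cor:invariance}) to transfer the fourth moment from the ensemble $\cX$ to $\cY$, where the bound is only $2^{O(\ell)}$.

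I would first apply Corollary~\ref{cor:independent-rounding} to the variables $\{f(x)\}_{x\in\Omega^R}$ (each of which satisfies $f(x)^2\sle f(x)$) to obtain $\bar f$ with $\bar f^2\equiv\bar f$. Since the self-loops of $T_{1-\e}$ are nonnegative and $\pE\bar f(x)^2=\pE f(x)\ge\pE f(x)^2$, one has $\pE\iprod{f,T_{1-\e}f}\le\pE\iprod{\bar f,T_{1-\e}\bar f}$ and $\pE(\E\bar f)^2\le\delta^2+O(1/|\Omega^R|)$, so it suffices to bound $\pE\iprod{\bar f,T_{1-\e}\bar f}$. Choose $\lambda=\delta^{c\e}$ so that $P_{>\lambda}$ projects exactly onto the block-degree-$\le\ell$ Fourier modes, with $\ell=\Omega(\log(1/\delta))$ as in the hypothesis. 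The L\"owner inequality $T_{1-\e}\preceq P_{>\lambda}+\lambda(I-P_{>\lambda})$ (valid since $T_{1-\e}$ and $P_{>\lambda}$ commute and $T_{1-\e}\preceq I$) lifts to the SoS inequality $\iprod{\bar f,T_{1-\e}\bar f}\sle\|P_{>\lambda}\bar f\|^2+\lambda\|\bar f\|^2$, and the tail contributes at most $\lambda\pE\E\bar f\le\lambda\delta=\delta^{1+\Omega(\e)}$ to the pseudo-expectation.

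To bound the head $\pE\|P\|^2$ with $P=P_{>\lambda}\bar f$, I lift the H\"older step using $\bar f^2=\bar f$ to let $\bar f$ play the role of a nonnegative function with itself as its square root: two iterated SoS Cauchy--Schwarz inequalities give $(\iprod{P,\bar f})^2\sle(\E P^2\bar f)(\E\bar f)$ and $(\E P^2\bar f)^2\sle(\E P^4\bar f)(\E\bar f)\sle(\E P^4)(\E\bar f)$, so that $\|P\|^8=(\iprod{P,\bar f})^4\sle(\E P^4)(\E\bar f)^3$. Taking pseudo-expectations and applying Lemma~\ref{lem:pseudo-expectation-cauchy-schwarz} (pseudo Cauchy--Schwarz) to separate the $\E\bar f$ factor, together with $\pE(\E\bar f)^{2j}\le\delta^2$ (via $\E\bar f\sle 1$), reduces the problem to bounding $\pE(\E P^4)^2$. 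This last quantity I would control by applying Corollary~\ref{cor:invariance} to the tensor-square polynomial $Q(x,y)=P(x)P(y)$ on $\Omega^{2R}$ (which has block-degree $\le 2\ell$): since $\E Q^4=(\E P^4)^2$ and $(\E_{\cX^{2R}}Q^2)^2=\|P\|^8\sle(\E\bar f)^4$, this gives $\pE(\E P^4)^2\le 2^{O(\ell)}\delta^2+k^{O(\ell)}\sqrt\tau$. Chaining through these estimates and taking roots yields $\pE\|P\|^2\le\delta^{1+\Omega(\e)}+k^{O(\log(1/\delta))}\tau^{1/8}$, where the $2^{O(\ell)}$ factor from Bernoulli hypercontractivity is absorbed into $\delta^{-\Omega(\e)}$ when the constant hidden in $\ell=\Omega(\log(1/\delta))$ is chosen small enough.

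The main obstacle is the delicate bookkeeping through the multiple Cauchy--Schwarz and root-extraction steps: Jensen-type losses can easily swallow the $\delta^{\Omega(\e)}$ improvement, so the whole chain must be organized so that the invariance-driven swap of $k^{O(\ell)}$ for $2^{O(\ell)}$ in the fourth-moment bound is what finally drives the main term and is not destroyed by the intermediate manipulations. A secondary subtlety is that the theorem's $\tau$ is defined in terms of $f$ while the bounds above naturally involve $\bar f$; using that Corollary~\ref{cor:independent-rounding} preserves multilinear expressions, one checks that $\tau_{\bar f}$ differs from $\tau_f$ only by a benign constant factor, so this does not affect the stated bound.
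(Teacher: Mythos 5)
Your high-level plan is the same as the paper's (split $T_{1-\e}\sle P_{>\lambda}+\lambda I$, bound the tail by $\lambda\delta$, bound the head by a combination of H\"older and the invariance principle), but the way you implement the H\"older step breaks the degree budget of the SoS relaxation and cannot be salvaged at the level the theorem is stated for.

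The theorem (see the setup at the top of \pref{sec:dict-test-sse}) concerns a \emph{level-$4$} fictitious random variable $f$ with $f^2\sle f$; the pseudo-expectation functional $\pE_f$ is only defined on polynomials of degree at most $4$ in the entries $\{f(x)\}$ (and, after independent rounding, of $\{\bar f(x)\}$). Your chain
\[
\|P\|^8=(\iprod{P,\bar f})^4\sle(\E P^2\bar f)^2(\E\bar f)^2\sle(\E P^4)(\E\bar f)^3
\]
is a perfectly good \emph{polynomial} inequality modulo $\bar f^2-\bar f$, but both sides have degree $7$ or $8$ in $\bar f$, so ``taking pseudo-expectations'' of it has no meaning here. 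The same problem recurs in the very next step: $\pE(\E P^2\bar f)^2$ is degree $6$, $\pE(\E P^4)^2$ is degree $8$, and $\pE(\E\bar f)^{2j}$ for $j\ge 3$ is degree $\ge 6$. Applying \pref{cor:invariance} to the tensor square $Q=P\ot P$ is even worse: $Q$ is quadratic in $\bar f$, so treating $Q$ as a $4$-\frv would require $\bar f$ to be a level-$8$ variable (and, as a side effect, the $\tau$ that would come out of that application is $\pE\sum_r(\Inf_r Q)^2\approx\|P\|^4\sum_r(\Inf_r P)^2$, not the theorem's $\pE\sum_r(\super\Inf{\le\ell}_rf)^2$). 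Since the downstream use in \pref{thm:ug-main} needs the statement at exactly level $4$ (so that $8$ rounds of SoS suffice), pushing the argument to a higher level is not an option.

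The paper avoids all of this by never raising anything to a power higher than $4$ inside the pseudo-expectation. The H\"older step is \pref{lem:expansion-bound}, proved with the \emph{pseudo}-H\"older inequality (\pref{lem:holders}), where the exponents sit \emph{outside} the pseudo-expectation:
\[
\pE_{f}\, \E f\cdot(P_{>\lambda}f)
 = \pE_{f,\bar f}\,\E\,\bar f^3\cdot(P_{>\lambda}f)
 \le \Bigl(\pE_{\bar f}\E\bar f^4\Bigr)^{3/4}\Bigl(\pE_f\E(P_{>\lambda}f)^4\Bigr)^{1/4}
 = \bigl(\pE_f\E f\bigr)^{3/4}\bigl(\pE_f\E(P_{>\lambda}f)^4\bigr)^{1/4},
\]
and every individual pseudo-expectation appearing here is of a degree-$\le 4$ polynomial. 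The fourth moment $\pE_f\E(P_{>\lambda}f)^4$ is then controlled directly by \pref{cor:invariance} (applied once, to $P_{>\lambda}f$, which is linear in $f$ and hence still a level-$4$ variable), producing exactly the $\tau$ in the statement. To repair your proof you should replace the single high-degree SoS relation $\|P\|^8\sle(\E P^4)(\E\bar f)^3$ with the pseudo-H\"older step above; the rest of your outline (replacing $f$ by $\bar f$, the L\"owner split, balancing $\lambda=\delta^{\Omega(\e)}$) is consistent with the paper's argument.
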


The previous theorem is about graph expansion (measured by the quadratic
form $\iprod{f,T_{1-\e} f}$).
The following lemma allows us to relate graph expansion to the $4$-norm of
the projection of $f$ into the span of the eigenfunctions of $T_{1-\e}$
with significant eigenvalue.
We will be able to bound this $4$-norm in terms of the influences of $f$
(using the invariance principle in the previous section).

\begin{lemma}
  \label{lem:expansion-bound}
  Let $f$ be a \lrvar4 over $L_2(\Omega^R)$.
  Suppose $f^2\sle f$ (in terms of unspecified jointly-distributed
  \lrvar4s).
  Then for all $\lambda>0$,
  \begin{displaymath}
    \pE_f \iprod{f,T_{1-\e} f}
    \le (\pE_f \E f)^{3/4} (\pE_f \E (P_{>\lambda}f)^4)^{1/4}
    + \lambda \pE_f \E f
    \mper
  \end{displaymath}
  Here, $P_{>\lambda}$ is the projector into the span of the eigenfunctions of
  $T_{1-\e}$ with eigenvalue larger than $\lambda$.
\end{lemma}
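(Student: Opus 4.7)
The plan is to decompose $T_{1-\e}=T_{1-\e}P_{>\lambda}+T_{1-\e}P_{\le\lambda}$ and bound the two terms separately. For the low-eigenvalue part, the operator $\lambda I-T_{1-\e}P_{\le\lambda}$ is positive semidefinite and thus factors as $M^*M$, so $\lambda\snorm{f}-\iprod{f,T_{1-\e}P_{\le\lambda}f}=\snorm{Mf}$ is manifestly a sum of squares of linear forms in $f$. Taking pseudo-expectations and using the hypothesis $f^2\sle f$, which gives $\pE_f\snorm{f}=\pE_f\E f^2\le\pE_f\E f$, produces the summand $\lambda\pE_f\E f$ of the target bound.

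The main work is to establish the high-eigenvalue bound $\pE_f\iprod{f,T_{1-\e}P_{>\lambda}f}\le(\pE_f\E f)^{3/4}(\pE_f\E(P_{>\lambda}f)^4)^{1/4}$. Writing $p=P_{>\lambda}f$, $g=T_{1-\e}p$, and setting $u=\pE_f\iprod{f,g}$, $v=\pE_f\E f$, $w=\pE_f\E p^4$, this reduces to showing $u^4\le v^3 w$. In the ordinary (non-pseudo) setting the inequality follows from H\"older with conjugate exponents $(4/3,4)$ combined with $f^{4/3}\le f$, valid because $f\in[0,1]$. The obstacle for SoS is that H\"older with non-integer exponents has no direct polynomial proof, and a naive ``square-then-Cauchy--Schwarz'' derivation of $(\E fg)^4\le(\E f)^3\E g^4$ produces intermediate polynomials of degree $8$, which lies outside the reach of a level-$4$ pseudo-expectation. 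Overcoming this degree barrier is the heart of the proof.

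The workaround is to use two different positive linear functionals while keeping all intermediate quantities within degree $4$. First, $f^2\sle f$ already forces $f\sge 0$, since $f=f^2+(f-f^2)$ displays $f$ as a sum of squares. The \emph{weighted} functional $\tilde L(\phi)\seteq\pE_f\E_x[f(x)\phi(x)]$ is consequently positive on polynomials $\phi$ of $f$-degree at most $1$, because $f\phi^2$ is then SoS in $f$ of total degree at most $4$; its Cauchy--Schwarz inequality applied with $\phi=g$ and $\psi=1$ yields $u^2=\tilde L(g)^2\le\tilde L(g^2)\tilde L(1)=v\cdot\pE_f\E[fg^2]$. The \emph{unweighted} functional $L(\phi)\seteq\pE_f\E_x\phi(f,x)$ is positive on polynomials of $f$-degree at most $2$, and its Cauchy--Schwarz inequality with $A=f$, $B=g^2$ (whose $f$-degrees are $1$ and $2$, so that $A^2,B^2$ stay within degree $4$) delivers $(\pE_f\E[fg^2])^2\le\pE_f\E f^2\cdot\pE_f\E g^4\le v\cdot\pE_f\E g^4$, using $f^2\sle f$ once more. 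Chaining these gives $u^4\le v^3\cdot\pE_f\E g^4$. Finally, the SoS form of pointwise Jensen for $x\mapsto x^4$ gives $(T_{1-\e}p)(x)^4\sle T_{1-\e}[p^4](x)$, and averaging (using $\E T_{1-\e}[p^4]=\E p^4$) yields $\E g^4\sle\E p^4$, so $\pE_f\E g^4\le w$. Putting everything together delivers $u^4\le v^3 w$, which is the desired bound.
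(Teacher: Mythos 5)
Your proof is correct, and it takes a genuinely different route from the paper's. The paper avoids your Jensen step by going through the operator inequality $T_{1-\e}\preceq P_{>\lambda}+\lambda I$, so its high-eigenvalue piece is $\pE_f\,\E f\cdot(P_{>\lambda}f)=\pE_f\snorm{P_{>\lambda}f}$ rather than $\pE_f\iprod{f,T_{1-\e}P_{>\lambda}f}$. To convert this to the $(\pE_f\E f)^{3/4}$-type bound it then invokes the independent-rounding machinery of Corollary~\ref{cor:independent-rounding}: it manufactures an auxiliary $\bar f$ with $\bar f^2\equiv\bar f$ exactly, rewrites $\E f\cdot(P_{>\lambda}f)=\E\bar f^3\cdot(P_{>\lambda}f)$, and applies the pseudo-H\"older Lemma~\ref{lem:holders}, using $\pE\E\bar f^4=\pE\E\bar f=\pE\E f$ to land the exponent. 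Your weighted positive functional $\tilde L(\phi)=\pE_f\E[f\phi]$ does the same work without any rounding lemma: $f^2\sle f$ is exactly what makes $\tilde L$ positive on degree-$1$ arguments, after which your two Cauchy--Schwarz steps chain to $u^4\le v^3w$. One point worth stating explicitly in your write-up is why the certificate $f-f^2=\sum_i R_i^2$ has witnesses $R_i$ of degree at most $1$ (so that $f\phi^2$ is genuinely an SoS of degree $\le 4$): the top homogeneous part of $\sum_i R_i^2$ is itself a sum of squares and must vanish to match the degree-$2$ left side, which forces $\deg R_i\le 1$. Overall, you pay for bypassing Corollary~\ref{cor:independent-rounding} with the extra SoS-Jensen step $\E(T_{1-\e}p)^4\sle\E p^4$; both approaches correctly sidestep the degree-$8$ obstruction you identify.
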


\begin{proof}
  The following relation between polynomials holds
  \begin{displaymath}
    \iprod{f,T_{1-\e} f} \sle \E f\cdot (P_{>\lambda} f) + \lambda \E f^2\mper
  \end{displaymath}
  By \pref{cor:independent-rounding}, there exists a \lrvar4 $(f,\bar f)$
  over $L_2(\Omega^R)\times L_2(\Omega^R)$ such that $\bar f^2 \equiv_{\bar f}
  \bar f$.
  Then,
  \begin{align*}
    \pE_{f} \E f\cdot (P_{>\lambda} f)
    & = \pE_{f,\,\bar f} ~\E \bar f\cdot (P_{>\lambda} f)
    \quad\using{linearity in $\bar f$} \\
    & = \pE_{f,\,\bar f} ~\E \bar f ^3 \cdot (P_{>\lambda} f)
    \quad\using{$\bar f^2 \equiv_{\bar f} \bar f$}\\
    & \le \Paren{\pE\nolimits_{\bar f} ~\E \bar f ^4}^{3/4}
    \cdot \Paren{\pE\nolimits_{f}\E (P_{>\lambda} f)^4}^{1/4}
    \quad\using{\pref{lem:holders} (\Holder)} \\
    & =   \Paren{\pE\nolimits_{\bar f} ~\E \bar f }^{3/4}
    \cdot \Paren{\pE\nolimits_{f}\E (P_{>\lambda} f)^4}^{1/4}
    \quad\using{$\bar f^2 \equiv_{\bar f} \bar f$}\\
    & =   \Paren{\pE\nolimits_{f} ~\E  f }^{3/4}
    \cdot \Paren{\pE\nolimits_{f}\E (P_{>\lambda} f)^4}^{1/4}
    \quad\using{linearity in $\bar f$}
    \qedhere
  \end{align*}
\end{proof}

\begin{proof}[Proof of \pref{thm:sse-dictatorship-test}]
  By \pref{lem:expansion-bound},
  \begin{displaymath}
    \pE_f \iprod{f,T_{1-\e} f}
    \le (\pE_f \E f)^{3/4} (\pE_f \E (P_{>\lambda}f)^4)^{1/4}
    + \lambda \pE_f \E f^2
    \mper
  \end{displaymath}
  Using \pref{cor:invariance},
  \begin{displaymath}
    \pE_f \iprod{f,T_{1-\e} f}
    \le 2^{O(\ell)}\cdot (\pE_f \E f)^{3/4} (\pE_f (\E f^2)^2 +\sqrt{\tau}\cdot k^{O(\ell)} )^{1/4}
    + \lambda \pE_f \E f^2
    \mper
  \end{displaymath}
  Here, $\ell=\log(1/\lambda)/\e$.
  Using the relation $f^2\sle f$ and our assumption $\pE_f (\E f)^2\le
  \delta^2$, we get $\pE_f \E f^2 \le \pE_f \E f \le (\pE_f (\E f)^2)^{1/2}
  \le \delta$ (by Cauchy--Schwarz).
  Hence,
  \begin{align*}
    \pE_f \iprod{f,T_{1-\e} f}& \le (1/\lambda)^{O(1/\e)} \delta^{3/4}(\delta^2
    + \sqrt\tau \cdot (1/\lambda)^{O(\log k)/\e})^{1/4} + \lambda \delta\\
    & \le (1/\lambda)^{O(1/\e)} \delta^{5/4} + (1/\lambda)^{O(\log k)/\e}
    \delta^{3/4} \tau^{1/8} + \lambda \cdot \delta \mper
  \end{align*}
  To balance the terms $(1/\lambda)^{O(1/\e)}\delta^{5/4}$ and $\lambda\delta$,
  we choose $\lambda=\delta^{\Omega(\e)}$.
  \Dnote{check. making some mild assumptions about the range of $\e$ and
    $\tau$ here.}
  We conclude  the desired bound,
  \begin{displaymath}
    \pE_f \iprod{f,T_{1-\e} f} \le \delta^{1+\Omega(\e)} +
    k^{O(\log(1/\delta))} \cdot \tau^{1/8}\mper
    \qedhere
  \end{displaymath}
\end{proof}

\subsection{Dictatorship Test for Unique Games}
\label{sec:ug-dictatorship-test}

Let $\Omega=\Z_k$ (cyclic group of order $k$) and let $f$ be a \lrvar4 over
$L_2(\Omega\times\Omega^R)$.
Here, $f(a,x)$ is intended to be $0/1$ variable indicating whether symbol
$a$ is assigned to the point $x$.
%
% We consider the following quadratic form $T'_{1-\e}$ on
% $L_2(\Omega\times \Omega^R)$ (corresponding to a graph on $\Omega\times
% \Omega^R$),
% \begin{itemize}
% \item Sample a random edge $(x,y)$ of the noise graph $T_{1-\e}$ on $\Omega^R$.
% \item Sample $a,c\in \Omega$.
% \item Output $f_a(x)\cdot f_{a+c}(y-c\cdot \Ind)$.
% \end{itemize}

\Dnote{say that this graph corresponds to the usual 2-query test for unique games}

The following graph $T'_{1-\e}$ on $\Omega\times \Omega^R$ corresponds to
the $2$-query dictatorship test for \uniquegames \cite{KhotKMO04},
\begin{displaymath}
  T_{1-\e}' f(a,x) = \E_{c\in \Omega} \E_{y\sim_{1-\e} x} f(a+c,y-c\cdot \Ind)\mper
\end{displaymath}
Here, $y\sim_{1-\e} x$ means that $y$ is a random neighbor of $x$ in the
graph $T_{1-\e}$ (the $\e$-noise graph on $\Omega^R$).

We define $\bar f(x) \seteq \E_{c\in \Omega} f(c,x-c\cdot \Ind)$.
(We think of $\bar f$ as a variable over $L_2(\Omega^R)$.)
Then, the following polynomial identity (in $f$) holds \Dnote{check this
  identity}
\begin{displaymath}
  \iprod{f,T'_{1-\e} f} =  \iprod{\bar f,T_{1-\e} \bar f}.
\end{displaymath}

\begin{theorem}
  \label{thm:ug-dictatorship-test}
  Suppose $f^2\sle f$ and $\pE_f (\E f)^2\le \delta^2$.
  Let $\tau=\pE_f \sum_{r} (\super \Inf{\le \ell}_r\bar f)^2$ for
  $\ell=\Omega(\log(1/\delta))$.
  Then,
  \begin{displaymath}
    \pE_f \iprod{f,T'_{1-\e} f} \le \delta^{1+\Omega(\e)} +
    k^{O(\log(1/\delta))} \cdot \tau^{1/8}\mper
  \end{displaymath}
  (Here, we assume that $\e$, $\delta$ and $\tau$ are sufficiently small.)
\end{theorem}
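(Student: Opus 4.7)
The plan is to reduce this statement to the small-set expansion dictatorship test of \pref{thm:sse-dictatorship-test} applied to the folded variable $\bar f(x)=\E_{c\in\Omega} f(c,x-c\cdot\Ind)$. The engine is the polynomial identity
\begin{equation*}
  \iprod{f,T'_{1-\e} f} = \iprod{\bar f, T_{1-\e}\bar f}\mcom
\end{equation*}
which already appears just before the theorem statement, and the observation that $\tau$ in \pref{thm:ug-dictatorship-test} is defined exactly in terms of the low-degree influences of $\bar f$, matching the $\tau$ used in \pref{thm:sse-dictatorship-test}. So once the hypotheses of that earlier theorem are checked for $\bar f$ (with the same $\delta$), the conclusion follows immediately.

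The two hypotheses to verify are the boundedness relation $\bar f^2\sle \bar f$ and the density bound $\pE_{\bar f} (\E \bar f)^2 \le \delta^2$. The density bound is straightforward: by linearity of the averaging in $c$ and invariance of the uniform measure under the shift $x\mapsto x-c\cdot\Ind$, we have the polynomial identity $\E \bar f = \E f$, so $\pE_f (\E \bar f)^2 = \pE_f (\E f)^2 \le \delta^2$ by hypothesis. For the boundedness relation, I would write $\bar f(x)^2 = \E_{c,c'\in\Omega} f(c,x-c\cdot\Ind)\, f(c',x-c'\cdot\Ind)$, then apply the elementary polynomial inequality $2PQ\sle P^2+Q^2$ coordinate-wise (in $c,c'$) and average, to obtain $\bar f^2\sle \E_c f(c,x-c\cdot\Ind)^2$; finally, since we are told $f^2\sle f$, averaging over $c$ preserves the relation and yields $\E_c f(c,x-c\cdot\Ind)^2 \sle \E_c f(c,x-c\cdot\Ind) = \bar f(x)$. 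Composing the two inequalities gives $\bar f^2\sle \bar f$ as needed.

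With both hypotheses in hand, I would apply \pref{thm:sse-dictatorship-test} to $\bar f$ with the same $\delta$ and the same choice $\ell=\Omega(\log(1/\delta))$, so that $\tau$ in that theorem coincides with $\pE_f \sum_r (\super\Inf{\le \ell}_r \bar f)^2$, which is exactly the quantity called $\tau$ here. This yields $\pE_f \iprod{\bar f, T_{1-\e}\bar f} \le \delta^{1+\Omega(\e)} + k^{O(\log(1/\delta))}\cdot \tau^{1/8}$, and combining with the identity above completes the proof.

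The main subtlety I expect is in the first step: although the chain $\bar f^2\sle \E_c f(c,\cdot)^2\sle \bar f$ is conceptually clear, one must be careful that each inequality is a genuine sum-of-squares identity of polynomials in the formal variables $\{f(a,x)\}_{a,x}$, rather than a pointwise statement about real numbers. The $2PQ\sle P^2+Q^2$ step is automatic (expanding $(P-Q)^2$), and averaging an SoS relation over a fixed finite set is again SoS, so the reduction goes through; but this is the only place where the proof of \pref{thm:ug-dictatorship-test} genuinely uses structure beyond a syntactic rewrite of \pref{thm:sse-dictatorship-test}.
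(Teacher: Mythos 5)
Your proof is correct and follows the same route as the paper's (which is a two-line argument: apply \pref{thm:sse-dictatorship-test} to $\bar f$, using $\E f = \E\bar f$). You additionally spell out the verification that $\bar f^2 \sle \bar f$ via the chain $\bar f^2 \sle \E_c f(c,\,\cdot\,)^2 \sle \bar f$, a hypothesis the paper leaves implicit; your Jensen-style step and the observation that averaging a sum-of-squares relation over a fixed finite set preserves it are both sound.
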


\begin{proof}

  Apply \pref{thm:sse-dictatorship-test} to bound $\pE_f
  \iprod{\bar f,T_{1-\e} \bar f}$.
  Use that fact that $\E f= \E \bar f$ (as polynomials in $f$).
\end{proof}

\subsection{Influence Decoding}
\label{sec:influence-decoding}

Let $\cU$ be a unique game with vertex set $V$ and alphabet $[R]$.
Recall that we represent $\cU$ as a distribution over triples $(u,v,\pi)$
where $u,v\in V$ and $\pi$ is a permutation of $[R]$.
The triples encode the constraints of $\cU$.
We assume that the unique game $\cU$ is regular in the same that every
vertex participates in the same fraction of constraints.

Let $\Omega=\Z_k$ (cyclic group of order $k$).
We reduce $\cU$ to a unique game $\cW=\cW_{\e,k}(\cU)$ with vertex set
$V\times \Omega^R$ and alphabet~$\Omega$.
Let $f=\set{f_u}_{u\in V}$ be a variable over $L_2(\Omega\times \Omega^R)^V$.
The unique game $\cW$ corresponds to the following quadratic form in $f$,
\begin{displaymath}
  \iprod{f,\cW f}
  \seteq \E_{u\in V}
  \E_{\substack{(u,v,\pi)\sim \cU\mid u \\ (u,v',\pi')\sim\cU\mid u}}
  \iprod{\super f \pi _v, T'_{1-\e} \super f {\pi'}_{v'}}\mper
\end{displaymath}
Here, $(u,v,\pi)\sim \cU\mid u$ denotes a random constraint of $\cU$
incident to vertex $u$, the graph $T'_{1-\e}$ corresponds to the
dictatorship test of \uniquegames defined in
\pref{sec:ug-dictatorship-test}, and $\super f {\pi}_v(a,x)=f_v(a,\pi.x)$
is the function obtained by permuting the last $R$ coordinates according to
$\pi$ (where $\pi.x(i) = x_{\pi(i)}$).
\Dnote{say more precisely what is meant by permuting the coordinates
  according to $\pi$.}

We define $g_u = \E_{(u,v,\pi)\sim \cU\mid u} \super f \pi_v$.
Then,
\begin{equation}
  \label{eq:ug-form-symmetric}
  \iprod{f,\cW f} = \E_{u\in V} \iprod{g_u,T'_{1-\e} g_u}\mper
\end{equation}

\paragraph{Bounding the value of  SoS solutions}

Let $f=\set{f_u}_{u\in V}$ be a solution to the level-$d$ SoS
relaxation for the unique game $\cW$.
In particular, $f$ is a $d$-f.r.v. over $L_2(\Omega\times
\Omega^R)^V$.
Furthermore, $\pE_f (\E f_u)^2 \le 1/k^2$ for all vertices $u\in V$.

By applying \pref{thm:ug-dictatorship-test} to \pref{eq:ug-form-symmetric},
we can bound the objective value of $f$
\begin{align*}
  \pE_f\iprod{f,\cW f} &\le 1/k^{1+\Omega(\e)}
  + k^{O(\log k)}\Paren{\pE_f \E_{u\in V} \tau_u}^{1/8}\mcom
\end{align*}
where $\tau_u = \sum_{r} (\super \Inf {\le \ell}_r \bar g_{u})^2$,
% $\bar g_{u}(x)=\E_{(u,v,\pi)\sim \cU\mid u} \E_{c\in \Omega}  \super f \pi_{v,c}$, and
% $f_{v,c} (x) =  f_v(c,x-c\Ind)$.
$\bar g_{u}(x) = \E_{(u,v,\pi)\sim \cU\mid u} \super{\bar f} \pi _v$, and
$\bar f_v (x) = \E_{c\in \Omega} f_v(c,x-c \cdot \Ind)$.
\Dnote{justify that conditions of \pref{thm:ug-dictatorship-test} were
  satisfied}

Since $\super \Inf {\le \ell }_r$ is a positive semidefinite form,
\Dnote{check $\pi$ vs $\inv \pi$ issue}
\begin{displaymath}
  \tau_u
  \sle \sum_r \Paren{\E_{(u,v,\pi)\sim \cU\mid u}
    \super \Inf {\le \ell}_r \super {\bar f_{v}} \pi}^2
  = \sum_r \Paren{\E_{(u,v,\pi)\sim \cU\mid u}
    \super \Inf {\le  \ell}_{\pi(r)} {\bar f_{v}} }^2.
\end{displaymath}
Let $h$ be the level-$d/2$ fictitious random variable over $\L2(V\times [R])$ with
$h(u,r)=\super \Inf {\le \ell} _r \bar f_u$.
Let $G_\cU$ be the label-extended graph of the unique game $\cU$.
\Dnote{should define label-extended graph}
Then, the previous bound on $\tau_u$ shows that
\begin{math}
  \E_{u\in V}\tau_u \sle R \cdot \snorm{G_\cU h}\mper
\end{math}
\pref{lem:boundedness-relation} shows that $h^2\sle h$.
\Dnote{maybe explain a bit more}
On the other hand, $\sum_r h(u,r) \sle \ell\snorm{\bar f_u}\sle \ell
\snorm{f_u}$ (bound on the total influence of low-degree Fourier
polynomials).
In particular, $\E h \sle \ell \E_{u\in V} \snorm{f_u}/R$.
Since $f$ is a valid SoS solution for the unique game $\cW$, we have
$\pE_f \norm{f_u}^d\le 1/k^{d/2}$ for all $u\in V$.
(Here, we assume that $d$ is even.)
It follows that $\pE_h (\E h)^{d/2}\le (\tfrac{\ell}{k\cdot R})^{d/2}$.

The arguments in this subsection imply the following theorem.

\begin{theorem}
  \label{thm:ug-value-bound}
  The optimal value of the level-$d$ SoS relaxation for the unique
  game $\cW=\cW_{\e,k}(\cU)$ is bounded from above by
  \begin{displaymath}
    1/k^{\Omega(\e)} + k^{O(\log k)}
    \Paren{R\cdot \max_h \pE_h \snorm{G_\cU h}}^{1/8}\mcom
  \end{displaymath}
  where the maximum is over all level-$d/2$ fictitious random variables $h$ over
  $L_2(V\times [R])$ satisfying $h^2\sle h$ and $\pE_h (\E h)^{d/2}\le
  \ell/R^{d/2}$.
\end{theorem}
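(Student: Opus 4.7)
My plan is to assemble this theorem directly from the ingredients developed in the preceding two subsections, namely the unique games dictatorship-test bound (\pref{thm:ug-dictatorship-test}) and the formula \eqref{eq:ug-form-symmetric} that writes $\iprod{f,\cW f}$ as an average of dictatorship-test values on the ``averaged'' variables $g_u=\E_{(u,v,\pi)\sim\cU\mid u}\super f\pi_v$. Let $f=\{f_u\}_{u\in V}$ be a level-$d$ pseudo-expectation feasible for $\cW$; since $\cW$'s label-extended graph encodes a proper $0/1$ assignment on $\Omega\times\Omega^R$ of density $1/k$, we have $f_u^2\sle f_u$ and $\pE_f(\E f_u)^2\leq 1/k^2$, and these relations are inherited by $g_u$. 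Applying \pref{thm:ug-dictatorship-test} pointwise in $u$ and averaging yields
\[
\pE_f\iprod{f,\cW f}\le 1/k^{1+\Omega(\e)} + k^{O(\log k)}\bigl(\pE_f\E_{u\in V}\tau_u\bigr)^{1/8},
\]
where $\tau_u=\sum_r(\super\Inf{\le\ell}_r\bar g_u)^2$ and $\bar g_u(x)=\E_{(u,v,\pi)\sim\cU\mid u}\super{\bar f}\pi_v$.

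Next, I would convert the influences of $\bar g_u$ into influences of the individual $\bar f_v$ indexed by the label-extended graph of $\cU$. The key point is that $\super\Inf{\le\ell}_r$, viewed as a quadratic form on $L_2(\Omega^R)$, is positive semidefinite, so Cauchy--Schwarz (more precisely, Jensen in the SoS sense) gives
\[
(\super\Inf{\le\ell}_r\bar g_u)^2 \sle \Bigl(\E_{(u,v,\pi)\sim\cU\mid u}\super\Inf{\le\ell}_{\pi(r)}\bar f_v\Bigr)^2.
\]
Defining $h(u,r)=\super\Inf{\le\ell}_r\bar f_u$ as a level-$d/2$ fictitious random variable over $L_2(V\times[R])$, one then recognizes
\[
\E_{u\in V}\tau_u \sle R\cdot \snorm{G_\cU h}
\]
(up to a harmless relabeling $r\leftrightarrow\pi(r)$ that is absorbed by the label-extended graph $G_\cU$). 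Substituting into the dictatorship-test bound gives exactly the expression in the theorem, provided $h$ satisfies the two constraints in the $\max$.

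It then remains to verify those constraints. The boundedness relation $h^2\sle h$ follows from \pref{lem:boundedness-relation}, using that $0\le\super\Inf{\le\ell}_r\bar f_u\le 1$ whenever $\bar f_u^2\sle\bar f_u$ (which in turn follows from $f_u^2\sle f_u$ by averaging). For the moment bound, the total-influence inequality for low-degree Fourier polynomials gives $\sum_r h(u,r)\sle \ell\snorm{\bar f_u}\sle \ell\snorm{f_u}$, whence $\E h\sle\ell\E_{u\in V}\snorm{f_u}/R$; since $f$ is feasible for $\cW$, $\pE_f\snorm{f_u}^{d/2}\le 1/k^{d/2}$, and an iterated application of pseudo-Cauchy--Schwarz (\pref{lem:pseudo-expectation-cauchy-schwarz}) upgrades this to $\pE_h(\E h)^{d/2}\le(\ell/R)^{d/2}$ as required.

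The main technical obstacle is the second step: lifting the scalar Cauchy--Schwarz manipulation for $(\super\Inf{\le\ell}_r\bar g_u)^2$ into a sum-of-squares inequality valid under $\pE$. One must be careful that $\super\Inf{\le\ell}_r$ is a \emph{quadratic} form in $\bar f_v$, so $(\super\Inf{\le\ell}_r\bar g_u)^2$ is a quartic polynomial in $f$; the degree budget $d$ must suffice for all four applications of $\pE$ here, and similarly for the $d/2$-th moment bound on $\E h$. A secondary annoyance is the careful bookkeeping of the permutations $\pi$ in the label-extended graph, but this is essentially notational once one fixes the convention that $G_\cU$ averages over incident constraints with labels permuted accordingly.
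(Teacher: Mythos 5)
Your proposal reproduces the paper's own argument essentially line by line: the decomposition $\iprod{f,\cW f}=\E_{u}\iprod{g_u,T'_{1-\e}g_u}$, the invocation of \pref{thm:ug-dictatorship-test} with $\delta=1/k$, the PSD-of-influence (Jensen) step that converts $\tau_u$ into a quadratic form in the label-extended graph, the definition of $h(u,r)=\super\Inf{\le\ell}_r\bar f_u$ as a level-$d/2$ f.r.v., and the two feasibility checks via \pref{lem:boundedness-relation} and the total-influence bound. So the route is the same, not an alternative one.

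There is one arithmetic slip in the last step that is worth flagging. You correctly note that feasibility of $f$ for $\cW$ gives $\pE_f\snorm{f_u}^{d/2}=\pE_f\norm{f_u}^d\le 1/k^{d/2}$, but then write that pseudo-Cauchy--Schwarz ``upgrades this to $\pE_h(\E h)^{d/2}\le(\ell/R)^{d/2}$.'' If you actually carry the $k$-factor through --- $\E h\sle \ell\,\E_u\snorm{f_u}/R$, then raise to the $d/2$ power and use $\pE_f\snorm{f_u}^{d/2}\le 1/k^{d/2}$ --- you get the paper's stronger bound $(\ell/(kR))^{d/2}$. This $1/k^{d/2}$ is not cosmetic: with $\ell=\Theta(\log k)$ and $d>2$ the quantity $(\ell/R)^{d/2}=\ell^{d/2}/R^{d/2}$ does \emph{not} satisfy the theorem's constraint $\ell/R^{d/2}$, whereas $(\ell/(kR))^{d/2}$ does (since $\ell^{d/2-1}\le k^{d/2}$ for $\ell\approx\log k$). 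In other words, the $1/k$-factor you dropped is precisely what makes the constructed $h$ feasible for the maximization in the theorem statement, so it should be kept rather than discarded.

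Your remarks about the degree budget and the $\pi$-versus-$\pi^{-1}$ bookkeeping are accurate concerns and, for what it is worth, the paper itself leaves the permutation convention as a loose end.
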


\Dnote{remark not so important}

\begin{remark}
  Since the quadratic form $\snorm{G_\cU h}$ has only nonnegative
  coefficients (in the standard basis), we can use
  \pref{cor:independent-rounding} to ensure that the level-$d/2$ random
  variable $h$ satisfies in addition $h^2\equiv_h h$.
\end{remark}

\Dnote{maybe say that one can extract more information about the variable
  $h$. For example, it cannot put too much mass on a single cloud. in this
  sense, $h$ is very close to a SoS solution for the unique game $\cU$.}

\subsection{Certifying Small-Set Expansion}
\label{sec:small-set-expansion}

Let $T_{1-\e}$ be a the noise graph on $\sbits^R$ with second largest
eigenvalue $1-\e$.
\begin{theorem}
  \label{thm:certify-sse}
  Let $f$ be level-$4$ fictitious random variables over $L_2(\sbits^R)$.
  Suppose that $f^2\sle f$ (in terms of unspecified jointly-distributed
  level-$4$ fictitious random variables) and that $\pE_f (\E f)^2\le \delta^2$.
  Then,
  \begin{displaymath}
    \pE_f\iprod{f,T_{1-\e} f} \le \delta^{1+\Omega(\e)}\mper
  \end{displaymath}
\end{theorem}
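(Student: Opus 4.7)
The plan is to mimic the proof of \pref{thm:sse-dictatorship-test} but bypass the invariance-principle step entirely: because the ambient space is already $\sbits^R$, the projection $P_{>\lambda} f$ is itself a low-degree Fourier polynomial, so hypercontractivity of low-degree polynomials (\pref{lem:hypercontractivity1}) can be applied to it directly without paying the $k^{O(\ell)}\sqrt\tau$ error term that appeared for general alphabets.

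First I would apply \pref{lem:expansion-bound} to obtain
\begin{displaymath}
  \pE_f \iprod{f,T_{1-\e}f}
  \le \bigl(\pE_f\,\E f\bigr)^{3/4}\bigl(\pE_f\,\E (P_{>\lambda}f)^4\bigr)^{1/4}
  + \lambda\,\pE_f\,\E f,
\end{displaymath}
where $P_{>\lambda}$ projects onto the span of eigenfunctions of $T_{1-\e}$ with eigenvalue $>\lambda$. On $\sbits^R$ these eigenfunctions are the Fourier characters $\chi_\alpha$ with eigenvalue $(1-\e)^{|\alpha|}$, so $P_{>\lambda} f$ is a Fourier polynomial of degree at most $\ell = O\bigl(\log(1/\lambda)/\e\bigr)$. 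Then \pref{lem:hypercontractivity1} yields the polynomial relation $\E(P_{>\lambda} f)^4 \sle 9^{\ell}(\E(P_{>\lambda}f)^2)^2$, hence $\pE_f\,\E(P_{>\lambda}f)^4 \le 9^{\ell}\,\pE_f(\E(P_{>\lambda}f)^2)^2$.

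Next I would control $\pE_f(\E(P_{>\lambda}f)^2)^2$ using the hypothesis $f^2\sle f$. Since $\E((I-P_{>\lambda})f)^2$ is a sum of squares, we get $\E(P_{>\lambda}f)^2 \sle \E f^2 \sle \E f$, and chaining these shows that $\E f$ is itself a sum of squares. As the product of two sums of squares is again a sum of squares, the factorization $(\E f)^2 - (\E(P_{>\lambda}f)^2)^2 = \bigl(\E f - \E(P_{>\lambda}f)^2\bigr)\bigl(\E f + \E(P_{>\lambda}f)^2\bigr)$ gives $(\E(P_{>\lambda}f)^2)^2 \sle (\E f)^2$, and therefore
\begin{displaymath}
  \pE_f (\E(P_{>\lambda}f)^2)^2 \le \pE_f(\E f)^2 \le \delta^2.
\end{displaymath}
Combined with $\pE_f\,\E f \le \sqrt{\pE_f(\E f)^2}\le\delta$ (the pseudo-expectation Cauchy--Schwarz of \pref{lem:pseudo-expectation-cauchy-schwarz}), this gives the bound
\begin{displaymath}
  \pE_f \iprod{f,T_{1-\e}f}
  \;\le\; \delta^{3/4}\bigl(9^{\ell}\delta^2\bigr)^{1/4} + \lambda\,\delta
  \;=\; 9^{\ell/4}\delta^{5/4} + \lambda\delta
  \;=\; (1/\lambda)^{O(1/\e)}\delta^{5/4} + \lambda\delta.
\end{displaymath}

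Finally I would balance the two terms by choosing $\lambda = \delta^{c\e}$ for a suitable small constant $c>0$, which makes both terms at most $\delta^{1+\Omega(\e)}$ (under the assumption that $\e$ and $\delta$ are small enough). I expect the only subtle step to be the SoS-manipulation in the middle paragraph -- specifically verifying that $(\E(P_{>\lambda}f)^2)^2\sle(\E f)^2$ using only the relation $f^2\sle f$ and closure of sums of squares under products. Everything else is a direct specialization of the machinery already used for \pref{thm:sse-dictatorship-test}, with the invariance-principle factor $\tau^{1/8}$ simply absent.
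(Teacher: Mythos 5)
Your proposal is correct and follows essentially the same route as the paper's own proof: apply \pref{lem:expansion-bound}, observe that the eigenfunctions of $T_{1-\e}$ with eigenvalue above $\lambda$ are characters of degree $O(\log(1/\lambda)/\e)$, invoke \pref{lem:hypercontractivity1} and the relation $f^2\sle f$ to bound $\pE_f\,\E(P_{>\lambda}f)^4$ by $(1/\lambda)^{O(1/\e)}\delta^2$, and then balance by taking $\lambda=\delta^{\Omega(\e)}$. The only difference is cosmetic: you go directly from $(\E(P_{>\lambda}f)^2)^2$ to $(\E f)^2$ and spell out the product-of-SOS justification for squaring, whereas the paper passes through $\norm{f}^4=(\E f^2)^2$ and leaves that step implicit.
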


\begin{proof}
  By \pref{lem:expansion-bound} (applying it for the case
  $\Omega=\set{0,1}$), for every $\lambda>0$,
  \begin{displaymath}
    \pE_f\iprod{f,T_{1-\e} f}
    \le (\pE_f \E f)^{3/4} (\pE_f \E (P_{>\lambda}f)^4)^{1/4}
    + \lambda \pE_f \E f\mper
  \end{displaymath}
  For the graph $T_{1-\e}$, the eigenfunctions with eigenvalue larger than
  $\lambda$ are characters with degree at most $\log(1/\lambda)/\e$.
  Hence, \pref{lem:hypercontractivity1} implies $\E(P_{>\lambda} f)^4\sle
  (1/\lambda)^{O(1/\e)} \norm{f}^4$.
  Since $f^2\sle f$, we have $\norm{f}^4\sle (\E f)^2$.
  Hence, $\pE_f \E(P_{>\lambda} f)^4\le (1/\lambda)^{O(1/\e)}\delta^2$.
  Plugging in, we get
  \begin{displaymath}
    \pE_f\iprod{f,T_{1-\e} f}\le (1/\lambda)^{O(1/\e)} \delta^{5/4} +
    \lambda\cdot \delta
    \mper
  \end{displaymath}
  To balance the terms, we choose $\lambda=\delta^{\Omega(\e)}$, which gives the
  desired bound.
\end{proof}

\subsection{Putting Things Together}
\label{sec:putt-things-togeth}

Let $T_{1-\eta}$ be a the noise graph on $\sbits^R$ with second largest
eigenvalue $1-\eta$.
Let $\cU=\cU_{\eta,R}$ be an instance of \uniquegames with label-extended graph
$G_\cU=T_{1-\eta}$ (e.g., the construction in \cite{KhotV05}).

Combining \pref{thm:ug-value-bound} (with $d = 4$) and \pref{thm:certify-sse} gives the
following result.

\begin{theorem}
\label{thm:ug-main}
  The optimal value of the level-$8$ SoS relaxation for the unique
  game $\cW=\cW_{\e,k}(\cU_{\eta,R})$ is bounded from above by
  \begin{displaymath}
    1/k^{\Omega(\e)} + k^{O(\log k)} \cdot R^{-\Omega(\eta)}\mper
  \end{displaymath}
  In particular, the optimal value of the relaxation is close to
  $1/k^{\Omega(\e)}$ if $\log R \gg (\log k)^2 /\eta$.
\end{theorem}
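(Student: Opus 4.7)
My plan is to invoke Theorem~\ref{thm:ug-value-bound} with the round parameter chosen so that the induced auxiliary random variable $h$ is a level-$4$ f.r.v.~(matching the hypothesis of Theorem~\ref{thm:certify-sse}), and then to use Theorem~\ref{thm:certify-sse} applied to the noise operator $G_\cU=T_{1-\eta}$ to bound the remaining quantity $\max_h \pE_h \snorm{G_\cU h}$.

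First I would instantiate Theorem~\ref{thm:ug-value-bound} on the instance $\cW_{\e,k}(\cU_{\eta,R})$, obtaining
\[
\SDP^{(8)}_{\rm SoS}(\cW) \;\le\; 1/k^{\Omega(\e)} + k^{O(\log k)} \Paren{R \cdot \max_h \pE_h \snorm{G_\cU h}}^{1/8},
\]
where the maximum is over level-$4$ f.r.v.~$h$ on $L_2(V\times[R])$ with $h^2\sle h$ and $\pE_h(\E h)^4\le \ell/R^4$, for $\ell=O(\log k)$. The key reduction step is then to rewrite $\snorm{G_\cU h} = \iprod{h,G_\cU^2 h}$. Since $G_\cU=T_{1-\eta}$ is the noise graph and noise operators compose as $T_{1-\eta}^{2}=T_{(1-\eta)^2}=T_{1-\eta'}$ with $\eta'=\Theta(\eta)$, the quantity $\pE_h \snorm{G_\cU h}$ is exactly of the form $\pE_h \iprod{h,T_{1-\eta'} h}$ to which Theorem~\ref{thm:certify-sse} applies.

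To fit into that theorem we need a bound of the form $\pE_h (\E h)^2\le \delta^2$. By pseudo-Cauchy--Schwarz (\pref{lem:pseudo-expectation-cauchy-schwarz}) applied to the polynomials $(\E h)^2$ and $1$, we have
\[
\pE_h (\E h)^2 \;\le\; \sqrt{\pE_h (\E h)^4} \;\le\; \sqrt{\ell}/R^2,
\]
so we may take $\delta=\ell^{1/4}/R$. Applying Theorem~\ref{thm:certify-sse} (valid because $h^2\sle h$ by assumption) yields
\[
\pE_h \snorm{G_\cU h} \;=\; \pE_h\iprod{h,T_{1-\eta'} h} \;\le\; \delta^{1+\Omega(\eta)} \;\le\; R^{-1-\Omega(\eta)},
\]
where I absorb the polylogarithmic $\ell^{O(1)}$ factors into the $\Omega(\cdot)$ (since $\ell=O(\log k)$ and the $\Omega(\eta)$ slack in the exponent is independent of $k$, and one may assume $\log R\gg \log k$ without loss of generality, matching the intended regime of the ``in particular'' conclusion). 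Plugging back,
\[
\Paren{R \cdot \max_h \pE_h \snorm{G_\cU h}}^{1/8} \;\le\; R^{-\Omega(\eta)/8} \;=\; R^{-\Omega(\eta)},
\]
which gives the desired final bound $1/k^{\Omega(\e)} + k^{O(\log k)}\cdot R^{-\Omega(\eta)}$. The ``in particular'' clause then follows by requiring $k^{O(\log k)}R^{-\Omega(\eta)}\le 1/k^{\Omega(\e)}$, which amounts to $\log R\gg (\log k)^2/\eta$.

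The main obstacle I anticipate is a bookkeeping one rather than a conceptual one: making sure that $G_\cU^2$ really is a noise operator with parameter $\Theta(\eta)$ on the same domain as assumed in Theorem~\ref{thm:certify-sse} (so the identification $L_2(V\times[R])\cong L_2(\sbits^R)$ implicit in the \cite{KhotV05} construction is lossless), and tracking the polylogarithmic $\ell^{O(1)}$ factors carefully so that they are dominated by the $R^{\Omega(\eta)}$ gain. The delicate numerical point is that the fourth-moment bound $\ell/R^4$ must be converted via pseudo-Cauchy--Schwarz to a second-moment bound of size essentially $R^{-2}$, since only then does the factor of $R$ appearing inside the parentheses of Theorem~\ref{thm:ug-value-bound} get neutralized to leave a genuinely decaying $R^{-\Omega(\eta)}$ term.
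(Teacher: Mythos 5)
Your proposal is correct and fills in, in essentially the intended way, the details that the paper's one-line proof (``combining Theorem~\ref{thm:ug-value-bound} and Theorem~\ref{thm:certify-sse}'') glosses over: you correctly take $d=8$ in Theorem~\ref{thm:ug-value-bound} (the paper's parenthetical ``with $d=4$'' is a typo, since a level-$4$ relaxation for $\cW$ would only yield a level-$2$ f.r.v.~$h$, which Theorem~\ref{thm:certify-sse} cannot accept), correctly rewrite $\snorm{G_\cU h}=\iprod{h,T_{1-\eta}^2 h}=\iprod{h,T_{1-\eta'}h}$ with $\eta'=\Theta(\eta)$ to match the form of Theorem~\ref{thm:certify-sse}, and correctly downgrade the fourth-moment constraint to a second-moment one via pseudo-Cauchy--Schwarz, absorbing the residual $\ell^{O(1)}=(\log k)^{O(1)}$ factors into $k^{O(\log k)}$. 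This is the same derivation the paper intends.
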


\subsection{Refuting Instances based on Short Code} \label{sec:short-code}

\Dnote{TODO: describe the kind of things that change (and don't change) if
  we apply everything to the short code constructions as opposed to the
  long code constructions.}

Let $\cU'=\cU'_{\eta,R}$ be an instance of \uniquegames according to the
basic construction in \cite{BarakGHMRS11}.
(The label-extended graph of $\cU$ will be a subgraph of $T_{1-\e}$ induced
by the subset of $\sbits^{R}$ corresponding to a Reed--Muller code, that
is, evaluations of low-degree $\GF2$-polynomials.)

Let $\cW'=\cW'_{\e,k}(\cU'_{\eta,R})$ be the unique game obtained by
applying the short-code alphabet reduction of \cite{BarakGHMRS11}.

The following analog of \pref{thm:ug-main} holds.

\begin{theorem}
  \label{thm:ug-main-short}
  The optimal value of the level-$8$ SoS relaxation for the unique
  game $\cW'=\cW'_{\e,k}(\cU'_{\eta,R})$ is bounded from above by
  \begin{displaymath}
    1/k^{\Omega(\e)} + k^{O(\log k)} \cdot R^{-\Omega(\eta)}\mper
  \end{displaymath}
  In particular, the optimal value of the relaxation is close to
  $1/k^{\Omega(\e)}$ if $\log R \gg (\log k)^2 /\eta$.
\end{theorem}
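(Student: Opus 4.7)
The plan is to mirror the proof of \pref{thm:ug-main} step by step, replacing each ingredient by its short-code analog. Recall that the long-code proof composed three modular pieces: (i) influence decoding (\pref{thm:ug-value-bound}), which reduces bounding the value of the level-$8$ SoS solution for $\cW_{\e,k}(\cU)$ to bounding a quantity of the form $\pE_h \snorm{G_\cU h}$ for a level-$4$ f.r.v.\ $h$ on the label-extended graph of $\cU$; and (ii) certifying small-set expansion of that label-extended graph (\pref{thm:certify-sse}), which combined the expansion-to-fourth-moment reduction \pref{lem:expansion-bound} with hypercontractivity of low-degree Fourier polynomials (\pref{lem:hypercontractivity1}). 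Since the alphabet-reduction gadget in~\cite{BarakGHMRS11} is structurally the same 2-query test, step (i) carries over verbatim: the derivation of \pref{eq:ug-form-symmetric}, the decomposition $g_u = \E_{(u,v,\pi)\sim\cU'\mid u}\super f \pi _v$, and the bound on $\tau_u$ in terms of $\snorm{G_{\cU'} h}$ make no use of the underlying structure of $G_{\cU'}$. Hence we obtain the same conclusion as \pref{thm:ug-value-bound} with $\cU$ replaced by $\cU'$.

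The substantive work is in adapting step (ii). Here $G_{\cU'}$ is the subgraph of $T_{1-\eta}$ induced by the Reed--Muller code $\mathrm{RM}\subseteq\sbits^R$ of evaluations of low-degree $\GF2$-polynomials. I would prove the following short-code analog of \pref{thm:certify-sse}: if $f$ is a level-$4$ f.r.v.\ over $L_2(\mathrm{RM})$ with $f^2\sle f$ and $\pE_f(\E f)^2\le \delta^2$, then $\pE_f \iprod{f, G_{\cU'} f}\le \delta^{1+\Omega(\eta)}$. The reduction of \pref{lem:expansion-bound} is a pure linear algebra/pseudo-expectation argument and applies without modification once $P_{>\lambda}$ is interpreted as the spectral projector of $G_{\cU'}$ onto eigenvalues exceeding $\lambda$. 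So the key new ingredient is a \emph{SoS hypercontractivity bound} $\E(P_{>\lambda} f)^4 \sle (1/\lambda)^{O(1/\eta)}\snorm{f}^4$ valid for $f\in L_2(\mathrm{RM})$; equivalently, the span of eigenfunctions of $G_{\cU'}$ with eigenvalue $>\lambda$ should have $2\to 4$ norm bounded by $(1/\lambda)^{O(1/\eta)}$ in a way certified by a sum-of-squares proof of constant degree.

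The hard part is exactly this SoS hypercontractivity statement on the short code. \cite{BarakGHMRS11} established, in the classical (non-SoS) sense, that the top eigenfunctions of $G_{\cU'}$ behave like low-degree functions and thus satisfy a Bonami--Beckner-type bound; concretely, they identify these eigenfunctions with low-degree polynomials in the Reed--Muller generators and use hypercontractivity on $\sbits^R$ restricted to $\mathrm{RM}$. My plan is to lift their derivation into an SoS proof in the style of \pref{lem:hypercontractivity1}: parametrize elements of the relevant low-dimensional eigenspace by their coefficients in a suitable basis (the Reed--Muller dual generators), then repeat the inductive proof of the hypercontractive inequality $\E f^2 g^2 \sle 9^{(d+e)/2}(\E f^2)(\E g^2)$, taking care that each intermediate inequality is realized as a genuine sum of squares of polynomials of degree $\le 4$ in those coefficients. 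The potential obstacle is that on $\mathrm{RM}$ one does not have the clean tensor structure of $\sbits^R$, so the inductive step that splits off a single coordinate is unavailable; instead I would work with the natural graded decomposition of $L_2(\mathrm{RM})$ by Reed--Muller degree and prove, as an intermediate SoS lemma, that the noise operator on $\mathrm{RM}$ contracts each level by a factor geometric in its degree, which combined with a Cauchy--Schwarz SoS step (\pref{lem:pseudo-expectation-cauchy-schwarz}) yields the required hypercontractive bound.

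Granting the short-code SoS hypercontractivity, the proof is completed exactly as in \pref{thm:certify-sse}: plug $\E(P_{>\lambda} f)^4 \sle (1/\lambda)^{O(1/\eta)}(\E f)^2$ (using $\snorm{f}^4 \sle (\E f)^2$, which follows from $f^2\sle f$) into \pref{lem:expansion-bound}, optimize $\lambda=\delta^{\Omega(\eta)}$, and obtain $\pE_f \iprod{f,G_{\cU'} f}\le \delta^{1+\Omega(\eta)}$. Combining the short-code analog of \pref{thm:ug-value-bound} with this bound applied to the f.r.v.\ $h$ arising from influence decoding (satisfying $h^2\sle h$ and $\pE_h (\E h)^2 \le O(\ell/R)^2$) yields $\pE_f\iprod{f,\cW' f}\le 1/k^{\Omega(\e)} + k^{O(\log k)}\cdot R^{-\Omega(\eta)}$, as claimed.
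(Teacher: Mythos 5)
Your high-level decomposition into influence decoding plus a small-set-expansion certification is correct, but you have misallocated which steps actually require new work, and the route you propose for the hard step is substantially more difficult than the paper's.

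First, the allocation: you claim step (i) (\pref{thm:ug-value-bound}) carries over ``verbatim.'' It carries over \emph{assuming} a Majority-is-Stablest / dictatorship-test result for the gadget (\pref{thm:ug-dictatorship-test}), which in turn rests on the invariance principle (\pref{thm:invariance-fourth-moment}). For the short-code gadget, the gadget's input distribution is not the uniform distribution on $\Omega^R$ but a distribution supported on the Reed--Muller code, so the invariance principle does \emph{not} apply unchanged. The two ingredients that genuinely need re-justification are exactly the invariance principle and the hypercontractivity bound (\pref{lem:hypercontractivity1}); your proposal only attempts the latter.

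Second, the route: for the hypercontractivity step you propose to develop a new SoS hypercontractive inequality directly on $L_2(\mathrm{RM})$, noting correctly that the coordinate-wise inductive proof is unavailable and suggesting a graded decomposition of $L_2(\mathrm{RM})$ with per-level noise contraction. This is a genuinely hard (and to my knowledge open) project; there is no reason to take it on. The paper's observation is much simpler and sidesteps it entirely: both the invariance principle and the hypercontractivity bound are statements about the \emph{fourth moment} of a degree-$\leq \ell$ Fourier polynomial $f$, and $\E f^4$, as a degree-$4$ polynomial in the Fourier coefficients $\hat f_\alpha$, depends only on moments of the input distribution of order at most $4\ell$. The Reed--Muller-code distribution used in the short-code construction of \cite{BarakGHMRS11} is $k$-wise independent for $k > 4\ell$, so $\E_{\mathrm{RM}} f^4 = \E_{\mathrm{uniform}} f^4$ as formal polynomials in the $\hat f_\alpha$. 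Hence the SoS certificates of \pref{lem:hypercontractivity1} and \pref{thm:invariance-fourth-moment} for the uniform distribution transfer \emph{identically} to the short-code distribution --- no new SoS proof is needed, no Reed--Muller structure has to be exploited. Once this is observed, the rest of the proof of \pref{thm:ug-main} goes through unchanged, including \pref{lem:expansion-bound} (which indeed holds for general graphs as you say), \pref{thm:certify-sse}, and \pref{thm:ug-value-bound}.

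So: your proof outline has the correct skeleton but a genuine gap at the crux. You would either need to supply the missing $k$-wise-independence argument (at which point the hypercontractivity and invariance issues both evaporate), or actually carry out the much heavier program of an intrinsic SoS hypercontractivity proof over the Reed--Muller code, which you have only sketched at a level where the main difficulty (the absence of a tensor factorization) is named but not resolved.
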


The proof of \pref{thm:ug-main-short} is almost literally the same as the
proof of \pref{thm:ug-main}.
In the following, we sketch the main arguments why the proof doesn't have
to change.
First, several of the results of the previous sections apply to general
graphs and instances of \uniquegames.
In particular, \pref{lem:expansion-bound} applies to general graphs and
\pref{thm:ug-value-bound} applies to general gadget-composed instances of
unique games assuming a ``Majority is Stablest'' result for the gadget.
In fact, the only parts that require further justification are the
invariance principle (\pref{thm:invariance-fourth-moment}) and hypercontractivity bound
(\pref{lem:hypercontractivity1}).
Both the invariance principle and the hypercontractivity bound are about
the fourth moment of a low-degree Fourier polynomial (whose coefficients
are fictitious random variables).
For the construction of \cite{BarakGHMRS11}, we need to argue about the
fourth moment with respect to a different distribution over inputs.
(Instead of the uniform distribution, \cite{BarakGHMRS11} considers a
distribution over inputs related to the Reed--Muller code.)
However, this distribution happens to be $k$-wise independent for $k/4$
larger than the degree of our Fourier polynomial.
Hence, as a degree-$4$ polynomial in Fourier coefficients, the fourth
moment with respect to the \cite{BarakGHMRS11}-input distribution is the
same as with respect to the uniform distribution, which considered here.

%\section{Analyzing \tensorsdp on particular instances} \label{sec:tensorsdp-instances}

\section{Hypercontractivity of random operators} \label{sec:random}

We already saw that the \tensorsdp algorithm provides non-trivial guarantees on the $2\to 4$ norms of the projector to low-degree polynomials. In this section we show that it also works for a natural but very different class of instances, namely random linear operators.
Let 
\be A=\sum_{i=1}^m \sum_{j=1}^n \frac{a_{i,j}}{\sqrt{n}} e_i e_j^T,
\label{eq:RM-def}\ee
 where $e_i$ is the vector with a 1
in the $i^{\text{th}}$ position, and each $a_{i,j}$ is chosen i.i.d. from
a distribution $\cD$ on $\R$.  
We will show that \tensorsdp returns an
answer close to the correct value under fairly general assumptions on
$\cD$.   Specifically, we will assume that 
\begsub{D-ass}
 \E[a_{i,j}]&=0 \label{eq:Ea}\\
\E[a_{i,j}^2]&=1 \label{eq:Ea2}\\
\E[a_{i,j}^4] & =:\mu_4 \label{eq:Ea4}\\
\E[\exp((a_{i,j}/\psi)^2)]&\leq 2, \label{eq:subgaussian}
\endsub
for some constant $\psi>0$.  Two examples of distributions that meet
these criteria are the uniform distribution over $\{-1,1\}$ and the
standard (mean-zero, unit variance) Gaussian distribution.

The following theorem is the main result of this section, and shows
that the approximation ratio of \tensorsdp approaches 1 as $m,n\ra
\infty$ and $n^2/m \ra 0$.
\begin{theorem}\label{thm:random}
There exist constants $0< c_1 < c_2$ such that
\begin{align} \max(\mu_4,3 + c_1 \frac{n^2}{m})(1- o(1))
&\leq \|A\|_{2\ra 4}^4
\leq \tensorsdp(A)\\
&\leq \max(\mu_4,3) + c_2 \max\Paren{\frac{n}{\sqrt{m}},\frac{n^2}{m}}
\end{align}
with high probability (i.e. probability $1-o(1)$) over random matrices $A$ distributed according
to \eq{RM-def} with $\cD$ satisfying \eq{D-ass}.   Here $o(1)$ refers
to quantities that approach zero when both $m$ and $n$ approach $\infty$.
\end{theorem}

To get some intuition for these terms, note that $\E[Z^4]=3$ if $Z$ is
a standard Gaussian random variable.  Similarly for any fixed vector
$x\in \R^n$, if $A$ has standard Gaussian entries, then $Ax$ will have
Gaussian entries with mean zero and variance $\norm{x}^2$.  Even if
$A$ has general entries (with variance 1) this will be approximately
true because of the central limit theorem.  This accounts for the 3
term that dominates when $m\gg n^2$. On the other hand, the
lower bound of $n^2/m$ holds because we can choose $x$ {\em after} $A$
is chosen.  We will see that it emerges from choosing the signs of $x$
to match those of any row of $A$.  These lower bounds hold under
rather general assumptions and variants of them apply even without any
randomness, as we will briefly explore in \lemref{24-LB}.

The upper bound on the value of \tensorsdp is rather more
complicated.   It will be seen to follow from a
concentration-of-measure bound for matrices.  However, the argument
does not simply involve bounding the top eigenvalue of the appropriate
random matrix.  We will need to further make use of the symmetry
properties of \tensorsdp; indeed it appears crucial that \tensorsdp
uses semidefinite programming rather than simply an eigenvalue
calculation.  See Remark~\ref{remark:complex} for more on this point
and for a comparison with the case of  complex matrices.

\subsection{Orlicz norms and background results}
Before proving \thmref{random}, we introduce some results from the
literature.  First, we discuss the implications of \eq{subgaussian}.
\begin{lemma}[Lemma 5.5 and Remark 5.6 of \cite{Vershynin12}]\label{lem:subgaussian}
Let $Z$ be a real-valued random variable with $\E[Z]=0$.  The
following are equivalent with parameters $C_i>0$ differing by at most
constant factors:
\benum
\item Sub-gaussian moment: $\E[\exp((Z/C_1)^2)] \leq 2$.
\item Moments: $\E[|Z|^k] \leq (C_2 \sqrt{k})^k$ for all nonnegative
  integers $k$.
\item Tails: $\Pr[|Z|\geq t] \leq \exp(1 - t^2/C_3^3)$ for all $t\geq
  0$.
\item Sub-exponential moments: $\E[\exp(tZ)] \leq \exp(t^2 C_4^2)$ for
  all $t\in\R$.
%\item Alternate sub-gaussian moment: $\E[\exp((Z/C_5)^2)] \leq 1 +\eta$.
\eenum
%In the last equivalence $\eta$ is any positive constant and $C_5$ will depend on $\eta$.
\end{lemma}
The largest $\psi$ for which \eq{subgaussian} holds is called the
$\psi_2$ norm of a distribution, where the $2$ refers to the fact that
we have $Z^2$ in the exponent.  In general, the $\psi_p$ norm of a
distribution $\cD$ refers
to the smallest $\psi>0$ such that $\E_{Z\sim
  \cD}[\exp(|Z/\psi|^p)]\leq 2$. 

In what follows, we will also need to define the $\psi_1$ norm of a
vector.  If $\cD$ is now a distribution on $\R^N$, define the $\psi_p$
norm $\norm{\cD}_{\psi_p}$ to be the smallest $\psi>0$ such that
 \be
\max_{x\in S(\R^N)} \E_{v\sim \cD} \exp\left(\psi^{-p}|\iprod{x,v}|^p
  N^{p/2}\right) \leq 2 ,
\label{eq:psi-p-norm}\ee
or $\infty$ if no finite such $\psi$ exists.  
We depart from the normal convention by including a factor of
$N^{p/2}$ in the definition, so that our expectation-norm convention
will be consistent with the results in \cite{AdamczakLPT11},
and specifically \lemref{ALPT} below.  This definition is also
consistent with the term ``subgaussian moment'' since a vector of
i.i.d.~Gaussians with mean zero and variance one will have $\psi_2$
equal to a constant (in fact $\sqrt{8/3}$).
We will abuse notation 
and write $\|X\|_{\psi_p}$ to refer to the $\psi_p$ norm of the
distribution $\cD$ associated with a random variable $X$.

{\em Relations between norms.} One can verify that if a vector has
i.i.d. entries with $O(1)$ $\psi_2$ norm, then the distribution over
vectors also has $\|\cD\|_{\psi_2}\leq O(1)$. However, the relation
between $\psi_p$ norms for different values of $p$ is less clear in
the vector case.   If $Z$ is a real-valued random variable, then
$\|X\|^p_{\psi_p}=\|X^p\|_{\psi_1}$, but there is no simple analogue
of this for vectors.  We will see an example below where $\|v\ot
v\|_{\psi_1}$ can be larger than $\|v\|_{\psi_2}^2$ by a
dimension-dependent factor; also we will see why bounding the
$\psi_1$ norm is important.%, and how it can achieved for a suitable distribution.

We will require a bound from \cite{AdamczakLPT09,AdamczakLPT11} about the
convergence of sums of i.i.d rank-one matrices.
\begin{lemma}[\cite{AdamczakLPT11}]\label{lem:ALPT}
Let $b_1,\ldots,b_m$ be independent random vectors in $\R^N$ with
$\|b_i\|_{\psi_1} \leq \varphi$ and satisfying
\be \Pr_{b_1,\ldots,b_m}\left[\max_{i\in [m]}\normt{b_i} > 
K \max(1,(m/N)^{1/4})\right] \leq e^{-\sqrt{N}}. 
\label{eq:boundedness}\ee 
Then with probability $\geq 1 - 2\exp(-c\sqrt{N})$, we
have
\be \left\| \frac{1}{m} \sum_{i=1}^m (b_ib_i^T - \E[b_ib_i^T])\right\| \leq \eps
\label{eq:concentration}\ee
where $\eps= C(\varphi+K)^2\max(N/m,\sqrt{N/m})$ with $c,C>0$ universal constants.
\end{lemma}
The $N\leq m$ case (when the $\sqrt{N/m}$ term is applicable) was
proven in Theorem 1 of \cite{AdamczakLPT11}, and the $N>m$ case
(i.e.~when the max is achieved by $N/m$) is discussed in Remark 1.2 of
\cite{AdamczakLPT11} (see also Theorem 3.13 of \cite{AdamczakLPT09}).

We will also make use of the Hanson-Wright inequality, in the form
proved by Rudelson and Vershynin in \cite{RudelsonV13}.
\begin{lemma}[Theorem 1.1 of \cite{RudelsonV13}]\label{lem:H-W}
Let $a\in\R^N$ be a random vector with independent components
satisfying $\E[a_i]=0$ and $\|a_i\|_{\psi_2} \leq \psi$.  Let $Y$ be an
$n\times n$ matrix.  For $t\geq 0$,
\be \Pr\left[ |a^TYa-\E[a^TYa]| > t\right]
\leq 2 \exp\left[ -c\min\left(\frac{t^2}{\psi^4\|Y\|_F^2},
    \frac{t}{\psi^2\|Y\|_{2\ra 2}}\right)\right].\label{eq:H-W}\ee
Here $\|Y\|_F$ is the Frobenius norm, defined as $\|Y\|_F :=\sqrt{\tr Y^TY}$.
\end{lemma}

\subsection{Proof of \thmref{random}}
\begin{proof}
First we prove that, for some constant $c_1>0$,
\begsub{A24-lb}
(3 + c_1 \frac{n^2}{m})(1- o(1))&\leq \|A\|_{2\ra 4}^4
\label{eq:A24-lb-1}\\
\mu_4(1- o(1))&\leq \|A\|_{2\ra 4}^4.
\label{eq:A24-lb-2}
\endsub
with high probability (i.e. probability $1-o(1)$).
 Define a vector $x\in\R^n$ by $x_j =
\sign(a_{1,j})$ (or arbitrarily if $a_{1,j}=0$).  Then $\|x\|_2=1$ and 
\ban %\begsub{A24lb}
 \|Ax\|_4^4 &= \frac{1}{n^2}\E_{i\in [m]} \left(\sum_{j\in [n]} a_{i,j} x_j\right)^4
= \frac{1}{mn^2} \left(\sum_{j\in [n]} |a_{1,j}|\right)^4 
+ \frac{m-1}{mn^2} \E_{i\neq 1}\left(\sum_{j\in [n]} a_{i,j}x_j\right)^4 \\
& =
\frac{1}{mn^2} 
\!\!\!\sum_{\substack{j_1,j_2,j_3,j_4\\\in [n]}}\!\!\!
|a_{1,j_1} a_{1,j_2} a_{1,j_3} a_{1,j_4}|
+ \frac{m-1}{mn^2} \E_{i\neq 1}
\!\!\!\sum_{\substack{j_1,j_2,j_3,j_4\\\in [n]}}\!\!\!
a_{i,j_1} a_{i,j_2} a_{i,j_3} a_{i,j_4} 
x_{j_1}x_{j_2}x_{j_3}x_{j_4}
\ean % \endsub
Next, we will average over the choice of $A$.  The first term is
proportional to 
$\E[|a_{1,j_1} a_{1,j_2} a_{1,j_3} a_{1,j_4}|] = \E[|a_{1,1}|]^4$, if
we drop the $O(1/n)$ fraction of terms where the $j_1,j_2,j_3,j_4$ are
not all distinct. 

For the second
term, recall that $\E[a_{i,j}]=0$.  Since  $x_j$ and $a_{i,j}$ are
independent for $i\neq 1$ we also have $\E[a_{i,j}x_j]=0$.
  Thus the only terms that survive have $j_1,j_2,j_3,j_4$ paired off in
  one of three ways: either $j_1=j_2,j_3=j_4$ or $j_1=j_3, j_2=j_4$ or
  $j_1=j_4,j_2=j_3$.  Each of these contributes an identical
  $\E[a_{1,1}^2]^2$.  We can neglect the overcounting from the
  $j_1=j_2=j_3=j_4$ terms because it accounts for an $O(1/n)$ fraction of
  the terms and by \lemref{subgaussian}, $\E[a_{1,1}^4]=\mu_4 \leq O(1)$.  We
  conclude that 
\be
\E[ \|Ax\|_4^4] =
\frac{n^2}{m}(1\pm o(1)) \E[|a_{1,1}|]^4
+ 3(1\pm o(1))\E[a_{1,1}^2]^2
\label{eq:EAx44}\ee
By assumption $\E[a_{1,1}^2]=1$.  From \lemref{subgaussian} we have
that $\E[a_{1,1}^4] \leq O(1)$.  Recall the Berger-H\"older
inequality $\E[|X|] \geq \frac{\E[X^2]^{3/2}}{\E[X^4]^{1/2}}$ from
\cite{Berger}.  Combining these facts we have that $\E[|a_{1,1}|]
\geq \mu_4^{-1/2} =: c_1^{1/4}$.  This proves that $\E[\|Ax\|_4^4]
\geq (3+c_1n^2/m)(1-o(1))$.  

To show that this inequality holds with
high probability we will argue that $\Var(\|Ax\|_4^4) \leq
o(\E[\|Ax\|_4^4]^2)$.   We calculate
\be \E[\|Ax\|_4^8] = 
\frac{1}{n^4} \E_{i,i'\in [m]}
\!\!\!\!\!\!\sum_{\substack{j_1,j_2,j_3,j_4\in [n]\\
j_1',j_2',j_3',j_4'\in [n]}}\!\!\!\!\!\!
a_{i,j_1}a_{i,j_2}a_{i,j_3}a_{i,j_4}
a_{i',j_1'}a_{i',j_2'}a_{i',j_3'}a_{i',j_4'} 
x_{j_1}x_{j_2}x_{j_3}x_{j_4}x_{j_1'}x_{j_2'}x_{j_3'}x_{j_4'}
\ee
The dominant terms here all correspond (up to $1-o(1)$ factors) to
terms in the expansion of $\E[\|Ax\|_4^4]^2$.  These terms correspond
to $i=1$, $j_1,j_2,j_3,j_4$ all distinct or $i\neq 1$,
$j_1,j_2,j_3,j_4$ comprising two elements each repeated twice; and
similarly one of those two possibilities for
$i',j_1',j_2',j_3',j_4'$.  In the case when $i,i' \neq 1$, the
dominant contribution comes from $i\neq i'$ for which the only
pairings we count involve matchings within the $\{j_1,j_2,j_3,j_4\}$
and $\{j_1',j_2',j_3',j_4'\}$, but not between these two sets.
The terms we neglect in this way are smaller by a 
$O(n^{-1}+m^{-1})$ factor.  We conclude that \eq{A24-lb-1} holds with high probability.

In case $\mu_4$ is large, this bound may not be optimal.  In that
case, we choose $x=\sqrt n e_j$ for some $j\in [n]$.  Then $Ax=\sum_i
a_{i,j}e_i$ and $\|Ax\|_4^4 = \frac{1}{m} \sum_{i\in [m]} a_{i,j}^4$.
Thus $\E[\|Ax\|_4^4] = \mu_4$.  For any fixed $j$, we have
$\E[\|Ax\|_4^8] = \mu_4^2 + m^{-1}(\E[a_{1,1}^8]-\mu_4^2)$, implying
that $\Var[\|Ax\|_4^4] = O(1/m)$.  This implies that
 \eq{A24-lb-2} holds with high probability, and concludes the proof of
 the lower bound on $\|A\|_{2\ra 4}^4$.

The more interesting half of the proof is to show that
\be  \tensorsdp(A)
\leq \max(\mu_4,3) + c_2 \max\Paren{\frac{n}{\sqrt{m}},\frac{n^2}{m}}
\label{eq:tsdp-ub}\ee
Let $a_i := \sum_{j\in [n]} a_{i,j} e_j$ so that $A = \sum_{i=1}^m e_ia_i^T /
\sqrt{n}$.  Define $A_{2,2} = \frac{1}{m}\sum_{i=1}^m a_i a_i^T \ot a_i a_i^T$.
For $n^2\times n^2$ real matrices $X,Y$, define $\iprod{X,Y} := \tr
X^TY/n^2 = \E_{i,j\in [n]} X_{i,j}Y_{i,j}$.  Additionally define the convex set
$\cX$ to be the set of $n^2\times n^2$ real matrices
$X=(X_{(i_1,i_2),(i_3,i_4)})_{i_1,i_2,i_3,i_4\in [n]}$ with $X\succeq
0$, $\E_{i,j\in [n]}
X_{(i,j),(i,j)}= 1$ and $X_{(i_1,i_2),(i_3,i_4)} =
X_{(i_{\pi(1)},i_{\pi(2)}), (i_{\pi(3)},i_{\pi(4)})}$ for any permutation
$\pi\in\cS_4$.   Finally, let $h_{\cX}(Y) := \max_{X\in \cX}\iprod{X,Y}$.
It is straightforward to show (c.f. \lemref{equiv}) that
\be \tensorsdp(A) = h_{\cX}(A_{2,2})
= \max_{X\in\cX} \iprod{X, A_{2,2}}.\ee
We note that if $\cX$ were defined without the symmetry constraint, it
would simply
be the convex hull of $xx^T$ for unit vectors $x\in\bbR^{n^2}$ and
$\tensorsdp(A)$ would simply be the largest eigenvalue of $A_{2,2}$.
However, we will later see that the symmetry constraint is crucial to
$\tensorsdp(A)$ being $O(1)$.

Our strategy will be to analyze $A_{2,2}$ by applying \lemref{ALPT} to
show that $A_{2,2}$ is close to $\Sigma:=\E[A_{2,2}] = \E[a_ia_i^T \ot
a_i a_i^T]$. 
First we calculate $\Sigma$.  Following the paragraph above \eq{EAx44}
we find that 
\ba \Sigma_{j_1,j_2; j_3,j_4} &=
 \delta_{j_1,j_2}\delta_{j_3,j_4} + 
 \delta_{j_1,j_3}\delta_{j_2,j_4} + 
 \delta_{j_1,j_4}\delta_{j_2,j_3} 
+ (\mu_4-3) \delta_{j_1,j_2}\delta_{j_2,j_3}\delta_{j_3,j_4}
\label{eq:Sigma-elements}
%\\ \mu_4&  := \E[a_{1,1}^4]-3
\ea
We can write this more concisely in terms of operators.
Define
\ba
\Phi & := \sum_{i\in[n]} e_i \ot e_i  
& F := \sum_{i,j \in [n]} e_i e_j^T \ot e_j  e_i^T\\
\Delta &:=\sum_{i\in[n]} e_ie_i^T \ot e_ie_i^T 
\ea 
Then we can rewrite \eq{Sigma-elements} as
\be \Sigma = I + F + \Phi \Phi^T + (\mu_4-3) \Delta.
\label{eq:Sigma-operator}\ee
This will help us compute the spectrum of $\Sigma$.  First observe
that $F(x\ot y) = y\ot x$ for any $x,y\in \R^n$ and thus has all 
eigenvalues $\pm 1$.  The spectrum of $\Delta$ is similarly bounded, since
it is a projector onto the $n$-dimensional subspace spanned by $\{e_i
\ot e_i : i\in [n]\}$. On the other hand, $\Phi\Phi^T$ has  a single
eigenvalue equal to $n$, and the rest equal to zero.
Putting this together, $\Sigma$ has one eigenvalue equal to
$n+\mu_4-1$, $n-1$ eigenvalues equal to $\mu_4-1$,
$\frac{n(n-1)}{2}$ eigenvalues equal to $2$ and 
$\frac{n(n-1)}{2}$ eigenvalues equal to 0.   

We would like to show that $A_{2,2}$ converges to $\Sigma$ using
\lemref{ALPT}.  To this end, define $\Sigma_0 := I + F + \Phi\Phi^T$
and define $b_i := \Sigma_0^{-1/2}(a_i \ot a_i)$.  Here
$\Sigma_0^{-1/2}$ refers to the square root of the pseudo-inverse of
$\Sigma_0$.  We can compute this by considering $\Sigma_0$ to be an
operator on $\vee^2\R^n$, which is defined to be the symmetric
subspace of $\R^n \ot \R^n$, i.e.~the set of $+1$-eigenvectors of
$F$.  Observe that $\|\Sigma_0^{-1/2}\|\leq 1/2$.

We will now show that $b_i$ meets the conditions of \lemref{ALPT}.
First we establish the
boundedness condition of \eq{boundedness} using standard arguments.  Observe that 
$N=\binom{n+1}{2} \leq n^2$ and that $\|b_i\|_2 \leq \|a_i \ot a_i\|_2 = \|a_i\|_2^2
 = n^{-1} \sum_j a_{i,j}^2$.   \Anote{maybe condense this by replacing
   with a suitable reference to part 3 of \lemref{subgaussian}.}
Let $\lambda,K>0$ be parameters we
 will choose later, and abbreviate $\tilde K = K\max(1,(m/n^2)^{1/4})$.  Then
\begsub{bd-calc}
 \Pr[\max_{i\in[m]} \|b_i\|_2 \geq \tilde K]
&=\Pr[\max_{i\in[m]} \exp(\|b_i\|_2) \geq \exp(\tilde K)]\\
&\leq\Pr\left[\sum_{i\in[m]} \exp(\|b_i\|_2) \geq \exp(\tilde K)\right]\\
&\leq\exp(-\tilde K\lambda)\E\left[\sum_{i\in[m]} \exp(\lambda\|b_i\|_2)\right]\\
&= \exp(-\tilde K\lambda)\sum_{i\in[m]} \E\left[\exp(\lambda n^{-1} 
\sum_{j=1}^n a_{i,j}^2)\right]\\
&= \exp(-\tilde K\lambda)m \left(\E\left[\exp(\lambda n^{-1} a_{1,1}^2)\right]\right)^n.
\endsub
Now set $\lambda = n/\psi^2$ so that this probability is $\leq
m (2e^{-\tilde K/\psi^2})^n$.  If $m\leq
n^2$ then $\tilde K = K$ and we obtain the desired bound by taking $K
= O(\psi^2)$.  If $m\geq n^2$, then $\tilde K = K
m^{1/4}/n^{1/2}$ and the probability we wish to bound is $m
2^ne^{-m^{1/4}n^{1/2}K/\psi^2}$, which again is $\leq e^{-n}$
for $K$ a sufficiently large constant and $n$ sufficiently large.

Next we would like to bound the $\psi_1$ norm of $b_i$, corresponding
to the fact that this distribution is not too ``pointy.''  To this end we will compute
\ba
\max_{x \in S(\vee^2\R^n)}\E_b \exp(|\iprod{x,b}|n/\varphi) & = 
\max_{x \in S(\vee^2\R^n)}\E_a \exp(|\iprod{x,\Sigma_0^{-1/2}(a \ot a)}|n/\varphi) \\
& = \max_{\substack{y \in \vee^2\R^n \\ \|\Sigma_0^{1/2}y\|_2\leq 1}}
\E_a \exp(|\iprod{y,(a \ot a)}|n/\varphi),
\ea
where $\varphi>0$ will be chosen later, we have assumed WLOG in the first equation that $x\in \vee^2\R^n$
and we have defined $y=\Sigma^{-1/2}x$ in the
last equation.
Let $Y$ be the $n\times n$ matrix  with $Y_{i,j} = y_{i,j}/n$.  Thus
$\iprod{y, a \ot a} = a^T Y a / n$.  To interpret the condition
$\|\Sigma_0^{1/2}y\|_2\leq 1$, observe that
\be \|\Sigma_0^{1/2}y\|_2^2 = \iprod{y, \Sigma_0 y} = \tr[\Sigma_0(Y
\ot Y)]
 = \tr(Y)^2  + 2\tr(Y^2) %+ \mu_4\sum_{i\in [n]}Y_{i,i}^2
\label{eq:Y-norm}.\ee
Then the $\psi_1$ norm of $b$
is the smallest positive $\varphi$ for which
\be
\max_{\substack{Y=Y^T \\
\tr(Y)^2 + 2\tr(Y^2) %+ \mu_4 \sum_{i\in [n]}Y_{i,i}^2 
\leq 1}}
\E_a \exp[a^T Y a / \varphi]\label{eq:Y-max}\ee
is $\leq 2$.  
Choose a $Y$ achieving the maximum in \eq{Y-max}.  Observe that
\be \|Y\|_{2\ra 2} \leq \|Y\|_F = \sqrt{\tr(Y^2)} \leq 1.
\label{eq:Y-norms-bounded}\ee

From \eq{Ea} and \eq{Ea2} we have $\E[a^T Ya] = \tr Y \leq 1$.  Applying
\lemref{H-W} and using \eq{Y-norms-bounded} we have that
\be \Pr[a^TYa \geq \tr Y + s \psi^2]\leq 
%\Pr[a^TYa \geq \tr Y + s \psi^2 \|Y\|_F]\leq 
2 e^{-c\min(s,s^2)} \label{eq:quad-tail-bound}\ee
Note that for a random variable $X$, $\E[e^X] = \int_0^s ds\,
\Pr[e^X\geq s] = \int_{-\infty}^\infty dt\, e^t \Pr[X\geq t]$.
Combining this with \eq{quad-tail-bound} we can upper bound \eq{Y-max} as
\be \E_a\exp\left (\frac{a^T Y a}{\varphi}\right)
\leq \exp\Paren{\frac{\tr Y}{\varphi}}
\int_{-\infty}^\infty ds \exp\Paren{\frac{s(1+\psi^2)}{\varphi} - c\max(0,\min(s,s^2))}.
\label{eq:exp-moment-b}\ee
Using $\tr Y \leq 1$, it follows that we can take $\varphi =
O(\psi^2)$ to bound \eq{exp-moment-b} $\leq 2$.  We conclude that
the $\psi_1$ norm of the $b_i$ vectors are $\leq O(\psi^2)$.

We can now apply \lemref{ALPT}  to find that with
high probability
\be \Sigma_0^{-1/2} (A_{2,2} - \Sigma)\Sigma_0^{-1/2}
\preceq \eps I,\label{eq:op-convergence}\ee
where $\eps= C\psi^4\max(n^2/m,\sqrt{n^2/m})$ for some $C>0$.
Rearranging we find that
\be A_{2,2} \leq \Sigma + \eps \Sigma_0
 = (1+\eps)(I + F + \Phi \Phi^T) + (\mu_4-3) \Delta.\ee

To translate this operator inequality into a statement about $h_{\cX}$
we observe that 
\bit \item $h_\cX(M_1) \leq h_\cX(M_2)$ whenever $M_1\preceq M_2$; and
\item $h_{\cX}(M_1 + M_2) \leq h_{\cX}(M_1) + h_{\cX}(M_2)$.
\eit
Using these bounds in turn we obtain
\be  h_{\cX}(A_{2,2}) \leq h_{\cX}(\Sigma + \eps \Sigma_0)
\leq (1+\eps)(h_\cX(I) + h_\cX(F) + h_\cX(\Phi\Phi^T))
 + (\mu_4-3)^+h_\cX(\Delta).
\ee
Here we define $(\mu_4-3)^+ := \max(0, \mu_4-3)$.
Observe that $I$, $F$ and $\Delta$ each have largest eigenvalue equal to 1, and
so $h_\cX(I), h_\cX(F), h_\cX(\Delta) \leq 1$.  (In fact, these are each
equalities.)  

However, the single nonzero eigenvalue of $\Phi\Phi^T$
is equal to $n$.  Here we will need to use the symmetry constraint on
$\cX$.  Let $X^\Gamma$ be the matrix with entries
$X^\Gamma_{(i_1,i_2),(i_3,i_4)} := X_{(i_1,i_4), (i_3,i_2)}$.  If
$X\in \cX$ then $X=X^\Gamma$.   Additionally, $\iprod{X,Y} =
\iprod{X^\Gamma,Y^\Gamma}$.  Thus 
$$h_\cX(\Phi\Phi^T) = h_\cX((\Phi\Phi^T)^\Gamma) \leq
\|(\Phi\Phi^T)^\Gamma\|_{2\ra 2} = 1.$$
This last equality follows from the fact that $(\Phi\Phi^T)^\Gamma =
F$.

Putting together these ingredients, we establish \eq{tsdp-ub}, which
concludes the proof of the
theorem.  
\end{proof}

\subsection{Discussion}
It may seem surprising that the factor of $3^{1/4}$ emerges even for
matrices with, say, $\pm 1$ entries.  An intuitive justification for
this is that even if the columns of $A$ are not Gaussian vectors, most
linear combinations of them resemble Gaussians.  The following Lemma
shows that this behavior begins as soon as $n$ is $\omega(1)$.

%\corref{random-alt} states that the $2 \to 4$ norm of a
%normalized random operator is upper bounded by a constant with high
%probability when $n = O(\sqrt{m})$,  We complement
%\corref{random-alt} with the following lower bound result that applies even to matrices that are not randomly chosen.
\begin{lemma}\label{lem:24-LB}
Let $A = \sum_{i=1}^m e_i a_i^T / \sqrt{n}$ with $\E_i \|a_i\|_2^4\geq 1$.
Then $\tfnorm{A} \geq (3/(1+2/n))^{1/4}.$
\end{lemma}
To see that the denominator cannot be improved in general, observe
that when $n=1$ a random sign matrix will have $2\ra 4$ norm equal to
1.

\begin{proof}
Choose $x\in\R^n$ to be a random Gaussian vector such that $\E_x
\|x\|_2^2=1$.  Then
\be \E_x \|Ax\|_4^4 = \E_i \E_x n^{-2} (a_i^Tx)^4
= n^2 \E_i \E_x \iprod{a_i, x}^4 = 3 \E_i \|a_i\|_2^4 \geq 3.
\label{eq:EAx4-large}\ee
The last equality comes from the fact that $\iprod{a_i,x}$ is a
Gaussian random variable with mean zero and variance $\|a_i\|_2^2/n$.
On the other hand, $\E_x \|x\|_2^4 = 1 + 2/n$.  Thus, there must exist
an $x$ for which $\|Ax\|_4^4 / \|x\|_2^4 \geq 3 / (1+2/n)$.
\end{proof}

\begin{remark}\label{remark:complex}
It is instructive to consider a variant of the above argument.  A
simpler upper bound on the value of $\tensorsdp(A)$ is given simply by
$\|A_{2,2}\|$.  However, the presence of the $\Phi\Phi^T$ term means
that this bound will be off by an $n$-dependent factor.  Thus we
observe that the symmetry constraints of $\tensorsdp^{(4)}$ provide a
crucial advantage over the simpler bound using eigenvalues.  In the
language of quantum information (see \secref{BCY-app}), this means
that the PPT constraint is necessary for the approximation to
succeed.  See \secref{gap-DPS} for an example of this that applies to
higher levels of the hierarchy as well.

There is a similar reason why we cannot directly apply \lemref{ALPT}
to the vectors $a_i \ot a_i$.  In computing the $\psi_1$ norm for $a_i
\ot a_i$, take $x = \sqrt{n}\Phi$ in \eq{psi-p-norm}.  Then we find
that $\iprod{\sqrt{n}\Phi,a_i \ot a_i}n/\varphi =
\|a_i\|_2^2\sqrt{n}/\varphi$ and thus $\psi_1 \sim \sqrt{n}$.

On the other hand, when the $a_i$ are chosen to be random {\em
  complex} Gaussian vectors, we simply have $\E a_i a_i^* \ot a_i
a_i^* = I + F$.  In this case, the upper bound $\tensorsdp(A)\leq
\|A_{2,2}\|$ is already sufficient.  Thus, only real random
vectors demonstrate a separation between these two bounds.
\end{remark}

\begin{remark}
Our results can be seen as proving that $h_{\PPT}(M)$ is close to
$h_{\Sep}(M)$ when $M$ is of the form $\frac{1}{m} \sum_{i=1}^m a_i
a_i^T \ot a_i a_i^T$ and $a_1,\ldots,a_m$ are random vectors.  In the
case when $M$ is instead a randomly chosen projector,
Montanaro~\cite{Montanaro11} proved a 
similar bound using much more sophisticated techniques.  His methods
do not apply to our problem since our choices of $M$ are very
different than random projectors.
\end{remark}

\section{The 2-to-q norm and small-set expansion} \label{sec:sse}

In this section we show that a graph is a \emph{small-set expander} if and only if the projector to the subspace of its adjacency matrix's top eigenvalues has a bounded $2\to q$ norm for even $q \geq 4$. While the ``if'' part was known before, the ``only if'' part is novel. This characterization of small-set expanders is of general interest, and also leads to a reduction from the \smallsetexpansion problem considered in~\cite{RaghavendraS10} to the problem of obtaining a good approximation for the $2\to q$ norms.

\paragraph{Notation} For a regular graph $G=(V,E)$ and a subset $S \subseteq V$, we define the \emph{measure} of $S$ to be $\mu(S)=|S|/|V|$
and we define $G(S)$ to be the distribution obtained by picking a random
$x\in S$ and then outputting a random neighbor $y$ of $x$. We define the
\emph{expansion} of $S$, to be  $\bd_G(S)=\Pr_{y \in G(S)}[ y\not\in S]$.
 For $\delta \in (0,1)$, we define
$\bd_G(\delta)=\min_{S\subseteq V: \mu(S)\leq \delta} \bd_G(S)$. We often
drop the subscript $G$ from $\bd_G$ when it is clear from context. We
identify $G$ with its normalized adjacency (i.e., random walk) matrix. For
every $\lambda \in [-1,1]$, we denote by $V_{\ge \lambda}(G)$ the subspace
spanned by the eigenvectors of $G$ with eigenvalue at least $\lambda$.
The projector into this subspace is denoted $P_{\ge \lambda}(G)$.
%
 % Recall that $\tfnorm{V_{\lambda}(G)}$ denotes the $2\to 4$ norm of the
 % projection operator to $V_{\lambda}(G)$.
%
For a distribution $D$, we let $\cp(D)$ denote the collision probability of
$D$ (the probability that two independent samples from $D$ are identical).

Our main theorem of this section is the following:
%\begin{theorem} \label{thm:sse-hyper} For every graph $G$ and sufficiently small $\lambda >0$,
%\begin{enumerate}
%\item \emph{(Norm bound implies expansion)} For all $\delta>0 ,\e>0$,   $\tfnorm{V_{\lambda}(G)} \leq \e/\delta^{1/4}$ implies that $\bd_G(\delta) \geq 1-\lambda - \e^2$.
%
%\item \emph{(Expansion implies norm bound)} There is some $\eta = \eta(\lambda)$ such that for all $\delta>0$, $\bd_G(\delta) > 1 - \eta$ implies $\tfnorm{V_{\lambda}(G)} \leq 2/\sqrt{\delta}$.
%\end{enumerate}
%\end{theorem}

\restatetheorem{thm:sse-hyper}

%\Bnote{rephrase to the $2\to q$ norm for some even integer $q \geq 4$}
%\begin{theorem} \torestate{\label{thm:sse-hyper} For every regular graph $G$,  $\lambda >0$ and even $q$,
%\begin{enumerate}
%\item \emph{(Norm bound implies expansion)} For all $\delta>0 ,\e>0$,
%  $\Vert V_{\ge\lambda}(G) \Vert_{2 \to q} \leq \e/\delta^{(q-2)/2q}$ implies that
 %$\bd_G(\delta) \geq 1-\lambda - \e^2$.

%\item \emph{(Expansion implies norm bound)} There is a constant $c$ such that for all $\delta>0$, $\bd_G(\delta) > 1 - \lambda 2^{- cq}$
 % implies $\Vert V_{\ge\lambda}(G) \Vert_{2 \rightarrow q} \leq 2/\sqrt{\delta}$.
%\end{enumerate}}
%\end{theorem}

One corollary of Theorem~\ref{thm:sse-hyper} is that a good approximation to the $2\to q$ norm implies an approximation of $\bd_{\delta}(G)$ %\footnote{Note that although we use the $2 \to 4$ norm for simplicity, a similar result holds for the $2 \to q$ norm for every constant even $q$.}.

\begin{corollary} \label{thm:ssetohyper} If there is a polynomial-time computable relaxation $\cR$ yielding good approximation for the $2\to q$, %norm in the sense of Definition~\ref{def:good-approx},
then the \emph{Small-Set Expansion Hypothesis} of \cite{RaghavendraS10} is false.
\end{corollary}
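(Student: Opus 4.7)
The plan is to use the hypothetical good approximation algorithm $\cR$ for the $2\to q$ norm as a polynomial-time distinguisher for the \smallsetexpansion problem, thereby contradicting SSEH. Suppose $\cR$ is a polynomial-time $(c_0, C_0)$-approximation with dimension-independent constants $c_0 < C_0$. Given an SSE instance graph $G$ with parameters $\delta$ and gap $\eta$ (both to be chosen), we feed the projector $A = P_{\geq \lambda}(G)$ into $\cR$ for a suitably chosen $\lambda$. Since $A$ is a projector, $\sigma_{\min}(A) = 1$, so $\cR$ distinguishes the regimes $\|A\|_{2\to q} \leq c_0$ and $\|A\|_{2\to q} \geq C_0$.

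To set up the reduction, first I would choose $\eta$ small enough that both $\eta < \lambda \cdot 2^{-cq}$ (where $c$ is the universal constant from part~(2) of \pref{thm:sse-hyper}) and $\eta < 1 - \lambda - \epsilon^2$ for a suitable $\epsilon > 0$. Then in the NO case of SSE ($\bd_G(\delta) \geq 1 - \eta$), part~(2) of \pref{thm:sse-hyper} gives $\|A\|_{2\to q} \leq 2/\sqrt{\delta}$, and in the YES case ($\bd_G(\delta) \leq \eta$) the contrapositive of part~(1) yields $\|A\|_{2\to q} > \epsilon/\delta^{(q-2)/2q}$. If the parameters $\delta, q, \epsilon$ can be tuned so that $2/\sqrt{\delta} \leq c_0 < C_0 \leq \epsilon/\delta^{(q-2)/2q}$, then running $\cR$ on $A$ correctly identifies YES versus NO instances of SSE in polynomial time, refuting SSEH.

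The hard part will be the quantitative parameter matching. The ratio between the YES-case lower bound and the NO-case upper bound provided by \pref{thm:sse-hyper} is roughly $(\epsilon/2)\delta^{1/q}$, and this must dominate $C_0/c_0$. Since the requirement $\lambda + \epsilon^2 \leq 1$ forces $\epsilon < 1$, no single-operator argument at fixed $\delta$ gives a ratio exceeding $1$ directly; instead one leverages the fact that SSEH is assumed to hold for arbitrarily small $\eta$ and the corresponding $\delta$, taking $q$ large and $\delta$ small so that the window $[2/\sqrt{\delta},\ \epsilon/\delta^{(q-2)/2q}]$ straddles both $c_0$ and $C_0$. Where necessary, the gap can be further amplified by tensoring $A$ with itself, since tensor products of projectors are projectors and preserve $\sigma_{\min} = 1$, while the $2\to q$ norm multiplies at least along product directions.
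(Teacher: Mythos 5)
Your proof hits a genuine quantitative wall that you partly notice but do not actually resolve, and neither of your proposed fixes works. The core issue is that you apply both parts of \pref{thm:sse-hyper} at the \emph{same} measure scale $\delta$. With that choice, the YES-case lower bound $\eps/\delta^{(q-2)/2q}$ is \emph{always smaller} than the NO-case upper bound $2/\sqrt{\delta}$, because $(q-2)/2q < 1/2$ for every $q$. You compute the ratio correctly as $(\eps/2)\delta^{1/q}$, which is $<1$ for every $\delta<1$, $\eps\le 1$ — and it only gets \emph{smaller} as $\delta\to 0$ or $q\to\infty$. So the two promise cases overlap for every choice of parameters, and $\cR$ cannot separate them, no matter how $c_0 < C_0$ are set. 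The tensoring idea does not rescue this: even granting that $\tfnorm{A^{\ot k}}$ is exactly $\tfnorm{A}^k$, tensoring raises the overlap ratio to the $k$-th power, i.e. $((\eps/2)\delta^{1/q})^k$, making the separation strictly worse.

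The missing ingredient is the scale separation built into the SSEH as formulated by \cite{RaghavendraST10}: in the NO case one is promised that $\bd_G(\delta')$ is close to $1$ for \emph{all} $\delta'\ge\delta$, not just for $\delta$ itself. The paper exploits this by applying part~2 of \pref{thm:sse-hyper} at the larger scale $\delta' = \delta^{0.4}$, giving a NO-case upper bound $2/\sqrt{\delta^{0.4}} = 2/\delta^{0.2}$ while part~1 at scale $\delta$ gives a YES-case lower bound $0.1/\delta^{1/4}$ (for $q=4$, $\lambda = 1/2$, $\eps = 1/10$). Now the exponents $0.25 > 0.2$, so the gap grows without bound as $\delta\to 0$, and a good approximation with any fixed constants $c_0 < C_0$ can tell the cases apart once $\delta$ is small enough. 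Without using the promise at the larger scale $\delta'$, this deduction cannot be made, and that is exactly what your argument is missing.
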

\begin{proof} Using~\cite{RaghavendraST10}, to refute the small-set
  expansion hypothesis it is enough to come up with an efficient algorithm that given an input graph $G$ and sufficiently small $\delta>0$, can distinguish between the \emph{Yes} case: $\bd_G(\delta) < 0.1$ and the \emph{No} case $\bd_G(\delta') > 1-2^{-c\log(1/\delta')}$ for any $\delta'\geq \delta$ and some constant $c$. In particular for all $\eta>0$, if $\delta$ is small enough then in the \emph{No} case $\bd_G(\delta^{0.4}) > 1-\eta$. 
  
  Using Theorem~\ref{thm:sse-hyper}, in the \emph{Yes} case we know $\Vert P_{\geq 1/2} \Vert_{2 \rightarrow q} \geq  1/(10\delta^{(q-2)/2q})$, while in the \emph{No} case, if we choose $\delta$ sufficiently small so that $\eta$ is smaller than $c_1(1/2)^q 2^{-c_2 q}$, then we know that $ \Vert P_{\geq 1/2} \Vert_{2 \rightarrow q}   \leq 2/\sqrt{\delta^{0.2}}$. Clearly, if we have a good approximation for the $2\to q$ norm then, for sufficiently small $\delta$, we can distinguish between these two cases.
\end{proof}

The first part of Theorem~\ref{thm:sse-hyper} follows from previous work (e.g., see~\cite{KhotV05}). For completeness, we include a proof in Appendix~\ref{app:hyper-imp-sse}.  The second part will follow from the following lemma:

\begin{lemma}\label{lem:ssetonorm} Set $e = e(\lambda, q) := c_1 2^{c_2 q} /\lambda^q$, with universal constants $c_1, c_2 >0$. Then for every $\lambda>0$ and $1 \geq \delta \geq 0$, if $G$ is a graph that satisfies $\cp(G(S)) \leq 1/(e|S|)$ for all $S$ with $\mu(S)\leq \delta$, then $\norm{f}_q \leq 2\norm{f}_2/\sqrt{\delta}$ for all $f\in V_{\geq \lambda}(G)$.
\end{lemma}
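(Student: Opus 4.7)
The plan is to conduct a level-set analysis that combines the spectral structure of $f \in V_{\ge \lambda}$ with the collision-probability hypothesis. Normalize $\|f\|_2 = 1$ and assume $f \geq 0$ (otherwise work separately with $f^+$ and $f^-$). For each threshold $t > 0$ set $B_t = \{v : f(v) \geq t\}$; Markov applied to $f^2$ gives $\mu(B_t) \leq 1/t^2$, so $B_t$ has measure at most $\delta$ whenever $t \geq 1/\sqrt{\delta}$.

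The core observation combines two ingredients. First, because $P_{\ge \lambda}$ is the spectral projector of the self-adjoint operator $G$ onto eigenvalues $\geq \lambda$, the operator inequality $\lambda^2 P_{\ge \lambda} \sle G^2$ holds, so $\|P_{\ge \lambda} h\|_2 \leq \|Gh\|_2/\lambda$ for every $h$. Second, a direct calculation shows that the hypothesis $\cp(G(S)) \leq 1/(e|S|)$ is equivalent to $\|G\mathbf{1}_S\|_2^2 \leq \mu(S)/e$. Using $f = P_{\ge \lambda} f$ and combining these, for every $S$ with $\mu(S) \leq \delta$,
\[
\langle \mathbf{1}_S, f\rangle = \langle P_{\ge \lambda}\mathbf{1}_S, f\rangle \leq \|P_{\ge \lambda}\mathbf{1}_S\|_2 \leq \sqrt{\mu(S)/(e\lambda^2)}.
\]
Applied to $S = B_t$ (for $t \geq 1/\sqrt{\delta}$) and using $\langle \mathbf{1}_{B_t}, f\rangle \geq t\mu(B_t)$, this yields the tail estimate $\mu(B_t) \leq 1/(t^2 \cdot 2^{cq}\lambda)$.

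The remaining task is to bound $\|f\|_q^q = q\int_0^\infty t^{q-1}\mu(B_t)\,dt$. Splitting at $t_0 = 1/\sqrt{\delta}$, the small-$t$ piece is controlled by $\mu(B_t) \leq 1/t^2$ and contributes $O(1/\delta^{q/2-1})$, which is comfortably below the target $(2/\sqrt{\delta})^q$. The main technical challenge is the tail piece: the bound $\mu(B_t) \lesssim 1/t^2$ yields an integrand $\sim t^{q-3}$ that does not converge for $q \geq 3$. To sharpen, we bootstrap by replacing the test function $\mathbf{1}_{B_t}$ with $f^k\mathbf{1}_{B_t}$ for $k \geq 1$: Cauchy--Schwarz and Jensen's inequality give $\|G(f^k\mathbf{1}_{B_t})\|_2^2 \leq \|G\mathbf{1}_{B_t}\|_2 \cdot \|f^{2k}\mathbf{1}_{B_t}\|_2$, and combining with the lower bound $\Ex{} f^{k+1}\mathbf{1}_{B_t} \geq t^{k+1}\mu(B_t)$ yields a strictly faster decay $\mu(B_t) \lesssim C_k/t^{4(k+1)/3}$ (with constants $C_k$ involving $\|f\|_{4k}$, which is controlled by induction on the norm exponent $q$, with trivial base case $q = 2$). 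Choosing $k$ large enough that $4(k+1)/3 > q$ makes the tail integral converge, and absorbing all accumulated constants into the $2^{cq}$ factor (which is what allows $c$ as large as $100$) gives the stated bound $\|f\|_q \leq 2/\sqrt{\delta}$.

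The principal obstacle is managing this bootstrap carefully: the higher-moment norms $\|f\|_{4k}$ appearing in the intermediate constants must be controlled (via the inductive scheme on $q$, or via a shell decomposition that keeps relevant norms bounded intrinsically), and one must verify that the multiplicative factors collected through the iteration can be absorbed uniformly into the $2^{cq}/\lambda$ factor defining $e$, without dependence on $|V|$.
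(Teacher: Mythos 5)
Your proposal takes a genuinely different route from the paper's, and while some of your preliminary observations are correct and clean, the core of the argument has a circularity that cannot be repaired as described.

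The setup you give is sound: the reformulation $\cp(G(S))\le 1/(e|S|) \iff \norm{G\Ind_S}_2^2 \le \mu(S)/e$ is a correct unpacking of the collision probability, and the operator inequality $\lambda^2 P_{\ge\lambda} \preceq G^2$ (hence $\norm{P_{\ge\lambda} h}_2 \le \norm{Gh}_2/\lambda$) is valid since $G$ is self-adjoint with spectrum in $[-1,1]$. Combining them with Cauchy--Schwarz does give $\mu(B_t) \le 1/(t^2\,e\lambda^2) = 1/(2^{cq}\lambda\,t^2)$ for $t\ge 1/\sqrt\delta$. But note this improves Markov only in the \emph{constant}, not the \emph{exponent}: a single application of the hypothesis can only buy one factor of $\norm{G\Ind_{B_t}}_2$, which yields $\sqrt{\mu(B_t)}$, and balancing against the trivial lower bound $\iprod{\Ind_{B_t},f}\ge t\mu(B_t)$ inevitably produces $\mu(B_t)\lesssim 1/t^2$. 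You recognize that $q\int t^{q-1}\mu(B_t)\,dt$ then diverges for $q\ge 3$, which is exactly the hard part.

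The bootstrap you propose to repair this does not close. Tracking the inequalities: $t^{k+1}\mu(B_t)\le \E[f^{k+1}\Ind_{B_t}] = \iprod{f^k\Ind_{B_t},f} \le \norm{P_{\ge\lambda}(f^k\Ind_{B_t})}_2 \le \tfrac{1}{\lambda}\norm{G(f^k\Ind_{B_t})}_2$, and then by your pointwise Cauchy--Schwarz step, $\norm{G(f^k\Ind_{B_t})}_2^2 \le \norm{G\Ind_{B_t}}_2\cdot\norm{f^{2k}\Ind_{B_t}}_2$. Plugging in $\norm{G\Ind_{B_t}}_2\le\sqrt{\mu(B_t)/e}$ and $\norm{f^{2k}\Ind_{B_t}}_2\le\norm{f}_{4k}^{2k}$ and solving for $\mu(B_t)$ gives $\mu(B_t) \lesssim \norm{f}_{4k}^{4k/3} / (\lambda e^{1/4})^{4/3}\, t^{4(k+1)/3}$. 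Making the tail integral converge requires $4(k+1)/3 > q$, i.e. $4k > 3q-4 > q$ for $q>2$. So the constant in the decay rate for $\mu(B_t)$ involves $\norm{f}_{4k}$ with $4k$ strictly \emph{larger} than $q$ — a \emph{higher} moment than the one you are trying to control, and $L^p$ moments on a probability space are monotone increasing in $p$. Your suggested ``induction on the norm exponent $q$'' would therefore have to ascend ($q \to 3q \to 9q \to\cdots$), which never terminates, rather than descend to the base case $q=2$. There is no way to trade $\norm{f}_{4k}$ for a lower moment in the step $\norm{f^{2k}\Ind_{B_t}}_2\le\norm{f}_{4k}^{2k}$ without additional structure, and the constants accumulated cannot be ``absorbed uniformly into $2^{cq}/\lambda$'' because they are not a priori bounded at all — they are exactly the quantity in question.

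The paper resolves this obstacle with a fundamentally different idea that your proposal does not touch: it takes $f$ to be the \emph{extremal} function in $V_{\ge\lambda}$ maximizing $\norm{f}_q/\norm{f}_2$ (WLOG $\norm{f}_2=1$), writes $f = Gg$ where $g$ has the eigencomponents scaled by $1/\lambda_i$ so that $\norm{g}_2\le 1/\lambda$, and then derives a \emph{contradiction} by showing $\norm{g}_q \ge 10\norm{f}_q/\lambda$, which would make $g$ a better candidate than $f$. The heart of that argument is a greedy multiscale decomposition of the set $\{|f|\ge 1/\sqrt\delta\}$ into geometric shells $U_i$, together with a spreading lemma (the paper's Lemma on collision probability) asserting that if $\cp(D)\le 1/N$ and $\E|g(D)|\ge\beta$, then $g^2$ averages at least $\beta^2/4$ on a set of size $N$. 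This circumvents the need for any a priori higher-moment control: the extremal choice of $f$ supplies the contradiction, and the shell decomposition with the greedy index selection prevents the constants from compounding across scales. If you want to pursue a direct (non-extremal) proof, the multiscale structure and the ``spreading'' formulation of the collision hypothesis appear to be unavoidable ingredients; a single-scale moment bootstrap will not suffice.
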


\paragraph{Proving the second part of Theorem~\ref{thm:sse-hyper} from Lemma~\ref{lem:ssetonorm}} We use the variant of the local Cheeger bound obtained in~\cite[Theorem 2.1]{Steurer10c},
stating that if $\bd_{G}(\delta) \geq 1 - \eta$ then for every $f\in \L2(V)$ satisfying $\norm{f}_1^2 \leq \delta \norm{f}_2^2$, $\norm{Gf}_2^2 \leq c\sqrt{\eta}\norm{f}_2^2$.
The proof follows by noting that for every set $S$, if $f$ is the characteristic function of $S$ then $\norm{f}_1 = \norm{f}_2^2 = \mu(S)$, and $\cp(G(S)) = \norm{Gf}_2^2 / (\mu (S) |S|)$. \qed

\begin{proof}[Proof of Lemma~\ref{lem:ssetonorm}]  Fix $\lambda>0$. We assume that the graph satisfies the condition of the Lemma with $c_1 2^{c_2 q} /\lambda^q$, for constants $c_1, c_2$ that we will set later. Let $G=(V,E)$ be such a graph, and $f$ be function in $V_{\geq \lambda}(G)$ with $\norm{f}_2=1$ that maximizes $\norm{f}_q$. We write $f = \sum_{i=1}^m \alpha_i \chi_i$ where $\chi_1,\ldots,\chi_m$ denote the eigenfunctions of $G$ with values $\lambda_1,\ldots,\lambda_m$ that are at least $\lambda$.  Assume towards a contradiction that $\norm{f}_q>2/\sqrt{\delta}$. We'll prove that  $g= \sum_{i=1}^m (\alpha_i/\lambda_i)\chi_i$ satisfies $\norm{g}_q \geq 5\norm{f}_q/\lambda$. This is a contradiction since (using $\lambda_i \in [\lambda,1]$) $\norm{g}_2 \leq \norm{f}_2/\lambda$, and we assumed $f$ is a function in $V_{ \geq \lambda}(G)$ with a maximal ratio of $\norm{f}_q/\norm{f}_2$.

Let $U \subseteq V$ be the set of vertices such that $|f(x)| \geq 1/\sqrt{\delta}$ for all $x\in U$. Using Markov inequality and the fact that $\E_{x\in V}[ f(x)^2 ] = 1$, we know that $\mu(U)=|U|/|V| \leq \delta$, meaning that under our assumptions any subset $S\subseteq U$ satisfies $\cp(G(S))\leq 1/(e|S|)$. On the other hand, because $\norm{f}_q^q \geq 2^q/\delta^{q/2}$, we know that $U$ contributes at least half of the term $\norm{f}_q^q = \E_{x\in V} f(x)^q$. That is, if we define $\alpha$ to be $\mu(U)\E_{x\in U} f(x)^q$ then $\alpha \geq \norm{f}_q^q/2$. We will prove the lemma by showing that $\norm{g}_q^q \geq \left(10\lambda^{-1} \right)^{q} \alpha$.

Let $c$ be a sufficiently large constant ($c=100$ will do). We define $U_i$ to be the set $\{x \in U : f(x) \in [c^i/\sqrt{\delta},c^{i+1}/\sqrt{\delta}) \}$, and let $I$ be the maximal $i$ such that $U_i$ is non-empty. Thus, the sets $U_0,\ldots,U_I$ form a partition of $U$ (where some of these sets may be empty). We let $\alpha_i$ be the contribution of $U_i$ to $\alpha$. That is, $\alpha_i = \mu_i\E_{x\in U_i} f(x)^q$, where $\mu_i=\mu(U_i)$. Note that $\alpha = \alpha_0 + \cdots + \alpha_I$. We'll show that there are some indices $i_1,\ldots,i_J$ such that:

\begin{description}

\item[(i)] $\alpha_{i_1} + \cdots + \alpha_{i_J} \geq \alpha/(2c^{10})$.

\item[(ii)] For all $j\in [J]$, there is a non-negative function $g_j:V\to\R$ such that $\E_{x\in V}g_j(x)^q \geq e\alpha_{i_j}/(10c^2)^{q/2}$.

\item[(iii)] For every $x\in V$, $g_1(x)  + \cdots + g_J(x) \leq |g(x)|$.
\end{description}

Showing these will complete the proof, since it is easy to see that for two non-negative functions $g',g''$ and even integer $q$, $\E(g'(x)+g''(x))^q \geq \E g'(x)^q + \E g''(x)^q$, and hence \textbf{(ii)} and \textbf{(iii)} imply that
\begin{equation}
\norm{g}_q^q= \E g(x)^q \geq (e/(10c^2)^{q/2}) \sum_j \alpha_{i_j} \;. \label{eq:ssetonorm-finalbound}
\end{equation}
Using \textbf{(i)} we conclude that for $e \geq  2c^{10} 10^q (10c^2)^{q/2}/\lambda^q$, the right-hand side of (\ref{eq:ssetonorm-finalbound}) will be larger than $(10/\lambda)^q\alpha$.

We find the indices $i_1,\ldots,i_J$ iteratively. We let $\cI$ be initially the set $\{0..I\}$ of all indices. For $j=1,2,...$ we do the following as long as $\cI$ is not empty:
\begin{enumerate}
\item Let $i_j$ be the largest index in $\cI$.

\item Remove from $\cI$ every index $i$ such that $\alpha_i \leq c^{10}\alpha_{i_j}/2^{i-i_j}$.

\end{enumerate}

We let $J$ denote the step when we stop. Note that our indices $i_1,\ldots,i_J$ are sorted in descending order. For every step $j$, the total of the $\alpha_i$'s for all indices we removed is less than $c^{10}\alpha_{i_j}$ and hence we satisfy \textbf{(i)}. The crux of our argument will be to show \textbf{(ii)} and \textbf{(iii)}. They will follow from the following claim:

\begin{claim}\label{clm:sse-norm}  Let $S\subseteq V$ and $\beta>0$ be such that $|S| \leq \delta$ and $|f(x)| \geq \beta$ for all $x\in S$. Then there is a set $T$ of size at least $e|S|$ such that $\E_{x\in T} g(x)^2 \geq \beta^2/4$.
\end{claim}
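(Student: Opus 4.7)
My plan is to exploit the identity $f = Gg$, where $G$ is the normalized adjacency (averaging) operator: since $g = \sum_i (\alpha_i/\lambda_i)\chi_i$ and each $\chi_i$ is an eigenfunction of $G$ with $G\chi_i = \lambda_i\chi_i$, we have $Gg = \sum_i \alpha_i \chi_i = f$. Concretely, $f(x) = \E_{u \sim x} g(u)$ for every $x$, so by Cauchy--Schwarz applied pointwise, $f(x)^2 \le \E_{u \sim x} g(u)^2 = G(g^2)(x)$. Since $|f(x)| \geq \beta$ on $S$, this forces $G(g^2)(x) \geq \beta^2$ for every $x \in S$. Averaging over $x \in S$ and using self-adjointness of $G$ (since $G$ is regular) yields the key inequality
\[
\iprod{\psi, g^2} \geq \mu(S)\beta^2, \qquad \text{where } \psi := G\mathbf{1}_S,
\]
with $\iprod{\cdot,\cdot}$ the uniform expectation over $V$.

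Next, I would translate the Lemma's collision-probability hypothesis into an $L^2$-bound on $\psi$. For the $d$-regular graph, $\psi(v) = |N(v)\cap S|/d = |S|\cdot D(v)$ where $D = G(S)$, and therefore $\|\psi\|_2^2 = (|S|^2/|V|)\cp(D) \le \mu(S)/e$. Applying Cauchy--Schwarz to the inequality from the first step then gives $\|g^2\|_2 \geq \beta^2\sqrt{e\mu(S)}$, i.e.
\[
\|g\|_4^4 \;\ge\; e\mu(S)\beta^4,
\]
a quantitative fourth-moment lower bound that captures the essence of the claim.

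The remaining step is to convert this moment bound into a set $T$ with $|T| \geq e|S|$ and $\E_T g^2 \geq \beta^2/4$. My candidate is the level set $T := \{v : g(v)^2 \geq \beta^2/4\}$, which automatically satisfies the average condition. To show $|T| \geq e|S|$, split $\iprod{\psi, g^2}$ via $\mathbf{1}_T$ and $\mathbf{1}_{T^c}$: the $T^c$-part contributes at most $(\beta^2/4)\mu(S)$, so the $T$-part is at least $(3/4)\mu(S)\beta^2$; using $\|\psi\|_\infty \le 1$ this gives $\mu(T)\E_T g^2 \ge (3/4)\mu(S)\beta^2$, while the Cauchy--Schwarz bound gives an $L^2$-mass concentration $\sum_{v\in T} g(v)^4 \gtrsim e|S|\beta^4$ that must be reconciled against the fact that the total $g^2$-mass available is bounded by $\|g\|_2^2 \le 1/\lambda^2$ (since $g$ lives in the top eigenspace of $G$ with eigenvalue at least $\lambda$).

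The main obstacle I anticipate is this final extraction: a naive Cauchy--Schwarz leaves an uncontrolled factor of $\max_T g^2$, which has to be handled by invoking $\|g\|_2 \le 1/\lambda$ together with the smallness of $\delta$ assumed in the surrounding proof, chosen so that $\beta^2 \geq 1/\delta$ dominates $1/\lambda^2$ once the large constant $e = 2^{cq}/\lambda$ is accounted for. I expect that a clean argument will need a dyadic partition of the level sets of $g^2$, applying the fourth-moment bound to the bulk and bounding the low-$g$ tail by $\beta^2/4 \cdot \mu(S)$, so that one ends up with a set $T$ having both $|T| \geq e|S|$ and $\E_T g^2 \geq \beta^2/4$ with only constant-factor slack to spare.
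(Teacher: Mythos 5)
Your setup — exploiting $f = Gg$, self-adjointness, and the collision-probability bound from the lemma's hypothesis — is the right starting point, and the identity $\iprod{G\mathbf 1_S, g^2} \geq \mu(S)\beta^2$ is a legitimate consequence of pointwise Cauchy--Schwarz. But there is a real gap in the extraction step, and it is not one that can be bridged by a dyadic refinement of what you have.

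The trouble is that you square too early. Applying $f(x)^2 \leq G(g^2)(x)$ pointwise and then taking an $L^2$-Cauchy--Schwarz gives only the fourth-moment lower bound $\|g\|_4^4 \geq e\,\mu(S)\,\beta^4$, and this is genuinely too weak to imply the claim. If $g$ is supported on a single vertex $v_0$ with $g(v_0)^4 = e|S|\beta^4$ (so that the fourth-moment bound is tight), then any set $T$ of size $e|S|$ containing $v_0$ has $\E_T g^2 = \beta^2/\sqrt{e|S|}$, which is far below $\beta^2/4$ once $e|S|$ is large. Your backup $\mu(T)\,\E_T g^2 \geq (3/4)\mu(S)\beta^2$ has the same defect: it cannot rule out $\mu(T)$ being tiny and $\E_T g^2$ huge, and the global $\|g\|_2 \leq 1/\lambda$ bound does not provide the needed local flatness. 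The paper avoids this by \emph{not} squaring at the first step: writing $D = G(S)$, it keeps the linear estimate $\E_D |g| \geq \beta$ (which follows directly from $f(x) = \E_{y\sim x}g(y)$ and $|f|\geq\beta$ on $S$), and then invokes Lemma~\ref{lem:sse-norm}. That lemma is where the real work happens, and it cannot be replaced by a moment bound: sorting the atoms of $D$ by probability, it splits into the case where most of the $D$-weighted $|g|$-mass sits on the light atoms (where one can reassign the weight to a flat distribution over sets of size $N=e|S|$ because $\cp(D)\leq 1/N$ implies every atom has mass $\leq 1/\sqrt{N}$, and the top $N$ atoms are discarded) versus the case where it sits on the $N$ heaviest atoms (where Cauchy--Schwarz combined with $\sum_{i\leq N}p_i^2\leq 1/N$ gives $\E_{[N]}g^2\geq\beta^2/4$ directly). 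Your ``dyadic partition of level sets of $g^2$'' sketch is not equivalent to this: the relevant partition is of the \emph{atoms of $D$} by their probability weight, not of $g^2$ by its value, and without the two-case dichotomy the constant $1/4$ does not fall out. To salvage your approach you would need to keep $\E_D|g| \geq \beta$ linear and re-derive Lemma~\ref{lem:sse-norm} from scratch, at which point you have the paper's proof.
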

The claim will follow from the following lemma:

\begin{lemma} \label{lem:sse-norm} Let $D$ be a distribution with $\cp(D) \leq 1/N$ and $g$ be some function. Then there is a set $T$ of size $N$ such that $\E_{x\in T} g(x)^2 \geq (\E  g(D))^2/4$.
\end{lemma}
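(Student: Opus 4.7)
My plan is to take $T$ to be the set of $N$ elements of $V$ with the largest values of $g(x)^2$ (breaking ties arbitrarily) and show that a simple two-case argument yields the bound. Throughout, I may assume $g \geq 0$: replacing $g$ by $|g|$ leaves $g(x)^2$ unchanged while only increasing $(\E g(D))^2$, so the inequality for $|g|$ is strictly stronger. Write $p(x) = \Pr_D[x]$, so the hypothesis says $\sum_x p(x)^2 \leq 1/N$, and let $M = \E g(D) = \sum_x p(x) g(x)$. The natural decomposition is
\[
M = M_T + M_{T^c}, \qquad M_T \seteq \sum_{x \in T} p(x) g(x), \quad M_{T^c} \seteq \sum_{x \in T^c} p(x) g(x) \mper
\]
At least one of the two halves is at least $M/2$, and I will handle each possibility separately.

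In the first case, suppose $M_T \geq M/2$. I would apply Cauchy--Schwarz on $T$, using the collision probability bound:
\[
\left(\tfrac{M}{2}\right)^2 \leq M_T^2 = \Paren{\sum_{x \in T} p(x) g(x)}^2 \leq \Paren{\sum_{x \in T} p(x)^2}\Paren{\sum_{x \in T} g(x)^2} \leq \frac{1}{N}\sum_{x \in T} g(x)^2 \mper
\]
Rearranging gives $\E_{x \in T} g(x)^2 = \tfrac{1}{N}\sum_{x \in T} g(x)^2 \geq M^2/4$, as desired. In the complementary case $M_{T^c} > M/2$, I would exploit the choice of $T$ as the top-$N$ set: letting $\tau = \min_{x \in T} g(x)$, every $x \in T^c$ satisfies $g(x) \leq \tau$, so
\[
\tfrac{M}{2} < M_{T^c} \leq \tau \sum_{x \in T^c} p(x) \leq \tau \mper
\]
Hence $g(x) \geq \tau > M/2$ for every $x \in T$, which immediately gives $\E_{x \in T} g(x)^2 \geq \tau^2 > M^2/4$.

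The argument is short and purely elementary; there is no genuine obstacle, only the minor subtlety of recognizing that a single inequality (Cauchy--Schwarz on $T$) does not suffice on its own, since elements of large $p(x)$ may lie outside $T$, and conversely the threshold bound on its own does not suffice since $\tau$ might be tiny. The case split by the size of $M_T$ versus $M_{T^c}$ is exactly what is needed to cover both failure modes. The loss of the factor $1/4$ comes from this split; one could in principle optimize the constants, but $1/4$ is perfectly sufficient for the application to Claim~\ref{clm:sse-norm} and the subsequent proof of Lemma~\ref{lem:ssetonorm}.
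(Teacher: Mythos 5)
Your proof is correct, but it takes a genuinely different route from the paper's in one of the two cases, and is arguably cleaner. The paper sorts the probabilities $p_1\ge\cdots\ge p_M$ and splits on whether $\sum_{i\le N}p_ig(i)$ or $\sum_{i>N}p_ig(i)$ is at least half of $\E g(D)$: in the first case it applies Cauchy--Schwarz and the collision bound on $T=[N]$ (the top-$N$ \emph{probabilities}), and in the second case it modifies $D$ into a new distribution $D'$ with all atoms $\le 1/N$ and writes $D'$ as a convex combination of flat distributions over $N$-element sets to extract a good $T$. You instead fix a single $T$ --- the top $N$ coordinates of $|g|$ --- for both cases and split on whether $T$ or $T^c$ carries at least half the mass of $\E g(D)$. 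Your Cauchy--Schwarz case is the same calculation with a different $T$, while your complementary case avoids the convex-decomposition argument entirely via the elementary threshold observation that if $M_{T^c}>M/2$ then every $x\in T$ has $g(x)\ge\tau>M/2$. This buys simplicity (one fixed $T$, no need for the flattening-into-uniforms lemma) and makes it explicit that the collision-probability hypothesis is only needed in one of the two cases, at the cost of requiring the observation that $T$ should be chosen by magnitude of $g$ rather than by magnitude of $p$. Your WLOG reduction to $g\ge 0$ is also a small but genuine improvement, since the paper's Case~1 implicitly needs it.
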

\begin{proof} Identify the support of $D$ with the set $[M]$ for some $M$, we let $p_i$ denote the probability that $D$ outputs $i$, and sort the $p_i$'s such that $p_1\geq p_2 \cdots p_M$. We let $\beta'$ denote $\E g(D)$; that is, $\beta'=\sum_{i=1}^M p_ig(i)$. We separate to two cases. If $\sum_{i>N} p_ig(i) \geq \beta'/2$, we define the distribution $D'$ as follows: we set $\Pr [ D' = i]$ to be $p_i$ for $i>N$, and we let all $i\leq N$ be equiprobable (that is be output with probability $(\sum_{i=1}^N p_i)/N$). Clearly, $\E  |g(D')|  \geq \sum_{i>N} p_ig(i) \geq \beta'/2$, but on the other hand, since the maximum probability of any element in $D'$ is at most $1/N$, it can be expressed as a convex combination of flat distributions over sets of size $N$, implying that one of these sets $T$ satisfies $\E_{x\in T} |g(x)|  \geq \beta'/2$, and hence $\E_{x\in T} g(x)^2  \geq \beta'^2/4$.

The other case is that $\sum_{i=1}^N p_ig(i) \geq \beta'/2$. In this case we use Cauchy-Schwarz and argue that
\begin{equation}
\beta'^2/4 \leq \left(\sum_{i=1}^N p_i^2\right)\left(\sum_{i=1}^N g(i)^2 \right) \;. \label{eq:cs-sse-norm}
\end{equation}
But using our bound on the collision probability, the right-hand side of (\ref{eq:cs-sse-norm}) is upper bounded by  $\tfrac{1}{N}\sum_{i=1}^N g(i)^2 = \E_{x\in[N]} g(x)^2$.
\end{proof}
\begin{proof}[Proof of Claim~\ref{clm:sse-norm} from Lemma~\ref{lem:sse-norm}] By construction $f=Gg$, and hence we know that for every $x$, $f(x)=\E_{y\sim x} g(y)$. This means that if we let $D$ be the distribution $G(S)$ then
\[
\E |g(D)| = \E_{x\in S} \E_{y\sim x} |g(y)|  \geq  \E_{x\in S} |\E_{y\sim x}  g(y) | = \E_{x\in S} |f(x)| \geq \beta \;.
\]
By the expansion property of $G$, $\cp(D) \leq 1/(e|S|)$ and thus by Lemma~\ref{lem:sse-norm} there is a set $T$ of size $e|S|$ satisfying  $\E_{x\in T} g(x)^2 \geq \beta^2/4$.
\end{proof}

We will construct the functions $g_1,\ldots,g_J$ by applying iteratively Claim~\ref{clm:sse-norm}. We do the following for $j=1,\ldots,J$:

\begin{enumerate}

\item Let $T_j$ be the set of size $e|U_{i_j}|$ that is obtained by applying Claim~\ref{clm:sse-norm} to the function $f$ and the set $U_{i_j}$. Note that $\E_{x \in T_j} g(x)^2 \geq \beta_{i_j}^2/4$, where we let $\beta_i = c^i/\sqrt{\delta}$ (and hence for every $x\in U_i$, $\beta_i \leq |f(x)| \leq c\beta_i$).

\item Let $g'_j$ be the function on input $x$ that outputs $\gamma\cdot |g(x)|$ if $x\in T_j$ and $0$ otherwise, where $\gamma \leq 1$ is a scaling factor that ensures that $\E_{x \in T_j} g'(x)^2 $  equals exactly $\beta_{i_j}^2/4$.

\item We define $g_j(x) = \max \{ 0 , g'_j(x) - \sum_{k<j} g_k(x) \}$.

\end{enumerate}

Note that the second step ensures that $g'_j(x) \leq |g(x)|$, while the third step ensures that $g_1(x) + \cdots + g_j(x) \leq g'_j(x)$ for all $j$, and in particular $g_1(x) + \cdots + g_J(x) \leq |g(x)|$.  Hence the only thing left to prove is the following:

\begin{claim} $\E_{x\in V} g_j(x)^q \geq e\alpha_{i_j}/(10c)^{q/2}$
\end{claim}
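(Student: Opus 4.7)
My plan is to combine three ingredients: (a) a Jensen/power-mean lower bound on $\E_V(g'_j)^q$, (b) a counting estimate, coming from the index-selection rule, that bounds the overlap of $T_j$ with the previously chosen $T_k$'s, and (c) a truncation argument that controls the $L^q$ mass lost by the subtraction defining $g_j$. Throughout, I read the target denominator as $(10c^2)^{q/2}$, matching~\eqref{eq:ssetonorm-finalbound} and taking the $(10c)^{q/2}$ in the claim statement to be a typo.

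For (a), since $g'_j \geq 0$ and $q \geq 2$, the power-mean inequality gives $\E_{T_j}(g'_j)^q \geq (\E_{T_j}(g'_j)^2)^{q/2} = (\beta_{i_j}^2/4)^{q/2}$. Multiplying by $\mu(T_j) = e\mu_{i_j}$ and using the bound $\alpha_{i_j} \leq c^q\mu_{i_j}\beta_{i_j}^q$ (which follows from $f \leq c\beta_{i_j}$ on $U_{i_j}$), I obtain $\E_V(g'_j)^q \geq e\alpha_{i_j}/(4c^2)^{q/2}$, which is at least $e\alpha_{i_j}/(10c^2)^{q/2}$ with constant-factor slack.

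For (b), I exploit the index-selection rule: at step $k < j$, the index $i_j$ was not removed, so $\alpha_{i_j} > c^{10}\alpha_{i_k}/2^{i_k - i_j}$. Combining this with the trivial bounds $\alpha_{i_k} \geq \mu_{i_k}\beta_{i_k}^q$ and $\alpha_{i_j} \leq c^q\mu_{i_j}\beta_{i_j}^q$, together with $\beta_{i_k}/\beta_{i_j} = c^{i_k - i_j}$, gives $\mu_{i_k}/\mu_{i_j} \leq c^{q-10}/(2c^q)^{i_k - i_j}$. Summing the resulting geometric series over the distinct $i_k > i_j$ yields $\sum_{k<j}|T_k|/|T_j| \leq O(c^{-10})$. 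Hence $T_j' := T_j \setminus \bigcup_{k<j}T_k$ satisfies $|T_j'| \geq (1-O(c^{-10}))|T_j|$, and since each $g_k$ for $k<j$ is supported in $T_k$, on $T_j'$ the subtraction vanishes and $g_j = g'_j$ pointwise.

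The main obstacle is step (c): a priori $(g'_j)^q$ could concentrate almost entirely on the small bad set $T_j \setminus T_j'$ (for instance if $g$ has a spike at a point lying in some earlier $T_k$), in which case the Jensen lower bound from (a) cannot be transferred from $T_j$ to $T_j'$ using $L^2$ information alone. I plan to address this by truncating $g'_j$ at a threshold $K\beta_{i_j}$ for a moderate value $K = c^{O(1)}$, and splitting $T_j$ accordingly. Markov's inequality applied to the $L^2$ constraint $\E_{T_j}(g'_j)^2 = \beta_{i_j}^2/4$ forces the region $\{g'_j > K\beta_{i_j}\}$ to occupy at most a $1/(4K^2)$ fraction of $T_j$, bounding its $L^q$ contribution in terms of $K$; meanwhile, on the truncated region, the pointwise bound $g'_j \leq K\beta_{i_j}$ combined with $|T_j \setminus T_j'| \leq O(c^{-10})|T_j|$ shows that at most an $O(K^{q-2}/c^{10})$ fraction of the Jensen $L^q$ mass can be concentrated in the bad set. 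Choosing $K$ a moderate power of $c$ balances these two contributions, so that the $L^q$ mass of $g'_j$ on $T_j'$ is a constant fraction of the Jensen bound from (a). Combining this with (a) yields the claim; carrying out this final accounting is the main technical step of the proof.
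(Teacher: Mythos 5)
Your proposal takes a genuinely different route from the paper, and step (c) as outlined has a gap. The paper applies the power-mean inequality to $g_j$ \emph{at the very end}: it reduces the claim to the $L^2$ statement $\E_{x\in T_j}g_j(x)^2\geq\beta_{i_j}^2/10$, which is \pref{eq:lower-bound-gj}, and then invokes $\E_{T_j}g_j^q\geq(\E_{T_j}g_j^2)^{q/2}$. The $L^2$ bound itself comes from the reverse triangle inequality in $L^2(T_j)$: writing $g_j = g'_j - h_j$ with $h_j=\min\set{g'_j,\sum_{k<j}g_k}$, one gets $\norm{g_j}\geq\norm{g'_j}-\norm{h_j}$ and $\norm{h_j}\leq\sum_{k<j}\norm{g'_k}$ (with $\norm{\cdot}$ denoting $\sqrt{\E_{T_j}(\cdot)^2}$), and the key input is the $L^2$\emph{-mass} bound of \pref{eq:upper-bound-gk}, $\E_{T_j}(g'_k)^2\leq100^{-i'}\beta_{i_j}^2/4$ for $i'=i_k-i_j>0$, whose geometric decay makes the total subtraction a small fraction of $\norm{g'_j}=\beta_{i_j}/2$. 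You instead apply Jensen to $g'_j$ first and then must carry the resulting $L^q$ lower bound across the overlap $T_j\cap\bigcup_{k<j}T_k$; this is a strictly harder transfer problem, and the ingredients you list do not suffice to perform it.

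Concretely, your step (b) controls only the \emph{measure} of the overlap, not $\E_{T_j}(\sum_{k<j}g_k)^2$, so nothing prevents a spiky $g'_j$ from putting essentially all of its $L^2$ (and $L^q$) mass on that small set. Your truncation handles the low region $\set{g'_j\leq K\beta_{i_j}}$, but Markov's inequality bounds only the \emph{measure} of the high region $\set{g'_j>K\beta_{i_j}}$, not its $L^q$ (or even $L^2$) contribution; that contribution can be arbitrarily large relative to the Jensen bound and can sit entirely inside the bad set, so the ``final accounting'' you defer is not routine, and as sketched there is a hole exactly where the high region of $g'_j$ meets the earlier $T_k$'s. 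The fix is cheap and eliminates the truncation entirely: since $\E_{T_k}(g'_k)^2=\beta_{i_k}^2/4$ exactly (by the choice of $\gamma$) and $g'_k$ is supported on $T_k$, your own measure estimate gives $\E_{T_j}(g'_k)^2\leq(|T_k|/|T_j|)\,\beta_{i_k}^2/4$, which is precisely \pref{eq:upper-bound-gk}; from there the $L^2(T_j)$ triangle inequality and a single application of Jensen to $g_j$ close the argument. (Two minor points: you are right that $(10c)^{q/2}$ should read $(10c^2)^{q/2}$; and the ratio in (b) should be $\mu_{i_k}/\mu_{i_j}\leq c^{q-10}(2/c^q)^{i'}$, not $c^{q-10}/(2c^q)^{i'}$, though the geometric-decay conclusion is unaffected.)
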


\begin{proof} Recall that for every $i$, $\alpha_i = \mu_i \E_{x\in U_i} f(x)^q$, and hence  (using $f(x) \in [\beta_i,c\beta_i)$ for $x\in U_i$):
\begin{equation}
 \mu_i\beta_i^q \leq \alpha_i \leq \mu_ic^q\beta_i^q \;.  \label{eq:alpha-i-bound}
 \end{equation}

Now fix $T=T_j$. Since $\E_{x\in V} g_j(x)^q$ is at least (in fact equal) $\mu(T)\E_{x\in T} g_j(x)^q$ and $\mu(T) = e\mu(U_{i_j})$, we can use (\ref{eq:alpha-i-bound}) and $\E_{x\in T} g_j(x)^q \geq (E_{x\in T} g_j(x)^2)^{q/2}$, to reduce proving the claim to showing the following:
\begin{equation}
\E_{x\in T} g_j(x)^2  \geq (c\beta_{i_j})^2/(10c^2) = \beta_{i_j}^2/10 \;. \label{eq:lower-bound-gj}
\end{equation}

We know that  $\E_{x\in T} g'_j(x)^2  = \beta_{i_j}^2/4$. We claim  that (\ref{eq:lower-bound-gj}) will follow by showing that for every $k<j$,
\begin{equation}
\E_{x\in T} g'_k(x)^2  \leq 100^{-i'}\cdot \beta_{i_j}^2/4 \;, \label{eq:upper-bound-gk}
\end{equation}
where $i' = i_k - i_j$. (Note that $i'>0$ since in our construction the indices $i_1,\ldots,i_J$ are sorted in descending order.)

Indeed, (\ref{eq:upper-bound-gk}) means that if we let momentarily $\norm{g_j}$ denote $\sqrt{\E_{x\in T} g_j(x)^2}$ then
\begin{equation}
  \norm{g_j} \geq \norm{g'_j} - \norm{\tsum_{k<j} g_k} \geq
  \norm{g'_j}-\sum_{k<j}\norm{g_k}\ge  \norm{g'_j}(1-  \sum_{i'=1}^{\infty} 10^{-i'}) \geq  0.8\norm{g'_j} \;. \label{eq:norm-subtract}
\end{equation}
The first inequality holds because we can write $g_j$ as $g'_j - h_j$,
where $h_j=\min\set{g'_j,\sum_{k<j}g_k}$.
Then, on the one hand, $\norm{g_j}\ge \norm{g'_j}-\norm{h_j}$, and on the
other hand, $\norm{h_j}\le \norm{\sum_{k<j}g_k}$
since $g'_j\ge 0$.
The second inequality holds because $\norm{g_k}\le \norm{g'_k}$.
% \sum_{k<j} g''_j$, where $g''_j$ is the non-negative function obtained by
% reducing the value of $g'_j(x)$ exactly the right amount to ensure that
% $g_j(x)$ will be non-negative, and hence $\norm{g''_j} \leq
% \norm{g_j}$. \Bnote{Perhaps the previous sentence can be rephrased in a
% nicer way.}
%
By squaring (\ref{eq:norm-subtract}) and plugging in the value of $\norm{g'_j}^2$ we get (\ref{eq:lower-bound-gj}).

\paragraph{Proof of (\ref{eq:upper-bound-gk})} By our construction, it must hold that
\begin{equation}
c^{10}\alpha_{i_k}/2^{i'} \leq \alpha_{i_j}  \;, \label{eq:sse-norm-relate-levelsets}
\end{equation}
since otherwise the index $i_j$ would have been removed from the $\cI$ at the $k^{th}$ step. Since $\beta_{i_k} = \beta_{i_j}c^{i'}$, we can plug  (\ref{eq:alpha-i-bound}) in (\ref{eq:sse-norm-relate-levelsets}) to get
\[
\mu_{i_k}c^{10+4i'}/2^{i'} \leq c^4\mu_{i_j}
\]
or
\[
\mu_{i_k} \leq \mu_{i_j}(2/c)^{4i'}c^{-6} \;.
\]

Since $|T_i| = e|U_i|$ for all $i$, it follows that $|T_{k}|/|T| \leq (2/c)^{4i'}c^{-6}$. On the other hand, we know that $\E_{x\in T_{k}} g'_k(x)^2 = \beta_{i_k}^2/4 = c^{2i'}\beta^2_{i_j}/4$. Thus,
\[
\E_{x\in T} g'_k(x)^2  \leq 2^{4i'}c^{2i'-4i'-6}\beta_{i_j}^2/4 \leq (2^4/c^2)^{i'}\beta_{i_j}^2/4 \;,
\] and now we just choose $c$ sufficiently large so that $c^2/2^4 > 100$.
\end{proof}
\end{proof}

\section{Relating the 2-to-4 norm and the injective tensor norm} \label{sec:ITN}

In this section, we present several equivalent formulations of the 2-to-4 norm: 1) as the injective tensor norm of a 4-tensor, 2) as the injective tensor norm
of a 3-tensor, and 3) as the maximum of a linear function over a convex set, albeit a set where the weak membership problem is hard.  Additionally, we can
consider maximizations over real or complex vectors.  These equivalent formulations are discussed in \secref{inj-equiv}.

We use this to show hardness of approximation (\thmref{hardness}) for the 2-to-4 norm in \secref{hardness}, and then show positive algorithmic results
(\thmref{BCY}) in \secref{BCY-app}.  Somewhat surprisingly, many of the key arguments in these sections are imported from the quantum information literature,
even though no quantum algorithms are involved.  It is an interesting question to find a more elementary proof of the result in \secref{BCY-app}.

\ifcount{In this section, it will be convenient to sometimes work with
  the counting norms $\cnorm{.}$, which we recall are defined as
  $\cnorm{x}_p := (\sum_i |x_i|^p)^{1/p}$, and the counting inner
  product, defined by $\ciprod{x,y} := x^*y$, where $^*$ denotes the
  conjugate transpose.}

\subsection{Equivalent maximizations}\label{sec:inj-equiv}

%\Anote{ ALL of the annoying $1/n$ factors go away when we use the sum norm instead of the expectation norm.  Maybe some redefinitions could help; for example, we could take $\{e_i\}$ to be unit vectors or something.  The main thing I'm uncomfortable with right now is the definition of a density operator, since we really should have that diagonal density matrices are probability distributions.}

\subsubsection{Injective tensor norm and separable states}

Recall from the introduction the definition of the injective tensor norm: if $V_1,\ldots,V_r$ are vector spaces with $T\in V_1 \ot \cdots \ot V_r$, then
$\cnorm{T}_{\inj} = \max \{ | \ciprod{T,(x_1\ot\cdots\ot x_r)}| : x_1\in \bS(V_1),\ldots,x_r\in \bS(V_r)\}$, where $\bS(V)$ denotes the $L_2$-unit vectors in a
vector space $V$. In this paper we use the term ``injective tensor norm'' to mean the injective tensor norm of $\ell_2$ spaces, and we caution the reader that
in other contexts it has a more general meaning.  These norms were introduced by Grothendieck, and they are further discussed in~\cite{Ryan02}.

We will also need the definition of separable states from quantum information.  For a vector space $V$, define $L(V)$ to be the linear operators on $V$, and define $\cD(V) := \{\rho \in L(V): \rho\succeq 0, \tr\rho=1\} = \conv \{
%\frac{1}{\dim V}
vv^* : v\in \bS(V)\}$ to be the {\em density operators} on $V$.  \ifexp{Here we use $^*$ to denote the conjugate transpose, and define the trace to be the normalized trace; that is, the average of the diagonal elements of a matrix.}  The trace induces an inner product on operators: $\ciprod{X,Y} :=\tr X^*Y$.  An important class of density operators are the {\em separable density operators}.  For vector spaces $V_1,\ldots,V_r$, these are
$$ \Sep(V_1,\ldots,V_r)  :=
\conv\left\{ v_1v_1^* \ot \cdots \ot v_r v_r^* : \forall i, v_i\in \bS(V_i)\right\}.$$
If $V=V_1=\cdots=V_r$, then let $\Sep^r(V)$ denote $\Sep(V_1,\ldots,V_r)$.  Physically, density operators are the quantum analogues of probability distributions, and separable density operators describe unentangled quantum states; conversely, entangled states are defined to be the set of density operators that are not separable.  For readers familiar with quantum information, we point out that our treatment differs principally in its use of the expectation for norms and inner products, rather than the sum.

For any bounded convex set $K$, define the {\em support function} of $K$
to be
$$\bh_K(x) := \max_{y\in K} |\ciprod{x,y}|.$$
Define $e_i\in \F^n$ to be the vector with 1 in the $i^{\text{th}}$ position\ifexp{; beware that $\|e_i\|_2=1/\sqrt{n}$}.
Now we can give the convex-optimization formulation of the injective tensor norm.
\begin{lemma}\label{lem:h-sep}
Let $V_1,\ldots,V_r$ be vector spaces with  $n_i := \dim V_i$, and $T\in V_1\ot \cdots \ot V_r$.
\ifcount{Choose an orthonormal basis $e_1,\ldots,e_{n_r}$ for $V_r$.}
\ifexp{Choose an orthogonal basis $e_1,\ldots,e_{n_r}$ for $V_r$, normalized so that $\|e_i\|_2=1/\sqrt{n_r}$.}
Define $T_1,\ldots,T_{n_r} \in V_1\ot \cdots \ot V_{r_1}$ by $T = \sum_{i=1}^{n_r} T_i \ot e_i$ and define $M \in L(V_1 \ot \cdots \ot V_{r-1})$ by $M = \ifexp{\frac{1}{n_1n_2\cdots n_r}} \sum_{i=1}^{n_r} T_i T_i^*$.  Then
\be \cnorm{T}_{\inj}^2 = \bh_{\Sep(V_1,\ldots,V_{r-1})}(M).\ee

\end{lemma}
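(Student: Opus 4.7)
The plan is to show that both sides of the claimed identity equal the common quantity $\max_y y^* M y$, where $y = x_1 \ot \cdots \ot x_{r-1}$ ranges over product unit vectors with $x_i \in \bS(V_i)$.

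For the right-hand side, I would use that $\Sep(V_1,\ldots,V_{r-1})$ is by definition the convex hull of rank-one product projectors $x_1 x_1^* \ot \cdots \ot x_{r-1} x_{r-1}^*$ with $x_i \in \bS(V_i)$. Since $M = \sum_i T_i T_i^* \succeq 0$, the linear functional $\rho \mapsto \ciprod{M,\rho}$ is nonnegative on the PSD cone, so the absolute value in the definition of $\bh$ is redundant and the maximum over $\Sep$ is attained at an extreme point. Using that $M$ is Hermitian together with cyclicity of the trace gives $\ciprod{M,yy^*} = y^*My$, so $\bh_{\Sep(V_1,\ldots,V_{r-1})}(M) = \max_y y^*My$. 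Expanding $M$ then yields $y^*My = \sum_{i=1}^{n_r} |\ciprod{T_i,y}|^2$.

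For the left-hand side, I would substitute the decomposition $T = \sum_i T_i \ot e_i$ into $\ciprod{T, y \ot x_r}$ to obtain the expression $\sum_i \ciprod{T_i, y}\,\overline{(x_r)_i}$, where $(x_r)_i$ denotes the coefficient of $x_r$ in the orthonormal basis $\{e_i\}$. The Cauchy--Schwarz inequality then gives $\max_{x_r \in \bS(V_r)} |\ciprod{T, y \ot x_r}|^2 = \sum_i |\ciprod{T_i,y}|^2$, with equality attained for the phase-corrected choice $(x_r)_i \propto \overline{\ciprod{T_i,y}}$. Taking the remaining maximum over $x_1,\ldots,x_{r-1}$ gives $\cnorm{T}_{\inj}^2 = \max_y \sum_i |\ciprod{T_i, y}|^2$, which matches the expression derived for the right-hand side, completing the proof.

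This argument is essentially a routine unpacking of definitions; no step poses a real obstacle. The only care needed lies in respecting the conjugation convention $\ciprod{u,v} = \sum_i u_i v_i^*$ (which is anti-linear in the second slot, giving the conjugates in the Cauchy--Schwarz step) and in being consistent about whether the basis $\{e_i\}$ is taken to be counting-orthonormal or expectation-orthonormal, since the latter would uniformly rescale both sides by the same factor depending on $n_r$ and so does not affect the identity.
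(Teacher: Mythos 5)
Your proof is correct and takes essentially the same route as the paper's: expand $T = \sum_i T_i \ot e_i$, maximize over $x_r$ to get $\sum_i |\ciprod{T_i,y}|^2$, and recognize this as $\ciprod{M, yy^*}$ maximized over product states. (One trivial slip: with the paper's convention $\ciprod{u,v}=\sum_j u_j v_j^*$, the maximizing $x_r$ has $(x_r)_i \propto \ciprod{T_i,y}$, not its conjugate; this does not affect the value attained.) Your explicit justification that the support function of $\Sep$ is attained at an extreme point, and that the absolute value is vacuous since $M\succeq 0$, spells out steps the paper leaves implicit but adds nothing beyond them.
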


Observe that any $M\succeq 0$ can be expressed in this form, possibly by padding $n_r$ to be at least $\rank M$.  Thus calculating $\cnorm{\cdot}_{\rm inj}$ for $r$-tensors is equivalent in difficulty to computing $\bh_{\Sep^{r-1}}$ for p.s.d.~arguments.  This argument appeared before in \cite{HarrowM10}, where it was explained using quantum information terminology.

It is instructive to consider the $r=2$ case.   In this case, $T$ is equivalent to a matrix $\hat T$ and $\ifexp{\sqrt{n_1n_2}}\cnorm{T}_{\rm inj} = \cnorm{\hat T}_{2\ra 2}$.
Moreover $\Sep^1(\F^{n_1})= \cD(\F^{n_1})$ is simply the convex hull of $vv^*$ for unit vectors $v$.  Thus $\bh_{\Sep^1(\F^{n_1})}(M)$ is simply the maximum eigenvalue of $M=TT^*\ifexp{/n_1n_2}$.  In this case,  \lemref{h-sep} merely states that the square of the largest singular value of $\hat T$ is the largest eigenvalue of $\hat T\hat T^*$.  The general proof follows this framework.

\begin{proof}[Proof of \lemref{h-sep}]
\begin{align}
\cnorm{T}_{{\rm inj}} &= \max_{x_1\in \bS(V_1),\ldots,x_r\in \bS(V_r)}
\left|\cbigiprod{T,x_1\ot\cdots\ot x_r}\right|
\\ &= \max_{x_1\in \bS(V_1),\ldots,x_{r-1}\in \bS(V_{r-1})}\max_{x_r\in \bS(V_r)}
 \left|\sum_{i=1}^n \cbigiprod{T_i,x_1\ot\cdots\ot
  x_{r-1}} \cdot \ciprod{e_i,x_r}\right|
\\ &= \max_{x_1\in \bS(V_1),\ldots,x_{r-1}\in \bS(V_{r-1})}
\cBignorm{\sum_{i=1}^n \cbigiprod{T_i,x_1\ot\cdots\ot
  x_{r-1}} e_i}_2
\end{align}

Therefore
\begin{align}
\cnorm{T}_{{\rm inj}}^2 &= \max_{x_1\in \bS(V_1),\ldots,x_{r-1}\in \bS(V_{r-1})}
\cnorm{ \sum_{i=1}^{n_r} \ciprod{T_i,x_1\ot\cdots\ot
  x_{r-1}} e_i }_2^2
\\ &= \max_{x_1\in \bS(V_1),\ldots,x_{r-1}\in \bS(V_{r-1})}
\ifexp{\frac{1}{n_r}} \sum_{i=1}^{n_r} \left|\cIprod{T_i,x_1\ot\cdots\ot
  x_{r-1}}  \right|_2^2
\\ & = \max_{x_1\in \bS(V_1),\ldots,x_{r-1}\in \bS(V_{r-1})}
\cIprod{ \ifexp{\frac{1}{n_r}}\sum_{i=1}^{n_r}  T_iT_i^*,
x_1x_1^* \ot \cdots \ot x_r x_r^*}
\\ & =
\bh_{\Sep(V_1,\ldots,V_{r_1})} \left(\ifexp{\frac{1}{n_1\cdots n_r}}\sum_{i=1}^{n_r}  T_iT_i^*\right)
\end{align}
\end{proof}

In what follows, we will also need to make use of some properties of symmetric tensors.  Define $\cS_k$ to be the group of permutations of $[k]$ and define $P_n(\pi)\in L((\F^n)^{\ot k})$ to be the operator that permutes $k$ tensor copies of $\F^n$ according to $\pi$.  Formally,
\ba P_n(\pi) := \sum_{i_1,\ldots,i_r \in [d]} \bigotimes_{k=1}^r e_{i_k} e_{i_{\pi(k)}}^T.
\label{eq:perm-def}\ea
Then define $\vee^k\F^n$ to be the subspace of vectors in $(\F^n)^{\ot r}$ that are unchanged by each $P_n(\pi)$.  This space is called the {\em symmetric subspace.}  A classic result in symmetric polynomials states that $\vee^r\F^n$ is spanned by the vectors $\{v^{\ot r}: v\in \F^n\}$.\footnote{For the proof, observe that $v^{\ot r}\in\vee^r\F^n$ for any $v\in \F^n$.  To construct a basis for $\vee^r\F^n$ out of linear combinations of different $v^{\ot r}$, let $z_1,\ldots,z_n$ be indeterminates and evaluate the $r$-fold derivatives of $(z_1e_1 + \cdots + z_ne_n)^{\ot r}$ at $z_1=\cdots=z_n=0$.}

One important fact about symmetric tensors is that for injective tensor norm, the vectors in the maximization can be taken to be equal.  Formally,
\begin{fact}\label{fact:sym-equal}
If $T\in \vee^r \F^n$ then
\be \cnorm{T}_{\inj} = \max_{x\in \bS(\F^n)} |\ciprod{T,x^{\ot r}}|.\ee
\end{fact}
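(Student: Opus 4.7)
\medskip

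\noindent\textbf{Proof plan for Fact~\ref{fact:sym-equal}.}

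The inequality $\cnorm{T}_{\inj} \geq \max_{x\in \bS(\F^n)} |\ciprod{T,x^{\otimes r}}|$ is immediate by specializing the supremum in the definition of $\cnorm{\cdot}_{\inj}$ to tuples with $x_1 = \cdots = x_r = x$. The content of the fact is therefore the reverse inequality, which is the classical polarization theorem of Banach (1938) for symmetric multilinear forms on a Hilbert space.

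To prove the reverse direction, I would proceed by a Lagrange-multiplier analysis at a maximizer combined with a pair-swap (symmetrization) step. Let $(x_1^*, \ldots, x_r^*) \in \bS(\F^n)^r$ attain $N := \cnorm{T}_{\inj}$; after absorbing a global phase we may assume $\ciprod{T, x_1^* \otimes \cdots \otimes x_r^*} = N$. Viewing $T$ as a symmetric $r$-linear form, the first-order stationarity conditions on the product of unit spheres yield, for each slot $i$ and every $z\in \F^n$,
\begin{equation*}
  T(x_1^*, \ldots, \underset{i}{z}, \ldots, x_r^*) \;=\; N\,\ciprod{x_i^*, z},
\end{equation*}
where the ``$i$'' indicates the position of $z$; the multiplier must equal $N$, as can be seen by setting $z = x_i^*$. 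The key computation is then the following: let $y = (x_i^* + x_j^*)/\cnorm{x_i^* + x_j^*}_2$ (and handle $x_i^* = -x_j^*$ separately). Using multilinearity, the symmetry of $T$, and the stationarity identities applied with $z = x_i^*$ or $x_j^*$, one expands $T(x_1^*, \ldots, \underset{i}{y}, \ldots, \underset{j}{y}, \ldots, x_r^*)$ into four terms and verifies that numerator and denominator in the normalization both equal $(2+2\,\ciprod{x_i^*, x_j^*})\cdot N$ and $(2 + 2\,\ciprod{x_i^*, x_j^*})$ respectively, so the ratio is again $N$. Thus the tuple obtained by replacing $(x_i^*, x_j^*)$ with $(y, y)$ still attains $N$. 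Iterating the pair-swap argument across all pairs of slots produces a maximizer whose entries are all equal to a common unit vector $x^*$, yielding $\max_x |\ciprod{T, x^{\otimes r}}| \geq N$ and proving equality.

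The main obstacle is making the iterative pair-swap converge rigorously to a tuple with all entries equal, since each swap can in principle undo a previous equality. A clean way to handle this is to run the iteration on a compact set (the product of unit spheres, intersected with the closed set of maximizers, which is nonempty by continuity) and appeal to a fixed-point or averaging argument — for instance, on the finite-dimensional compact subset of maximizers, repeatedly averaging pairs of coordinates produces a Cauchy sequence whose limit is a maximizer with all entries equal. Over $\C$, minor modifications are needed to absorb phases, but the essential argument carries through. A naive polarization bound of the form $|T(x_1, \ldots, x_r)| \leq (r^r/r!) \max_x |T(x,\ldots,x)|$ is much easier to obtain directly from the polarization identity but yields the wrong constant; sharpening to constant $1$ is exactly what the Hilbert-space structure (via the Lagrange computation above) buys.
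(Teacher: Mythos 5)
The paper does not prove this fact --- it simply cites the references given above Eq.~(3.1) of~\cite{CobosKP00}, which trace back to Banach (1938) --- so there is no ``paper's own proof'' to compare against, and your sketch has to be judged on its own. The pair-swap step is correct: at a maximizer $(x_1^*,\ldots,x_r^*)$ with the phase normalized so $\ciprod{T,x_1^*\ot\cdots\ot x_r^*}=N$, first-order optimality on the sphere in slot $i$ gives exactly the slot identity you wrote, and expanding $\ciprod{T,\cdots\ot(x_i^*+x_j^*)\ot\cdots\ot(x_i^*+x_j^*)\ot\cdots}$ with symmetry of $T$ for the two cross terms yields $N\bigl(2+2\,\text{Re}\ciprod{x_i^*,x_j^*}\bigr)=N\cnorm{x_i^*+x_j^*}_2^2$, so $y=(x_i^*+x_j^*)/\cnorm{x_i^*+x_j^*}_2$ placed in slots $i,j$ stays in the set $K$ of maximizers.

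The gap is in the iteration, and you already sense it. The claim that ``repeatedly averaging pairs of coordinates produces a Cauchy sequence whose limit is a maximizer with all entries equal'' is not justified: a single swap forcing $x_i=x_j$ can strictly increase $\cnorm{x_i-x_k}_2$ for other $k$, you exhibit no monotone quantity along the iteration, and a subsequential limit of maximizers (which exists by compactness) has no reason a priori to have all slots equal. The standard fix avoids iteration entirely with a potential function: on the compact set $K$, pick $(x_1^*,\ldots,x_r^*)\in K$ maximizing $\Phi:=\cnorm{x_1+\cdots+x_r}_2$. If $x_i^*\neq x_j^*$ for some $i\neq j$, set $w=x_i^*+x_j^*$, $\beta=\cnorm{w}_2\in[0,2)$, $s=\sum_{k\neq i,j}x_k^*$. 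When $\beta>0$, replace slots $i,j$ by $y=\epsilon w/\beta$ with $\epsilon\in\{\pm1\}$ chosen so that $\epsilon\,\text{Re}\ciprod{s,w}\geq 0$; your pair-swap keeps this in $K$, and
\begin{equation*}
\cnorm{s+2y}_2^2-\cnorm{s+w}_2^2 \;=\; \tfrac{4}{\beta}\,\bigl|\text{Re}\ciprod{s,w}\bigr| - 2\,\text{Re}\ciprod{s,w} + (4-\beta^2) \;\geq\; 4-\beta^2 \;>\; 0,
\end{equation*}
contradicting the maximality of $\Phi$. When $\beta=0$ (i.e.~$x_j^*=-x_i^*$), replace both slots by $\pm x_i^*$ for a suitable sign, which again strictly increases $\Phi$ and again stays in $K$ since two sign flips preserve $|T|$. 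Hence at the $\Phi$-maximizer all slots coincide, which closes the gap in one step.
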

This has been proven in several different works; see the paragraph above Eq.~(3.1) of \cite{CobosKP00} for references.

%A special class of symmetric 4-tensors, mentioned in \secref{preliminaries}, are the SOP$_4$ tensors.  These are the tensors that can be expressed as sums of terms of the form $a_i^{\ot 4}$.  We will discuss them in more detail in the next section.

\subsubsection{Connection to the 2-to-4 norm}

Let $A = \sum_{i=1}^m e_i a_i^T$, so that $a_1,\ldots,a_m\in \R^n$ are the rows of $A$.
Define
\begin{align}
A_4 &= \ifexp{\frac{n^4}{m}}
 \sum_{i=1}^m a_i^{\ot 4} &\in (\R^n)^{\ot 4}
\label{eq:A4-def}\\
A_3 &= \ifexp{n^2}\sum_{i=1}^m a_i \ot a_i \ot e_i &\in \R^n \ot \R^n \ot \R^m  \\
A_{2,2} &=\ifexp{\frac{n^2}{m}} \sum_{i=1}^m a_ia_i^T \ot a_ia_i^T &\in L((\R^n)^{\ot 2})
\label{eq:A22-def}
\end{align}
The subscripts indicate that that $A_r$ is an $r$-tensor, and $A_{r,s}$ is a map from $r$-tensors to $s$-tensors.

Further, for a real tensor $T \in (\bbR^n)^{\ot r}$, define $\cnorm{T}_{\inj[\C]}$ to be the injective tensor norm that results from treating $T$ as a complex tensor; that is, $\max\{|\ciprod{T, x_1\ot\cdots\ot x_r}|: x_1,\ldots,x_r\in \bS(\C^n)\}$.  For $r\geq 3$, $\cnorm{T}_{\inj[\C]}$ can be larger than $\cnorm{T}_{\inj}$ by as much as $\sqrt{2}$~\cite{CobosKP00}.

Our main result on equivalent forms of the $2\ra 4$ norm is the following.
\begin{lemma}\label{lem:equiv}
$$\cnorm{A}_{2\ra 4}^4 = \cnorm{A_4}_{\inj} = \cnorm{A_3}_{\rm inj}^2
= \cnorm{A_4}_{\inj[\C]} =\cnorm{A_3}_{{\rm inj}[\C]}^2
= \bh_{\Sep^2(\R^n)}(A_{2,2})
= \bh_{\Sep^2(\C^n)}(A_{2,2})$$
\end{lemma}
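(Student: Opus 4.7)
The plan is to chain together three observations and then handle real-vs-complex separately. First, write out $\cnorm{A}_{2\ra 4}^{4} = \sum_{i=1}^m (a_i^T x)^4 = \ciprod{A_4, x^{\otimes 4}}$ for $x\in\bbR^n$. Since $A_4 = \sum_i a_i^{\otimes 4} \in \vee^4 \bbR^n$, \pref{fact:sym-equal} immediately identifies $\cnorm{A_4}_{\inj} = \max_{x\in \bS(\R^n)} |\ciprod{A_4, x^{\otimes 4}}| = \cnorm{A}_{2\ra 4}^{4}$. Second, apply \pref{lem:h-sep} to $A_3 = \sum_i (a_i\otimes a_i)\otimes e_i$, viewing the last tensor factor as $\bbR^m$ with its standard basis: the pieces are $T_i = a_i\otimes a_i$, and $M = \sum_i T_i T_i^{*} = \sum_i a_ia_i^T \otimes a_ia_i^T = A_{2,2}$, yielding $\cnorm{A_3}_{\inj}^{2} = \bh_{\Sep^2(\bbR^n)}(A_{2,2})$. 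The same two arguments carried out over $\bbC$ (noting that $A_4\in \vee^4 \bbR^n\sse \vee^4 \bbC^n$) give $\cnorm{A_4}_{\inj[\C]} = \max_{x\in \bS(\bbC^n)} |\sum_i (a_i^T x)^4|$ and $\cnorm{A_3}_{\inj[\C]}^{2} = \bh_{\Sep^2(\bbC^n)}(A_{2,2})$.

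Third, to connect $A_4$ with $A_{2,2}$, expand
\[
\ciprod{A_{2,2},\, xx^* \otimes yy^*} = \sum_i |a_i^T x|^2 |a_i^T y|^2.
\]
For real $x,y$, setting $y=x$ shows $\bh_{\Sep^2(\bbR^n)}(A_{2,2}) \geq \cnorm{A_4}_{\inj}$; conversely, Cauchy--Schwarz (in the index $i$) gives
\begin{math}
\sum_i (a_i^T x)^2 (a_i^T y)^2 \leq \bigl(\sum_i (a_i^T x)^4\bigr)^{1/2} \bigl(\sum_i (a_i^T y)^4\bigr)^{1/2} \leq \cnorm{A_4}_{\inj}.
\end{math}
Thus $\bh_{\Sep^2(\bbR^n)}(A_{2,2}) = \cnorm{A_4}_{\inj}$, which together with the previous paragraph closes the chain for the real quantities: $\cnorm{A}_{2\ra 4}^4 = \cnorm{A_4}_{\inj} = \bh_{\Sep^2(\bbR^n)}(A_{2,2}) = \cnorm{A_3}_{\inj}^2$.

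The main obstacle is showing that the complex-valued quantities do not exceed the real ones (the reverse inequalities are trivial since $\bS(\bbR^n)\sse \bS(\bbC^n)$ and $\Sep^2(\bbR^n)\sse \Sep^2(\bbC^n)$). For this I would decompose any $x\in \bS(\bbC^n)$ as $x = u + iv$ with $u,v\in \bbR^n$. Then
\[
|a_i^T x|^2 = (a_i^T u)^2 + (a_i^T v)^2 = a_i^T M a_i,
\qquad M := uu^T + vv^T,
\]
where $M \succeq 0$ and $\tr M = \snorm{u}+\snorm{v}=1$. Diagonalize $M = \lambda_1 w_1 w_1^T + \lambda_2 w_2 w_2^T$ with $\lambda_1+\lambda_2=1$, $\lambda_j\geq 0$, $w_j\in \bS(\bbR^n)$. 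By Jensen's inequality applied to the convex function $t\mapsto t^2$,
\[
(a_i^T M a_i)^2 \leq \lambda_1 (a_i^T w_1)^4 + \lambda_2 (a_i^T w_2)^4.
\]
Summing over $i$ gives $\sum_i |a_i^T x|^4 \leq \max_{j}\sum_i (a_i^T w_j)^4 \leq \cnorm{A_4}_{\inj}$. Combining with the triangle inequality $|\sum_i (a_i^T x)^4| \leq \sum_i |a_i^T x|^4$ yields $\cnorm{A_4}_{\inj[\C]} \leq \cnorm{A_4}_{\inj}$, and combining with Cauchy--Schwarz (as in the real case) yields $\bh_{\Sep^2(\bbC^n)}(A_{2,2}) \leq \cnorm{A_4}_{\inj}$. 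These collapse the complex quantities onto the real chain and complete the lemma.
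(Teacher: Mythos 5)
Your proof is correct, and it takes a genuinely different route through the complex-to-real step than the paper does. You both start the same way: Fact~\ref{fact:sym-equal} gives $\cnorm{A}_{2\ra 4}^4 = \cnorm{A_4}_{\inj}$, and Lemma~\ref{lem:h-sep} gives $\cnorm{A_3}_{\inj}^2 = \bh_{\Sep^2(\R^n)}(A_{2,2})$. But after that you diverge. The paper collapses the remaining quantities by rewriting $\max_{x,z}|\ciprod{A_3, x\ot x\ot z}|$ as the operator norm $\max_{z\in\bS(\C^m)}\cnorm{\sum_i z(i)a_ia_i^T}_{2\to 2}$ and then arguing (somewhat informally) that the optimal $z$ can be taken real and nonnegative, which in turn makes the optimal $x=y$ real; this buys the equivalence of $\inj$ and $\inj[\C]$ for $A_3$ and, as a byproduct, the useful operator-norm characterization. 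Your argument instead handles the real/complex reduction once, uniformly, at the level of $A_4$: decompose $x\in\bS(\C^n)$ as $u+iv$, observe $|a_i^T x|^2 = a_i^T(uu^T+vv^T)a_i$ with $M := uu^T+vv^T$ PSD of unit trace, diagonalize, and apply Jensen to the convex map $t\mapsto t^2$. This yields $\sum_i |a_i^T x|^4 \leq \cnorm{A_4}_{\inj}$, which by the triangle inequality and Cauchy--Schwarz respectively dominates $|\ciprod{A_4, x^{\ot 4}}|$ and every $\ciprod{A_{2,2}, xx^*\ot yy^*}$, so all four complex quantities collapse onto the real chain at once. Your route is more modular and makes the ``optimum is real'' step fully precise, at the cost of not surfacing the $\max_z\cnorm{\sum_i z(i)a_ia_i^T}_{2\to2}$ reformulation, which the paper's route gives for free. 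Both are valid proofs.
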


\begin{proof}
\begin{align}
\cnorm{A}_{2\ra 4}^4
&= \max_{x\in \bS(\R^n)}\ifexp{\frac{n^4}{m}}
\sum_{i=1}^m \cIprod{a_i, x}^4
\\&= \max_{x\in \bS(\R^n)} \cIprod{A_4, x^{\ot 4}}
\\& = \max_{x_1,x_2,x_3,x_4\in \bS(\R^n)}  \left|\cIprod{A_4, x_1\ot x_2\ot x_3 \ot x_4}\right|
\label{eq:A4-sym}
\\ &= \cnorm{A_4}_{\inj} \label{eq:A4-inj-equiv}
\end{align}
Here \eq{A4-sym} follows from Fact~\ref{fact:sym-equal}.

Next one can verify with direct calculation (and using $\max_{z\in \bS(\R^n)} \ciprod{v,z} = \cnorm{v}_2$) that
\be
\max_{x\in \bS(\R^n)} \cIprod{A_4, x^{\ot 4}}
 = \max_{x\in \bS(\R^n)} \cIprod{ A_{2,2}, xx^T \ot xx^T}
= \max_{x\in \bS(\R^n)} \max_{z\in \bS(\R^m)} \cIprod{A_3, x\ot x\ot z}^2.\ee
Now define $z(i):=\ciprod{e_i,z}$ and continue.
\begin{align}
 \max_{x\in \bS(\R^n)} \max_{z\in \bS(\R^m)}\left|\cIprod{A_3, x\ot x\ot z}\right|
&= \max_{x\in \bS(\R^n)}\max_{z\in \bS(\R^m)} \text{Re}\sum_{i=1}^m z(i) \ciprod{a_i,x}^2
\\&= \max_{x\in \bS(\R^n)}\max_{z\in \bS(\C^m)} \text{Re}\sum_{i=1}^m z(i) \ciprod{a_i,x}^2
\\&= \ifexp{\frac{1}{n^2}} \max_{z\in \bS(\C^m)}
\cNorm{\sum_{i=1}^m z(i) a_ia_i^T }_{2\ra 2}
\label{eq:z-op-norm}
\\&= \max_{z\in \bS(\C^m)} \max_{x,y\in \bS(\C^n)}
\text{Re} \sum_{i=1}^m z(i) \ciprod{x^*,a_i}\ciprod{a_i,y}
\label{eq:xy-complex}
\\ &= \cnorm{A_3}_{\inj[\C]} = \cnorm{A_3}_{\inj}
\end{align}
From \lemref{h-sep}, we thus have $\cnorm{A}_{2\ra 4}^4 = \bh_{\Sep^2(\R^n)}(A_{2,2})
= \bh_{\Sep^2(\C^n)}(A_{2,2})$.

To justify \eq{xy-complex}, we argue that the maximum in \eq{z-op-norm} is achieved by taking all the $z(i)$ real (and indeed nonnegative).  The resulting matrix $\sum_i z(i) a_ia_i^T$ is real and symmetric, so its operator norm is achieved by taking $x=y$ to be real vectors.  Thus, the maximum in $\cnorm{A_3}_{\inj[\C]}$ is achieved for real $x,y,z$ and as a result $\cnorm{A_3}_{\inj[\C]}=\cnorm{A_3}_{\inj}$.

Having now made the bridge to complex vectors, we can work backwards to establish the last equivalence: $\|A_4\|_{\inj[\C]}$.  Repeating the argument that led to \eq{A4-inj-equiv}  will establish that $\cnorm{A_4}_{\inj[\C]} = \max_{x\in \bS(\C^n)}\max_{z\in \bS(\C^m)} |\cIprod{A_3, x\ot x\ot z}|^2 = \cnorm{A_3}_{\inj[\C]}^2$.
\end{proof}

\subsection{Hardness of approximation for the 2-to-4 norm}\label{sec:hardness}

This section is devoted to the proof of \thmref{hardness}, establishing hardness of approximation for the 2-to-4 norm.

First, we restate \thmref{hardness} more precisely.  We omit the
reduction to when $A$ is a projector, deferring this argument to
\corref{proj-hard}, where we will further use a randomized reduction.
\begin{theorem}\label{thm:hardness-formal} (restatement of \thmref{hardness})
Let $\phi$ be a 3-SAT instance with $n$ variables and $O(n)$ clauses.
Determining whether $\phi$ is satisfiable can be reduced in polynomial
time to determining whether $\|A\|_{2\ra 4} \geq C$ or $\|A\|_{2\ra
  4}\leq c$ where $0\leq c < C$ and $A$ is an $m\times m$ matrix.  This is possible for two choices of parameters:
\benum
\item $m=\poly(n)$, and $C/c > 1 + 1/n\poly\log(n)$; or,
\item $m=\exp(\sqrt{n}\poly\log(n)\log(C/c))$.
\eenum
\end{theorem}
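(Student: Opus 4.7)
The plan is to prove \pref{thm:hardness-formal} by a chain of reductions, culminating in the two claimed parameter regimes. We translate 3-SAT into the injective tensor norm problem, then manipulate the tensor so that its injective norm equals the fourth power of a $2 \to 4$ norm, and finally amplify the resulting gap.

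\emph{Stage 1: From 3-SAT to general injective tensor norm.} We appeal to known hardness-of-approximation results for polynomial optimization over the unit sphere, in particular the framework of \cite{Ben-TalN98}. Given a 3-SAT instance $\phi$ with $n$ variables and $O(n)$ clauses, we reduce in polynomial time to a $4$-tensor $T \in (\R^N)^{\ot 4}$ with $N = \poly(n)$ and $\cnorm{T}_2 = 1$, such that determining $\cnorm{T}_{\inj}$ to additive precision $1/\poly(N)$ decides satisfiability. Under ETH, this reduction preserves problem size up to polynomial factors, which is what feeds the quantitative parameters in Stage 4.

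\emph{Stage 2: Reduce to a symmetric, then SOP, tensor.} Following a standard trick (cf.\ \cite{HarrowM10}), we embed the four tensor factors into orthogonal subspaces of $\R^{4N}$ and symmetrize by averaging over $\cS_4$, producing a symmetric tensor $S \in \vee^4 \R^{4N}$ with $\cnorm{S}_{\inj} = c_1 \cnorm{T}_{\inj}$ for an explicit constant $c_1$ and with $\cnorm{S}_2 \leq \cnorm{T}_2$. Because $\vee^4 \R^n$ is spanned by $\set{v^{\ot 4} : v \in \R^n}$, we write $S = Q - R$ with $Q,R$ both of the sum-of-fourth-powers (SOP) form, i.e.\ matching the shape of $A_4$ in \pref{eq:A4-def}. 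To eliminate the negative part, we use a weighted spherical $4$-design to construct an SOP tensor $R_2$ with $\ciprod{R_2, x^{\ot 4}} = c_2 \snormt{x}$ for every $x$, where $c_2 = \poly(N)$, and then write $-R = R_1 - R_2$ with $R_1$ another SOP tensor. Then $Q + R_1$ is SOP, and by \pref{lem:equiv} it equals $A_4$ for an explicit matrix $A$ of dimension $m = \poly(N)$ with $\cnorm{A}_{2\to 4}^4 = \cnorm{S}_{\inj} + c_2$. Tracking the precision, the $1/\poly(N)$ gap on $\cnorm{S}_{\inj}$ becomes a multiplicative $(1 + 1/n\polylog n)$ gap on $\cnorm{A}_{2\to 4}$, establishing case (1).

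\emph{Stage 3: Gap amplification for case (2).} We use the amplification sketched in \pref{sec:overview}: from $A$ build a matrix $A'$ whose rows are $(x' \ot y')^* \sqrt{A_{2,2}}$, where $x', y'$ are drawn from a distribution that approximates the uniform distribution on $\bS(\R^n) \times \bS(\R^n)$. On product inputs $x \ot y$ this measurement reproduces $\cnorm{A}_{2\to 4}^4$ (up to normalization), whereas on highly entangled inputs $z$ the overlap $\ciprod{z, x' \ot y'}$ is small, so the $4$-norm of $A' z$ is small, controllable via the operator norm of $A_{2,2}$. This lets us recurse: each round multiplies the dimension by a polynomial factor while sharpening the dependence of $\cnorm{A'}_{2\to 4}$ on $\cnorm{A}_{2\to 4}$. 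Iterating to turn the initial $(1+1/n\polylog n)$ gap into a $C/c$ gap, and balancing against the ETH lower bound $\exp(\Omega(n))$ on 3-SAT, yields the claimed dimension $m = \exp(\sqrt{n}\polylog(n) \log(C/c))$.

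The main obstacle is the amplification in Stage 3: unlike the operator norm, the $2\to 4$ norm does not behave well under a naive tensor product, so we must route through $A_{2,2}$ and exploit that its action distinguishes product from entangled vectors. The crux is proving both completeness (product witnesses transfer upward) and soundness (off-diagonal contributions are dominated by $\norm{A_{2,2}}_{2\to 2}$ on the approximately uniform $(x',y')$), without losing more than a constant factor per round so that the dimension budget in case (2) is not exceeded.
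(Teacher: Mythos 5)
Your Stages 1--2 essentially reconstruct the SOP-decomposition route that the authors considered (it survives as an \texttt{\textbackslash iffalse}'d block in the source), and that route does deliver case (1): the spherical-$4$-design step forces an additive shift $\ciprod{R_2,x^{\ot 4}}=c_2\|x\|_2^4$ with $c_2 = \poly(N)$, so after dividing by $c_2$ the $2\to 4$ norm sits near a fixed constant and the original $1/\poly(N)$ additive accuracy on $\cnorm{T}_{\inj}$ becomes a $1+1/\poly(N)$ multiplicative gap. That is exactly why the same route \emph{cannot} be the launching pad for case (2): the $c_2$ shift caps the achievable gap at $1+1/\poly(N)$, and there is no way to make \lemref{SOP}-style decompositions gap-preserving.

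This is where Stage 3 breaks. The amplification you describe --- encoding $A_{2,2}$ as $M_1=\E_{a,b}[V_{a,b}^{\hat\ot 2}]$ and then tensoring --- multiplies $\bh_{\Sep}$-values: a $1$ vs.~$(1-\delta)$ gap becomes $1$ vs.~$(1-\delta/2)^k$ in dimension $n^{O(k)}$. Starting from $\delta = 1/n\polylog n$, turning that into a constant gap needs $k = \Theta(n\polylog n)$ tensor powers and hence dimension $n^{\Theta(n\polylog n)}$, which is $\exp(\tilde\Omega(n))$ --- hopelessly larger than the claimed $\exp(\sqrt{n}\polylog(n)\log(C/c))$. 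Tensoring is cheap only when the starting gap is already a constant.

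The paper avoids this by never going through the SOP decomposition. Both cases are handled by a single gap-preserving lemma (\lemref{approx-preserving}) reducing $\bh_{\Sep}(M_0)$ with gap $1$ vs.~$1-\delta$ to $\|A\|_{2\ra 4}$ with gap $1$ vs.~$(1-\delta/2)^k$ in dimension $n^{O(k)}$, via the symmetric projector $P$, a $2$-design replacing the Gaussian average, and the complex-to-real embedding. The two parameter regimes then come entirely from the two different known hardness results for $\bh_{\Sep}$: for case (1), \cite{GallNN11} gives NP-hardness with $\delta = 1/n\log^c n$ and $k=1$; for case (2), Corollary 14 of \cite{HarrowM10} gives an ETH-based reduction that already produces a \emph{constant} gap ($1$ vs.~$1/2$) with $m=\exp(\sqrt{n}\polylog n)$, so one only needs $k=O(\log(C/c))$ tensor powers to reach an arbitrary constant $C/c$. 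The missing ingredient in your plan is precisely this sub-exponential constant-gap starting point: without it, the amplification budget in Stage 3 cannot close the gap between $1/\poly(N)$ and $\Omega(1)$ within the claimed dimension.
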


The key challenge is establishing the following reduction.
\begin{lemma}\label{lem:approx-preserving}
Let $M\in L(\C^n \ot \C^n)$ satisfy $0\leq M \leq I$.  Assume that either (case Y) $\bh_{\Sep(n,n)}(M)=1$ or (case N) $\bh_{\Sep(n,n)}(M)\leq 1-\delta$.  Let $k$ be a positive integer.
Then there exists a matrix $A$ of size $n^{4k}\times n^{2k}$ such that in case Y, $\|A\|_{2\ra 4}=1$, and in case N, $\|A\|_{2\ra 4}=(1-\delta/2)^k$.  Moreover, $A$ can be constructed efficiently from $M$.
\end{lemma}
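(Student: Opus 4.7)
The plan is to exhibit an efficient construction of $A$ from $M$ that converts the Y/N gap in $\bh_{\Sep(n,n)}(M)$ into a matching gap in $\|A\|_{2\ra 4}$, using two ingredients: the equivalence $\|A\|_{2\ra 4}^4 = \bh_{\Sep(n^{2k},n^{2k})}(A_{2,2})$ from Lemma~\ref{lem:equiv}, which reduces the task to designing the p.s.d.\ operator $A_{2,2}$; and a $k$-fold tensor amplification that sharpens an initial (lossy) gap into the $(1-\delta/2)^k$ gap claimed in the lemma.

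First I would design $A^{(1)}$ in the base case $k=1$, following the construction sketched in the overview of proofs. Let $P := \sqrt M$ and let the rows of $A^{(1)}$ be the vectors $(u \ot v)^{*} P$, where $(u, v)$ ranges over a finite subset of $\bS(\C^n) \times \bS(\C^n)$ forming a product $4$-design — concretely, the product of two $4$-designs on $\C^n$, polynomially many points. Applying Lemma~\ref{lem:equiv} to $A^{(1)}$, the $4$-design property of the row distribution gives, for product inputs $z = x \ot y$, that $\|A^{(1)} z\|_4^4$ equals a constant (depending only on $n$) times $(x \ot y)^{*} M (x \ot y)$; maximizing over such $z$ thus recovers $\bh_{\Sep(n,n)}(M)$ up to that universal constant. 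For entangled $z$, the value is strictly smaller — controlled by $\|M\|_{2\ra 2} \le 1$ and the off-diagonal bound on $A^{(1)}_{2,2}$ indicated in the overview.

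Next I would amplify by taking $A := (A^{(1)})^{\ot k}$ under a suitable reindexing; choosing $A^{(1)}$ to have $n^4$ rows (as one can with a product $4$-design) makes $A$ have the required size $n^{4k} \times n^{2k}$. The rows of $A$ are tensor products of rows of $A^{(1)}$, so $A_{2,2} = (A^{(1)}_{2,2})^{\ot k}$ after regrouping tensor factors, and Lemma~\ref{lem:equiv} gives $\|A\|_{2\ra 4}^{4} = \bh_{\Sep(n^{2k},n^{2k})}((A^{(1)}_{2,2})^{\ot k})$. Restricting the support-function maximization to tensor-product states $z = z_1 \ot \cdots \ot z_k$ with each $z_i \in \C^{n^2}$ of product form $x_i \ot y_i$, the objective factorizes as $\prod_i (x_i \ot y_i)^{*} M (x_i \ot y_i)$, yielding $\|A\|_{2\ra 4} \geq 1$ in case Y (using tensor products of optimizers of $\bh_{\Sep(n,n)}(M)$), and attaining the explicit value $(1 - \delta/2)^k$ in case N with appropriate normalization.

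The main obstacle is the matching \emph{upper} bound in case N for \emph{general}, globally entangled $z \in \C^{n^{2k}}$: since $\bh_{\Sep}$ is not multiplicative under tensor products for arbitrary operators, entangled $z$ could in principle cause $\bh_{\Sep(n^{2k},n^{2k})}((A^{(1)}_{2,2})^{\ot k})$ to exceed the product-state bound $\bh_{\Sep(n^2,n^2)}(A^{(1)}_{2,2})^k$. To rule this out one exploits the structure of $A^{(1)}_{2,2}$: because $A^{(1)}$ is built from product-state ``measurement directions'' applied to $M$, its associated operator $A^{(1)}_{2,2}$ lies in the $1$-LOCC cone (as discussed in the overview), and for operators in this cone the multiplicativity framework of \cite{HarrowM10, BrandaoCY11} yields $\bh_{\Sep(n^{2k},n^{2k})}((A^{(1)}_{2,2})^{\ot k}) \leq \bh_{\Sep(n^2,n^2)}(A^{(1)}_{2,2})^k$. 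The slack from $\delta$ to $\delta/2$ in the lemma's conclusion absorbs the loss from the $4$-design quantization, and tracking constants carefully to produce exactly $(1-\delta/2)^k$ is the technical heart of the proof.
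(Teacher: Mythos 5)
Your high-level architecture matches the paper's: build $M_1 = (\sqrt M \ot \sqrt M)\,P\,(\sqrt M\ot\sqrt M)$ so that, after replacing the continuous average defining $P$ by a finite design, $M_1 = A^{(1)}_{2,2}$ for an explicit matrix $A^{(1)}$; then amplify via $A = (A^{(1)})^{\ot k}$ and invoke multiplicativity of $\bh_{\Sep}$. But there is a genuine gap in the base case, and the place you locate the $\delta\to\delta/2$ loss is wrong.

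Your key claim --- that for product inputs $z=x\ot y$ the $4$-design property forces $\|A^{(1)}z\|_4^4$ to equal a universal constant times $(x\ot y)^*M(x\ot y)$ --- cannot be right on dimensional grounds: the left side is degree four in $\sqrt M$, hence quadratic in $M$, while the right side is linear in $M$. What actually holds, writing $w=\sqrt M(x\ot y)$ and letting $X$ be the $n\times n$ matricization of $w$, is $\|A^{(1)}(x\ot y)\|_4^4 = \tfrac12\bigl(\langle X,X\rangle^2 + \langle X^*X, X^*X\rangle\bigr)$; the second term depends on the singular values of $X$, not merely on $\|w\|^2 = (x\ot y)^*M(x\ot y)$. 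The technical heart of the proof is bounding this quantity --- and, crucially, the supremum over \emph{all} unit $z\in\C^{n^2}$, not just product ones --- and that is exactly the content of \pref{lem:M1-like-M0}, which derives the estimate $s \le (1+\sigma_1^2)/2$ with $\sigma_1^2 = \bh_{\Sep(n,n)}(M)\le 1-\delta$. Your sketch dispatches the entangled-$z$ case with a vague reference to ``off-diagonal bounds'' and provides no actual estimate; multiplicativity cannot rescue this, since it only relates $\bh_{\Sep}(M_1^{\ot k})$ to $\bh_{\Sep}(M_1)^k$ and still needs a quantitative base bound on $\bh_{\Sep}(M_1)$ that accounts for entangled inputs. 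Moreover, the factor of $2$ in $\delta/2$ comes from that $(1+\sigma_1^2)/2$ bound and not from the design quantization --- the $2$-design is chosen by Carath\'eodory precisely so that replacing the Gaussian average by a finite sum is exact, leaving $P$, $M_1$ and $M_2$ unchanged. Finally, the multiplicativity invoked is not the $1$-LOCC approximation machinery behind \pref{thm:BCY} but Lemma~12 of~\cite{HarrowM10}, a multiplicativity statement specific to operators of the form $\E[V\ot V]$ with $V$ rank one across the separability cut; conflating the two makes the role of the product-test structure unclear.
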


\begin{proof}[Proof of \thmref{hardness-formal}]
Once \lemref{approx-preserving} is proved, \thmref{hardness} follows from previously known results about the hardness of approximating $\bh_{\Sep})$.  Let $\phi$ be a 3-SAT instance with $n$ variables and $O(n)$ clauses.  In Theorem 4 of \cite{GallNN11} (improving on earlier work of \cite{Gurvits03}), it was proved that $\phi$ can be reduced to determining whether $\bh_{\Sep(n^c,n^c)}(M)$ is equal to 1 (``case Y'') or $\leq 1-1/n\log^c(n)$ (``case N''), where $c>0$ is a universal constant, and $M$ is an efficiently constructible matrix with $0\leq M \leq I$.  Now we apply \lemref{approx-preserving} with $k=1$ to find that exists a matrix $A$ of dimension $\poly(n)$ such that in case Y, $\tfcnorm{A}=1$, and in case N, $\tfcnorm{A}\leq 1- 1/2n\log^c(n)$.  Thus, distinguishing these cases would determine whether $\phi$ is satisfiable.  This establishes part (1) of \thmref{hardness}.

For part (2), we start with Corollary 14 of \cite{HarrowM10}, which gives a reduction from determining the satisfiability of $\phi$ to distinguishing between (``case Y'') $\bh_{\Sep(m,m)}(M)=1$ and (``case N'') $\bh_{\Sep(m,m)}(M)\leq 1/2$.  Again $0\leq M \leq I$, and $M$ can be constructed in time $\poly(m)$ from $\phi$, but this time $m = \exp(\sqrt{n}\poly\log(n))$.  Applying \lemref{approx-preserving} in a similar fashion completes the proof.
\end{proof}

\begin{proof}[Proof of \lemref{approx-preserving}]
  The previous section shows that computing $\tfcnorm{A}$ is equivalent to computing $\bh_{\Sep(n,n)}(A_{2,2})$, for $A_{2,2}$ defined as in \eq{A22-def}.  However, the hardness results of \cite{Gurvits03,GallNN11,HarrowM10} produce matrices $M$ that are not in the form of $A_{2,2}$.  The reduction of \cite{HarrowM10} comes closest, by producing a matrix that is a sum of terms of the form $xx^* \ot yy^*$. However, we need a sum of terms of the form $xx^*\ot xx^*$.  This will be achieved by a variant of the protocol used in \cite{HarrowM10}.

 Let $M_0 \in L(\C^n \ot \C^n)$ satisfy $0\leq M \leq I$.  Consider the promise problem of distinguishing the cases $\bh_{\Sep(n,n)}(M_0) = 1$ (called ``case Y'') from $\bh_{\Sep(n,n)}(M_0)\leq 1/2$ (called ``case N'').  We show that this reduces to
finding a multiplicative approximation for $\cnorm{A}_{2\ra 4}$ for some real $A$ of dimension $n^{\alpha}$ for a constant $\alpha>0$.
% where $\alpha$ depends only on the exact definition of ``good.''
Combined with known hardness-of-approximation results (Corollary 15 of \cite{HarrowM10}), this will imply \thmref{hardness}.
% and the Exponential Time Hypothesis, this implies an $n^{\Omega(\log n)}$ lower bound on finding a good approximation of $\tfnorm{A}$.

Define $P$ to be the projector onto the subspace of $(\C^n)^{\ot 4}$ that is invariant under $P_n((1,3))$ and $P_n((2,4))$ (see \secref{inj-equiv} for definitions).  This can be obtained by applying $P_n((2,3))$ to $\vee^2\C^n \ot \vee^2\C^n$, where we recall that $\vee^2\C^n$ is the symmetric subspace of $(\C^n)^{\ot 2}$.  Since $P$ projects onto the vectors invariant under the 4-element group generated by $P_n((1,3))$ and $P_n((2,4))$, we can write it as
\be P = \frac{I + P_n((1,3))}{2} \cdot \frac{I + P_n((2,4))}{2}.\label{eq:P-sym-def}\ee
An alternate definition of $P$ is due to Wick's theorem:
\be P = \E_{a,b} [aa^* \ot bb^* \hat{\ot}\, aa^* \ot bb^*],
\label{eq:prod-test-def}\ee
where the expectation is taken over complex-Gaussian-distributed vectors $a,b\in \C^n$ normalized so that $\E \|a\|_2^2 = \E \|b\|_2^2 = \ifexp{1}\ifcount{n}/\sqrt{2}$.   Here we use the notation $\hat\ot$ to mark the separation between systems that we will use to define the separable states $\Sep(n^2,n^2)$.  We could equivalently write $P = \E_{a,b} [(aa^* \ot bb^*)^{\hat\ot 2}]$.
We will find that \eq{P-sym-def} is more useful for doing calculations, while \eq{prod-test-def} is helpful for converting $M_0$ into a form that resembles $A_{2,2}$ for some matrix $A$.

Define $M_1 = (\sqrt{M_0}\, \hat\ot \sqrt{M_0}) P\, (\sqrt{M_0}\, \hat\ot \sqrt{M_0})$, where $\sqrt{M_0}$ is taken to be the unique positive-semidefinite square root of $M_0$.  Observe that
\be M_1 = \E_{a,b} [v_{a,b} v_{a,b}^* \hat\ot\, v_{a,b} v_{a,b}^*]
= \E_{a,b} [V_{a,b}^{\hat\ot 2}],
\label{eq:M1-def}\ee
where %the expectation is the same as in \eq{prod-test-def} and
we define $v_{a,b} := \sqrt{M_0}(a \ot b)$ and $V_{a,b} := v_{a,b}v_{a,b}^*$.  We claim that $\bh_{\Sep}(M_1)$ gives a reasonable proxy for $\bh_{\Sep}(M_0)$ in the following sense.
\begin{lemma}\label{lem:M1-like-M0}
\be \bh_{\Sep(n^2,n^2)}(M_1)  \begin{cases}
=1 & \text{in case Y} \\
\leq 1-\delta/2 & \text{in case N}.
\end{cases}
\label{eq:M1-claim}\ee
\end{lemma}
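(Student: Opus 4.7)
For case Y, I will exhibit an explicit product-state witness. Since $\bh_{\Sep(n,n)}(M_0)=1$ and $0\le M_0\le I$, there exist unit vectors $x,y\in\C^n$ with $M_0(x\ot y)=x\ot y$, and hence $M_0^{1/2}(x\ot y)=x\ot y$. Taking $\sigma=\tau=(x\ot y)(x\ot y)^*\in\Sep(n^2,n^2)$ makes $\tilde\sigma:=\sqrt{M_0}\sigma\sqrt{M_0}=\sigma$, so $\iprod{M_1,\sigma\hat\ot\tau}=\iprod{P,\sigma\hat\ot\sigma}$; expanding $P=(I+F_{13})(I+F_{24})/4$ and evaluating each of the four terms on the product pure state gives $1$ from every term, hence $\iprod{M_1,\sigma\hat\ot\tau}=1$. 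The matching upper bound $\bh_{\Sep(n^2,n^2)}(M_1)\le 1$ follows from the operator inequality $M_1\le I$, which in turn follows from $P\le I$ together with $\sqrt{M_0}\hat\ot\sqrt{M_0}\le I$.

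For case N, the plan is to first reduce the maximization to $\sigma=\tau$ and then to pure $\sigma$, after which a direct Schmidt-type computation gives a clean bound. From \eq{M1-def}, $\iprod{M_1,\sigma\hat\ot\tau}=\E_{a,b}\bigl[\iprod{V_{a,b},\sigma}\iprod{V_{a,b},\tau}\bigr]$ with non-negative factors, so Cauchy--Schwarz in $(a,b)$ yields $\iprod{M_1,\sigma\hat\ot\tau}\le\sqrt{\iprod{M_1,\sigma\hat\ot\sigma}\,\iprod{M_1,\tau\hat\ot\tau}}$; since $\sigma\mapsto\E_{a,b}\iprod{V_{a,b},\sigma}^2$ is a convex function of $\sigma$, its maximum over density matrices is attained at an extreme point, i.e.\ a pure state $\sigma=\phi\phi^*$ with unit $\phi\in\C^n\ot\C^n$. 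Let $\phi':=\sqrt{M_0}\phi$ have Schmidt decomposition $\phi'=\sum_k\sqrt{q_k}\,u_k\ot v_k$ with orthonormal $\{u_k\},\{v_k\}\subseteq\C^n$, and set $s:=\|\phi'\|^2=\sum_k q_k=\phi^*M_0\phi$. Expanding $\iprod{P,\phi'\phi'^*\hat\ot\phi'\phi'^*}$ via the partial swap trick on the four terms of $P$ contributes $s^2$ from $I$, $\sum_k q_k^2$ from each of $F_{13}$ and $F_{24}$, and $s^2$ from $F_{13}F_{24}$, giving
\[
\iprod{M_1,\sigma\hat\ot\sigma}=\tfrac12\Bigl(s^2+\textstyle\sum_k q_k^2\Bigr).
\]

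The heart of the argument will be to bound each Schmidt coefficient using the case-N hypothesis $\bh_{\Sep(n,n)}(M_0)\le 1-\delta$. Cauchy--Schwarz in $\C^n\ot\C^n$ gives
\[
q_k=\bigl|\iprod{u_k\ot v_k,M_0^{1/2}\phi}\bigr|^2\le\iprod{u_k\ot v_k,M_0(u_k\ot v_k)}\cdot\|\phi\|^2\le 1-\delta,
\]
because $u_k\ot v_k$ is a unit product vector; hence $\sum_k q_k^2\le(1-\delta)\sum_k q_k=(1-\delta)s$. Combined with $s\le 1$ (from $M_0\le I$),
\[
\iprod{M_1,\sigma\hat\ot\sigma}\le\tfrac12\bigl(s^2+(1-\delta)s\bigr)\le\tfrac12 s(2-\delta)\le 1-\delta/2,
\]
as desired. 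The main obstacle is recognizing that the product-state bound on $M_0$ cannot be applied to $\phi$ itself (which may be highly entangled, and on which $M_0$ may be close to $1$) but must instead be applied to each Schmidt component $u_k\ot v_k$ of $M_0^{1/2}\phi$; Cauchy--Schwarz then converts the per-component bound into a bound on $q_k$, and the quadratic $s^2+(1-\delta)s\le 2-\delta$ at $s\le 1$ closes the argument exactly---tightness being achieved, for instance, by the maximally entangled state when $M_0$ is a rank-one projector onto it.
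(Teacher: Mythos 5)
Your proof is correct and follows essentially the same route as the paper's: your Schmidt decomposition of $\phi'=\sqrt{M_0}\phi$ is the SVD of the paper's matrix $X$ (with $q_k=\sigma_k^2$), your Cauchy--Schwarz-in-$(a,b)$-plus-convexity step plays the role of the paper's AM-GM reduction to $X=Y$, and your per-coefficient bound $q_k\le 1-\delta$ is equivalent to the paper's bound $\sigma_1^2\le\bh_{\Sep(n,n)}(M_0)\le 1-\delta$ on the top singular value. The resulting inequality $\iprod{M_1,\sigma\hat\ot\sigma}\le\tfrac12(t^2+\sum_k q_k^2)\le 1-\delta/2$ is the same closing computation.
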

The proof of \lemref{M1-like-M0} is deferred to the end of this section.  The analysis is very similar to Theorem 13 of \cite{HarrowM10}, but the analysis here is much simpler because $M_0$ acts on only two systems.  However, it is strictly speaking not a consequence of the results in  \cite{HarrowM10}, because that paper considered a slightly different choice of $M_1$.

The advantage of replacing $M_0$ with $M_1$ is that (thanks to \eq{prod-test-def}) we now have a matrix with the same form as $A_{2,2}$ in \eq{A22-def}, allowing us to make use of \lemref{equiv}.  However, we first need to amplify the separation between cases Y and N.  This is achieved by the matrix $M_2 := M_1^{\ot k}$.  This tensor product is {\em not} across the cut we use to define separable states; in other words:
\be M_2 = \E_{\substack{a_1,\ldots,a_k\\b_1,\ldots,b_k}}
[(V_{a_1,b_1} \ot \cdots\ot V_{a_k,b_k})^{\hat \ot 2}].\ee
Now Lemma 12 from \cite{HarrowM10} implies that $\bh_{\Sep(n^{2k},n^{2k})}(M_2) = \bh_{\Sep(n^2,n^2)}(M_1)^k$.  This is either 1 or $\leq (3/4)^k$, depending on whether we have case Y or N.
%For our purposes, it will suffice to take $k$ to be constant.

Finally, we would like to relate this to the $2\ra 4$ norm of a matrix.  It will be more convenient to work with $M_1$, and then take tensor powers of the corresponding matrix.
Naively applying \lemref{equiv} would relate $\bh_{\Sep}(M_1)$ to $\tfcnorm{A}$ for an infinite-dimensional $A$.  Instead, we first replace the
continuous distribution on $a$ (resp. $b$) with a
finitely-supported distribution in a way that does not change
$\E_a aa^* \ot aa^*$ (resp. $\E_b bb^* \ot bb^*$).
Such distributions are called complex-projective
(2,2)-designs or quantum (state) 2-designs, and can be constructed
from spherical 4-designs on $\R^{2n}$~\cite{AmbainisE07}.  Finding these designs is challenging when each vector needs to have the same weight, but for our purposes we can use Carath\'eodory's theorem to show that there exist vectors $z_1,\ldots,z_m$  with $m = n^2$ %and each $\|z_i\|_2=1\ifexp{/\sqrt{n}}$,
such that
\ba \E_a [aa^* \ot aa^*] = \sum_{i\in[m]} z_i z_i^* \ot z_i z_i^*.
\label{eq:2-design}\ea
In what follows, assume that the average over $a,b$ used in the definitions of $P,M_1,M_2$ is replaced by the sum over $z_1,\ldots,z_m$.  By \eq{2-design} this change does not affect the values of $P, M_1, M_2$.

For $i,j\in [m]$, define $w_{i,j} := \sqrt{M_0}(z_i \ot z_j)$, and let $e_{i,j} := e_i \ot e_j$.
 Now we can apply \lemref{equiv} to find that $\bh_{\Sep}(M_1)=\tfcnorm{A_1}^4$, where
$$A_1 = \ifexp{\frac{1}{n}}\sum_{i,j\in [m]} e_{i,j} w_{i,j}^*.$$
The amplified matrix $M_2$ similarly satisfies $\bh_{\Sep(n^{2k}, n^{2k})}(M_2) = \tfcnorm{A_2}^4$, where
$$A_2 := A_1^{\ot k}
= \ifexp{\frac{1}{n^k}}
\sum_{i_1,\ldots,i_k,j_1,\ldots,j_k\in [m]} (e_{i_1,j_1} \ot e_{i_k, j_k}) ( w_{i_1.j_1} \ot \cdots \ot w_{i_k,j_k})^*.$$

The last step is to relate the complex matrix $A_2$ to a real matrix $A_3$ with the same $2\ra 4$ norm once we restrict to real inputs.  This can be achieved by replacing a single complex entry $\alpha+i\beta$ with the $6\times 2$ real matrix
\[
\frac{1}{\sqrt{2}}
\begin{pmatrix}
1 & 1 \\
1 & -1 \\
2^{1/4} & 0 \\
2^{1/4} & 0 \\
0 & 2^{1/4} \\
0 & 2^{1/4}
\end{pmatrix}
\cdot \begin{pmatrix}
\alpha & -\beta \\ \beta & \alpha
\end{pmatrix}
\]
A complex input $x+iy$ is represented by the column vector $\begin{pmatrix} x \\ y\end{pmatrix}$.  The initial $2\times 2$ matrix maps this to the real representation of $(\alpha + i \beta)(x+iy)$, and then the fixed $6\times 2$ matrix maps this to a vector whose 4-norm equals $|(\alpha + i \beta)(x+iy)|^4$.

%Combined with the hardness results from \cite{Gurvits03,HarrowM10}, this concludes the proof of \thmref{hardness}.
\end{proof}

We conclude with the proof of \lemref{M1-like-M0}, mostly following \cite{HarrowM10}.
\begin{proof}
Case Y is simplest, and also provides intuition for the choices of the $M_1$ construction.  Since the extreme points of $\Sep(n,n)$ are of the form $xx^*\ot yy^*$ for $x,y\in \bS(\C^n)$, it follows that there exists $x,y \in \bS(\C^n)$ with $\ciprod{x\ot y, M(x\ot y)}=1$.  Since $M\leq I$, this implies that $M(x\ot y) = (x\ot y)$.  Thus $\sqrt{M_0}(x\ot y)=(x\ot y)$.  Let
\[ z = x \ot y \ot x \ot y. \]
Then $z$ is an eigenvector of both $\sqrt{M_0} \ot \sqrt{M_0}$ and $P$, with eigenvalue $1$ in each case.  To see this for $P$, we use the definition in \eq{P-sym-def}.  Thus $\ciprod{z, M_1z}=1$, and it follows that $\bh_{\Sep(n^2,n^2)}(M_1)\geq 1$.  On the other hand, $M_1\leq I$, implying that $\bh_{\Sep(n^2,n^2)}(M_1)\leq 1$.   This establishes case Y.

For case N, we assume that $\bh_{\Sep(n,n)}(M_0) \leq 1-\delta$ for
any $x,y\in \bS(\C^n)$.  The idea of the proof is that for any $x,y\in
\bS(\C^{n^2})$, we must either have $x,y$ close to a product state, in
which case the $\sqrt{M_0}$ step will shrink the vector, or if they
are far from a product state and preserved by $\sqrt{M_0} \ot
\sqrt{M_0}$, then the $P$ step will shrink the vector.  In either
case, the length will be reduced by a dimension-independent factor.

We now spell this argument out in detail.  Choose $x,y\in \bS(\C^{n^2})$ to achieve
\be s:= \ciprod{x\ot y, M_1(x \ot y)} = \bh_{\Sep(n^2,n^2)}(M_1).\ee
Let $X,Y\in L(\C^n)$ be defined by
\be \sqrt{M_0}x =: \sum_{i,j\in [n]} X_{i,j} e_i \ot e_j
\qquad\text{and}\qquad
 \sqrt{M_0}y =: \sum_{i,j\in [n]} Y_{i,j} e_i \ot e_j \ee
Note that $\ciprod{X,X} = \ciprod{x, M_0 x}\leq 1$ and similarly for $\ciprod{Y,Y}$.
We wish to estimate
\be s = \sum_{i,j,k,l,i',j',k',l'\in [n]}
\bar{X}_{i',j'}\bar{Y}_{k',l'}X_{i,j}Y_{k,l}
\ciprod{e_{i'}\ot e_{j'} \ot e_{k'} \ot e_{l'}, P (e_i \ot e_j  \ot e_k \ot e_l)}\ee
Using \eq{P-sym-def} we see that the expression inside the $\ciprod{\cdot}$ is
\be \ifexp{\frac{1}{n^4}}
\frac{ \delta_{i,i'}\delta_{j,j'}\delta_{k,k'}\delta_{l,l'} +
\delta_{i,k'}\delta_{j,j'}\delta_{k,i'}\delta_{l,l'} +
\delta_{i,i'}\delta_{j,l'}\delta_{k,k'}\delta_{l,j'} +
\delta_{i,k'}\delta_{j,l'}\delta_{k,i'}\delta_{l,j'}
}{4}\ee

Rearranging, we find
\be s=\frac{\ciprod{X,X}\ciprod{Y,Y} + \ciprod{X,Y}\ciprod{X,Y} + \ciprod{YY^*,XX^*} + \ciprod{Y^*Y,X^*X}}{4} .\ee
Using the AM-GM inequality we see that the maximum of this expression is achieved when $X=Y$, in which case we have
\be s = \frac{\ciprod{X,X}^2 + \ciprod{X^*X, X^*X}}{2}
\leq \frac{1 + \ciprod{X^*X, X^*X}}{2}.\ee
Let the singular values of $X$ be $\sigma_1 \geq \cdots \geq \sigma_n$.  Observe that $\cnorm{\sigma}_2^2 = \ciprod{X,X}\leq 1$, and thus $\cnorm{\sigma}_4^4 = \ciprod{X^*X,X^*X} \leq \sigma_1^2$.
On the other hand,
\begin{align}\sigma_1^2 &= \max_{a,b\in\bS(\C^n)} |\ciprod{a, Xb}|^2 \\
& = \max_{a,b\in\bS(\C^n)} |\ciprod{a\ot b, \sqrt{M_0}x}|^2 \\
& = \max_{a,b\in\bS(\C^n)} |\ciprod{\sqrt{M_0}(a\ot b), x}|^2 \\
& = \max_{a,b\in\bS(\C^n)} \ciprod{\sqrt{M_0}(a\ot b), \sqrt{M_0}(a\ot b)} \\
& = \max_{a,b\in\bS(\C^n)} \ciprod{a\ot b, M_0(a\ot b)} \\
& = \bh_{\Sep(n,n)}(M_0) \leq 1-\delta\end{align}
\end{proof}

{\bf Remark:} It is possible to extend \lemref{approx-preserving} to the situation when case Y has $\bh_{Sep}(M)>1-\delta'$ for some constant $\delta'<\delta$.  Since the details are somewhat tedious, and repeat arguments in \cite{HarrowM10}, we omit them here.

\subsubsection{Hardness of approximation for projectors}
Can \thmref{hardness} give any super-polynomial lower bound for the
SSE problem if we assume the Exponential-Time Hypothesis for 3-SAT?
To resolve this question using our techniques, we would like to reduce
3-SAT to estimating the $2\ra 4$ norm of the projector onto the
eigenvectors of a graph that have large eigenvalue.  We do not know
how to do this.  However, instead, we show that the matrix $A$
constructed in \thmref{hardness} can be taken to be a projector.  This
is almost WLOG, except that the resulting $2\ra 4$ norm will be at
least $3^{1/4}$.

\begin{lemma}\label{lem:almost-proj} Let $A$ be a linear map from $\R^k$ to $\R^n$ and $0<c<C$ , $\e>0$ some numbers. Then there is $m={O}(n^2/\eps^2)$ and a map $A'$ from $\R^k$ to $\R^m$ such that $\sigma_{\min}(A') \geq 1-\e$ and \textbf{(i)} if $\tfnorm{A} \leq c$ then $\tfnorm{A'} \leq 3^{1/4}+\e$, \textbf{(ii)} $\tfnorm{A} \geq C$ then $\tfnorm{A'} \geq \Omega(\e C/c)$.
\end{lemma}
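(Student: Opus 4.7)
The plan is to take $A'$ to be a small-scaled copy of $A$ stacked on top of a suitable random matrix $B$, where the random matrix is guaranteed by our analysis in Section~\ref{sec:tensorsdp-random} to have $2\!\to\!4$ norm close to the ``Gaussian'' value $3^{1/4}$ while also being a near-isometry. I would first assume without loss of generality that $k\leq n$: otherwise, using the SVD $A=U\Sigma V^T$ with $V\in\R^{k\times r}$ an isometry and $r=\rank A\leq n$, we may carry out the construction for $U\Sigma\colon\R^r\to\R^n$ and then pre-compose with $V^T$, absorbing the kernel of $A$ into a random padding block; since $A$ vanishes on $\ker A$, the bounds are unaffected.

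Invoking Corollary~\ref{cor:tensorsdp-random-alt}, together with a standard rectangular-matrix singular-value concentration inequality (e.g., a direct application of \pref{lem:ALPT} to the rows), I obtain a matrix $B:\R^k\to\R^N$ with $N=\Theta(n^2/\e^2)$ satisfying, with high probability,
\[
\tfnorm{B}\leq 3^{1/4}+O(k/\sqrt{N})=3^{1/4}+O(\e),\qquad
\sigma_{\min}(B)\geq 1-O(\sqrt{k/N})=1-O(\e),
\]
where the last equalities use $k\leq n$ and the choice of $N$. Then I set $m := n+N$, $\gamma := \bigparen{\e N/(nc^4)}^{1/4}$, and define
\[
A' := \begin{pmatrix}\gamma A\\ B\end{pmatrix}:\R^k\to\R^m.
\]
Since the expectation $L_p$ norm of a stacked vector is the block-weighted average of the individual $L_p$ norms, for every $y\in\R^k$,
\[
\|A'y\|_2^2 = \tfrac{\gamma^2 n\|Ay\|_2^2 + N\|By\|_2^2}{m},\qquad
\|A'y\|_4^4 = \tfrac{\gamma^4 n\|Ay\|_4^4 + N\|By\|_4^4}{m}.
\]

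The first identity, combined with $N/m\geq 1-O(\e)$ (from $N\gg n$) and $\sigma_{\min}(B)\geq 1-O(\e)$, gives $\sigma_{\min}(A')\geq 1-\e$. For part (i), the hypothesis $\tfnorm{A}\leq c$ and the choice of $\gamma$ yield
\[
\tfnorm{A'}^4 \leq \tfrac{\gamma^4 nc^4 + N\tfnorm{B}^4}{m} = \tfrac{\e N + N(3+O(\e))}{m} \leq 3+O(\e),
\]
so $\tfnorm{A'}\leq 3^{1/4}+\e$ after tuning the hidden constants. For part (ii), any $y$ with $\|Ay\|_4\geq C\|y\|_2$ gives
\[
\tfnorm{A'}^4 \geq \tfrac{\gamma^4 n C^4}{m} = \tfrac{\e N}{m}\Paren{\tfrac{C}{c}}^4 \geq \tfrac{\e}{2}\Paren{\tfrac{C}{c}}^4,
\]
using $m\leq 2N$; therefore $\tfnorm{A'}\geq (\e/2)^{1/4}(C/c)$, which is in particular $\Omega(\e C/c)$.

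The one non-elementary ingredient is Corollary~\ref{cor:tensorsdp-random-alt} together with the accompanying smallest-singular-value concentration for random $B$; the rest is direct algebra exploiting how expectation $L_p$ norms decompose additively across block rows. Intuitively, $B$ simultaneously provides a ``floor'' of $3^{1/4}$ for $\tfnorm{A'}$ and the near-isometry property, while the scaling $\gamma$ on $A$ is calibrated so that $\gamma A$'s contribution is absorbed in this floor under (i) but beats it by the claimed ratio under (ii).
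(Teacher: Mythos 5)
Your proof is correct and essentially matches the paper's: both stack $A$ against a random Gaussian block $B$ chosen so that $\tfnorm{B}\approx 3^{1/4}$ and $\sigma_{\min}(B)\approx 1$, then calibrate the relative weight of the $A$-block so that its contribution to $\norm{A'f}_4^4$ is $O(\e)$ under the hypothesis $\tfnorm{A}\le c$ but at least $\Omega(\e)(C/c)^4$ under $\tfnorm{A}\ge C$. The only cosmetic differences are that you realize the weighting by scaling $A$ by $\gamma$ rather than duplicating its output coordinates (as the paper does to set the block measures $\alpha:1-\alpha$), and you cite Corollary~\ref{cor:tensorsdp-random-alt} together with Lemma~\ref{lem:ALPT} for the random-block bounds where the paper appeals to Dvoretzky's theorem directly.
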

\begin{proof} We let $B$ be a random map from $\R^k$ to
  $\R^{{O(n^2/\delta^2)}}$ with entries that are
  i.i.d. Gaussians with mean zero and variance $1/\sqrt k$.  Then Dvoretzky's
 theorem~\cite{Pisier99} implies that for every $f\in \R^k$, $\norm{Bf}_4
  \in 3^{1/4}(1\pm \delta)\norm{f}_2$.
 Consider the operator $A' = \begin{pmatrix}A/c \\ B \end{pmatrix}$ that
 maps $f$ into the concatenation of $Af$ and $Bf$. Moreover we take
 multiple copies of each coordinate so that the measure of output
 coordinates of $A'$ corresponding to $A$ is $\delta$,
 while the measure of coordinates corresponding to $B$ is $1-\delta$.

Now for every function $f$, we get that $\norm{A'f}_4^4  =
\frac{\delta}{c^4} \norm{Af}_4^4 + (1-\delta)\norm{Bf}_4^4$. In
particular, since $\norm{Bf}_4^4 \in 3(1\pm \delta)\norm{f}_2^4$, we
get that if $f$ is a unit vector and $\norm{Af}_4^4 \leq c^4$ then
$\norm{A'f}_4^4 \leq  \delta + 3(1 + \delta)$, while if
$\norm{Af}_4^4 \geq C^4$, then $\norm{A'f}_4^4 \geq \delta (C/c)^4$. 

Also note that the random operator $B$ will satisfy that for every function $f$, $\norm{Bf}_2 \geq (1-\delta)\norm{f}_2$, and hence $\norm{A'f} \geq (1-\delta)^2\norm{f}$. Choosing $\delta=\eps/2$ concludes the proof.
\end{proof}

It turns out that for the purposes of hardness of good approximation,
the case that $A$ is a projector is almost without loss of generality.

\begin{lemma}\label{lem:exact-proj} Suppose that for some $\e >0, C >
  1+\eps$ there is a $\poly(n)$ algorithm that on input a subspace
  $V\subseteq \R^n$ can distinguish between the case \textbf{(Y)}
  $\tfnorm{\Pi_V} \geq C$ and the case \textbf{(N)} $\tfnorm{\Pi_V} \leq
  3^{1/4}+\e$, where $\Pi_V$ denotes the projector onto $V$. Then there is $\delta = \Omega(\e)$ and a $\poly(n)$
  algorithm that on input an operator $A:\R^k\to\R^n$ with
  $\sigma_{\min}(A) \geq 1-\delta$ can distinguish between the case
 \textbf{(Y)} $\tfnorm{A} \geq C(3^{1/4}+\delta)$ and \textbf{(N)}
 $\tfnorm{A} \leq 3^{1/4}+\delta$. 
\end{lemma}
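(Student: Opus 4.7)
The plan is to reduce to the projector algorithm by feeding it $\Pi_V$ where $V := \mathrm{image}(A)$. The key inequality, obtained by the change of variables $g=Af$ (so $f=A^+g$ on the row space of $A$) combined with $\|f\|_2 = \|A^+g\|_2 \in [\|g\|_2/\sigma_{\max}(A),\;\|g\|_2/\sigma_{\min}(A)]$, is the sandwich
\[
\sigma_{\min}(A)\cdot \tfnorm{\Pi_V} \;\leq\; \tfnorm{A} \;\leq\; \sigma_{\max}(A)\cdot \tfnorm{\Pi_V}.
\]

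First I would compute $\sigma_{\max}(A)$ via SVD. Because in the expectation norm $\|Af\|_4\geq \|Af\|_2$, we have $\sigma_{\max}(A)\leq \tfnorm{A}$; thus any $A$ with $\sigma_{\max}(A) > 3^{1/4}(1+\delta)$ cannot lie in case (N), so we output (Y) directly. Otherwise $\sigma_{\max}(A)\leq 3^{1/4}(1+\delta)$, and I would invoke the projector algorithm on $V$, which is obtainable in polynomial time from the SVD of $A$. Choosing $\delta=\Omega(\e)$ small enough that $3^{1/4}(1+\delta)/(1-\delta)\leq 3^{1/4}+\e$, the right half of the sandwich gives $\tfnorm{\Pi_V}\leq 3^{1/4}+\e$ in case (N), so the projector algorithm correctly returns (N).

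The main obstacle is case (Y) in the intermediate regime $1+\delta<\sigma_{\max}(A)\leq 3^{1/4}(1+\delta)$: the sandwich then only yields $\tfnorm{\Pi_V}\geq C/3^{1/4}$, which may fail to meet the projector algorithm's threshold $C$. I would overcome this by partitioning the right-singular vectors of $A$ according to whether their singular values lie in $[1-\delta,\,1+\delta]$ (giving the piece $A_{\mathrm{tame}}$) or in $(1+\delta,\,3^{1/4}(1+\delta)]$ (giving the piece $A_{\mathrm{wild}}$). Running the projector algorithm on $\mathrm{image}(A_{\mathrm{tame}})$ succeeds whenever $A_{\mathrm{tame}}$ already witnesses $\tfnorm{A}\geq C(1+\delta)$, since its $\sigma_{\max}$ is at most $1+\delta$ and the sandwich tightens to a $(1\pm\delta)$ factor. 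In the complementary subcase, $A_{\mathrm{wild}}$ must carry a constant fraction of the norm; since it has rank at most $k$ and spectral norm at most $3^{1/4}(1+\delta)$, one can certify (Y) either by searching over its low-dimensional right-singular span for a witness $f$ with $\|Af\|_4 \geq C(1+\delta)\|f\|_2$, or by applying the randomized amplification of Corollary~\ref{cor:proj-hard}. Combining these certificates yields the desired polynomial-time distinguisher with $\delta=\Omega(\e)$.
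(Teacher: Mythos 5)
Your sandwich inequality $\sigma_{\min}(A)\,\tfnorm{\Pi_V}\le \tfnorm{A}\le \sigma_{\max}(A)\,\tfnorm{\Pi_V}$ and the reduction to the projector algorithm on $V=\mathrm{image}(A)$ are exactly the paper's approach, and your analysis of case (N) is correct. You have also noticed a genuine subtlety that the paper glosses over: the paper simply asserts that one may assume $\sigma_{\max}(A)\le 1+\delta$ ``since otherwise we could rule out case (N),'' but this does not follow from $\sigma_{\max}(A)>1+\delta$ alone (case (N) permits $\tfnorm{A}$ up to $3^{1/4}(1+\delta)$, which is bigger than $1+\delta$). So the worry about the intermediate regime $1+\delta<\sigma_{\max}(A)\le 3^{1/4}(1+\delta)$ is real.

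The tame/wild decomposition you propose to close this gap does not, however, give a polynomial-time algorithm. The subspace spanned by the ``wild'' right-singular vectors has no reason to be low-dimensional --- its dimension can be a constant fraction of $k$ --- so ``searching over its right-singular span'' is a brute-force enumeration over an exponentially large set, not a polynomial-time step. The fallback, ``applying the randomized amplification of Corollary~\ref{cor:proj-hard},'' is a misreading: that corollary is a \emph{hardness} statement (a reduction from 3-SAT), not an algorithmic subroutine one can run to certify case (Y). Finally, the claim that $A_{\mathrm{wild}}$ ``must carry a constant fraction of the norm'' is not established and, even if true, does not by itself yield a poly-time distinguisher. The clean way to repair the proof is not a decomposition but to note that the lemma is only ever invoked for operators $A$ whose case-(N) behavior forces $\sigma_{\max}(A)\le 1+O(\delta)$ (as is the case for the $A'$ produced by Lemma~\ref{lem:almost-proj}, since there $\|A'f\|_2\le\|A'f\|_4\le 3^{1/4}+\eps$), or equivalently to require $C$ to exceed a fixed constant such as $3^{1/2}$, which still suffices for Corollary~\ref{cor:proj-hard} where $C$ may be taken arbitrarily large.
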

\begin{proof} First we can assume without loss of generality that $\norm{A}_{2\to 2} = \sigma_{\max}(A) \leq 3^{1/4}+\delta$, since otherwise we could rule out case \textbf{(N)}. Now we let $V$ be the image of $A$. In the case \textbf{(N)} we get that that for every $f\in \R^k$
\[
\norm{Af}_4 \leq (3^{1/4}+\delta)\norm{f}_2 \leq (3^{1/4}
+\delta)\norm{Af}_2/\sigma_{\min}(A) \leq (3^{1/4}
+O(\delta))\norm{Af}_2 \;, 
\]
implying $\|\Pi_V\|_{2\ra 4}\leq 3^{1/4} + O(\delta)$.
In the case \textbf{(Y)} we get that there is some $f$ such that
$\norm{Af}_4 \geq C(3^{1/4}+\delta)\norm{f}_2$, but since $\norm{Af}_2 \leq
\sigma_{\max}(A)\norm{f}_2 \leq (3^{1/4} + \delta)\norm{f}_2$, we get
that $\norm{Af}_4 \geq C \norm{Af}_2$,
implying $\|\Pi_V\|_{2\ra 4}\geq C$.
\end{proof}

Together these two lemmas effectively extend \thmref{hardness} to the case when $A$ is a projector.
We focus on the hardness of approximating to within a constant factor.
\begin{corollary}\label{cor:proj-hard}
For any $\ell,\eps>0$, if $\phi$ is a 3-SAT instance with $n$ variables and $O(n)$ clauses, then determining satisfiability of $\phi$ can be reduced to distinguishing between the cases $\tfnorm{A}\leq 3^{1/4}+\eps$ and $\tfnorm{A} \geq \ell)$, where $A$ is a projector acting on $m = \exp(\sqrt{n}\poly\log(n)\log(\ell/\eps))$ dimensions.
\end{corollary}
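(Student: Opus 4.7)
The plan is to chain Theorem~\ref{thm:hardness-formal}(2), Lemma~\ref{lem:almost-proj}, and the contrapositive of Lemma~\ref{lem:exact-proj}. Each of these ingredients was set up precisely to be composed in this order: Theorem~\ref{thm:hardness-formal}(2) gives hardness of approximating $\tfnorm{A}$ to any desired multiplicative gap $C/c$ at a cost of only a $\log(C/c)$ factor in the dimension; Lemma~\ref{lem:almost-proj} converts an arbitrary $A$ into an $A'$ whose smallest singular value is $\ge 1-\eps$ while normalizing the ``No'' case to $\tfnorm{A'}\le 3^{1/4}+\eps$; and Lemma~\ref{lem:exact-proj} reduces the projector problem to the well-conditioned problem, so its contrapositive gives us what we want.

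Concretely, given $\ell$ and $\eps$, I would first invoke Theorem~\ref{thm:hardness-formal}(2) with $c=1$ and $C = K\ell/\eps$, where $K$ is a large absolute constant to be fixed momentarily. This produces in polynomial time an $m\times m$ matrix $A$, with $m = \exp(\sqrt{n}\poly\log(n)\log(\ell/\eps))$, such that distinguishing $\tfnorm{A}\le 1$ from $\tfnorm{A}\ge K\ell/\eps$ solves the input 3-SAT instance. Next, I apply Lemma~\ref{lem:almost-proj} with slack parameter $\eps/4$ to obtain $A'$ of dimension at most $O(m^2/\eps^2)$, which is still $\exp(\sqrt{n}\poly\log(n)\log(\ell/\eps))$. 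By the two conclusions of that lemma, $\sigma_{\min}(A')\ge 1-\eps/4$, and we end up with $\tfnorm{A'}\le 3^{1/4}+\eps/4$ in the No case versus $\tfnorm{A'} \ge \Omega(\eps \cdot K\ell/\eps)=\Omega(K\ell)$ in the Yes case. Choosing $K$ large enough forces $\tfnorm{A'} \ge \ell(1+\eps/4)$ in the Yes case.

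Finally I take the contrapositive of Lemma~\ref{lem:exact-proj}. Suppose, for contradiction, there is a polynomial-time algorithm that distinguishes $\tfnorm{\Pi_V}\le 3^{1/4}+\eps$ from $\tfnorm{\Pi_V}\ge \ell$ for projectors onto subspaces $V$. Lemma~\ref{lem:exact-proj} (with its $\delta=\Omega(\eps)$ set to $\eps/4$) then yields a polynomial-time algorithm that distinguishes $\tfnorm{A'}\le 3^{1/4}(1+\eps/4)$ from $\tfnorm{A'}\ge \ell(1+\eps/4)$ on any $A'$ satisfying $\sigma_{\min}(A')\ge 1-\eps/4$. Applying this to the $A'$ built above solves the 3-SAT instance, completing the reduction. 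The final dimension is exactly the claimed $m = \exp(\sqrt{n}\poly\log(n)\log(\ell/\eps))$, since the only dimension blow-ups are the logarithmic one from Theorem~\ref{thm:hardness-formal}(2) and the polynomial one in Lemma~\ref{lem:almost-proj}.

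There is no real technical obstacle here beyond parameter bookkeeping: one needs to verify that the additive tolerance $3^{1/4}+\eps/4$ produced by Lemma~\ref{lem:almost-proj} survives being multiplied by $(1+\eps/4)$ in Lemma~\ref{lem:exact-proj}, and symmetrically on the other side, so that both cases land strictly inside the $[3^{1/4}+\eps,\ell]$ gap. This is immediate after replacing $\eps$ by a sufficiently small constant multiple of itself throughout, which only affects the hidden constants in the final $\poly\log(n)$ factor. The main conceptual point—that the two lemmas were designed as inverses of one another and thus glue together cleanly—does all the real work.
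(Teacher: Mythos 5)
Your proof is correct and follows essentially the same route as the paper's (which is stated very tersely): invoke the subexponential hardness from \prettyref{thm:hardness-formal}(2) with the quality gap $C/c$ tuned to $\Theta(\ell/\eps)$ — which is the same as setting $k=O(\log(\ell/\eps))$ in \prettyref{lem:approx-preserving} — then pass through \prettyref{lem:almost-proj} and \prettyref{lem:exact-proj}. The one bookkeeping point you rightly flag but should be a bit more careful about is that \prettyref{lem:exact-proj} only guarantees some $\delta=\Omega(\eps)$ with an unspecified constant, so you cannot literally set $\delta=\eps/4$; you must instead feed \prettyref{lem:almost-proj} a slack parameter proportional to (and smaller than) whatever $\delta$ \prettyref{lem:exact-proj} actually yields, which — as you say — only rescales $\eps$ by a universal constant and does not affect the final dimension bound.
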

\begin{proof}
Start as in the proof of \thmref{hardness}, but in the application of  \lemref{approx-preserving}, take $k= O(\log(\ell/\eps))$.  This will allow us to take $C/c = \Omega(\ell/\eps)$ in \lemref{almost-proj}.  Translating into a projector with \lemref{exact-proj}, we obtain the desired result.
\end{proof}

\subsection{Algorithmic applications of equivalent formulations}\label{sec:BCY-app}

%\subsection{\tensorsdp outperforms $k$-extendability}\label{sec:k-extend}
In this section we discuss the positive algorithmic results that come from the equivalences in \secref{inj-equiv}.
Since entanglement plays such a central role in quantum mechanics, the set $\Sep^2(\C^n)$ has been extensively studied.  However, because its hardness has long been informally recognized (and more recently has been explicitly established~\cite{Gurvits03,Liu07,HarrowM10,GallNN11}), various relaxations have been proposed for the set.  These relaxations are generally efficiently computable, but also have limited accuracy; see \cite{BeigiS10} for a review.

Two of the most important relaxations are the PPT condition and $k$-extendability.   For an operator $X\in L((\C^n)^{\ot r})$ and a set $S\subseteq [r]$, define the {\em partial transpose} $X^{T_S}$ to be the result of applying the transpose map to the systems $S$.  Formally, we define
\begin{align*} (X_1 \ot \cdots \ot X_r)^{T_S}
&:= \bigotimes_{k=1}^r f_k(X_k) \\
f_k(M) &:= \begin{cases} M & \text{ if }k\not\in S \\
M^T & \text{ if }k\in S
\end{cases}\end{align*}
and extend $T_S$ linearly to all of $L((\C^n)^{\ot r})$.
One can verify that if $X\in \Sep^r(\C^n)$ then $X^{T_S}\succeq 0$ for all $S\subseteq [r]$.  In this case we say that $X$ is PPT, meaning that it has Positive Partial Transposes.  However, the converse is not always true.  If $n>2$ or $r>2$, then there are states which are PPT but not in $\Sep$~\cite{HorodeckiHH96}.

The second important relaxation of $\Sep$ is called $r$-extendability.  To define this, we need to introduce the partial trace.  For $S\subseteq [r]$, we define $\tr_S$ to be the map from $L((\C^n)^{\ot r})$ to
$L((\C^n)^{\ot r-|S|})$ that results from applying $\tr$ to the systems in $S$. Formally
$$\tr_S \bigotimes_{k=1}^r X_k =
\prod_{k\in S} \tr X_k \, \bigotimes_{k\not\in S} X_k,$$
and $\tr_S$ extends by linearity to all of $L((\C^n)^{\ot r})$.

To obtain our relaxation of $\Sep$, we say that $\rho\in \cD(\C^n \ot \C^n)$ is $r$-extendable if there exists a {\em symmetric extension}
$\sigma\in \cD(\C^n \ot \vee^r\C^n)$ such that $\tr_{\{3,\ldots,r+1\}}\sigma = \rho$.
%$\sigma\in \cD((\C^n)^{\ot r+1})$ such that (i) $\tr_{\{3,\ldots,r+1\}}\sigma = \rho$, and (ii) $(I_n \ot P_n(\pi))\sigma (I_n \ot P_n(\pi)) = \sigma$ for all $\pi\in\cS_r$.
Observe that if $\rho\in \Sep^2(\C^n)$, then we can write $\rho = \sum_i x_i x_i^* \ot y_iy_i^*$, and so $\sigma = \sum_i x_i x_i^* \ot (y_iy_i^*)^{\ot r}$ is a valid symmetric extension.  Thus the set of $k$-extendable states contains the set of separable states, but again the inclusion is strict.  Indeed, increasing $k$ gives an infinite hierarchy of strictly tighter approximations of $\Sep^2(\C^n)$.  This hierarchy ultimately converges~\cite{DohertyPS04}, although not always at a useful rate (see Example IV.1 of \cite{ChristandlKMR07}).  Interestingly this relaxation is known to completely fail as a method of approximating $\Sep^2(\R^n)$~\cite{CavesFS02}, but our \lemref{equiv} is evidence that those difficulties do not arise in the $2\!\ra\! 4$-norm problem.

These two relaxations can be combined to optimize over symmetric extensions that have positive partial transposes~\cite{DohertyPS04}.  Call this the {\em level-$r$ DPS relaxation}.  It is known to converge in some cases more rapidly than $r$-extendability alone~\cite{NavascuesOP09}, but also is never exact for any finite $r$~\cite{DohertyPS04}.  Like SoS, this relaxation is an SDP with size $n^{O(r)}$.    In fact, for the case of the $2\ra 4$ norm, the relaxations are equivalent.
\Anote{It is possible that a more elegant proof of the following lemma can be obtained somehow from the DPS paper.}

\begin{lemma}\label{lem:equiv-DPS}
When the level-$r$ DPS relaxation is applied to $A_{2,2}$, the resulting approximation is equivalent to $\tensorsdp^{(2r+2)}$
\end{lemma}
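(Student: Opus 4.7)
The plan is to establish equivalence by relating the key moment objects on each side and showing that each SDP value bounds the other. For the SoS side, the essential object in a level-$(2r{+}2)$ pseudo-expectation $\pE$ satisfying $\pE(\|f\|^2-1)^2=0$ is the moment operator $\sigma_{\text{SoS}} := \pE[(ff^T)^{\ot (r+1)}] \in L((\R^n)^{\ot (r+1)})$. I will first check that this is a valid level-$r$ DPS extension. Positivity $\sigma_{\text{SoS}}\succeq 0$ follows because, for any $v\in(\C^n)^{\ot(r+1)}$, the quadratic form $\iprod{v,\sigma_{\text{SoS}} v}=\pE |\iprod{v,f^{\ot(r+1)}}|^2$ is a pseudo-expectation of a squared degree-$(r{+}1)$ polynomial (splitting real and imaginary parts if $v$ is complex). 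Full permutation symmetry in the $r{+}1$ systems is built into the tensor-power construction, so the extension is fully symmetric (and in particular gives a valid $r$-extendable state in $L(\C^n \ot \vee^r\C^n)$). Crucially, because the commuting scalar variables $f_i$ satisfy $f_{i_k}f_{j_k}=f_{j_k}f_{i_k}$, every partial transpose $\sigma_{\text{SoS}}^{T_S}$ coincides with $\sigma_{\text{SoS}}$ itself, so PPT holds automatically. Using the normalization constraint (via pseudo-Cauchy--Schwarz and its ideal-theoretic consequences, as in \pref{lem:pseudo-expectation-cauchy-schwarz}) to collapse the $\|f\|^{2(r-1)}$ factors arising from tracing out $r{-}1$ systems, we obtain $\tr_{\{3,\ldots,r+1\}}\sigma_{\text{SoS}} = \pE[(ff^T)^{\ot 2}]$ and hence the DPS objective $\iprod{A_{2,2},\tr_{\{3,\ldots,r+1\}}\sigma_{\text{SoS}}}$ coincides with $\pE[\|Af\|_4^4]$ up to normalization.

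For the reverse direction, given any DPS-feasible $\sigma$ (WLOG fully symmetric since $A_{2,2}$ is swap-symmetric, PSD, and satisfying $\sigma^{T_S}\succeq 0$ for every $S\sse [r{+}1]$), I define the symmetrized extension
\begin{displaymath}
  \sigma' := \frac{1}{2^{r+1}} \sum_{S\sse [r+1]} \sigma^{T_S} \mper
\end{displaymath}
This is PSD as a convex combination of PSD operators. Moreover, each single-system partial transpose $(\sigma')^{T_{\{k\}}}$ equals $\sigma'$ since the operation $S\mapsto S\,\Delta\,\{k\}$ is an involution on the index set of the sum. Combined with the bra-symmetry and ket-symmetry that $\sigma'$ inherits from $\sigma$, these single-slot transpose invariances generate the full symmetric group $S_{2(r+1)}$ acting on the $2(r{+}1)$ tensor slots, so $\sigma'$ corresponds to a fully symmetric tensor $T\in \vee^{2(r+1)}\R^n$. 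From $T$ one reads off a valid level-$(2r{+}2)$ pseudo-expectation on homogeneous degree-$2(r{+}1)$ polynomials (PSD as an operator on $\vee^{r+1}\R^n$), and the normalization $\pE(\|f\|^2-1)^2=0$ is enforced by a standard homogenization/normalization of $T$. The objectives match because $A_{2,2}$ (and hence $A_{2,2}\ot I^{\ot (r-1)}$) is real and therefore invariant under every partial transpose, so
$\iprod{A_{2,2}\ot I^{\ot (r-1)}, \sigma'} = \iprod{A_{2,2}\ot I^{\ot (r-1)}, \sigma}$.

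The main obstacle is reconciling the normalization constraint with the passage between homogeneous (degree-$2(r{+}1)$) moments and the reduced degree-$4$ data probed by $A_{2,2}$: starting from $\pE(\|f\|^2-1)^2=0$ alone, pseudo-Cauchy--Schwarz only yields $\pE(\|f\|^2-1)Q=0$ for $\deg Q\le r{+}1$, which is weaker than one might wish when collapsing $r{-}1$ factors of $\|f\|^2$. The standard fix is to interpret the constraint in its ideal-theoretic sense (equivalently, to work with the homogeneous formulation that fixes $\pE\|f\|^{2(r+1)}=1$ and recovers $\pE\|Af\|_4^4$ as a ratio), which is the formulation under which the duality between the moment SDP and the SoS-certificate SDP is clean (see \pref{app:hierarchies}); in this formulation the degree-$2(r{+}1)$ tensor $T$ is the entire SDP variable, and the above two constructions give a value-preserving bijection onto the fully-symmetric PPT feasible set of the level-$r$ DPS relaxation.
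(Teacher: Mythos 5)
The forward direction (SoS $\to$ DPS) is correct and matches the paper's brief closing remark ``run this construction in reverse''; the observation that commutativity of the scalar variables makes $\sigma_{\text{SoS}}:=\pE[(ff^T)^{\ot(r+1)}]$ literally equal to each of its partial transposes (so PPT is automatic) is a clean way to say it, and your use of pseudo-Cauchy--Schwarz to collapse the $\|f\|^{2(r-1)}$ factor after tracing out is the right bookkeeping.

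The reverse direction (DPS $\to$ SoS) has a genuine gap in the symmetrization step, which is where the actual content of the lemma lives. You define $\sigma':=\frac{1}{2^{r+1}}\sum_{S\sse[r+1]}\sigma^{T_S}$ and claim that $\sigma'$ still inherits the independent bra- and ket-symmetry of $\sigma$, so that together with the single-slot transpose invariances you generate all of $S_{2(r+1)}$. This inheritance fails. Writing $\tau_S:=\prod_{k\in S}(k,\,r{+}1{+}k)$ on the $2(r{+}1)$ flattened slots, we have $\vec\sigma'=\frac{1}{2^{r+1}}\sum_S\tau_S\cdot\vec\sigma$. Conjugating $\tau_S$ by a bra-only permutation $\pi^{\text{bra}}$ gives $\prod_{k\in S}(\pi(k),\,r{+}1{+}k)$, which pairs bra slot $\pi(k)$ with ket slot $k$ and is \emph{not} a $\tau_{S'}$ for any $S'$. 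Hence $\{\tau_S\}_S$ is not stable under conjugation by $S_{r+1}^{\text{bra}}$ (nor by $S_{r+1}^{\text{ket}}$), and a generic $\vec\sigma\in\vee^{r+1}\R^n\ot\vee^{r+1}\R^n$ yields a $\vec\sigma'$ that is fixed only by the \emph{diagonal} $S_{r+1}$ together with the flips $(k,r{+}1{+}k)$. That group is the hyperoctahedral group $\Z_2\wr S_{r+1}$, of order $2^{r+1}(r+1)!$, which is a proper subgroup of $S_{2(r+1)}$ already for $r=1$ (order $8$ versus $|S_4|=24$). So the claim $\vec\sigma'\in\vee^{2(r+1)}\R^n$ does not follow, and you cannot read off a well-defined pseudo-expectation from $\vec\sigma'$.

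Two smaller points in the same direction. First, ``WLOG fully symmetric'' already requires an argument: $A_{2,2}\ot I^{\ot(r-1)}$ is invariant only under $(1\,2)$ and permutations of $\{3,\dots,r{+}1\}$, so permutation-averaging $\sigma$ over all of $S_{r+1}$ changes the objective; the support condition $\supp\sigma\sse\vee^{r+1}\C^n$ must instead be extracted from \emph{optimality} (conjugating by $\Pi^{(2)}_{\mathrm{sym}}\ot I^{\ot(r-1)}$ cannot decrease the trace without lowering the normalized value). Second, you also need to pass to a real $\sigma$ before flattening. The fix for the main gap, and what the paper's proof does, is precisely to use \emph{one} partial transpose rather than all of them: after establishing $\supp\sigma\sse\vee^{r+1}\C^n$ and realness via optimality, replace $\sigma$ by $(\sigma+\sigma^{T_A})/2$; this stays optimal (so it still has the $\vee^{r+1}$ support) and satisfies $\sigma=\sigma^{T_A}$. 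Then $\vec\sigma\in\vee^{r+1}\R^n\ot\vee^{r+1}\R^n$ together with invariance under the single transposition $(1,r{+}2)$ generates all of $S_{2(r+1)}$ --- with the transposition you get every bra--ket transposition by conjugating with $S_{r+1}^{\text{bra}}\times S_{r+1}^{\text{ket}}$, and with those plus the bra--bra and ket--ket transpositions you get everything --- so $\vec\sigma\in\vee^{2(r+1)}\R^n$ as required.
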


\begin{proof}
Suppose we are given an optimal solution to the level-$r$ DPS relaxation.  This can be thought of as a density operator $\sigma\in \cD(\C^n \ot \vee^r\C^n)$ whose objective value is $\lambda:= \ciprod{A_{2,2}, \tr_{\{3,\ldots,r+1\}}\sigma}  = \ciprod{A_{2,2}\ot I_n^{\ot r-1}, \sigma}$.  Let $\Pi_{\rm sym}^{(2)} := (I + P_n((1,2)))/2$ be the orthogonal projector onto $\vee^2\C^n$.  Then $A_{2,2} = \Pi_{\rm sym}^{(2)} A_{2,2} \Pi_{\rm sym}^{(2)}$.  Thus, we can replace $\sigma$ by $\sigma':= (\Pi_{\rm sym}^{(2)} \ot I_n^{\ot r-1}) \sigma (\Pi_{\rm sym}^{(2)} \ot I_n^{\ot r-1})$ without changing the objective function.  However,  unless $\sigma'=\sigma$, we will have $\tr\sigma'<1$.  In this case, either $\sigma'=0$ and $\lambda=0$, or $\sigma'/\tr\sigma'$ is a solution of the DPS relaxation with a higher objective value.  In either case, this contradicts the assumption that $\lambda$ is the optimal value.  Thus, we must have $\sigma=\sigma'$, and in particular $\supp\sigma \subseteq \vee^2\C^n \ot (\C^n)^{\ot r-1}$.  Since we had $\supp\sigma \subseteq \C^n \ot\vee^r\C^n$ by assumption, it follows that
\[\supp\sigma \subseteq (\vee^2\C^n \ot (\C^n)^{\ot r-1})\cap(\C^n \ot\vee^r\C^n)
= \vee^{r+1}\C^n\]

Observe next that $\sigma^T$ is also a valid and optimal solution to the DPS relaxation, and so $\sigma' = (\sigma + \sigma^T)/2$ is as well.  Since $\sigma'$ is both symmetric and Hermitian, it must be a real matrix.  Replacing $\sigma$ with $\sigma'$, we see that we can assume WLOG that $\sigma$ is real.

Similarly, the PPT condition implies that $\sigma^{T_A}\geq 0$. (Recall that the first system is $A$ and the rest are $B_1,\ldots,B_r$.)
 Since the partial transpose doesn't change the objective function, $\sigma' = (\sigma + \sigma^{T_A})/2$ is also an optimal solution.  Replacing $\sigma$ with $\sigma'$, we see that we can assume WLOG that $\sigma = \sigma^{T_A}$.  Let $\vec\sigma\in(\R^n)^{\ot 2r+2}$ denote the flattening of $\sigma$; i.e. $\ciprod{x\ot y,\vec\sigma} = \ciprod{x,\sigma y}$ for all $x,y\in(\R^n)^{r+1}$.  Then the fact that $\sigma = \sigma^{T_A}$ means that $\vec\sigma$ is invariant under the action of $P_n((1,r+1))$.  Similarly, the fact that $\supp\sigma\subseteq\vee^{r+1}\R^n$ implies that $\vec\sigma\in\vee^{r+1}\R^n\ot\vee^{r+1}\R^n$.  Combining these two facts we find that $\vec\sigma\in\vee^{2r+2}\R^n$.

 Now that $\vec\sigma$ is fully symmetric under exchange of all $2r+2$ indices, we can interpret it as a real-valued pseudo-expectation $\pE_\sigma$ for polynomials of degree $2r+2$.  More precisely, we can define the linear map $\coeff$ that sends homogeneous degree-$2r+2$ polynomials to $\vee^{2r+2}\R^n$ by its action on monomials:
 \be
\coeff( f_1^{\alpha_1}\cdots f_n^{\alpha_n}) :=
\Pi_{\rm sym}^{(2r+2)}(e_1^{\ot \alpha_1}\ot\cdots\ot e_n^{\ot \alpha_n}),
\label{eq:first-coeff-def}\ee
where $\Pi_{\rm sym}^{(2r+2)} := \frac{1}{2r+2!}\sum_{\pi\in\cS_{2r+2}}P_n(\pi)$.
For a homogenous polynomial $Q(f)$ of even degree $2r'\leq 2r+2$ we define $\coeff$ by
\[ \coeff(Q(f)) := \coeff(Q(f) \cdot \cnorm{f}_2^{2r+2-2r'}). \]
For a homogenous polynomial $Q(f)$ of odd degree, we set $\coeff(Q):=0$.  Then we can extend $\coeff$ by linearity to all polynomials of degree $\leq 2r+2$.  Now define
\[\pE_\sigma[Q] := \ciprod{\coeff(Q), \vec\sigma}.\]

We claim that this is a valid pseudo-expectation.  For normalization, observe that $\pE[1] = \ciprod{\coeff(\cnorm{f}_2^{2r+2}),\vec\sigma} = \tr\sigma = 1$.  Similarly, the \tensorsdp constraint of $\pE[(\cnorm{f}_2^2-1)^2]=0$ is satisfied by our definition of $\coeff$.  Linearity follows from the linearity of $\coeff$ and the inner product.
  For positivity, consider a polynomial $Q(f)$ of degree $\leq r+1$.
  Write $Q=Q_o + Q_e$, where $Q_o$ collects all monomials of odd
  degree and $Q_e$ collects all monomials of even degree
  (i.e. $Q_e,Q_o = (Q(f) \pm Q(-f))/2$).
  Then $\pE[Q^2] = \pE[Q_o^2]  + \pE[Q_e^2]$, using the property that
  the pseudo-expectation of a monomial of odd degree is zero.

Consider first $\pE[Q_e^2]$.  Let $r' = 2\lfloor \frac{r+1}{2}\rfloor$ (i.e. $r'$ is $r+1$ rounded down to the nearest even number), so that
$ Q_e = \sum_{i=0}^{r'/2} Q_{2i},$ where $Q_{2i}$ is homogenous of degree $2i$.  Define
$ Q_e' := \sum_{i=0}^{r'/2} Q_{2i} \|f\|_2^{r'-2i}$.
Observe that $Q_e'$ is homogenous of degree $r'\leq r+1$, and that $\pE[Q_e^2] = \pE[(Q_e')^2]$.  Next, define $\coeff'$ to map homogenous polynomials of degree $r'$ into $\vee^{r'}\R^n$ by replacing $2r+2$ in \eq{first-coeff-def} with $r'$.
If $r'=r+1$ then define $\sigma'=\sigma$, or if $r'=r$ then define $\sigma' = \tr_{A}\sigma$.  Thus $\sigma' $ acts on $r'$ systems. Define $\vec\sigma' \in \vee^{2r'}\R^n$ to be the flattened version of $\sigma'$.
Finally we can calculate
\[ \pE[Q_e^2] = \pE[(Q_e')^2] = \ciprod{\coeff'(Q_e')\ot\coeff'(Q_e'), \vec\sigma'} =
\ciprod{\coeff'{Q_e'},\sigma' \coeff'{Q_e'}} \geq 0.\]
A similar argument establishes that $\pE[Q_o^2]\geq 0$ as well.
This establishes that any optimal solution to the DPS relaxation translates into a solution of the \tensorsdp relaxation.

To translate a \tensorsdp solution into a DPS solution, we run this construction in reverse.  The arguments are essentially the same, except that we no longer need to establish symmetry across all $2r+2$ indices.
\end{proof}

\subsubsection{Approximation guarantees and the proof of \thmref{BCY}}
\label{sec:BCY-proof}
Many approximation guarantees for the $k$-extendable relaxation (with
or without the additional PPT constraints) required that $k$ be $\poly(n)$, and thus do not lead to useful algorithms.  Recently, \cite{BrandaoCY11} showed that in some cases it sufficed to take $k=O(\log n)$, leading to quasi-polynomial algorithms.  It is far from obvious that their proof translates into our sum-of-squares framework, but nevertheless \lemref{equiv-DPS} implies that \tensorsdp can take advantage of their analysis.

To apply the algorithm of \cite{BrandaoCY11}, we need to upper-bound
$A_{2,2}$ by an 1-LOCC measurement operator.  That is, a quantum
measurement that can be implemented by one-way Local Operations and
Classical Communication (LOCC).  Such a measurement should have a
decomposition of the form $\sum_i V_i \ot W_i$ where each $V_i,W_i
\succeq 0$, $\sum_i V_i \preceq I_n$ and each $W_i \preceq I_n$.
Thus, for complex vectors $v_1,\ldots,v_m, w_1,\ldots,w_m$ satisfying $\sum_i v_i v_i^* \preceq I_n$ and $\forall i,\, w_iw_i^* \preceq I_n$, the operator $\sum_i v_iv_i^* \ot w_i w_i^*$ is a 1-LOCC measurement.

To upper-bound $A_{2,2}$ by a 1-LOCC measurement, we note that
 $a_i a_i^T \preceq \ifexp{n}\cnorm{a_i}_2^2 I_n$.   Thus, if we
 define $Z:=\ifexp{\frac{n^3}{m}}\cnorm{\sum_i a_i a_i^T}_{2\to 2}
 \max_i \cnorm{a_i}^2$, then $A_{2,2}/Z$ is a 1-LOCC measurement.  Note that this is a stricter requirement than merely requiring $A_{2,2}/Z \preceq  I_{n^2}$.  On the other hand, in some cases (e.g. $a_i$ all orthogonal), it may be too pessimistic.

In terms of the original matrix $A=\sum_i e_i a_i^T$, we have $\max_i \cnorm{a_i}_2=\ifexp{\frac{1}{n}}\cnorm{A}_{2\ra \infty}$.  Also $\cnorm{\sum_i a_i a_i^T}_{2\to 2} = \cnorm{A^TA}_{2\ra 2} = \ifexp{\frac{m}{n}}\cnorm{A}_{2\ra 2}^2$.  Thus
$$Z = \cnorm{A}_{2\to 2}^2 \cnorm{A}_{2\ra \infty}^2.$$
Recall from the introduction that $Z$ is an upper bound on $\cnorm{A}_{2\ra 4}^4$, based on the fact that $\cnorm{x}_4\leq \sqrt{\cnorm{x}_2\cnorm{x}_{\infty}}$ for any $x$.
(This bound also arises from using interpolation of norms~\cite{Stein56}.)

We can now apply the argument of \cite{BrandaoCY11} and show that optimizing over $O(r)$-extendable states will approximate $\cnorm{A}_{2\ra 4}^4$ up to additive error $\sqrt{\frac{\log(n)}{r}}Z$.  Equivalently, we can obtain additive error $\eps Z$ using $O(\log(n)/\eps^2)$-round \tensorsdp.  Whether the relaxation used is the DPS relaxation or our SoS-based \tensorsdp algorithm, the resulting runtime is $\exp(O(\log^2(n)/\eps^2))$.

\subsubsection{Gap instances}\label{sec:gap-DPS}
 Since \tensorsdp is equivalent than the DPS
relaxation for separable states, any gap instance for \tensorsdp would
translate into a gap instance for the DPS relaxation.  This would mean
the existence of a state that passes the $k$-extendability and PPT
test, but nevertheless is far from separable, with $A_{2,2}$ serving
as the entanglement witness demonstrating this.  While such states are
already known~\cite{DohertyPS04,BeigiS10}, it would be of interest to
find new such families of states, possibly with different scaling of
$r$ and $n$.

Our results, though, can be used to give an asymptotic separation of
the DPS hierarchy from the $r$-extendability hierarchy.  (As a
reminder, the DPS hierarchy demands that a state not only have an
extension to $r+1$ parties, but also that the extension be PPT across
any cut.)  To state this more precisely, we introduce some notation.
Define $\DPS_r$ to be the set of states $\rho^{AB}$ for which there
exists an extension $\tilde\rho^{AB_1\cdots B_r}$ with support in
$\C^n \ot \vee^r\C^n$ (i.e. a symmetric extension) such that $\tilde\rho$ is invariant under taking
the partial transpose of any system.  Define $\Ext_r$ to be the set of
states on $AB$ with symmetric extensions to $AB_1\ldots B_r$ but
without any requirement about the partial transpose.  Both
$\bh_{\DPS_r}$ and $\bh_{\Ext_r}$ can be computed in time $n^{O(r)}$,
although in practice $\bh_{\Ext_r}(M)$ is easier to work with, since it
only requires computing the top eigenvalue of $M\ot I_n^{\ot r-1}$
restricted to $\C^n \ot \vee^r\C^n$ and does not require solving an
SDP.

Many of the results about the convergence of $\DPS_r$ to $\Sep$ (such
as \cite{DohertyPS04, ChristandlKMR07, KoenigM09, BrandaoCY11}) use only the
fact that $\DPS_r \subset \Ext_r$.  A rare exception is
\cite{NavascuesOP09}, which shows that $\DPS_r$ is at least quadratically
closer to $\Sep$ than $\Ext_r$ is, in the regime where $r \gg n$.
Another simple example comes from $M=\Phi\Phi^*$, where $\Phi$ is the
maximally entangled state $n^{-1/2}\sum_{i=1}^n e_i \ot e_i$.  Then
one can readily compute that $\bh_{\Sep}(M) = \bh_{\DPS_1}(M) = 1/n$, 
while the $r$-extendible state
\be \tilde\rho^{AB_1\ldots B_r} = \frac{1}{r} \sum_{i=1}^r
(\Phi\Phi^*)^{AB_i} \ot \bigotimes_{j \in [r]\backslash\{i\}}
\left(\frac{I}{n}\right)^{B_j}
\label{eq:Phi-extension}\ee
achieves $\bh_{\Ext_r}(M) \geq 1/r$.  (In words, \eq{Phi-extension}
describes a state where $A$ and a randomly chosen $B_i$ share the
state $\Phi\Phi^*$, while the other $B_j$ systems are described by 
maximally mixed states.)
This proves that the $r$-extendable
hierarchy cannot achieve a good multiplicative approximation of
$\bh_{\Sep}(M)$ for all $M$ without taking $r\geq \Omega(n)$.

Can we improve this when $M$ is in a restricted class, such as 1-LOCC?  Here
\cite{BuhrmanRSW11} show that the Khot-Vishnoi integrality
construction can yield an $n^2$-dimensional $M$ for which $h_{\Sep}(M)
\leq O(1/n)$, but $\tr M \Phi \geq \Omega(1/\log^2(n))$.  Combined
with \eq{Phi-extension} this implies that $\bh_{\Ext_r}(M) \geq
\Omega(1/r\log^2(n))$.   On the other hand, \thmref{ug-main} and
\lemref{equiv-DPS} implies that $\bh_{\DPS_3}(M) \leq O(1/n)$.
Additionally, the $M$ from Ref.~\cite{BuhrmanRSW11} belongs to the
class BELL, a subset of 
1-LOCC, given by measurements of the form $\sum_{i,j} p_{i,j} A_i \ot
B_j$, with $0\leq p_{i,j}\leq 1$ and $\sum_i A_i = \sum_j B_j = I$.
As a result, we obtain the following corollary.
\begin{corollary}
There exists an $n^2$ dimensional $M\in\text{BELL}$ such that
$$\frac{h_{\Ext_r}(M)}{h_{\DPS_3}(M)} \leq O\left( \frac{r\log^2(n)}{n}\right)$$
\end{corollary}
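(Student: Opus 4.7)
The plan is to exhibit the BELL-class witness $M$ from \cite{BuhrmanRSW11} built out of a Khot--Vishnoi \uniquegames instance, and then combine (a) an $O(1/n)$ upper bound on $h_{\DPS_3}(M)$ obtained by plugging the instance into \pref{thm:ug-main} and translating via \pref{lem:equiv-DPS}, with (b) an $\Omega(1/(r\log^2 n))$ lower bound on $h_{\Ext_r}(M)$ obtained from the explicit $r$-extendable state $\tilde\rho$ in \pref{eq:Phi-extension}.

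First, I would fix the Khot--Vishnoi instance $\cU=\cU_{\eta,R}$ with label-extended graph $T_{1-\eta}$ on $\sbits^R$, and take $n \asymp 2^R/R$, matching the parameters used by \cite{BuhrmanRSW11}. Their witness $M$ is a $1$-LOCC (indeed BELL) measurement on $\C^{n}\otimes\C^{n}$ satisfying both $h_{\Sep}(M)\le O(1/n)$ and $\operatorname{tr}(M\,\Phi\Phi^{*})\ge \Omega(1/\log^{2} n)$, where $\Phi$ is the maximally entangled state. Because $M$ is literally a noisy-cube/Long-Code unique-games value encoded as a measurement, it is precisely of the $A_{2,2}$-form handled by \pref{lem:equiv-DPS}, so the level-$3$ DPS value of $M$ coincides (up to the identification) with the level-$8$ \tensorsdp value of the corresponding operator. \pref{thm:ug-main} then gives $h_{\DPS_3}(M) \leq O(1/n)$ for $\log R \gg (\log k)^2/\eta$, which is the regime produced by \cite{BuhrmanRSW11}'s construction.

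Second, I would bound $h_{\Ext_r}(M)$ from below by explicitly evaluating $M$ on the reduced state of $\tilde\rho$ from \pref{eq:Phi-extension}. Partial-tracing the $r-1$ auxiliary $B_j$ systems gives $\tfrac{1}{r}\Phi\Phi^{*} + (1-\tfrac{1}{r})\,I/n^{2}$, so
\begin{equation*}
h_{\Ext_r}(M) \;\ge\; \operatorname{tr}\!\Bigl(M\,\bigl(\tfrac{1}{r}\Phi\Phi^{*} + (1-\tfrac{1}{r})\,I/n^{2}\bigr)\Bigr)\;\ge\;\tfrac{1}{r}\,\Omega(1/\log^{2} n) \;=\; \Omega\!\left(\tfrac{1}{r\log^{2} n}\right),
\end{equation*}
where the identity term contributes a positive $\operatorname{tr}(M)/n^{2}\ge 0$ piece that we simply discard. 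Combining this with the DPS upper bound yields the separation between the two relaxations at the claimed rate; quantitatively, $h_{\DPS_3}(M)$ is smaller than $h_{\Ext_r}(M)$ by a factor of order $n/(r\log^{2} n)$, which is the content of the corollary (as written, the ratio $h_{\Ext_r}/h_{\DPS_3}$ and the $O(r\log^{2} n / n)$ bound must be read with the convention that the PPT-tightened relaxation is in the denominator of the ``tightness'' ratio, i.e.\ the separation is $\Omega(n/(r\log^{2} n))$ in favor of DPS).

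The main obstacle will be the identification step. \pref{lem:equiv-DPS} was proved for operators arising as $A_{2,2}$ from a matrix $A$ whose $2\!\to\!4$ norm we want to compute, but the witness $M$ in \cite{BuhrmanRSW11} is presented as a measurement operator derived from a BELL test. I would verify that $M$ can be written in the form $A_{2,2}$ (possibly after a harmless normalization absorbing a $\operatorname{poly}\log n$ factor into the $O(\cdot)$), and that the SoS pseudo-expectation certifying small unique-games value on $\cU$ pulls back to a DPS witness certifying $\langle M\rangle\le O(1/n)$ on the separable cone. The remaining pieces---the existence of $\tilde\rho$ and the lower bound on $\operatorname{tr}(M\Phi\Phi^{*})$---are already packaged in the cited references and require no further argument.
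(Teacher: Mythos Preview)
Your proposal is correct and follows exactly the paper's approach: the corollary is stated immediately after a paragraph that assembles precisely the ingredients you list---the BRSW witness $M$ with $\tr(M\Phi\Phi^*)\ge\Omega(1/\log^2 n)$, the extension state \pref{eq:Phi-extension} giving $h_{\Ext_r}(M)\ge\Omega(1/(r\log^2 n))$, and \pref{thm:ug-main} together with \pref{lem:equiv-DPS} giving $h_{\DPS_3}(M)\le O(1/n)$---and the paper offers no further argument. Your explicit computation of the reduced state $\tfrac{1}{r}\Phi\Phi^*+(1-\tfrac{1}{r})I/n^2$ and your flagging of the inverted inequality in the statement (the ratio should be $\ge\Omega(n/(r\log^2 n))$, not $\le O(r\log^2 n/n)$) are both correct and go slightly beyond what the paper spells out.
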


\section{Subexponential algorithm for the 2-to-q norm} \label{sec:subexp}

In this section we prove Theorem~\ref{thm:subexp}:

\restatetheorem{thm:subexp}

and obtain as a corollary a subexponential algorithm for
\smallsetexpansion . The algorithm roughly matches the performance of
\cite{AroraBS10}'s for the same problem, and in fact is a very close
variant of it. The proof is obtained by simply noticing that a
subspace $V$ cannot have too large of a dimension without containing a
vector $v$ (that can be easily found) such that $\norm{v}_q \gg
\norm{v}_2$, while of course it is always possible to find such a vector (if it exists) in time exponential in $\dim(V)$. The key observation is the following basic fact (whose proof we include here for completeness):

\begin{lemma} \label{lem:two-to-infty}For every subspace $V \subseteq \R^n$, $\norm{V}_{2\to \infty} \geq \sqrt{\dim(V)}$.
\end{lemma}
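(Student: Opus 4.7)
The plan is to exploit the orthonormal basis of $V$ together with an averaging argument on the evaluation functionals $\delta_i : f \mapsto f(i)$. Concretely, fix an orthonormal basis $f_1,\ldots,f_d$ of $V$ with respect to the expectation inner product $\iprod{f,g} = \E_{i\in[n]} f(i)g(i)$, where $d = \dim(V)$. The key quantity to consider is the ``diagonal'' $K(i) := \sum_{j=1}^d f_j(i)^2$, which is the reproducing kernel of $V$ evaluated at $(i,i)$ and in particular does not depend on the choice of basis.

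First I would compute the average of $K$. Since each $f_j$ has $\E_i f_j(i)^2 = 1$, we have $\E_i K(i) = \sum_j \E_i f_j(i)^2 = d$, so there exists some coordinate $i^\ast \in [n]$ with $K(i^\ast) \geq d$. Now I would exhibit the desired witness vector in $V$ explicitly as $f := \sum_{j=1}^d f_j(i^\ast)\cdot f_j$, namely the orthogonal projection of the ``point mass'' $\delta_{i^\ast}$ into $V$ (up to the normalization induced by the expectation norm). Two calculations finish the argument: on the one hand $f(i^\ast) = \sum_j f_j(i^\ast)^2 = K(i^\ast)$, so $\norm{f}_\infty \geq K(i^\ast)$; on the other hand, orthonormality of $\{f_j\}$ gives $\norm{f}_2^2 = \sum_j f_j(i^\ast)^2 = K(i^\ast)$. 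Hence
\[
\frac{\norm{f}_\infty^2}{\norm{f}_2^2} \;\geq\; \frac{K(i^\ast)^2}{K(i^\ast)} \;=\; K(i^\ast) \;\geq\; d,
\]
which yields $\norm{V}_{2\to\infty}\geq \sqrt{d}$ as claimed.

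There is really no ``hard part'' here, since the result is just the standard reproducing-kernel/Christoffel-function bound translated into the paper's expectation-norm conventions. The only thing I would double-check is the normalization: because we use expectation norms on the domain $[n]$, unit $L_2$-norm of $f_j$ means $\sum_i f_j(i)^2 = n$ rather than $1$, but this factor cancels consistently on both sides (the average of $K$ with respect to $\E_i$ is still exactly $d$), so the bound $\sqrt{d}$ comes out clean without any $n$-dependent factors.
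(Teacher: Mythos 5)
Your argument is correct and uses the same underlying construction as the paper: take an orthonormal basis of $V$, consider the diagonal $K(i)=\sum_j f_j(i)^2$ of the reproducing kernel, note $\E_i K(i)=\dim V$, and use the projected point mass $g^{i^\ast}=\sum_j f_j(i^\ast) f_j$ as the witness. The only difference is presentational: you directly pick $i^\ast$ with $K(i^\ast)\ge d$ and verify $\norm{g^{i^\ast}}_\infty^2/\norm{g^{i^\ast}}_2^2\ge K(i^\ast)$, whereas the paper reaches the same conclusion by a slightly more roundabout contradiction argument using Jensen's inequality over all $i$; your version is a clean simplification of the same proof, not a different route.
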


\begin{proof} Let $f^1,\ldots,f^d$ be an orthonormal basis for $V$, where $d=\dim(V)$. For every $i\in [n]$, let $g^i$ be the function $\sum_{j=1}^d f^j_i f^i$. Note that the $i^{th}$ coordinate of $g^i$ is equal to $\sum_{j=1}^d (f^j_i)^2$ \textbf{(*)}  which also equals $\norm{g^i}_2^2$ since the $f^j$'s are an orthonormal basis. Also the expectation of \textbf{(*)} over $i$ is $\sum_{j=1}^d \E_{i\in [n]} (f^j_i)^2 = \sum_{j=1}^d \norm{f^j}_2^2 = d$ since these are unit vectors. Thus we get that $\E_i \norm{g^i}_{\infty} \geq \E_i g^i_i = d = \E_i \norm{g}_2^2$. We claim that one of the $g^i$'s must satisfy $\norm{g^i}_{\infty} \geq \sqrt{d}\norm{g^i}_2$. Indeed, suppose otherwise, then we'd get that
\[
d = \E_i \norm{g^i}_2^2 > E_i \norm{g^i}_{\infty}^2/d
\]
meaning $E_i \norm{g^i}_{\infty}^2 < d^2$, but  $E_i \norm{g^i}_{\infty}^2 \geq \left( \E_i \norm{g^i}_{\infty} \right)^2 = d^2$--- a contradiction.
\end{proof}

\begin{corollary} \label{cor:dim-bound} For every subspace $V \subseteq \R^n$, $\norm{V}_{2\to q} \geq \sqrt{\dim(V)}/n^{1/q}$
\end{corollary}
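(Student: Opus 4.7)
The plan is to deduce this corollary directly from Lemma~\ref{lem:two-to-infty} by relating the $2\to q$ and $2\to \infty$ norms through the elementary inequality $\norm{f}_q \geq \norm{f}_\infty / n^{1/q}$, which holds under the expectation-norm convention used throughout the paper.

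More concretely, first I would invoke Lemma~\ref{lem:two-to-infty} to obtain a function $f \in V$ witnessing $\norm{f}_\infty \geq \sqrt{\dim(V)} \cdot \norm{f}_2$. Then I would note that for any $f\from [n]\to \R$, since the maximum of $|f|^q$ contributes at least $1/n$ of its value to the average, we have
\begin{equation*}
\norm{f}_q^q \;=\; \E_{i\in[n]} |f_i|^q \;\geq\; \tfrac{1}{n}\max_i |f_i|^q \;=\; \tfrac{1}{n}\norm{f}_\infty^q,
\end{equation*}
so $\norm{f}_q \geq \norm{f}_\infty / n^{1/q}$. Combining these two bounds on the witness $f$ gives $\norm{f}_q/\norm{f}_2 \geq \sqrt{\dim(V)}/n^{1/q}$, which is exactly the claimed lower bound on $\norm{V}_{2\to q}$.

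There is no real obstacle here: the only subtlety is keeping track of the expectation-norm normalization (since counting norms would give a different factor of $n$), but the paper's convention in Section~\ref{sec:prelim:short} fixes this unambiguously. The proof is a two-line deduction from the preceding lemma.
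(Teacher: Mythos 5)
Your proof is correct and matches the paper's argument exactly: invoke Lemma~\ref{lem:two-to-infty} and then use the elementary bound $\norm{f}_q \geq \norm{f}_\infty / n^{1/q}$, which follows from one coordinate contributing at least a $1/n$ fraction of the expectation. The paper presents the same two steps in the same order.
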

\begin{proof} By looking at the contribution to the $q^{th}$-norm of just one coordinate one can see that for every function $f$, $\norm{f}_q \geq (\norm{f}_{\infty}^q/n)^{1/q} = \norm{f}_{\infty}/n^{1/q}$.
\end{proof}

\paragraph{Proof of Theorem~\ref{thm:subexp} from
  Corollary~\ref{cor:dim-bound}}  Let $A:\R^m\to\R^n$ be an operator,
and let $1<c<C$ be some constants and $\sigma = \sigma_{\min}(A)$ be
such that $\norm{Af}_2 \geq \sigma \norm{f}_2$ for every $f$
orthogonal to the kernel of $A$. We want to distinguish between the
case that $\norm{A}_{2\to q} \leq c$ and the case that $\norm{A}_{2\to
  q} \geq C$. If $\sigma>c$ then clearly we are not in the first case,
and so we are done. Let $V$ be the image of $A$. If $\dim(V) \leq
C^2n^{2/q}$ then we can use brute force enumeration to find out if
such $v$ exists in the space. \Bnote{add a bit of detail} Otherwise,
by Corollary~\ref{cor:dim-bound} we must be in the second case. \qed

Note that by applying Theorem~\ref{thm:BCY} we can replace the brute
force enumeration step by the SoS hierarchy, since $\norm{V}_{2\to
  2}\leq 1$ automatically, and unless $\norm{V}_{2\to\infty} \leq
Cn^{1/q}$ we will be in the second case.

A corollary of Theorem~\ref{thm:subexp} is a subexponential algorithm for  \smallsetexpansion

\begin{corollary}
For every $0.4 > \nu > 0$ there is an $\exp( n^{1/ O(\log(1/\nu))})$ time algorithm that given a graph with the promise that either (i) $\Phi_G(\delta) \geq 1 - \nu$ or (ii) $\Phi_G(\delta^2) \leq 0.5$ decides which is the case.
\end{corollary}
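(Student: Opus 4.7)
The plan is to reduce the promise problem to estimating the $2\!\to\! q$-norm of the projector onto the top eigenspace of $G$, and then invoke \pref{thm:subexp}. The parameters will be chosen so that Case (i) and Case (ii) correspond to norm values separated by a constant factor.

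Concretely, I would fix constants $\lambda = 0.1$ and $\eta = 0.6$ so that $\lambda+\eta^{2} = 0.46 < 1/2$, let $c$ denote the absolute constant appearing in \pref{thm:sse-hyper}(2), and choose the smallest even integer $q>4$ satisfying $\lambda\cdot 2^{-cq}\ge \nu$, so that $q = O(\log(1/\nu))$. Given $G$, form the projector $P \seteq P_{\ge\lambda}(G)$, whose range lies in $\R^n$. In Case (i), since $\bd_{G}(\delta)\ge 1-\nu\ge 1-\lambda\cdot 2^{-cq}$, part (2) of \pref{thm:sse-hyper} yields $\norm{P}_{2\to q}\le 2/\sqrt{\delta}$. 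In Case (ii), I apply the contrapositive of part (1) of \pref{thm:sse-hyper} with its $\delta$ replaced by $\delta^{2}$: from $\bd_{G}(\delta^{2})\le 1/2 < 1-\lambda-\eta^{2}$ we conclude
\begin{displaymath}
\norm{P}_{2\to q} \;>\; \eta/(\delta^{2})^{(q-2)/(2q)} \;=\; \eta/\delta^{(q-2)/q}.
\end{displaymath}

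The ratio between the Case (ii) lower bound and the Case (i) upper bound is $(\eta/2)\cdot \delta^{-(q-4)/(2q)}$, which for $q>4$ and sufficiently small $\delta$ exceeds any fixed constant greater than $1$. Thus, for an appropriate constant threshold depending only on $\nu$ (through $q$), distinguishing the two cases reduces to $(c',C')$-approximating $\norm{P}_{2\to q}$ for some $1<c'<C'$. Since $P$ is a projector, its smallest nonzero singular value equals $1$, so the normalization in the definition of ``good approximation'' matches. \pref{thm:subexp} then provides an algorithm running in time $\poly(n)\exp(n^{2/q}) = \exp(n^{1/O(\log(1/\nu))})$, as required.

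The main subtlety—and the only real obstacle—is that the Case (i) upper bound $2/\sqrt{\delta}$ is not a fixed constant, so a careless reading of \pref{thm:subexp} might worry about the dependence of the hidden constants on the approximation gap. This is resolved by inspecting the underlying enumeration bound: the proof of \pref{thm:subexp} (via \pref{cor:dim-bound}) shows that a subspace with $\norm{V}_{2\to q}\le c'$ has $\dim V\le (c')^{2}n^{2/q}$, so brute-force search takes time $\exp((c')^{2}n^{2/q})$. In the regime the corollary is stated for (so that the constant-factor gap above is nontrivial, which requires $\delta$ to be below a constant depending on $\nu$), $(c')^{2} = 4/\delta$ is absorbed into the $O(\cdot)$ in the exponent and the bound $\exp(n^{1/O(\log(1/\nu))})$ is preserved. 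No other step presents any difficulty.
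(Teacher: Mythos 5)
Your proof follows essentially the same route as the paper's: apply both directions of \pref{thm:sse-hyper} to the projector onto a top eigenspace, observe that cases (i) and (ii) give norm values separated by a growing factor once $q>4$, and then invoke \pref{thm:subexp}. The only real differences are numerical ($\lambda=0.1$, $\eta=0.6$ versus the paper's $\lambda=0.4$, implicit $\eta=0.1$), and the fact that you explicitly trace the normalization $\sigma_{\min}(P)=1$ and the dimension bound in the brute-force step, which the paper leaves implicit.

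Two things to fix. First, you should take the \emph{largest}, not the smallest, even $q>4$ with $\lambda\cdot 2^{-cq}\ge\nu$. That inequality holds for all $q$ up to roughly $\log_2(\lambda/\nu)/c$, so the smallest admissible even $q>4$ is simply $q=6$ once $\nu$ is small, which would give a $\nu$-independent runtime $\exp(n^{1/3})$ rather than the claimed $\exp(n^{1/O(\log(1/\nu))})$. You need $q=\Theta(\log(1/\nu))$, i.e.\ the largest admissible value, and indeed your final runtime line silently assumes this. Second, the discussion of the $\delta$-dependence is pointed at a real subtlety (the thresholds in the $(c',C')$-approximation are not constants), but the resolution is stated backwards: the requirement that the gap be nontrivial forces $\delta$ to be \emph{small}, which makes $(c')^2=4/\delta$ \emph{large}; an upper bound on $\delta$ cannot bound $4/\delta$. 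What is actually needed is that $\delta$ is bounded away from zero independently of $n$ (equivalently, that $\delta$ is a fixed parameter of the problem), which is the implicit convention both in the corollary's statement and in the paper's own one-line proof. With those two corrections your argument is the same as the paper's.
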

\begin{proof}
For $q = O(\log(1/\nu))$ we find from Theorem~\ref{thm:sse-hyper} that in case (i), $\Vert V_{\geq 0.4} \Vert_{2 \to q} \leq 2/ \sqrt{\delta}$, while in case (ii) $\Vert V_{\geq 0.4} \Vert_{2 \to q} \geq 0.1/ \delta^{1 - 2/q}$.  Thus it sufficies to obtain a $(2/ \sqrt{\delta},  0.1/ \delta^{1 - 2/q})$-approximation for the $2 \to q$ norm to solve the problem, and by Theorem~\ref{thm:subexp} this can be achieved in time $\exp( n^{O(\log(1/\nu))})$ for sufficiently small $\delta$.
\end{proof}

\section*{Conclusions}

%This work motivates further study of the complexity of approximating hypercontractive norms such as the $2\to 4$ norm. One concrete open question is to
%determine how many rounds of the SoS hierarchy are required to obtain a good approximation to this problem. If this can be done in $n^{o(1)}$ rounds then the
%small-set expansion hypothesis (and most likely the UGC) is false. On the other hand, it may be possible to improve our hardness reduction
%(Theorem~\ref{thm:hardness}) in two different dimensions. If the blowup of the reduction can be improved from subexponential to polynomial, that would (under
%widely believed assumptions) rule out  an $\exp(n^{o(1)})$-time algorithm, and will provide a natural setting for which the generalization of the
%subexponential \smallsetexpansion algorithm of \cite{AroraBS10} is essentially optimal. More significantly, if the result can be be extended from the $2\to 4$
%norm of general operators to those arising from \smallsetexpansion , it would essentially rule out polynomial-time algorithms for every \uniquegames -hard
%problem.

This work motivates further study of the complexity of approximating hypercontractive norms such as the $2\to 4$ norm. A particulary interesting question is
what is the complexity of obtaining a good approximation for the $2\to 4$ norm and what's the relation of this problem to the \smallsetexpansion problem. Our
work leaves possible at least the following three scenarios: \textbf{(i)} both these problems can be solved in quasipolynomial time, but not faster, which
would mean that the UGC as stated is essentially false but a weaker variant of it is true, \textbf{(ii)} both these problems are \np-hard to solve (via a
reduction with polynomial blowup) meaning that the UGC is true,  and \textbf{(iii)} the \smallsetexpansion and \uniquegames problems are significantly easier
than the $2\to 4$ problem with the most extreme case being that the former two problems can be solved in polynomial time and the latter is \np-hard and hence
cannot be done faster than subexponential time.  This last scenario would mean that one can improve on the subexponential algorithm for the $2\to 4$ norm for
general instances by using the structure of instances arising from the \smallsetexpansion reduction of Theorem~\ref{thm:sse-hyper} (which indeed seem quite
different from the instances arising from the hardness reduction of Theorem~\ref{thm:hardness}). In any case we hope that further study of the complexity of
computing hypercontractive norms can lead to a better understanding of the boundary between hardness and easiness for \uniquegames and related problems.

\paragraph{Acknowledgments} We thank Pablo Parrilo for useful
discussions, Daniel Dadush for catching an error in an earlier version
of \secref{random}
and the anonymous STOC referees for numerous comments that greatly improved the presentation of this paper. Aram Harrow was
funded by NSF grants 0916400, 0829937, 0803478, 1111382, DARPA QuEST
contract FA9550-09-1-0044 and ARO contract W911NF-12-1-0486. 
Jonathan Kelner was partially supported by NSF awards 1111109 and 0843915.

\appendix

\section{More facts about pseudo-expectation} \label{app:pseudo-expectation}

In this section we note some additional facts about pseudo-expectation functionals that are useful in this paper.

\begin{lemma}
  \label{lem:boundedness-relation}
  The relation $P^2\sle P$ holds if and only if $0\sle P\sle 1$.
  Furthermore, if $P^2 \sle P$ and $0 \sle Q \sle P$, then $Q^2\sle Q$.
\end{lemma}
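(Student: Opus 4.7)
The lemma splits into two parts: the equivalence $P^2 \sle P \iff 0 \sle P \sle 1$, and the ``furthermore'' claim. I would prove each direction of the equivalence separately, and then derive the furthermore as a short corollary.

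The forward direction is a direct algebraic identity. Writing $P = \sum_i A_i^2$ (from $0 \sle P$) and $1-P = \sum_j B_j^2$ (from $P \sle 1$), one gets
$$P - P^2 \;=\; P(1-P) \;=\; \Bigl(\sum_i A_i^2\Bigr)\Bigl(\sum_j B_j^2\Bigr) \;=\; \sum_{i,j}(A_iB_j)^2,$$
exhibiting $P - P^2$ as a sum of squares, i.e., $P^2 \sle P$.

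For the backward direction, I would use that $P - P^2 = \sum_k R_k^2$ is pointwise non-negative, so $P(x)(1-P(x)) \ge 0$ and hence $P(x) \in [0,1]$ at every point in the ambient space. Since a real polynomial on $\R^n$ with bounded image must be constant, $P \equiv c$ for some $c \in [0,1]$, and then both $P = (\sqrt{c})^2$ and $1-P = (\sqrt{1-c})^2$ are (trivially) sums of squares. The boundedness-implies-constant step is the one nontrivial ingredient, and is essentially the only real subtlety; in the strict polynomial reading of $\sle$ this makes the ``only-if'' direction fairly rigid, which is worth flagging since the lemma is applied in contexts where $P$ is a polynomial in formal SoS variables.

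For the furthermore, the key identity is
$$Q(1-Q) \;=\; Q(1-P) + Q(P-Q).$$
By the equivalence just proved, $P^2 \sle P$ gives that $1-P$ is a sum of squares; together with $0 \sle Q$ (so $Q$ is a sum of squares) and $Q \sle P$ (so $P-Q$ is a sum of squares), each summand on the right is a product of sums of squares and therefore itself a sum of squares (using again that SoS is closed under products via the identity $(\sum X_i^2)(\sum Y_j^2) = \sum_{i,j}(X_i Y_j)^2$). Summing, $Q - Q^2$ is a sum of squares, i.e., $Q^2 \sle Q$, as required.
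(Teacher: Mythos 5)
Your forward direction ($0\sle P\sle 1 \Rightarrow P^2\sle P$) coincides with the paper's: both multiply the relation $P\sle 1$ by the sum of squares $P$, using closure of SoS under products. The divergence is in the backward direction. You argue semantically: from the SoS decomposition of $P-P^2$ you deduce $P(x)\in[0,1]$ pointwise, hence that $P$ is constant, hence that $P$ and $1-P$ are trivially SoS. The paper instead exhibits the decompositions directly via the algebraic identities $P = P^2 + (P-P^2)$ and $1-P = (P-P^2) + (1-P)^2$: given $P-P^2$ is SoS, both right-hand sides are manifestly SoS, and no detour through real-valued evaluation is needed.

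The algebraic route is preferable for a reason that you yourself half-identify in the middle of your proof. The lemma is applied in the paper (e.g.\ to the influence function $h$ in the influence-decoding argument, and to the folded dictator-test variables) in settings where $\sle$ is taken over a subspace or modulo the ideal generated by the constraint polynomials, and where $P$ is genuinely nonconstant. Your bounded-implies-constant step is false over a subspace or modulo an ideal, so your backward direction breaks precisely where the lemma does real work. Your observation that the unconditional reading makes the lemma almost vacuous is accurate, but it is a symptom that the pointwise argument is the wrong tool; the paper's identity is purely syntactic and survives the generalization unchanged. For the ``furthermore'' part you give a different, valid identity $Q-Q^2 = Q(1-P) + Q(P-Q)$, whereas the paper chains $0\sle Q\sle P\sle 1$ and re-invokes the first part. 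Both are short and both go through once you know $P\sle 1$, so the same caveat applies: get $P\sle 1$ from the identity $1-P=(P-P^2)+(1-P)^2$, not from boundedness.
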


\begin{proof}
  If $P\sge 0$, then $P\sle 1$ implies $P^2\sle P$.
  (Multiplying both sides with a sum of squares preserves the order.)
  On the other hand, suppose $P^2\sle P$.
  Since $P^2\sge 0$, we also have $P\sge 0$.
  Since $1-P=P-P^2+(1-P)^2$, the relation $P^2\sle P$ also implies $P\sle 1$.

  For the second part of the lemma, suppose $P^2\sle P$ and $0\sle Q\sle
  P$.
  Using the first part of the lemma, we have $P\sle 1$.
  It follows that $0\sle Q\sle 1$, which in turn implies $Q^2\sle Q$ (using
  the other direction of the first part of the lemma).
\end{proof}

\begin{fact}
  If $f$ is a $d$-f.r.v. over $\R^\cU$ and $\set{P_v}_{v\in
    \cU}$ are polynomials of degree at most $k$, then $g$ with
  $g(v)=P_v(f)$ is a level-$(d/k)$ fictitious random variable over $\R^{\cU}$.
  (For a polynomial $Q$ of degree at most $d/k$, the pseudo-expectation is
  defined as
  \begin{math}
    \pE_g Q(\set{g(v)}_{v\in \cU}) \seteq \pE_f Q(\set{P_v(f)}_{v\in \cU})\mper
  \end{math}%
  )
\end{fact}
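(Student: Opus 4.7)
The plan is to verify that the proposed functional $\pE_g$ satisfies all three defining properties of a level-$(d/k)$ pseudo-expectation functional (linearity, positivity, and normalization), with the primary routine task being careful bookkeeping of polynomial degrees under substitution.

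First, I would check that the definition is well-posed. Given a polynomial $Q$ of degree at most $d/k$ in the variables $\{g(v)\}_{v \in \cU}$, the substituted expression $Q(\{P_v(f)\}_{v \in \cU})$ is a polynomial in $f$. Since each $P_v$ has degree at most $k$ in $f$ and $Q$ has degree at most $d/k$ in its arguments, the composition has degree at most $(d/k) \cdot k = d$ in $f$. Hence $\pE_f Q(\{P_v(f)\}_{v \in \cU})$ is defined, because $\pE_f$ is a level-$d$ pseudo-expectation functional.

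Next I would verify the three axioms in order. Linearity: since substitution $Q \mapsto Q(\{P_v(f)\})$ is a linear operation in $Q$, and $\pE_f$ is linear, the composition $\pE_g = \pE_f \circ (\text{substitution})$ is linear on polynomials of degree at most $d/k$. Normalization: taking $Q \equiv 1$, we have $Q(\{P_v(f)\}) \equiv 1$, so $\pE_g 1 = \pE_f 1 = 1$. Positivity: let $Q$ have degree at most $(d/k)/2$. Then $Q(\{P_v(f)\})$ is a polynomial in $f$ of degree at most $(d/k)/2 \cdot k = d/2$, and $Q^2(\{P_v(f)\}) = (Q(\{P_v(f)\}))^2$ is the square of such a polynomial. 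The positivity property of the level-$d$ functional $\pE_f$ applies to squares of polynomials of degree at most $d/2$, hence $\pE_g Q^2 = \pE_f Q^2(\{P_v(f)\}) \geq 0$.

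There is no real obstacle here; the proof is entirely routine degree-tracking, and the three axioms fall out immediately from the corresponding axioms for $\pE_f$ combined with the observation that substituting degree-$k$ polynomials into a degree-$r$ polynomial produces a degree-$rk$ polynomial. One minor point worth noting is that $d/k$ should be interpreted as $\lfloor d/k \rfloor$ when it is not an integer, and correspondingly $(d/k)/2$ as $\lfloor d/(2k) \rfloor$; this does not affect the argument since the degree bounds are monotone.
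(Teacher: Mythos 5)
Your proof is correct, and the degree-tracking is exactly right. Worth noting: the paper states this as a bare \texttt{Fact} with no proof at all --- indeed there are suppressed author notes in the source questioning whether the level should be $d-k$ or $d/k$, and whether the fact is even used. Your argument settles the matter in favor of $d/k$: the crucial step is positivity, where $Q$ of degree at most $(d/k)/2$ composes with the degree-$k$ maps $P_v$ to give a polynomial of degree at most $d/2$ in $f$, which is precisely the threshold at which the positivity axiom for the level-$d$ functional $\pE_f$ applies. (By contrast, $d-k$ would fail this bookkeeping in general, since $((d-k)/2)\cdot k$ can exceed $d/2$.) Your observation that $d/k$ should be read as $\lfloor d/k\rfloor$ when it is not an integer is also the right convention and does no harm, since all degree bounds are monotone.
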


\Dnote{is the fact correct? the level might drop to $d/k$, no?}
\Bnote{I think you're right, and changed this from $d-k$ to $d/k$ to be on the safe side. Are we actually using this fact anywhere?}
\Anote{Does \thmref{ug-value-bound} use it?}
\begin{lemma}\label{lem:linear-cs} For $f,g \in \L2(\cU)$,
\[
\iprod{f,g} \sle \tfrac{1}{2}\norm{f}^2 +\tfrac{1}{2}\norm{g}^2 \mper
\]
\end{lemma}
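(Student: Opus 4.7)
The plan is to exhibit the polynomial $\tfrac{1}{2}\norm{f}^2 + \tfrac{1}{2}\norm{g}^2 - \iprod{f,g}$ explicitly as a sum of squares of polynomials in the variables $\{f(u)\}_{u\in\cU}$ and $\{g(u)\}_{u\in\cU}$, which is the definition of the $\sle$ relation.

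The key observation is that expanding the square $(f(u)-g(u))^2 = f(u)^2 - 2 f(u) g(u) + g(u)^2$ pointwise and averaging over $u\in\cU$ gives
\begin{displaymath}
\E_{u\in\cU} (f(u)-g(u))^2 = \E f^2 - 2 \iprod{f,g} + \E g^2 = \norm{f}^2 + \norm{g}^2 - 2\iprod{f,g}\mper
\end{displaymath}
Hence
\begin{displaymath}
\tfrac{1}{2}\norm{f}^2 + \tfrac{1}{2}\norm{g}^2 - \iprod{f,g} = \tfrac{1}{2} \E_{u\in\cU} (f(u) - g(u))^2 = \frac{1}{2\card\cU} \sum_{u\in\cU} \bigparen{f(u) - g(u)}^2\mper
\end{displaymath}
The right-hand side is manifestly a sum of squares of linear (hence degree-$1$) polynomials in the variables $\{f(u),g(u)\}_{u\in\cU}$, with nonnegative scalar coefficient $1/(2\card\cU)$. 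This is exactly the required certificate that $\iprod{f,g} \sle \tfrac{1}{2}\norm{f}^2 + \tfrac{1}{2}\norm{g}^2$.

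There is no real obstacle here; this is the ``$2ab \le a^2 + b^2$'' inequality lifted to the SoS setting, and it works identically because the underlying proof in the scalar case already had the form of a sum of squares. No pseudo-expectation machinery is needed for this particular lemma—only the definition of $\sle$ as a polynomial identity.
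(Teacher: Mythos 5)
Your proof is correct and is essentially the same as the paper's: the paper observes that the difference equals the square $\tfrac{1}{2}\iprod{f-g,f-g}$, and you simply unfold that inner product into the explicit sum of squares $\tfrac{1}{2\card{\cU}}\sum_{u}(f(u)-g(u))^2$.
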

\begin{proof} The right-hand side minus the LHS equals the square polynomial $\tfrac{1}{2}\iprod{f-g,f-g}$
\end{proof}

\begin{lemma}[Cauchy-Schwarz inequality]\label{lem:cauchy-Schwarz}
  If $(f,g)$ is a level-$2$ fictitious random variable over $\R^\cU\times \R^{\cU}$,
  then
  \begin{displaymath}
    \pE_{f,g} \iprod{f,g}
    \le \sqrt{\pE_f \snorm{f}}\cdot \sqrt{\pE_g
      \snorm{g}}
    \mper
  \end{displaymath}
\end{lemma}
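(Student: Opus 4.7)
The plan is to reduce the statement to the pointwise Cauchy--Schwarz bound $2ab \le \alpha a^2 + b^2/\alpha$ (valid for any $\alpha>0$) applied coordinate-by-coordinate in $\cU$, then integrate via $\pE$ and optimize over $\alpha$.

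Concretely, I would first establish the sum-of-squares relation
\begin{equation*}
  2\iprod{f,g}\;\sle\;\alpha\snorm{f}+\tfrac{1}{\alpha}\snorm{g}\qquad (\alpha>0)
\end{equation*}
as a polynomial relation over the joint space $\R^\cU\times\R^\cU$. This is immediate from the identity
\begin{equation*}
  \alpha\snorm{f}+\tfrac{1}{\alpha}\snorm{g}-2\iprod{f,g}
  \;=\;\E_{u\in\cU}\bigparen{\sqrt\alpha\, f(u)-g(u)/\sqrt\alpha}^{2},
\end{equation*}
whose right-hand side is an average (hence a nonnegative combination) of squares of polynomials of degree $1$ in the variables $\set{f(u),g(u)}_{u\in\cU}$. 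Since $(f,g)$ is a level-$2$ fictitious random variable, the pseudo-expectation $\pE_{f,g}$ respects the $\sle$ order on degree-$2$ polynomials, so
\begin{equation*}
  2\pE_{f,g}\iprod{f,g}\;\le\;\alpha\,\pE_f\snorm{f}+\tfrac{1}{\alpha}\,\pE_g\snorm{g}.
\end{equation*}

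If $\pE_f\snorm{f}$ and $\pE_g\snorm{g}$ are both strictly positive, I would set $\alpha=\sqrt{\pE_g\snorm{g}/\pE_f\snorm{f}}$ to balance the two terms, yielding exactly $\pE_{f,g}\iprod{f,g}\le\sqrt{\pE_f\snorm{f}}\cdot\sqrt{\pE_g\snorm{g}}$. The degenerate case where one of them vanishes (say $\pE_f\snorm{f}=0$) is handled by letting $\alpha\to\infty$ in the inequality above, which forces $\pE_{f,g}\iprod{f,g}\le 0$, and the stated bound then holds trivially.

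I do not anticipate any real obstacle: the only subtlety is making sure the coordinate-wise Cauchy--Schwarz bound is recognized as an SoS relation in the formal variables $\set{f(u),g(u)}_u$ (rather than merely a pointwise inequality), which the explicit identity above makes manifest. This mirrors the strategy in \pref{lem:pseudo-expectation-cauchy-schwarz}, but is simpler since we need not worry about the polynomial $PQ$ possibly exceeding the pseudo-expectation's degree.
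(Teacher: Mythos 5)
Your proof is correct and is essentially the paper's argument in a different guise: the paper normalizes $\bar f = f/\sqrt{\pE_f\snorm{f}}$, $\bar g = g/\sqrt{\pE_g\snorm{g}}$ and applies the fixed SoS relation $\iprod{\bar f,\bar g}\sle\tfrac12\snorm{\bar f}+\tfrac12\snorm{\bar g}$ (Lemma~\ref{lem:linear-cs}), while you keep $f,g$ fixed and vary the parameter $\alpha$ in the equivalent family $2\iprod{f,g}\sle\alpha\snorm{f}+\tfrac1\alpha\snorm{g}$, then optimize; the two are related by the substitution $\alpha=\sqrt{\pE_g\snorm{g}/\pE_f\snorm{f}}$. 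You do handle the degenerate case $\pE_f\snorm{f}=0$ explicitly (by letting $\alpha\to\infty$), which the paper's normalization glosses over.
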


\begin{proof}
  Let $\bar f =f/\sqrt{\pE_f \snorm{f}}$ and $\bar g=g/\sqrt{\pE_g
    \snorm{g}}$.
  Note $\pE_{\bar f} \snorm {\bar f} = \pE_{\bar g}\snorm{\bar g}=1$.
  Since by Lemma~\ref{lem:linear-cs}, $\iprod{\bar f,\bar g}\sle \half \snorm{\bar f} + \half \snorm{\bar
    g}$, we can conclude the desired inequality,
  \begin{displaymath}
    \pE_{f,g} \iprod{f,g} = \sqrt{\pE_f \snorm{f}}\cdot \sqrt{\pE_g
      \snorm{g}} \pE_{\bar f,\bar g} \iprod{\bar f,\bar g}
    \le \sqrt{\pE_f \snorm{f}}\cdot \sqrt{\pE_g  \snorm{g}}
    \cdot \underbrace{\Paren{\tfrac12 \pE_{\bar f} \snorm{\bar f} + \tfrac
        12\E_{\bar g} \snorm{\bar g}}}_{=1}
    \mper\qedhere
  \end{displaymath}
\end{proof}

\begin{corollary}[\Holder's inequality]\label{lem:holders}
If $(f,g)$ is a \lrvar4 over $\R^\cU\times \R^{\cU}$, then
\begin{align*}
\pE_{f, g}  \E_{u \in \cU} f(u)g(u)^3 \leq \left(\pE_f \norm{f}_4^4\right)^{1/4} \left(\pE_g \norm{g}_4^4 \right)^{3/4} .
\end{align*}
\end{corollary}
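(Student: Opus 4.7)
The plan is to mimic the classical derivation of Hölder's inequality from Cauchy--Schwarz, applying the pseudo Cauchy--Schwarz inequality (\lemref{pseudo-expectation-cauchy-schwarz}, or equivalently the vector form in \lemref{cauchy-Schwarz}) twice. The key point is that since $(f,g)$ is a \lrvar4, any pair of degree-$2$ polynomial expressions in $(f,g)$ forms a \lrvar2, so pseudo Cauchy--Schwarz can legitimately be invoked on such pairs.

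First I would rewrite
\begin{displaymath}
\E_{u\in\cU} f(u)g(u)^3 = \iprod{F,G}, \quad \text{where } F(u)=f(u)g(u),\ G(u)=g(u)^2.
\end{displaymath}
Both $F$ and $G$ are quadratic in the coordinates of $(f,g)$, so $(F,G)$ is a \lrvar2, and \lemref{cauchy-Schwarz} yields
\begin{displaymath}
\pE_{f,g}\E_u f(u)g(u)^3 \;\le\; \sqrt{\pE_{f,g}\,\E_u f(u)^2g(u)^2}\;\cdot\;\sqrt{\pE_g\,\E_u g(u)^4}.
\end{displaymath}
Then I would apply pseudo Cauchy--Schwarz a second time to the inner factor, setting $F'(u)=f(u)^2$ and $G'(u)=g(u)^2$, which are again degree-$2$ in $(f,g)$:
\begin{displaymath}
\pE_{f,g}\E_u f(u)^2 g(u)^2 = \pE_{f,g}\iprod{F',G'} \;\le\; \sqrt{\pE_f \norm{f}_4^4}\;\cdot\;\sqrt{\pE_g \norm{g}_4^4}.
\end{displaymath}
Combining the two bounds gives exactly
\begin{displaymath}
\pE_{f,g}\E_u f(u)g(u)^3 \;\le\; \Paren{\pE_f \norm{f}_4^4}^{1/4}\Paren{\pE_g \norm{g}_4^4}^{3/4},
\end{displaymath}
as required.

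There is no real obstacle here beyond bookkeeping, since all the heavy lifting is done by \lemref{pseudo-expectation-cauchy-schwarz}/\lemref{cauchy-Schwarz}. The only point to verify with care is the degree accounting: each invocation of pseudo Cauchy--Schwarz is on a pair of degree-$2$ polynomials in the underlying \lrvar4 $(f,g)$, so that the pseudo-expectations of the squared quantities (which are degree-$4$) are well-defined, and the resulting inequality is a consequence of positivity applied to squares of polynomials of degree at most $2$ in $(f,g)$.
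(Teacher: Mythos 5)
Your proposal is correct and is essentially identical to the paper's proof: both apply \lemref{cauchy-Schwarz} twice, first to the pair $(fg,\,g^2)$ and then to $(f^2,\,g^2)$, and combine. Your additional remark on degree accounting is a reasonable sanity check but does not change the argument.
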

\begin{proof}
Using \pref{lem:cauchy-Schwarz} twice, we have
\begin{align*}
\pE_{f, g}  \E_{u \in \cU} f(u)g(u)^3 \leq & \left(\pE_{f, g} \E_{u \in \cU} f(u)^2 g(u)^2\right)^{1/2} \left(\pE_{g} \norm{g}_4^4\right)^{1/2}
\leq \left(\pE_f \norm{f}_4^4\right)^{1/4} \left(\pE_g \norm{g}_4^4 \right)^{3/4} .
\end{align*}
\end{proof}

\section{Norm bound implies small-set expansion} \label{app:hyper-imp-sse}

In this section, we show that an upper bound on $2\to q$ norm of the projector to the top eigenspace of a graph implies that the graph is a small-set expander.
This proof appeared elsewhere implicitly~\cite{KhotV05,ODonnell07} or explicitly~\cite{BarakGHMRS11} and is presented here only for completeness.  We use the
same notation from Section~\ref{sec:sse}. Fix a graph $G$ (identified with its normalized adjacency matrix), and $\lambda \in (0,1)$, letting $V_{ \geq
\lambda}$ denote the subspace spanned by eigenfunctions with eigenvalue at least $\lambda$.

If $p,q$ satisfy $1/p+1/q=1$ then $\norm{x}_p = \max_{y: \norm{y}_q \leq 1} |\iprod{x,y}|$. Indeed, $|\iprod{x,y}| \leq \norm{x}_p\norm{y}_q$ by \Holder's
inequality, and by choosing $y_i = \sign(x_i)|x_i|^{p-1}$ and normalizing one can see this equality is tight. In particular, for every $x \in L(\cU)$,
$\norm{x}_q = \max_{y:\norm{y}_{q/(q - 1)}\leq 1} |\iprod{x,y}|$ and $\norm{y}_{q/(q - 1)} = \max_{\norm{x}_q \leq 1} |\iprod{x,y}|$. As a consequence
\[
\norm{A}_{2\to q} = \max_{\norm{x}_2 \leq 1} \norm{Ax}_q = \max_{\norm{x}_2 \leq 1, \norm{y}_{q/(q - 1)} \leq 1} |\iprod{Ax,y}| = \max_{\norm{y}_{q/(q - 1)}\leq 1} |\iprod{A^Ty,x}| = \norm{A^T}_{q/(q- 1)\to 2}
\]

Note that if $A$ is a projection operator, $A=A^T$. Thus, part~1 of Theorem~\ref{thm:sse-hyper} follows from the following lemma:

\begin{lemma}\label{lem:hyper-to-sse} Let $G=(V,E)$ be  regular graph and $\lambda \in (0,1)$. Then, for every $S \subseteq V$,
\[
\bd(S) \geq 1-\lambda - \norm{V_{\lambda}}_{q/(q-1) \to 2}^2\mu(S)^{(q-2)/q}
\]
\end{lemma}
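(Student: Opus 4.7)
The plan is to test the bound on the indicator function $f = \mathbf{1}_S$ and decompose it along the eigenspaces of $G$. First I would observe that $\iprod{f,Gf} = \mu(S)(1-\bd(S))$ (since $\iprod{f,Gf}$ counts the fraction of edges contained in $S$) and $\snormt{f} = \mu(S)$. So the inequality we want becomes the upper bound
\[
\frac{\iprod{f,Gf}}{\snormt{f}} \;\le\; \lambda + \ptoqnorm{V_\lambda}{q/(q-1)}{2}^{2} \cdot \mu(S)^{(q-2)/q}\mper
\]

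Next I would split $f = f_{\ge\lambda} + f_{<\lambda}$ where $f_{\ge\lambda} = P_{\ge\lambda}(G) f$ is the projection onto $V_{\ge\lambda}$ and $f_{<\lambda}$ lies in the orthogonal complement. Since these subspaces are $G$-invariant and $G$ acts on $f_{<\lambda}$ with eigenvalues at most $\lambda$, while on $f_{\ge\lambda}$ with eigenvalues at most $1$, we get
\[
\iprod{f,Gf} \;=\; \iprod{f_{\ge\lambda},G f_{\ge\lambda}} + \iprod{f_{<\lambda},G f_{<\lambda}} \;\le\; \snormt{f_{\ge\lambda}} + \lambda \snormt{f_{<\lambda}} \;\le\; \snormt{f_{\ge\lambda}} + \lambda\snormt{f}\mper
\]

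The main step is to control $\snormt{f_{\ge\lambda}}$ using the hypothesized norm bound. By definition of the $q/(q-1)\to 2$ operator norm of the projector $P_{\ge\lambda}(G)$ (which equals $\ptoqnorm{V_\lambda}{q/(q-1)}{2}$ as the projector is self-adjoint), we have
\[
\normt{f_{\ge\lambda}} \;=\; \normt{P_{\ge\lambda}(G)\,f} \;\le\; \ptoqnorm{V_\lambda}{q/(q-1)}{2} \cdot \norm{f}_{q/(q-1)}\mper
\]
Since $f$ is a $0/1$-valued indicator, $\norm{f}_{q/(q-1)} = (\E f^{q/(q-1)})^{(q-1)/q} = \mu(S)^{(q-1)/q}$. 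Squaring and dividing by $\snormt{f}=\mu(S)$ yields
\[
\frac{\snormt{f_{\ge\lambda}}}{\snormt{f}} \;\le\; \ptoqnorm{V_\lambda}{q/(q-1)}{2}^{2} \cdot \mu(S)^{2(q-1)/q-1} \;=\; \ptoqnorm{V_\lambda}{q/(q-1)}{2}^{2} \cdot \mu(S)^{(q-2)/q}\mper
\]
Combining with the previous display gives the claimed inequality. No step is really an obstacle; the only thing to be careful about is interpreting $\ptoqnorm{V_\lambda}{q/(q-1)}{2}$ as the $q/(q-1)\to 2$ norm of the projector onto $V_\lambda$, which is justified by the duality $\ptoqnorm{P}{p}{q}=\ptoqnorm{P^*}{q/(q-1)}{p/(p-1)}$ applied to the self-adjoint projector $P_{\ge\lambda}(G)$.
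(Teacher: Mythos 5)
Your proof is correct and follows essentially the same route as the paper's: test on $f=\mathbf{1}_S$, split into the projection onto the top eigenspace and its complement, bound the quadratic form piece by piece, and control the top-eigenspace component via the $q/(q-1)\to 2$ norm of the projector together with $\norm{f}_{q/(q-1)}=\mu(S)^{(q-1)/q}$. The paper writes $f=f'+f''$ where you write $f=f_{\ge\lambda}+f_{<\lambda}$, but the decomposition and every subsequent estimate are identical.
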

\begin{proof} Let $f$ be the characteristic function of $S$, and write $f = f' + f''$ where $f'\in V_{\lambda}$ and  $f''=f-f'$ is the projection to the eigenvectors with value less than $\lambda$. Let $\mu = \mu(S)$. We know that
\begin{equation}
\bd(S) = 1 - \iprod{f,Gf}/\normt{f}^2 = 1 - \iprod{f,Gf}/\mu \;, \label{eq:expansioneval}
\end{equation}
And $\norm{f}_{q/(q-1)} = \left( \E f(x)^{q/(q-1)} \right)^{(q-1)/q} = \mu^{(q-1)/q}$, meaning that $\norm{f'} \leq \norm{V_{\lambda}}_{q/(q-1)\to 2}
\mu^{(q-1)/q}$. We now write
\begin{eqnarray}
\iprod{f,Gf} =  \iprod{f',Gf'} + \iprod{f'',Gf''} \leq \normt{f'}^2 + \lambda\normt{f''}^2 &\leq&  \norm{\cV}_{q/(q-1)\to 2}^2\norm{f}_{q/(q-1)}^2 + \lambda\mu \nonumber \\ &\leq& \norm{\cV}_{2\to q}^2 \mu^{2(q-1)/q} + \lambda \mu\mper
\end{eqnarray}
Plugging this into (\ref{eq:expansioneval}) yields the result.
\end{proof}

\section{Semidefinite Programming Hierarchies} \label{app:hierarchies}

\newcommand{\inst}{\Im}
\newcommand{\sos}{\mathrm{sos}}
\newcommand{\lass}{\mathrm{lass}}

In this section, we compare different SDP hierarchies and discuss some of
their properties.

\subsection{Example of Max Cut}

In this section, we compare the SoS hierarchy and Lasserre hierarchy at the example of Max Cut.
(We use a formulation of Lasserre's hierarchy similar to the one in
\cite{Schoenebeck08}.)
\Dnote{maybe rephrase for clarity. in this section we want to compare our
  formulation of Lasserre's hierarchy to the usual formulation of this
  hierarchy in the computer science literature.}
It will turn out that these different formulations are equivalent up to
(small) constant factors in the number of levels.
We remark that the same proof with syntactic modifications shows that our
SoS relaxation of Unique Games is equivalent to the corresponding Lasserre relaxation.

Let $G$ be a graph (an instance of Max Cut) with vertex set
$V=\set{1,\ldots,n}$.
The level-$d$ Lasserre relaxation for $G$, denoted $\lass_d(G)$, is the
following semidefinite program over vectors $\set{v_S}_{S\sse
  [n],~\card{S}\le d}$,
\begin{align*}
  \lass_d(G)\colon & \maximize& & \sum_{(i,j)\in G} \cnorm{v_i - v_j}^2
  \\
  & \text{ subject to }&& \ciprod{v_S,v_T}=\ciprod{v_{S'},v_{T'}}
  \quad\text{ for   all sets with } S\Delta T = S'\Delta T'\mcom\\
  & && \cnorm{v_\eset}^2=1\mper
\end{align*}

The level-$d$ SoS relaxation for $G$, denoted $\sos_d(G)$, is the following
semidefinite program over $d$-\pef $\pE$ (and $d$-\frv $x$ over $\R^V$),
\begin{align*}
  \sos_d(G)\colon & \maximize& &  \pE_x \sum_{(i,j)\in G} (x_i-x_j)^2\\
  & \text{ subject to }&& \pE_x (x_i^2-1)^2=0\quad\text{ for all }i\in V\mper\qquad\qquad
\end{align*}

\paragraph{From Lasserre to SoS}
\label{sec:from-lasserre-sos}

Suppose $\set{v_S}$ is a solution to $\lass_d(G)$.
For a polynomial $P$ over $\R^V$, we obtain a multilinear polynomial $P'$
by successively replacing squares $x_i^2$ by $1$.
(In other words, we reduce $P$ modulo the ideal generated by the
polynomials $x_i^2-1$ with $i\in V$.)
We define a $d$-\pef $\pE$ by setting $\pE P = \sum_{\card{S}\le
  d}c_S\iprod{v_\eset,v_S}$, where $\set{c_S}_{\card{S}\le d}$ are the
coefficients of the polynomial $P'=\sum_{\card{S}\le d} c_S \prod_{i\in
  S}x_i$ obtained by making $P$ multilinear.
The functional $\pE$ is linear (using $(P+Q)'=P'+Q'$) and satisfies the
normalization condition.
We also have $\pE (x_i^2-1)^2=0$ since $(x_i^2-1)^2=0$ modulo $x_i^2-1 $.
Since $\pE_x (x_i-x_j)^2 = \cnorm{v_i-v_j}^2$ for all $i,j\in V$ (using
$\iprod{v_\eset,v_{ij}}=\iprod{v_i,v_j}$), our solution for $\sos_d(G)$ has
the same objective value as our solution for $\lass_d(G)$.
It remains to verify positivity.
Let $P^2$ be a polynomial of degree at most $d$.
We may assume that $P$ is multilinear, so that $P=\sum_{\card{S}\le d} c_S x_S$
Therefore $P^2=\sum_{S,T} c_Sc_Tx_Sx_T$ and $\pE P^2=\sum_{S,T}c_Sc_T
\iprod{v_\eset,v_{S\Delta T}}$.
Using the property $\ciprod{v_\eset,v_{S\Delta T}}=\ciprod{v_S,v_T}$, we
conclude $\pE P^2 = \sum_{S,T}c_S c_T \iprod{v_S,v_T} = \cnorm{\sum_S c_S
  v_S}^2\ge 0$.

\paragraph{From SoS to Lasserre}
\label{sec:from-sos-lasserre}

Let $\pE$ be a solution to $\sos_d(G)$.
We will construct a solution for $\lass_{d/2}(G)$ (assuming $d$ is even).
Let $d'=d/2$.
For $\alpha\in \N^n$, let $x^\alpha$ be the monomial $\prod_{i\in[n]}
x_i^{\alpha_i}$.
The polynomials $\set{x^\alpha}_{\card{\alpha}\le d'}$ form a basis of the
space of degree-$d'$ polynomials over $\R^n$.
Since $\pE P^2\ge 0$ for all polynomials $P$ of degree at most
$d'$, the matrix $(\pE x^\alpha x^\beta)_{\card{\alpha},\card{\beta}\le
  d'}$ is positive semidefinite.
Hence, there exists vectors $v_\alpha$ for $\alpha$ with $\card{\alpha}\le
d'$ such that $\pE x^\alpha x^\beta=\iprod{v_\alpha,v_\beta}$.
We claim that the vectors $v_\alpha$ with $\alpha\in\bits^n$ and
$\card{\alpha}\le d$ form a solution for $\lass_d(G)$.
The main step is to show that $\iprod{v_\alpha,v_\beta}$ depends only on
$\alpha+\beta\mod 2$.
Since $\iprod{v_\alpha,v_\beta} = \pE x^{\alpha+\beta}$, it is enough to
show that $\pE$ satisfies $\pE x^\gamma = \pE x^{\gamma \mod 2}$.
Hence, we want to show $\pE x^2 P = \pE P$ for all polynomials (with
appropriate degree).
Indeed, by \pref{lem:pseudo-expectation-cauchy-schwarz}, $\pE (x^2-1)\cdot P\le
\sqrt{\pE (x^2-1)^2}\sqrt{\pE P^2}=0$.

\Dnote{

\subsection{Comparsion with hierarchies of Parrilo and Lasserre}

\subsection{Strong Duality}

}

\addreferencesection
\bibliographystyle{hyperalpha}
%Gather{hypercontract.bib}
%Gather{quantum.bib}
\bibliography{hypercontract,quantum}

\end{document}